\DeclareMathOperator*{\argmax}{argmax}
\definecolor{myblue}{RGB}{94, 129, 181}
\definecolor{myorange}{RGB}{225, 156, 36}
\definecolor{mygreen}{RGB}{143, 176, 50}
\definecolor{myred}{RGB}{235, 98, 53}
\DeclareMathOperator*\li{\underline{lim}}
\DeclareMathOperator*\ls{\overline{lim}}
\DeclareMathOperator*{\esssup}{ess\,sup}
\providecommand{\R}{} \renewcommand{\R}{{\mathbb R}}
\newcommand{\N}{{\mathbb N}}
\newcommand{\PP}{{\mathbb P}}
\newcommand{\EE}{{\mathbb E}}
\newcommand{\FF}{{\mathcal F}}
\newcommand{\Chi}{{\mathcal X}}
\newcommand{\Zitkovic}[1]{{\v Z}itkovi\'c}
\newcommand{\Sirbu}[1]{S\^\i rbu}
\newcommand\tV{{\tilde{V}}}
\newcommand\ue{\underline{\epsilon}}
\newcommand\be{\overline{\epsilon}}
\newcommand\uy{\underline{y}}
\newcommand\by{\overline{y}}
\newcommand\uz{\underline{z}}
\newcommand\bz{\overline{z}}
\numberwithin{equation}{section}
\theoremstyle{plain}                
\newtheorem{theorem}{Theorem}[section]
\newtheorem{lemma}[theorem]{Lemma}
\newtheorem{proposition}[theorem]{Proposition}
\theoremstyle{definition}           
\newtheorem{assumption}[theorem]{Assumption}
\newtheorem{notation}[theorem]{Notation}
\theoremstyle{remark}
\newtheorem{remark}[theorem]{Remark}
\newcommand{\thmref}[1]{Theorem~\ref{#1}}
\newcommand{\proref}[1]{Proposition~\ref{#1}}
\newcommand{\lemref}[1]{Lemma~\ref{#1}}
\newcommand{\assref}[1]{Assumption~\ref{#1}}
\begin{document}

\title{Unified Asymptotics for Investment under Illiquidity: Transaction Costs and Search Frictions}

\author{Tae Ung Gang$^{\ast}$}
\thanks{$^{\ast}$Stochastic Analysis and Application Research Center, Korea Advanced Institute of Science and Technology (gangtaeung@kaist.ac.kr).}
\author{Jin Hyuk Choi$^{\ast\ast}$}\thanks{$^{\ast\ast}$Department of Mathematical Sciences, Ulsan National Institute of Science and Technology (jchoi@unist.ac.kr).\\
{\bf Conflict of interest statement}: We declare that we have no known competing financial interests or personal relationships that could have appeared to influence the work reported in this paper.\\
{\bf Data availability statement}: Data sharing not applicable to this article as no datasets were generated or analysed during the current study.\\
{\bf Funding}: This work was supported by the National Research Foundation of Korea (NRF) grant funded by the Korea government (MSIT) (No. 002086221G0001278, No. 2019R1A5A1028324 and No. RS-2023-00237770).}

\maketitle

\begin{abstract}

This paper investigates the optimal investment problem in a market with two types of illiquidity: transaction costs and search frictions. Extending the framework established by \cite{GC23}, we analyze a power-utility maximization problem where an investor encounters proportional transaction costs and trades only when a Poisson process triggers trading opportunities. We show that the optimal trading strategy is described by a no-trade region. We introduce a novel asymptotic framework applicable when both transaction costs and search frictions are small. Using this framework, we derive explicit asymptotics for the no-trade region and the value function along a specific parametric curve. This approach unifies existing asymptotic results for models dealing exclusively with either transaction costs or search frictions. 
\end{abstract}

\bigskip

{\bf Keywords}: stochastic control, asymptotics, portfolio optimization, illiquidity, transaction costs, search frictions. 

\bigskip

\section{Introduction}

Understanding the impact of illiquidity on optimal investment is one of the key topics in mathematical finance. Illiquidity arises from various factors, such as exogenous transaction costs, search frictions (difficulty in finding a trading counterparty), and price impacts.\footnote{\cite{Kyl85, bac92, asym, CHOI201922} have explored how asymmetric information affects price impact and optimal trading strategy in equilibrium, while \cite{almgren2001optimal, gatheral2011optimal, predoiu2011optimal, robert2012measuring} have examined optimal order execution problems with given price impacts.}
Building on the idea of \cite{GC23}, this paper investigates an optimal investment problem in a market with two types of illiquidity:  transaction costs and search frictions.

Assuming perfect liquidity, where assets can be traded at any time without transaction costs, Merton's seminal works \cite{Mer69, Mer71} formulate the optimal investment problem using geometric Brownian motion for a risky asset price and a CRRA (constant relative risk aversion) investor, showing that the optimal strategy is to maintain a constant fraction of wealth in the risky asset. Subsequent research has extended this framework to more general stock price processes and utility functions, deriving broader optimal investment strategies.

The perfect liquidity assumption can be relaxed by incorporating transaction costs, such as order processing fees or transaction taxes, which contribute to market illiquidity and have been extensively studied. \cite{MagCon76, DavNor90, ShrSon94} examine the Merton model with proportional transaction costs, demonstrating that the optimal strategy is to keep the investment within a ``no-trade region." The boundaries of this region are determined by the free-boundaries of the HJB (Hamilton-Jacobi-Bellman) equation. Models with transaction costs and multiple risky assets have been investigated (e.g., \cite{Aki96, Liu04, Mut06, CheDai13} for costs on all assets and \cite{Dai11, Guasoni, Hobson16, Choi2020} for costs on only one asset). More general stock price processes have been considered within the framework of optimal investment with transaction costs (e.g., \cite{czichowsky2014transaction, czichowsky2016duality, bayraktar2020extended}). Additionally, \cite{GarPed2016, MorMuhSon2017, GuaWeb2017, EkrMuh2019} investigate models with quadratic transaction costs. 

Search frictions, or difficulty in finding a trading counterparty, are another source of market illiquidity. \cite{Andrew14} presents frequency of trading in various markets, showing that many asset classes are illiquid, with their total sizes rivaling that of the public equity market. An intuitive way to model search frictions is by restricting trade times. For example, \cite{Rogers:2001aa} considers an investor who can change portfolios only at fixed intervals, while \cite{Rogers2002, matsumoto2006, Andrew14} assume that an illiquid asset can only be traded when randomly occurring opportunities arise, modeled by a Poisson process. \cite{2008Pham, 2011Pham} add the assumption that the asset price is observed only at these trade times. \cite{2014Pham} further complicates the model by incorporating random intensity of trade times, regime-switching, and liquidity shocks. \cite{Dai2016} considers trading at deterministic intervals with proportional transaction costs. \cite{hugonnier2015capital} develops a model for investment, financing, and cash management decisions, incorporating financing opportunities modeled as a Poisson process.

Due to the lack of explicit solutions to the HJB equations, asymptotic analysis has been employed for small transaction costs or small search frictions. For a small transaction cost parameter $\epsilon\ll 1$, in various models with proportional transaction costs only (e.g., \cite{JanShr04, GerMuhSch11, Cho11, BicShr11, Pos15, Guasoni, Choi2020}), the first correction terms of the no-trade boundaries are of the order of $\epsilon^{\frac{1}{3}}$, and the first correction term of the value function is of the order of $\epsilon^{\frac{2}{3}}$. In \cite{matsumoto2006,matsumotopower}, the parameter $\lambda$ is the intensity of the Poisson process, and search frictions can be represented by $\tfrac{1}{\lambda}$. For small search frictions $\tfrac{1}{\lambda}\ll1$, the first correction terms of the optimal trading strategy and the value function are of the order of $\tfrac{1}{\lambda}$.

Merging the aforementioned frameworks, \cite{GC23} studies log-utility maximization of the terminal wealth in a model with both transaction costs and search frictions. As in the models with transaction costs only, the optimal trading strategy in \cite{GC23} is characterized by a no-trade region. In \cite{GC23}, for a small transaction cost parameter $\epsilon$ (with fixed $\lambda$), the first correction terms of the no-trade boundaries and the value function are of the order of $\epsilon$, instead of $\epsilon^{\frac{1}{3}}$ or $\epsilon^{\frac{2}{3}}$ in the models with transaction costs only. The asymptotic results imply that the effects of the transaction costs are more pronounced in the market with fewer search frictions.

In this paper, we analyze the power-utility maximization problem with both transaction costs and search frictions. The model setup is the same as that of \cite{GC23} except that we consider power-utility instead of log-utility. In our model, proportional transaction costs (with parameter $\epsilon$) are imposed on an investor, and the investor's trading opportunities arise only when a Poisson process (with intensity $\lambda$) jumps. The investor's objective is to maximize the expected utility of wealth at the terminal time $T>0$.
As in other models with proportional transaction costs, the optimal trading strategy in our model is characterized by a no-trade region: there are functions $\uy,\by:[0,T)\to [0,1]$ such that the investor tries to keep the fraction of wealth invested in the risky asset within the interval $[\uy(t),\by(t)]$ whenever trading opportunities arise.

The main contribution of this paper is the establishment of a novel framework for asymptotics applicable in scenarios where both transaction costs and search frictions are small, i.e., $\epsilon\ll 1$ and $\tfrac{1}{\lambda}\ll 1$. We focus on the asymptotics of the no-trade region and the value function. The results in \cite{GC23} imply that the multivariable limits as $(\epsilon,\lambda)\to (0,\infty)$ do not exist; the resulting values depend on the order of taking these limits (see discussion around \eqref{heuristic1} in Section 5). To address this issue, we compare the asymptotics in \cite{GC23} with those in the benchmark cases of transaction costs only ($\lambda=\infty$) and search frictions only ($\epsilon=0$), leading us to conjecture that a specific scaling relation $\lambda=c \, \epsilon^{-\frac{2}{3}}$ for $c>0$ is relevant to consider (for details, see discussion for \eqref{curve_motive} in Section 5). Our findings confirm that along the parametric curve $\lambda=c \, \epsilon^{-\frac{2}{3}}$, the first correction terms of the no-trade boundaries and the value function are of the order of $\epsilon^{\frac{1}{3}}$ and $\epsilon^{\frac{2}{3}}$, respectively. 

Our framework for finding asymptotics along the parametric curve $\lambda=c \, \epsilon^{-\frac{2}{3}}$ offers two notable benefits when dealing with small $\epsilon$ and large $\lambda$. First, the coefficients of the correction terms in our asymptotics are explicit in terms of the model parameters. In contrast, the coefficients in the asymptotics in \cite{GC23} are expressed in terms of solutions to partial differential equations, making them not explicit.\footnote{This lack of explicit expression is the main weakness of the asymptotic results in \cite{GC23}. Additionally, the asymptotics in \cite{GC23} are only for small $\epsilon$, with fixed $\lambda$.} Therefore, given model parameters, including $\epsilon$ and $\lambda$, one can compute the auxiliary parameter $c=\lambda \epsilon^{\frac{2}{3}}$ and use the explicit expressions in \thmref{Joint_limit_of_Wnt_and_Vd} to estimate the optimal trading strategy and value. 

Second, our framework using $\lambda=c \, \epsilon^{-\frac{2}{3}}$ unifies the existing asymptotic results for seemingly different benchmark models with only transaction costs and only search frictions. Indeed, \thmref{Joint_limit_of_Wnt_and_Vd} bridges the benchmark asymptotics, where the case $c\to\infty$ corresponds to the asymptotics with only transaction costs and the case $c\to 0$ corresponds to the asymptotics with only search frictions (see discussion around \eqref{asymptotic_connection} in Section 5).

Our proof of the asymptotic analysis involves various estimations. One of the main difficulties in the analysis is the rigorous treatment of subtle limiting behaviors that do not appear in the benchmark models with only transaction costs or only search frictions (see discussion after \thmref{Joint_limit_of_Wnt_and_Vd}).

The remainder of the paper is organized as follows. Section 2 describes the model. In Section 3, we provide the verification argument and some properties of the value function. Section 4 characterizes the optimal trading strategy in terms of the no-trade region and presents properties of its boundaries. In Section 5, we motivate the relation $\lambda=c \, \epsilon^{-\frac{2}{3}}$ and provide asymptotic results. Section 6 numerically examines the impact of search frictions on trading.
Section 7 is devoted to the proof of the results in Section 5. Section 8 summarizes the paper. Proofs of  technical lemmas can be found in Appendix.

\section{The Model}

The model setup is identical to that described in \cite{GC23}, except for the utility function. 
Consider a filtered probability space $(\Omega, \FF, (\FF_t)_{t \geq 0}, \PP)$ satisfying the usual conditions. Under the filtration, let $(B_{t})_{t \geq 0}$ be a standard Brownian motion and $(P_{t})_{t \geq 0}$ be a Poisson process with constant intensity $\lambda > 0$. Then $(B_{t})_{t \geq 0}$ and $(P_{t})_{t \geq 0}$ are independent as the quadratic covariation of the two Levy processes is zero. 

We consider a financial market consisting of a constant saving account (zero interest rate) and a stock with its price process $( S_{t} )_{t \geq 0}$ defined by the following stochastic differential equation (SDE):
\begin{align*}
  d S_{t} & = S_{t} ( \mu d t + \sigma d B_{t} ), 
\end{align*}
where $\mu$, $\sigma$ and $S_{0}$ are constants and $\sigma$ and $S_{0}$ are strictly positive.

We assume that the market has two types of illiquidity. 
\begin{itemize}
\item {\bf Proportional Transaction Costs}: These costs are imposed on an investor when purchasing and selling stocks. 
There are two constants $\be \in (0, \infty)$ and $\ue \in (0, 1)$ such that the investor purchases one share of stock at the price of $(1 + \be) S_{t}$ and sells one share at the price of $(1 - \ue) S_{t}$ at time $t$, respectively.
\item {\bf Limited Trading Opportunities}: An investor's trading opportunity is available only when the Poisson process $(P_{t})_{t \geq 0}$ jumps. Hence, a larger $\lambda$ implies more frequent trading opportunities on average, resulting in fewer search frictions.
\end{itemize}

Let $W_{t}^{(0)}$ and $W_{t}^{(1)}$ be the amount of wealth in the saving account and stock at time $t \geq 0$, respectively. 
 Let $M_t$ represent the nominal change in portfolio position at time $t\geq 0$, which is the dollar amount the investor intends to transfer to the stock market.   
Then, $W_{t}^{(0)}$ and $W_{t}^{(1)}$ satisfy
\begin{equation}
\begin{split}\label{SDE_W}
  & W_{t}^{(0)} = w_{0}^{(0)} + \int_{0}^{t} \left( (1 - \ue)  M_{s}^{-} - (1 + \be) M_{s}^{+} \right) d P_{s}, \\
  & W_{t}^{(1)} = w_{0}^{(1)} + \int_{0}^{t} W_{s-}^{(1)}( \mu ds + \sigma dB_{s} ) + \int_{0}^{t} M_{s} d P_{s}, 
\end{split}  
\end{equation}
where the pair of nonnegative constants $( w_{0}^{(0)}, w_{0}^{(1)} )$ represents the initial position of the investor and we use notation $x^{\pm} = \max \{ \pm x, 0 \}$ for $x\in \R$. We assume that the initial total wealth is strictly positive, $w_0:=w_{0}^{(0)}+ w_{0}^{(1)}>0$.

The trading strategy $(M_{t})_{t \geq 0}$ is called \textit{admissible} if it is a predictable process and the corresponding total wealth process $W:=W^{(0)}+W^{(1)}$ remains nonnegative all the times. Since the rebalancing times are discrete, the condition $W_t\geq 0$ for all $t\geq 0$ is equivalent to $W_t^{(0)}\geq 0$ and $W_t^{(1)}\geq 0$ for all $t\geq 0$, imposing a constraint against short positions. Consequently, an admissible strategy $M$ satisfies
\begin{align} \label{Admissible_class}
  & - W_{t-}^{(1)} \leq M_{t} \leq \tfrac{W_{t-}^{(0)}}{1 + \be}, \quad t\geq 0.
\end{align}
The above inequalities and $w_0>0$ ensure that the corresponding total wealth process $W$ is strictly positive all the time.\footnote{
The strict positivity of $W_t$ is discussed in \cite{GC23}, footnote 7. For completeness, we include the explanation here.  
Let $\tau_n:=\inf\{ t\geq 0: P_t=n\}$ and $\tau_0=0$. 
For $\tau_n\leq t<\tau_{n+1}$, the dynamics \eqref{SDE_W} produce $W_t=W_{\tau_n}^{(0)} e^{r(t-\tau_n)} +W_{\tau_n}^{(1)} e^{(\mu-\frac{\sigma^2}{2})(t-\tau_n)+\sigma(B_t-B_{\tau_n})}$. If $W_{\tau_n-}>0, W_{\tau_n-}^{(0)}\geq 0, W_{\tau_n-}^{(1)}\geq 0$ and \eqref{Admissible_class} are satisfied, then $W_{\tau_n}>0, W_{\tau_n}^{(0)}\geq 0$ and $W_{\tau_n}^{(1)}\geq 0$ hold. By this way, we can inductively show that $W_{\tau_n}>0, W_{\tau_n}^{(0)}\geq 0$ and $W_{\tau_n}^{(1)}\geq 0$ for all $n$, almost surely. Now the expression of $W_t$ above implies that $W_t>0$ for all $t\geq 0$, almost surely.}

For an admissible strategy $M$ and the corresponding solutions $W^{(0)}$ and $W^{(1)}$ of the SDEs in \eqref{SDE_W}, let $X_{t} := W_{t}^{(1)}/W_{t}$ be the fraction of the total wealth invested in the stock market at time $t$. Then, the inequalities in \eqref{Admissible_class} imply $0\leq X_t \leq 1$. The SDEs for $W$ and $X$ are
\begin{equation}
\begin{split}\label{SDE_X}
 d W_{t} & =  \mu  X_{t-} W_{t-} d t +  \sigma X_{t-} W_{t-} d B_{t} - ( \be  M_{t}^{+} + \ue M_{t}^{-} ) d P_{t},  \\
  d X_{t} & = X_{t-} (1 - X_{t-}) \left( \mu - \sigma^{2} X_{t-} \right) d t + \sigma X_{t-} ( 1 - X_{t-}) d B_{t}   +  \tfrac{M_{t} + (\be M_{t}^{+} + \ue M_{t}^{-}) X_{t-}}{W_{t-} - \be M_{t}^{+} - \ue M_{t}^{-}} d P_{t},
\end{split}  
\end{equation}
where the initial conditions are $W_0=w_0$ and $X_0=x_0:=w^{(1)}_0/w_0$.

Let $T>0$ be a constant representing the terminal time. The investor's utility maximization problem is defined as follows: for a given $\gamma \in (0, \infty) \setminus \{ 1 \}$, 
\begin{align} \label{Goal}
  \sup_{(M_{t})_{t \in [0,T]}} \; \EE \Big[ \tfrac{W_{T}^{1 - \gamma}}{1 - \gamma} \Big],
\end{align}
where the supremum is taken over all admissible trading strategies.
\bigskip

\section{The value function}

Let $V$ be the value function of the utility maximization problem \eqref{Goal}: 
\begin{align}
  V(t, x, w) & = \sup_{(M_{s})_{s \in [t,T]}} \EE \Big[ \tfrac{W_{T}^{1 - \gamma}}{1 - \gamma} \, \Big | \, \mathcal{F}_{t} \Big] \bigg|_{(X_{t}, W_{t}) = (x, w)}. \label{Definition_of_value_function}
\end{align}
The scaling property of the wealth process and the property of the power function enable us to conjecture the form of the value function as
\begin{align*}
  V(t, x, w) = \tfrac{w^{1 - \gamma}}{1 - \gamma} \cdot v(t, x)
\end{align*}
for a function $v:[0,T]\times [0,1]\to \R$. The Hamilton-Jacobi-Bellman (HJB) equation for \eqref{Definition_of_value_function} produces the following partial differential equation (PDE) for $v$:
\begin{align} \label{Main_PDE}
  \begin{cases}
    1 = v (T, x),\\
  0 = v_{t}(t, x) + x (1 - x) ( \mu - \gamma \sigma^{2} x ) v_{x}(t, x) + \frac{\sigma^{2}x^{2} (1 - x)^{2} }{2}v_{x x}(t, x) + \left( Q(x) - \lambda \right) v(t,x)  \\
  \qquad + \lambda (1 - \gamma) \cdot \sup_{y \in [0, 1]} \left( \tfrac{v(t, y)}{1 - \gamma} \left( \big( \tfrac{1 + \be x}{1 + \be y} \big)^{1 - \gamma} 1_{\left\{ x\leq y  \right\}} + \big( \tfrac{1 - \ue x}{1 - \ue y} \big)^{1 - \gamma} 1_{\left\{ x>y \right\}} \right) \right), 
  \end{cases}
\end{align}
where $v_{t}$, $v_{x}$, $v_{x x}$ are partial derivatives and 
\begin{align}
Q(x) := - \tfrac{\gamma(1-\gamma)\sigma^2}{2}(x-y_M)^2 +  \tfrac{\gamma(1-\gamma)\sigma^2 y_M^2}{2}
 \quad \textrm{with}\quad y_M := \tfrac{\mu}{\gamma \sigma^{2}}.  \label{Q_def}
\end{align}
Note that $y_M$ is the Merton fraction, the optimal fraction in the frictionless market.


\begin{lemma} \label{v_classical_solution_and_properties}
There exists a unique $v \in C^{1, 2} ( [0, T] \times (0, 1) ) \cap C( [0, T] \times [0, 1] )$ that satisfies the following conditions:\\ 
(i) $v$ satisfies \eqref{Main_PDE} for $(t, x) \in [0, T] \times (0, 1)$. \\
(ii) For $x\in \{0,1\}$, the map $t\mapsto v(t,x)$ is continuously differentiable on $[0,T]$ and satisfies
\begin{align} \label{PDE_at_boundary}
  \begin{cases}
    1 = v (T, x),\\
  0 = v_{t}(t, x) +  \left( Q(x) - \lambda \right) v(t,x)  \\
  \qquad + \lambda (1 - \gamma) \cdot \sup_{y \in [0, 1]} \left( \tfrac{v(t, y)}{1 - \gamma} \left( \big( \tfrac{1 + \be x}{1 + \be y} \big)^{1 - \gamma} 1_{\left\{ x\leq y  \right\}} + \big( \tfrac{1 - \ue x}{1 - \ue y} \big)^{1 - \gamma} 1_{\left\{ x>y \right\}} \right) \right).
  \end{cases}
\end{align}
(iii) $v_{t}(t, x)$, $x (1 - x) v_{x}(t, x)$, $x^{2} (1 - x)^{2} v_{x x}(t, x)$ are uniformly bounded on $(t, x) \in [0, T] \times (0, 1)$.
\end{lemma}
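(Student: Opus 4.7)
The plan is to realize $v$ as the unique fixed point of a contraction built from the linear part of \eqref{Main_PDE} and the nonlocal maximization term. For a continuous $u:[0,T]\times[0,1]\to\mathbb{R}$, define
\begin{equation*}
L[u](t,x) := \lambda(1-\gamma)\sup_{y\in[0,1]}\left(\tfrac{u(t,y)}{1-\gamma}\left(\big(\tfrac{1+\be x}{1+\be y}\big)^{1-\gamma}1_{\{x\le y\}} + \big(\tfrac{1-\ue x}{1-\ue y}\big)^{1-\gamma}1_{\{x>y\}}\right)\right),
\end{equation*}
so that \eqref{Main_PDE} reads $v_t+\mathcal{L}v+(Q-\lambda)v+L[v]=0$, where $\mathcal{L}$ is the linear second-order operator appearing in \eqref{Main_PDE}. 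Let $\mathcal{T}u$ denote the unique classical solution of the linear backward problem obtained by treating $L[u]$ as a given source and imposing $w(T,\cdot)=1$. Since the coefficients of $\mathcal{L}$ are smooth and uniformly elliptic on every $[0,T]\times[\delta,1-\delta]$, and since $L[u]$ is continuous on $[0,T]\times[0,1]$ whenever $u$ is (the supremum runs over a compact parameter in a jointly continuous expression), classical parabolic theory yields a unique $\mathcal{T}u\in C^{1,2}([0,T]\times(0,1))\cap C([0,T]\times[0,1])$ admitting a Feynman--Kac representation on the associated diffusion.

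The Lipschitz bound $|L[u_1](t,x)-L[u_2](t,x)|\le\lambda\kappa\,\|u_1(t,\cdot)-u_2(t,\cdot)\|_\infty$, with $\kappa$ depending only on $\be,\ue,\gamma$, combined with the Feynman--Kac representation of $\mathcal{T}u_1-\mathcal{T}u_2$ and the discount coming from the $-\lambda v$ term, yields a strict contraction $\|\mathcal{T}u_1-\mathcal{T}u_2\|_\alpha\le\theta\|u_1-u_2\|_\alpha$ in the weighted norm $\|w\|_\alpha:=\sup_{(t,x)}e^{\alpha(T-t)}|w(t,x)|$ for $\alpha$ sufficiently large and some $\theta<1$. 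Banach's fixed-point theorem then produces a unique continuous $v$ with $v=\mathcal{T}v$; continuity of $L[v]$ together with the linear theory applied to $v=\mathcal{T}v$ upgrades this to $v\in C^{1,2}([0,T]\times(0,1))\cap C([0,T]\times[0,1])$ and $v$ solves \eqref{Main_PDE}. Uniqueness within the class of the lemma is an immediate by-product, since any two such solutions are simultaneously fixed points of $\mathcal{T}$.

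For the boundary identity \eqref{PDE_at_boundary}, I would observe that $x(1-x)(\mu-\gamma\sigma^2 x)$ and $\tfrac{\sigma^2 x^2(1-x)^2}{2}$ vanish at $x\in\{0,1\}$. Granted the bounds in (iii), the first- and second-order spatial terms in \eqref{Main_PDE} tend to zero as $x\downarrow 0$ or $x\uparrow 1$ while $v$ and $L[v]$ extend continuously; this forces $v_t(\cdot,0)$ and $v_t(\cdot,1)$ to exist as continuous functions coinciding with the right-hand side of \eqref{PDE_at_boundary}, so continuous differentiability in $t$ follows. To obtain the bounds in (iii) themselves, a convenient device is the logistic change of variable $x=(1+e^{-z})^{-1}$, $z\in\mathbb{R}$, which converts \eqref{Main_PDE} into a uniformly parabolic PDE on $[0,T]\times\mathbb{R}$ with bounded, Lipschitz coefficients; interior Schauder estimates in the $z$-variable translate back into the weighted bounds of (iii) once combined with the $L^\infty$ bound on $v$ provided by explicit super- and subsolutions of the form $c\,e^{K(T-t)}$.

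The main obstacle I anticipate is the joint treatment of the degeneracy of $\mathcal{L}$ at $x\in\{0,1\}$ and the nonlocality of $L$, specifically showing that $t\mapsto L[v](t,0)$ and $t\mapsto L[v](t,1)$ are continuous (and in fact $C^1$) so that the ODE in \eqref{PDE_at_boundary} holds in the classical sense. This reduces to joint continuity of $(t,x)\mapsto L[v](t,x)$, which follows once the family $\{v(t,\cdot)\}_{t\in[0,T]}$ is equicontinuous uniformly in $t$; the latter is a consequence of the weighted derivative estimates in (iii) together with the Arzel\`a--Ascoli theorem, so the whole argument closes back on itself.
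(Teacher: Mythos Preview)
Your outline shares the paper's backbone: a contraction fixed-point argument for the nonlocal PDE, the logistic change of variable $x=h(z)=\tfrac{e^z}{1+e^z}$ to obtain a uniformly parabolic problem on $\mathbb{R}$, and separate treatment of the endpoints $x\in\{0,1\}$. The paper actually performs the change of variable \emph{first} and runs the entire fixed-point argument in $z$-coordinates, defining the map $\phi$ directly by a Feynman--Kac representation rather than via PDE theory; this makes $\phi(f)$ well-defined for any $f\in C_b([0,T]\times\mathbb{R})$ without regularity assumptions and sidesteps the degeneracy at $x\in\{0,1\}$. Because the kernel $g$ in the sup is bounded by $1$ (for $0<\gamma<1$; the other case is symmetric), the contraction holds in the unweighted sup norm with factor $1-e^{-\lambda T}$, so no weighted norm is needed.

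The genuine gap in your argument is the regularity bootstrap. You claim that ``classical parabolic theory yields a unique $\mathcal{T}u\in C^{1,2}$'' and later that ``continuity of $L[v]$ together with the linear theory \ldots upgrades this to $C^{1,2}$.'' But a merely continuous source does not give a $C^{1,2}$ solution: Schauder theory requires the source to be H\"older in both variables. In $x$ this is automatic (the kernel is Lipschitz in $x$ uniformly in $y$), but H\"older-in-$t$ of $L[v]$ requires H\"older-in-$t$ of $v$, which you have not established at the fixed-point stage. The paper isolates exactly this step as a separate lemma: it proves that the fixed point $\hat f$ (so far only in $C_b$) satisfies $K_{\hat f}\in C^{1/2,1}([0,T]\times\mathbb{R})$, using Gronwall-type stability estimates on the SDE for $\Upsilon$ to obtain $|\phi(\hat f)(t+\delta,z)-\phi(\hat f)(t,z)|\le C\sqrt{\delta}$. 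Only then is Krylov's theorem invoked to produce a $C^{1+\alpha/2,2+\alpha}$ solution $\tilde f$ of the linear PDE with source $\lambda K_{\hat f}$, and Feynman--Kac identifies $\tilde f=\hat f$. Your plan needs an analogous step; the circularity you flagged at the end is a symptom of this missing regularity input.

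For the boundary ODE \eqref{PDE_at_boundary}, the paper does not take limits in the PDE (which, as you note, would lean on the weighted bounds of (iii)); instead it lets $z\to\pm\infty$ directly in the stochastic representation $\phi(\tilde f)(t,z)$ and reads off the ODE by dominated convergence. The bounds in (iii) then follow immediately from $\tilde f\in C^{1+\alpha/2,2+\alpha}([0,T]\times\mathbb{R})$ via the explicit relations $x(1-x)v_x=e^{-at}\tilde f_z$ and $x^2(1-x)^2v_{xx}=e^{-at}(\tilde f_{zz}-(1-2x)\tilde f_z)$, with no further Schauder argument needed.
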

\begin{proof}
See Appendix~\ref{appendix_A}. \bigskip
\end{proof}

\noindent The next theorem provides the verification. Its proof is similar to the proof of Theorem 3.5 in \cite{GC23}.

\begin{theorem} \label{Verification_v} 
Let $V$ be as in \eqref{Definition_of_value_function}, and $v$ be as in \lemref{v_classical_solution_and_properties}. Then, for $(t,x)\in [0,T]\times [0,1]$,
\begin{align}
 V(t, x, w) = \tfrac{w^{1 - \gamma}}{1 - \gamma} \cdot v(t, x). \label{value_check}
\end{align}
\end{theorem}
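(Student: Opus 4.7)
The plan is to use a standard verification argument: define the candidate $\tilde V(t,x,w):=\tfrac{w^{1-\gamma}}{1-\gamma}\, v(t,x)$, show $\tilde V \ge V$ by a supermartingale argument over arbitrary admissible strategies, and then exhibit an admissible $M^*$ realizing equality. I would fix $(t,x,w)$ with $w>0$ and an admissible $M$, and apply It\^o's formula for jump--diffusions to $s\mapsto \tilde V(s,X_s,W_s)$ on $[t,T]$ using the dynamics \eqref{SDE_X}. A direct (if tedious) calculation shows that the continuous drift equals $\tfrac{W_s^{1-\gamma}}{1-\gamma}$ times $v_t+x(1-x)(\mu-\gamma\sigma^2 x)v_x+\tfrac{1}{2}\sigma^2 x^2(1-x)^2 v_{xx}+Q(X_s)\,v$, where the potential $Q$ emerges from combining the Merton-type contributions of $\tilde V_w$, $\tilde V_{ww}$, and $\tilde V_{xw}$ against the $W$--dynamics. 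The compensated Poisson jumps contribute an additional drift equal to $\lambda$ times the bracketed term inside the supremum in \eqref{Main_PDE} (evaluated at the current action $y=X_s$), minus $\lambda\tilde V(s,X_{s-},W_{s-})$.

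Substituting the PDE \eqref{Main_PDE} on the interior $(0,1)$ and \eqref{PDE_at_boundary} at $x\in\{0,1\}$, the total drift of $\tilde V$ simplifies to $\lambda\, W_s^{1-\gamma}\,[F(X_s)-\sup_{y\in[0,1]}F(y)]$, where $F$ denotes the function inside the supremum; this is $\le 0$ for every admissible action, so $\tilde V(\cdot,X_\cdot,W_\cdot)$ is a local supermartingale on $[t,T]$. The uniform bounds in \lemref{v_classical_solution_and_properties}(iii) on $v_t$, $x(1-x)v_x$, and $x^2(1-x)^2 v_{xx}$, together with $0\le X_s\le 1$ and standard moment estimates on $W^{1-\gamma}$, upgrade the $dB_s$ and compensated--Poisson stochastic integrals to true martingales. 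Taking conditional expectations and invoking $v(T,\cdot)\equiv 1$ gives $\tilde V(t,x,w)\ge \EE_t[W_T^{1-\gamma}/(1-\gamma)]$, and hence $\tilde V\ge V$.

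For the converse, continuity of $v$ and compactness of $[0,1]$ ensure that the supremum in \eqref{Main_PDE} is attained, and a standard measurable-selection argument produces a Borel map $y^*:[0,T]\times[0,1]\to[0,1]$ realizing it. I would then define $M^*_s$ so that at each Poisson arrival the fraction $X$ is rebalanced from $X_{s-}$ to $y^*(s,X_{s-})$, with explicit formula via $W_s/W_{s-}=(1+\be X_{s-})/(1+\be y^*)$ when $y^*\ge X_{s-}$ and $W_s/W_{s-}=(1-\ue X_{s-})/(1-\ue y^*)$ otherwise; these identities keep $W^{(0)}$ and $W^{(1)}$ nonnegative, so $M^*$ is admissible. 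With $M=M^*$ the drift inequality becomes an equality, $\tilde V(\cdot,X^*_\cdot,W^*_\cdot)$ is a true martingale, and $\tilde V(t,x,w)\le V(t,x,w)$ follows.

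I expect the main technical obstacle to be the martingale upgrade and the behavior of the process at the boundary $\{0,1\}$, where the diffusive coefficients in \eqref{Main_PDE} degenerate. The bounds in \lemref{v_classical_solution_and_properties}(iii) are engineered to dominate the $v_x$ and $v_{xx}$ terms in the It\^o expansion uniformly up to the boundary, giving the local-martingale integrands finite quadratic variation on $[t,T]$, while the separate formulation \eqref{PDE_at_boundary} ensures the drift identity persists at $x\in\{0,1\}$. A secondary concern is producing a predictable version of $M^*$ from the Borel selector $y^*$; this is routine given the left-continuity of $(X_{s-},W_{s-})$.
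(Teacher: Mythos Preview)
Your proposal is correct and follows essentially the same route as the paper: apply It\^o's formula to $\tfrac{W^{1-\gamma}}{1-\gamma}v(s,X_s)$, use the HJB equation \eqref{Main_PDE} on the interior and the degenerate version \eqref{PDE_at_boundary} at $x\in\{0,1\}$ to obtain a nonpositive drift, upgrade the stochastic integrals to true martingales via the bounds in \lemref{v_classical_solution_and_properties}(iii) together with a moment estimate on $W^{1-\gamma}$, and then produce an optimal strategy through a measurable selection of the maximizer in \eqref{Main_PDE}. The paper carries this out by localizing along the Poisson jump times $\tau_n$ and noting that $X$ cannot leave $\{0\}$, $\{1\}$, or $(0,1)$ between jumps, and it makes the ``standard moment estimate'' explicit as $\sup_{t\le T}\EE[W_t^{2(1-\gamma)}]<\infty$; these are exactly the two points you correctly flagged as the delicate steps.
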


\begin{proof}
Without loss of generality, we prove $V(0, x_{0}, w_{0}) = \frac{w_{0}^{1 - \gamma}}{1 - \gamma} \cdot v(0, x_{0})$. Let $M$ be an admissible trading strategy and $(W,X)$ be the corresponding solution of \eqref{SDE_X}. 
Then \eqref{Admissible_class} and \eqref{SDE_X} imply  
\begin{align*}
0 \geq W_t - W_{t-}= - ( \be  M_{t}^{+} + \ue M_{t}^{-} ) d P_{t}  \geq - \tfrac{\be}{1+\be} W_{t-}^{(0)} - \ue W_{t-}^{(1)}.
\end{align*}
The above inequalities imply that there is a constant $c_0\in (0,1)$ such that
\begin{align}
c_0 W_{t-} \leq W_t \leq W_{t-} \quad \textrm{for}\quad t\in [0,T]. \label{W_bound1}
\end{align}
Then, \eqref{W_bound1} and \eqref{SDE_X} produce
\begin{align}
 w_0 c_0^{P_t} e^{ \int_0^t \mu X_s d s } \mathcal{E}(\sigma X \cdot B)_t \leq W_t \leq w_0 e^{ \int_0^t \mu X_s d s } \mathcal{E}(\sigma X \cdot B)_t, \label{W_bound2}
\end{align}
where $\mathcal{E}(\sigma X \cdot B)$ is the Dol{\'e}ans-Dade exponential of the process $\big(\int_0^t \sigma X_s dB_s \big)_{t\geq 0}$. Since $0\leq X \leq 1$, Novikov's condition implies that $\mathcal{E}(4(1-\gamma)\sigma X \cdot B)$ is a martingale. For $\tfrac{d\tilde{\PP}}{d\PP}\big|_{\mathcal{F}_t}=\mathcal{E}(4(1-\gamma)\sigma X \cdot B)_t$, $t\in [0,T]$ and a constant $b>0$,
\begin{align}
\EE\Big[ b^{P_t} \mathcal{E}(\sigma X \cdot B)_t^{2(1-\gamma)}\Big] 
&\leq \sqrt{\EE\big[ b^{2P_t} \big] \cdot  \EE\big[  \mathcal{E}(\sigma X \cdot B)_t^{4(1-\gamma)}\big]  } =e^{\frac{b^2-1}{2} \lambda t} \sqrt{   \EE^{\tilde{\PP}}\big[e^{2(3-4\gamma)(1-\gamma)\sigma^2 \int_0^t X_s^2 ds} \big]}   \nonumber \\
&\leq e^{|\frac{b^2-1}{2}| \lambda T + |(3-4\gamma)(1-\gamma)| \sigma^2 T}.\label{W_bound3}
\end{align}
We combine \eqref{W_bound2} and \eqref{W_bound3} to conclude that
\begin{align}
\sup_{t\in [0,T]} \EE \big[ W_{t}^{2 (1 - \gamma)} \big]<\infty. \label{W_bound4}
\end{align}

 Let $\tau_{n} :=T \wedge \inf \{ t \geq 0 : P_t = n \}$ for $n\in \N$ and $\tau_0:=0$. We observe that 
\begin{align}
\textrm{for} \quad t\in [\tau_n,\tau_{n+1}), \quad 
\begin{cases}
 \textrm{if  }X_{\tau_n}=0, \textrm{  then  } X_t=0. \\
\textrm{if  } X_{\tau_n}=1, \textrm{  then  }X_t=1.\\ 
\textrm{if  } X_{\tau_n}\in (0,1), \textrm{  then  } X_t\in (0,1).
\end{cases}\label{X_01}
\end{align}
We apply Ito's formula to $\frac{W_t^{1 - \gamma}}{1 - \gamma} \cdot v(t, X_t)$ with \eqref{SDE_X} and \eqref{X_01}, and use the fact that $X_t$ and $W_t$ can only jump at $t=\tau_n$ for $n\in \N$ to obtain
\begin{align} \label{Proof_of_verification}
  & \tfrac{W_{\tau_{n+1}}^{1 - \gamma}}{1 - \gamma} \cdot v(\tau_{n + 1}, X_{\tau_{n + 1}}) - \tfrac{W_{\tau_{n}}^{1 - \gamma}}{1 - \gamma} \cdot v(\tau_{n}, X_{\tau_{n}}) \nonumber \\
  &=\begin{dcases}
      \int_{\tau_n}^{\tau_{n+1}} \tfrac{W_{s-}^{1 - \gamma}}{1 - \gamma} \bigg(\Big( v_t(s,x)+ x (1 - x) \left( \mu - \gamma \sigma^{2} x \right) v_{x}(s, x) + \tfrac{\sigma^{2}x^{2} (1 -x)^{2}}{2} v_{x x}(s, x)\\
   \,\,   + Q(x) v(s,x) \Big)\Big|_{x=X_{s-}}ds   + \sigma \Big( (1-\gamma)x v(s,x)+x(1-x)v_x(s,x)\Big)\Big|_{x=X_{s-}} dB_s \bigg) \quad \textrm{if  }  X_{\tau_n}\in (0,1),\\
      \int_{\tau_n}^{\tau_{n+1}} \tfrac{W_{s-}^{1 - \gamma}}{1 - \gamma} \, \big( v_t(s,0) + Q(0)v(s,0)\big) ds    \quad \textrm{if} \,\, X_{\tau_n}=0,\\
       \int_{\tau_n}^{\tau_{n+1}} \tfrac{W_{s-}^{1 - \gamma}}{1 - \gamma}\Big( \big( v_t(s,1) + Q(1)v(s,1)\big) ds + (1-\gamma)\sigma v(s,1) dB_s \Big)   \quad \textrm{if} \,\, X_{\tau_n}=1,\\
\end{dcases} \nonumber\\
&\qquad + \tfrac{W_{\tau_{n+1}}^{1 - \gamma}}{1 - \gamma} \cdot v(\tau_{n + 1}, X_{\tau_{n + 1}})-\tfrac{W_{\tau_{n+1}-}^{1 - \gamma}}{1 - \gamma} \cdot v(\tau_{n + 1}, X_{\tau_{n + 1}-}).
\end{align}
Since $\lim_{n\to \infty}\tau_n = T$ almost surely, the above expression produces
\begin{align}
& \tfrac{W_T^{1 - \gamma}}{1 - \gamma} \cdot v(T, X_T)-  \tfrac{w_0^{1 - \gamma}}{1 - \gamma} \cdot v(0, x_0) \nonumber\\
&=\sum_{n=0}^\infty \Big(\tfrac{W_{\tau_{n+1}}^{1 - \gamma}}{1 - \gamma} \cdot v(\tau_{n + 1}, X_{\tau_{n + 1}}) - \tfrac{W_{\tau_{n}}^{1 - \gamma}}{1 - \gamma} \cdot v(\tau_{n}, X_{\tau_{n}}) \Big) \nonumber\\
&=\int_0^T  \tfrac{W_{s-}^{1 - \gamma}}{1 - \gamma} \bigg(\Big( v_t(s,x)+ x (1 - x) \left( \mu - \gamma \sigma^{2} x \right) v_{x}(s, x) + \tfrac{\sigma^{2}x^{2} (1 -x)^{2}}{2} v_{x x}(s, x) \nonumber\\
&\qquad \qquad\qquad+ Q(x) v(s,x) \Big) 1_{\{ x\in(0,1)\}} + \big(v_t(s,x)+Q(x)v(s,x)\big)   1_{\{ x\in\{0,1\}\}} \bigg)\bigg|_{x=X_{s-}} ds \nonumber\\
&\quad +\int_0^T \tfrac{W_{s-}^{1 - \gamma}}{1 - \gamma} \Big( \sigma \big( (1-\gamma)xv(s,x)+x(1-x)v_x(s,x)\big) 1_{\{ x\in(0,1)\}} + (1-\gamma)\sigma v(s,1)  1_{\{x=1\}}\Big) \Big|_{x=X_{s-}} dB_s \nonumber\\
&\quad + \sum_{0<s\leq T} \Big(\tfrac{W_s^{1 - \gamma}}{1 - \gamma} \cdot v(s, X_s) - \tfrac{W_{s-}^{1 - \gamma}}{1 - \gamma} \cdot v(s, X_{s-}) \Big). \nonumber
\end{align}
The stochastic integral term above is a martingale due to \lemref{v_classical_solution_and_properties} (iii) and \eqref{W_bound4}. The sum of jumps term above can be written as
\begin{align*}
\int_0^T \tfrac{W_{s-}^{1 - \gamma}}{1 - \gamma} \left( v(s, y)\left( \big( \tfrac{1 + \be x}{1 + \be y} \big)^{1 - \gamma} 1_{\left\{ x\leq y  \right\}} + \big( \tfrac{1 - \ue x}{1 - \ue y} \big)^{1 - \gamma} 1_{\left\{ x>y \right\}} \right) - v(s,x) \right)\Big|_{(x,y)=\left(X_{s-}, Y_{s}\right)} dP_s,
\end{align*}
where $Y_{s}:=\frac{X_{s-}W_{s-}+M_s}{W_{s-}-\be M_s^+ - \ue M_s^-}$.
Since $(P_t-\lambda t)_{t\in [0,T]}$ is a martingale, \lemref{v_classical_solution_and_properties} (iii) and \eqref{W_bound4} imply that the expected value of the above expression is
\begin{align*}
\EE \Big[  \int_0^T \tfrac{W_{s-}^{1 - \gamma}}{1 - \gamma} \lambda \Big( v(s, y) \left( \big( \tfrac{1 + \be x}{1 + \be y} \big)^{1 - \gamma} 1_{\left\{ x\leq y  \right\}} + \big( \tfrac{1 - \ue x}{1 - \ue y} \big)^{1 - \gamma} 1_{\left\{ x>y \right\}} \right) - v(s,x) \Big)\Big|_{(x,y)=\left(X_{s-}, Y_s\right)} ds \Big].
\end{align*}
We combine these observations to obtain
\begin{align}
& \EE \Big[ \tfrac{W_T^{1 - \gamma}}{1 - \gamma} \cdot v(T, X_T) \Big] -  \tfrac{w_0^{1 - \gamma}}{1 - \gamma} \cdot v(0, x_0) \nonumber\\
&=\int_0^T  \tfrac{W_{s-}^{1 - \gamma}}{1 - \gamma} \bigg(\Big( v_t(s,x)+ x (1 - x) \left( \mu - \gamma \sigma^{2} x \right) v_{x}(s, x) + \tfrac{\sigma^{2}x^{2} (1 -x)^{2}}{2} v_{x x}(s, x)+ (Q(x)-\lambda) v(s,x)  \nonumber\\
& \qquad + \lambda v(s, y) \Big( \big( \tfrac{1 + \be x}{1 + \be y} \big)^{1 - \gamma} 1_{\left\{ x\leq y  \right\}} + \big( \tfrac{1 - \ue x}{1 - \ue y} \big)^{1 - \gamma} 1_{\left\{ x>y \right\}} \Big) \Big) 1_{\{ x\in(0,1)\}} + \Big(v_{t}(s, x) +  \left( Q(x) - \lambda \right) v(s,x)  \nonumber \\
& \qquad\qquad+ \lambda  v(s, y) \Big( \big( \tfrac{1 + \be x}{1 + \be y} \big)^{1 - \gamma} 1_{\left\{ x\leq y  \right\}} + \big( \tfrac{1 - \ue x}{1 - \ue y} \big)^{1 - \gamma} 1_{\left\{ x>y \right\}} \Big) \Big)   1_{\{ x\in\{0,1\}\}} \bigg) \bigg|_{(x,y)=\left(X_{s-}, Y_s\right)}ds. \label{verification0}
\end{align}
The above equality and \lemref{v_classical_solution_and_properties} imply that for any admissible trading strategy $M$,
\begin{align}
 \EE \Big[ \tfrac{W_T^{1 - \gamma}}{1 - \gamma} \Big] \leq  \tfrac{w_0^{1 - \gamma}}{1 - \gamma} \cdot v(0, x_0). \label{verification1}
\end{align}

To complete the proof, we construct an optimal strategy $\hat M$ that satisfies the equality in \eqref{verification1}. We observe that the following map is continuous on $(t, x, y) \in [0, T] \times [0, 1]^{2}$:
\begin{align} \label{Supremum_part_map}
  (t, x, y) \mapsto \; \tfrac{v(t, y)}{1 - \gamma} \left( \big( \tfrac{1 + \be x}{1 + \be y} \big)^{1 - \gamma}1_{ \left\{ x\leq y \right\} } + \big( \tfrac{1 - \ue x}{1 - \ue y} \big)^{1 - \gamma}1_{ \left\{ x>y \right\} } \right).
\end{align}
Then, due to \lemref{meas_lem}, there exists a measurable function $\hat{y}: [0,T] \times [0,1] \to [0, 1]$ such that 
\begin{align} \label{Argmax_map}
  & \hat{y}(t, x) \in \argmax_{y \in [0, 1]} \left( \tfrac{v(t, y)}{1 - \gamma} \left( \big( \tfrac{1 + \be x}{1 + \be y} \big)^{1 - \gamma}1_{ \left\{ x\leq y \right\} } + \big( \tfrac{1 - \ue x}{1 - \ue y} \big)^{1 - \gamma} 1_{ \left\{ x>y \right\} } \right) \right).
\end{align}
We define a measurable function $m:[0,T]\times [0,\infty)\times [0,1] \to \R$ as
\begin{align} \label{Define_m}
  m(t, w,x) & := \tfrac{w (\hat{y}(t, x) - x)}{1 + \be \hat{y}(t, x)} \cdot 1_{ \left\{ x\leq \hat{y}(t, x) \right\} } + \tfrac{w (\hat{y}(t, x) - x)}{1 - \ue \hat{y}(t, x)} \cdot 1_{ \left\{ x>\hat{y}(t, x)  \right\} }.
\end{align}
Let $(\hat W, \hat X)$ be the solution of the SDEs in \eqref{SDE_X} with $M_t = m(t,W_{t-},X_{t-})$, and $\hat M_t :=  m(t,\hat W_{t-},\hat X_{t-})$. By construction, we have $\hat y(t, \hat X_{t-})= \tfrac{\hat X_{t-}\hat W_{t-} + \hat M_t}{\hat W_{t-} - \be \hat M_t^+ - \ue \hat M_t^-}$. Then \eqref{verification0}, \eqref{Argmax_map} and \lemref{v_classical_solution_and_properties} produce \eqref{verification1} with the equality. Therefore, we conclude \eqref{value_check} and the optimality of $\hat M$.
\end{proof}

The next lemma shows that $x\mapsto \tfrac{v(t,x)}{1-\gamma}$ is strictly concave and $v$ has uniform bounds independent of $\be, \ue$ and $\lambda$. To treat the concavity part, we define $\tV: [0,T]\times ([0,\infty)^2\setminus \{(0,0)\}) \to \R$ as
\begin{align}
\tilde V(t,a,b):=V\big(t, \tfrac{b}{a+b},a+b\big). \label{tV_def}
\end{align}
We notice that $\tilde V(t,a,b)$ is the value function of our control problem with $W_t^{(0)}=a$ and $W_t^{(1)}=b$.

\begin{lemma} \label{v_concave}
(i) For $t\in [0,T)$, the maps $(a,b)\mapsto \tilde V(t,a,b)$ and $x\mapsto \tfrac{v(t,x)}{1-\gamma}$ are strictly concave.

\noindent (ii) There are constants $\overline{v}\geq \underline{v}>0$ independent of $\be, \ue$ and $\lambda$ such that
\begin{align}
  \underline{v} \leq v(t, x) \leq \overline{v} \quad \textrm{for}\quad (t, x) \in [0, T] \times [0, 1]. \label{Range_of_v}
\end{align}
\end{lemma}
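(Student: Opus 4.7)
My plan for (i) is to first prove concavity of the joint value function $\tilde V$ by passing convex combinations of admissible strategies through the wealth SDEs, and then promote weak concavity to strict concavity via the positive-probability event that no Poisson arrival occurs in $(t,T]$, on which every admissible strategy collapses to the no-trade strategy. Strict concavity of $v(t,\cdot)/(1-\gamma)$ then follows immediately by restricting $\tilde V$ to the line $\{a+b=1\}$, on which $\tilde V(t,a,b)=v(t,b)/(1-\gamma)$. For (ii), I sandwich $v$ between constants obtained from the frictionless Merton value (which dominates ours) and the always-admissible no-trade strategy; neither bound depends on $\be,\ue,\lambda$. The main obstacle is strict concavity: ensuring the strictness gap is uniform in the choice of near-optimal strategies, which I resolve via the strategy-independent no-Poisson event.

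\textbf{Concavity of $\tilde V$.} Fix $z_i=(a_i,b_i)$, $\theta\in(0,1)$, $\epsilon>0$, and let $M^{(i)}$ be admissible from $z_i$ with $\EE[U(W^{(i)}_T)]\geq \tilde V(t,z_i)-\epsilon$, where $U(w):=w^{1-\gamma}/(1-\gamma)$. Consider the convex combination $M^{(\theta)}:=\theta M^{(1)}+(1-\theta)M^{(2)}$ started from $z_\theta:=\theta z_1+(1-\theta)z_2$. Two structural facts from \eqref{SDE_W} drive the proof: the $W^{(1)}$-SDE is linear in $(M,w_0^{(1)})$, while the jump coefficient $M\mapsto (1-\ue)M^--(1+\be)M^+$ is concave (it is piecewise linear with slopes $-(1+\be)<-(1-\ue)$). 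Combining these, the total wealth $W^{(\theta)}$ arising from $M^{(\theta)}$ satisfies $W^{(\theta)}_T\geq \theta W^{(1)}_T+(1-\theta)W^{(2)}_T$ pointwise, and $M^{(\theta)}$ is admissible from $z_\theta$ because the upper bound in \eqref{Admissible_class} is inherited from $W^{(0),(\theta)}\geq \theta W^{(0),(1)}+(1-\theta)W^{(0),(2)}$ and the lower bound from the exact identity $W^{(1),(\theta)}=\theta W^{(1),(1)}+(1-\theta)W^{(1),(2)}$. Using monotonicity and concavity of $U$,
\[
\tilde V(t,z_\theta)\geq \EE[U(W^{(\theta)}_T)]\geq \theta\EE[U(W^{(1)}_T)]+(1-\theta)\EE[U(W^{(2)}_T)]\geq \theta\tilde V(t,z_1)+(1-\theta)\tilde V(t,z_2)-\epsilon,
\]
and letting $\epsilon\downarrow 0$ gives concavity.

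\textbf{Strict concavity.} Assume $t<T$ and $z_1\neq z_2$. The event $N:=\{P_T-P_t=0\}$ has probability $e^{-\lambda(T-t)}>0$ and is independent of $B$. On $N$ no trading can occur, so regardless of the strategy the terminal wealth equals $a_i+b_i \tilde S_{t,T}$, where $\tilde S_{t,T}:=S_T/S_t$ has a continuous log-normal distribution; hence the equation $(a_1-a_2)+(b_1-b_2)\tilde S_{t,T}=0$ has probability zero, so $W^{(1)}_T\neq W^{(2)}_T$ a.s.\ on $N$. Moreover, on $N$ we have the exact identity $W^{(\theta)}_T=\theta W^{(1)}_T+(1-\theta)W^{(2)}_T$, so strict concavity of $U$ produces the strategy-independent positive quantity
\[
\delta_0:=e^{-\lambda(T-t)}\,\EE\!\left[U\big(\theta(a_1+b_1\tilde S_{t,T})+(1-\theta)(a_2+b_2\tilde S_{t,T})\big)-\theta U(a_1+b_1\tilde S_{t,T})-(1-\theta)U(a_2+b_2\tilde S_{t,T})\right]>0,
\]
which strengthens the middle inequality above to $\EE[U(W^{(\theta)}_T)]\geq \theta\EE[U(W^{(1)}_T)]+(1-\theta)\EE[U(W^{(2)}_T)]+\delta_0$; letting $\epsilon\downarrow 0$ yields $\tilde V(t,z_\theta)\geq \theta\tilde V(t,z_1)+(1-\theta)\tilde V(t,z_2)+\delta_0>\theta\tilde V(t,z_1)+(1-\theta)\tilde V(t,z_2)$. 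Restriction to $a+b=1$ then delivers strict concavity of $v(t,\cdot)/(1-\gamma)$.

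\textbf{Bounds on $v$.} For (ii), the frictionless Merton problem (set $\be=\ue=0$ and allow continuous trading) dominates ours, so $V\leq V^{\mathrm{Mer}}=\tfrac{w^{1-\gamma}}{1-\gamma}e^{\rho(T-t)}$ with Merton constant $\rho=\rho(\mu,\sigma,\gamma)$ independent of $\be,\ue,\lambda$, giving a bound on $v$ (upper if $\gamma<1$, lower if $\gamma>1$ after dividing through by the factor $w^{1-\gamma}/(1-\gamma)$ whose sign depends on $\gamma$). Conversely, the no-trade strategy $M\equiv 0$ is admissible for every $\be,\ue,\lambda$ and yields $W_T=w((1-x)+x\tilde S_{t,T})$; the elementary inequality $(1-x)+x\tilde S_{t,T}\geq \min(1,\tilde S_{t,T})$ together with the finiteness and continuity in $t$ of the log-normal moment $\EE[\tilde S_{t,T}^{1-\gamma}]$ on $[0,T]$ bounds $\EE[((1-x)+x\tilde S_{t,T})^{1-\gamma}]$ uniformly on $[0,T]\times[0,1]$, and hence $v$ in the opposite direction. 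Combining the two bounds yields $\underline v\leq v\leq \overline v$ with constants depending only on $\mu,\sigma,\gamma,T$.
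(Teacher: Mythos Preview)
Your proof is correct and follows essentially the same approach as the paper. For (i), the paper defers to Proposition~3.6 in \cite{GC23}, which is precisely your convex-combination argument with the no-Poisson event supplying the strategy-independent strictness gap; for (ii), the paper's change-of-measure computation \eqref{W_bound11}--\eqref{v_bound1} is just an explicit form of your Merton-domination bound, and the no-trade bound is handled with the slightly different elementary inequalities \eqref{v_bound3} in place of your $\min(1,\tilde S_{t,T})$ estimate, to the same effect.
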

\begin{proof}
(i) This part of the proof is essentially the same as the proof of Proposition 3.6 in \cite{GC23}.

(ii) Let $M$ be an admissible trading strategy and $(W,X)$ be the corresponding solution of \eqref{SDE_X}. Then the right-hand side inequality in \eqref{W_bound2} implies
\begin{align}
\tfrac{W_T^{1-\gamma}}{1-\gamma}  \leq \tfrac{w_0^{1-\gamma}}{1-\gamma} \, e^{\int_0^T Q(X_s)ds} \mathcal{E}((1-\gamma)\sigma X \cdot B)_T, \label{W_bound11}
\end{align}
where $Q$ is defined in \eqref{Q_def}. Since $0\leq X \leq 1$, Novikov's condition implies that $\mathcal{E}((1-\gamma)\sigma X \cdot B)$ is a martingale. Then \eqref{W_bound11} implies that for $\tfrac{d\tilde{\PP}}{d\PP}\big|_{\mathcal{F}_T}=\mathcal{E}((1-\gamma)\sigma X \cdot B)_T$, 
\begin{align}
\EE\Big[\tfrac{W_T^{1-\gamma}}{1-\gamma} \Big] \leq \tfrac{w_0^{1-\gamma}}{1-\gamma}\cdot  \EE^{\tilde \PP} \Big[e^{\int_0^T Q(X_s)ds} \Big]. \label{W_bound12}
\end{align}
The definition of $V$ in \eqref{Definition_of_value_function}, \thmref{Verification_v} and \eqref{W_bound12} produce the following inequalities:
\begin{align}
v(0,x_0)\leq e^{ \| Q \|_\infty T} \quad \textrm{for}\quad 0<\gamma<1, \quad v(0,x_0)\geq e^{ - \| Q \|_\infty T} \quad \textrm{for}\quad \gamma>1.  \label{v_bound1}
\end{align}

Since $M_{s} = 0$ for all $s \in [0, T]$ is an admissible strategy, we have
\begin{align}
&\tfrac{w_0^{1-\gamma}}{1-\gamma} \cdot v(0,x_0)=V(0,x_0,w_0) \geq \EE\Big[ \tfrac{1}{1-\gamma} \Big( (1-x_0)w_0+x_0 w_0 e^{(\mu-\frac{\sigma^2}{2})T + \sigma B_T}\Big)^{1-\gamma} \Big] \nonumber\\
&\Longrightarrow \quad \tfrac{v(0,x_0)}{1-\gamma}\geq  \tfrac{1}{1-\gamma} \EE\Big[ \Big( 1-x_0+x_0  e^{(\mu-\frac{\sigma^2}{2})T + \sigma B_T}\Big)^{1-\gamma} \Big].\label{v_bound2}
\end{align}
The following inequalities can be checked easily:
\begin{align}
\begin{dcases}
\textrm{If  $0<\gamma<1$, then $( 1-x+x a)^{1-\gamma} \geq 1-x + x a^{1-\gamma}$ for $x\in [0,1]$ and $a>0$}.\\
\textrm{If  $\gamma>1$, then $( 1-x+x a)^{1-\gamma} \leq 1+ a^{1-\gamma}$ for $x\in [0,1]$ and $a>0$}.
\end{dcases} \label{v_bound3}
\end{align}
We combine \eqref{v_bound2}, \eqref{v_bound3} and $\EE\big[ e^{(1-\gamma)(\mu-\frac{\sigma^2}{2})T + (1-\gamma)\sigma B_T} \big]= e^{(1-\gamma)(\mu-\frac{\gamma\sigma^2}{2})T}$ to obtain
\begin{equation}
\begin{split}\label{v_bound4}
&v(0,x_0)\geq 1-x_0 + x_0 e^{(1-\gamma)(\mu-\frac{\gamma\sigma^2}{2})T} \geq e^{-|(1-\gamma)(\mu-\frac{\gamma\sigma^2}{2})T|}  \quad \textrm{for}\quad 0<\gamma<1,\\
&v(0,x_0)\leq 1+ e^{(1-\gamma)(\mu-\frac{\gamma\sigma^2}{2})T} \leq 1+ e^{|(1-\gamma)(\mu-\frac{\gamma\sigma^2}{2})T|}  \quad \textrm{for}\quad \gamma>1.\\
\end{split}
\end{equation}
We check that the inequalities in \eqref{v_bound1} and \eqref{v_bound4} still hold after replacing $v(0,x_0)$ by $v(t,x)$. 
\end{proof}

\section{The Optimal trading strategy}

In this section, we characterize the optimal trading strategy in terms of the {\it no-trade region}. We start with the construction of the candidate boundary points $\uy$ and $\by$ of the no-trade region.

\begin{lemma}\label{Boundaries_of_NT_region}
For each $t \in [0, T)$, there exist $0 \leq \uy(t) \leq \by(t) \leq 1$ such that 
\begin{align} \label{uy_oy_argmax}
  \{ \uy(t) \} = \argmax_{y \in [0, 1]} \left( \tfrac{v(t, y)}{(1 - \gamma) ( 1 + \be y )^{1 - \gamma}} \right), \quad \{ \by(t) \} = \argmax_{y \in [0, 1]} \left( \tfrac{v(t, y)}{(1 - \gamma) ( 1 - \ue y )^{1 - \gamma}} \right).
\end{align}
To be more specific, the following statements hold:\\
(i) The map $y \mapsto \frac{v(t, y)}{(1 - \gamma) ( 1 + \be y )^{1 - \gamma}}$ strictly increases on $y\in [0,\uy(t)]$ and decreases on $y\in [\uy(t),1]$.\\
 If $0<\uy(t)<1$, then $\uy(t)$ satisfies $\tfrac{v_{x}(t, \uy(t))}{1 - \gamma} = \tfrac{\be v(t, \uy(t))}{1 + \be \uy(t)}$.\\
(ii) The map $y \mapsto \frac{v(t, y)}{(1 - \gamma) ( 1 - \ue y )^{1 - \gamma}}$ strictly increases on $y\in [0,\by(t)]$ and decreases on $y\in [\by(t),1]$.\\
If $0<\by(t)<1$, then $\by(t)$ satisfies $\tfrac{v_{x}(t, \by(t))}{1 - \gamma} = - \tfrac{\ue v(t, \by(t))}{1 - \ue \by(t)}$.\\
(iii) For $(t, x) \in [0, T) \times [\uy(t), \by(t)]$, $- \tfrac{\ue}{1 - \ue}\, \overline{v} \leq \tfrac{v_{x}(t, x)}{1 - \gamma} \leq \be \, \overline{v}$ with $\overline v$ appears in \lemref{v_concave}.
\end{lemma}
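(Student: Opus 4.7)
The plan is to read each maximand in \eqref{uy_oy_argmax} as a one-parameter slice of the bivariate value function $\tilde V$ from \eqref{tV_def}, and then exploit the strict concavity of $\tilde V$ from \lemref{v_concave}(i). Setting $a(y):=\tfrac{1-y}{1+\be y}$ and $b(y):=\tfrac{y}{1+\be y}$, one checks $a(y)+(1+\be)b(y)=1$ and $\tilde V(t,a(y),b(y))=\tfrac{v(t,y)}{(1-\gamma)(1+\be y)^{1-\gamma}}$; economically, $(a(y),b(y))$ is the post-trade state reached by buying from the pure-cash position $(1,0)$ to fraction $y$. Analogously, $\tilde a(y):=\tfrac{(1-\ue)(1-y)}{1-\ue y}$ and $\tilde b(y):=\tfrac{(1-\ue)y}{1-\ue y}$ trace the ``sell line segment'' from $(0,1)$, and $\tilde V(t,\tilde a(y),\tilde b(y))=(1-\ue)^{1-\gamma}\tfrac{v(t,y)}{(1-\gamma)(1-\ue y)^{1-\gamma}}$. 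Each parametrization is strictly monotone in $y$ (since $b(y)$ and $\tilde b(y)$ are strictly increasing), so strict concavity of $\tilde V$ on the underlying segment transfers to strict unimodality of each maximand in $y$. Existence of $\uy(t),\by(t)\in[0,1]$ then follows from continuity and compactness, uniqueness and the ``increase-then-decrease'' shape in (i)--(ii) from strict unimodality, and the interior first-order conditions by differentiating using $v\in C^{1,2}([0,T]\times(0,1))$.

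To show $\uy(t)\leq \by(t)$ I would use the other half of \lemref{v_concave}(i): strict concavity of $x\mapsto v(t,x)/(1-\gamma)$ on $[0,1]$ makes $v_x(t,\cdot)/(1-\gamma)$ strictly decreasing on $(0,1)$, so $v(t,\cdot)/(1-\gamma)$ has a unique maximizer $y_0\in[0,1]$. A direct computation shows the derivative of the buy-maximand equals $(1+\be y)^{\gamma-2}$ times $(1+\be y)\tfrac{v_x(t,y)}{1-\gamma}-\be v(t,y)$, which, since $v>0$ by \lemref{v_concave}(ii), is strictly negative whenever $v_x(t,y)/(1-\gamma)\leq 0$, i.e.\ for $y\geq y_0$; hence the buy-maximand is strictly decreasing on $[y_0,1]$, forcing $\uy(t)\leq y_0$. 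Symmetrically, the derivative of the sell-maximand equals $(1-\ue y)^{\gamma-2}$ times $(1-\ue y)\tfrac{v_x(t,y)}{1-\gamma}+\ue v(t,y)$, strictly positive for $y\leq y_0$, so $\by(t)\geq y_0$, and the sandwich $\uy(t)\leq y_0\leq \by(t)$ follows. For (iii), the same monotonicity now yields the bounds: for any $x\in[\uy(t),\by(t)]$, the buy-maximand is nonincreasing on $[\uy(t),1]$, so $(1+\be x)v_x(t,x)/(1-\gamma)\leq \be v(t,x)\leq \be\overline v$, and $1+\be x\geq 1$ gives the upper bound; the sell-maximand is nondecreasing on $[0,\by(t)]$, so $(1-\ue x)v_x(t,x)/(1-\gamma)\geq -\ue v(t,x)\geq -\ue\overline v$, and $1-\ue x\geq 1-\ue$ gives the lower bound.

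The main subtlety is the transition from strict concavity of $\tilde V$ along a line segment to strict unimodality of the maximand in the parameter $y$: the reparametrizations $y\mapsto(a(y),b(y))$ and $y\mapsto(\tilde a(y),\tilde b(y))$ are nonlinear in $y$, so the maximands are not in general concave in $y$, only strictly unimodal. Strict monotonicity of these reparametrizations is precisely what is needed to pull back the ``increase-then-decrease'' shape, and this weaker form is enough for all three conclusions. Secondary care is required for the boundary cases $\uy(t)\in\{0,1\}$ or $\by(t)\in\{0,1\}$, where the interior first-order conditions in (i)--(ii) become vacuous; the monotonicity arguments used for (iii) and for the $y_0$-sandwich continue to apply, so the conclusions still hold.
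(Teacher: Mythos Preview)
Your proof is correct and follows essentially the same approach as the paper: both read the two maximands as restrictions of the strictly concave $\tilde V$ to line segments in $(a,b)$-space (the paper via the affine parameter $\eta$, you via the nonlinear but strictly monotone parameter $y$), deduce strict unimodality and the interior first-order conditions, and obtain (iii) from the signs of the derivatives of the two maximands on $[\uy(t),\by(t)]$. Your sandwich $\uy(t)\le y_0\le\by(t)$ through the maximizer $y_0$ of $x\mapsto v(t,x)/(1-\gamma)$ is a minor stylistic variation on the paper's direct comparison of the first-order inequalities at $\by(t)$, but the content is the same.
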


\begin{proof}
Recall $\tilde V$ in \eqref{tV_def}. Due to \thmref{Verification_v}, the following equation holds:
\begin{align}
\tilde V\big(t,1-(1+\be)\eta,\eta\big) = \tfrac{v(t,y)}{(1-\gamma)(1+\be y)^{1-\gamma}} \Big|_{y=\frac{\eta}{1-\be \eta}} \quad \textrm{for}\quad \eta\in [0,\tfrac{1}{1+\be}]. \label{ylem1}
\end{align} 
\lemref{v_concave} (i) implies that the map $\eta\mapsto \tilde V(t,1-(1+\be)\eta,\eta)$ is strictly concave on $\eta\in [0,\tfrac{1}{1+\be}]$. Let $\underline \eta(t):=\argmax_{0\leq \eta\leq \frac{1}{1+\be}} \tilde V(t,1-(1+\be)\eta,\eta)$ be the unique maximizer. Since the map $\eta\mapsto \tfrac{\eta}{1-\be \eta}$ strictly increases on $\eta\in [0,\tfrac{1}{1+\be}]$, the definition of $\underline \eta(t)$ and \eqref{ylem1} imply that the left-hand side equation of \eqref{uy_oy_argmax} holds with $\uy(t)=\tfrac{\underline \eta(t)}{1-\be \underline \eta(t)}$ and the statements in (i) hold. 

Similarly, the following equation holds:
\begin{align}
\tilde V\big(t,1-(1-\ue)\eta,\eta\big) = \tfrac{v(t,y)}{(1-\gamma)(1-\ue y)^{1-\gamma}} \Big|_{y=\frac{\eta}{1+\ue \eta}} \quad \textrm{for}\quad \eta\in [0,\tfrac{1}{1-\ue}]. \label{ylem2}
\end{align} 
The strict concavity of $\eta\mapsto \tilde V(t,1-(1-\ue)\eta,\eta)$ ensures the existence of the unique maximizer $\overline \eta(t):=\argmax_{0 \leq \eta \leq \frac{1}{1-\ue}} \tilde V\big(t,1-(1-\ue)\eta,\eta\big)$.
Then, the right-hand side equation of \eqref{uy_oy_argmax} holds with $\by(t)=\tfrac{\overline \eta(t)}{1+\ue \overline \eta(t)}$ and the statements in (ii) hold. 

We observe that (i) and (ii) imply $ - \tfrac{\ue v(t, x)}{1 - \ue x}\leq \tfrac{v_{x}(t, x)}{1 - \gamma} \leq  \tfrac{\be v(t, x)}{1 + \be x}$ for $x\in [\uy(t),\by(t)]$. Then, we conclude (iii) by this observation and \lemref{v_concave} (ii). 

It only remains to check $\uy(t)\leq \by(t)$. The inequality holds when $\by(t)=1$. Suppose that $\by(t)<1$. Then, (ii) implies     
$\tfrac{v_{x}(t, \by(t))}{1 - \gamma} \leq  - \tfrac{\ue v(t, \by(t))}{1 - \ue \by(t)} \leq \tfrac{\be v(t, \by(t))}{1 + \be \by(t)}$, where the second inequality is due to the positivity of $v$. This observation and (i) produce $\uy(t)\leq \by(t)$. 
\end{proof}

In the proof of \thmref{Verification_v}, we construct the optimal trading strategy via $\hat y$ in \eqref{Argmax_map}. The next theorem explicitly characterizes $\hat y(t,x)$ in terms of $\uy(t)$ and $\by(t)$ as defined in \lemref{Boundaries_of_NT_region}.

\begin{theorem} For fixed $t \in [0, T)$, the argmax in \eqref{Argmax_map} is a singleton and $\hat y$ is
\begin{align} \label{Optimal_strategy}
  \hat{y}(t, x) =
  \begin{cases}
  \uy(t) &\textrm{if  } x\in [0,\uy(t))\\
  x  &\textrm{if  } x\in [\uy(t),\by(t)]\\
  \by(t)  &\textrm{if  } x\in (\by(t),1]\\
  \end{cases}
\end{align}
where $\uy(t)$ and $\by(t)$ are determined in \lemref{Boundaries_of_NT_region}.
\end{theorem}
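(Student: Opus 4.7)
The plan is to split the objective into two monotone pieces and then argue by cases on the position of $x$ relative to $\uy(t)$ and $\by(t)$, invoking the monotonicity structure of \lemref{Boundaries_of_NT_region} at each step.

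First I would introduce notation. For fixed $t\in [0,T)$ define
\[
 f_1(y):=\tfrac{v(t,y)}{(1-\gamma)(1+\be y)^{1-\gamma}},\qquad f_2(y):=\tfrac{v(t,y)}{(1-\gamma)(1-\ue y)^{1-\gamma}},
\]
and let $F(y)$ denote the objective in \eqref{Argmax_map}. The first key observation is that $F$ has a clean piecewise description: for $y\in [x,1]$ we have $F(y)=f_1(y)\,(1+\be x)^{1-\gamma}$, while for $y\in [0,x)$ we have $F(y)=f_2(y)\,(1-\ue x)^{1-\gamma}$, with both pieces agreeing at the seam $y=x$ (both equal $v(t,x)/(1-\gamma)$), so $F$ is continuous on $[0,1]$. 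Since $(1\pm \cdot)^{1-\gamma}$ are strictly positive constants as functions of $y$, maximizing $F$ on $[x,1]$ (resp.\ $[0,x)$) reduces to maximizing $f_1$ (resp.\ $f_2$) there. The monotonicity structure supplied by \lemref{Boundaries_of_NT_region}(i)--(ii) is exactly what I need: $f_1$ strictly increases on $[0,\uy(t)]$ and strictly decreases on $[\uy(t),1]$, and analogously for $f_2$ with peak at $\by(t)$.

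Next I would carry out the case analysis.
\textbf{Case $x\in [0,\uy(t))$.} On $[x,1]$ the function $f_1$ is unimodal with interior peak at $\uy(t)\in (x,1]$, so the $F$-maximum on $[x,1]$ is uniquely attained at $y=\uy(t)$. On $[0,x)$, $f_2$ is strictly increasing (as $x<\uy(t)\le\by(t)$), so $F$ is strictly bounded above on $[0,x)$ by its value at the seam $v(t,x)/(1-\gamma)=F(x)$, which is in turn strictly dominated by $F(\uy(t))$ because $f_1(x)<f_1(\uy(t))$. Hence the unique argmax is $\uy(t)$.
\textbf{Case $x\in [\uy(t),\by(t)]$.} On $[x,1]$, $f_1$ is strictly decreasing (as $x\ge\uy(t)$), giving unique max at the seam $y=x$. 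On $[0,x)$, $f_2$ is strictly increasing (as $x\le\by(t)$), so $F(y)<v(t,x)/(1-\gamma)=F(x)$ strictly. The unique argmax is $y=x$.
\textbf{Case $x\in (\by(t),1]$.} This is symmetric to Case 1: on $[0,x)$ the function $f_2$ is unimodal with interior peak $\by(t)<x$, so its unique max on that interval is at $\by(t)$. On $[x,1]$, $f_1$ is strictly decreasing (since $x>\by(t)\ge\uy(t)$), giving sup $F(x)=v(t,x)/(1-\gamma)$. Finally, $F(\by(t))=f_2(\by(t))(1-\ue x)^{1-\gamma}>f_2(x)(1-\ue x)^{1-\gamma}=F(x)$, so the unique argmax is $\by(t)$.

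The conclusions of the three cases together yield \eqref{Optimal_strategy} and the singleton claim. The only conceptual subtlety, and the step I would be most careful about, is ensuring all inequalities are \emph{strict} and have the correct direction for both $0<\gamma<1$ and $\gamma>1$; this is handled uniformly by working with $f_1,f_2$ (which absorb the sign of $1-\gamma$ via the $\tfrac{1}{1-\gamma}$ factor, so monotonicity is as stated in \lemref{Boundaries_of_NT_region}) and by noting that the scalar factors $(1+\be x)^{1-\gamma}$ and $(1-\ue x)^{1-\gamma}$ are strictly positive since $\be>0$, $0<\ue<1$, and $x\in [0,1]$, so that multiplying preserves strict inequalities. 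No delicate estimate is required; this is essentially a single-variable, piecewise-unimodal optimization argument.
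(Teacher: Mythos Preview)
Your proof is correct and follows essentially the same approach as the paper: split the objective into the ``buy'' piece on $[x,1]$ (a positive multiple of $f_1$) and the ``sell'' piece on $[0,x)$ (a positive multiple of $f_2$), then exploit the strict unimodality of $f_1,f_2$ from \lemref{Boundaries_of_NT_region}. The paper organizes the argument by first computing the maximum of each piece as in \eqref{max_form1}--\eqref{max_form2} and then comparing, while you organize it by a direct case analysis on the position of $x$; these are the same argument in a different order, and your version makes the uniqueness reasoning slightly more explicit.
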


\begin{proof}
We rephrase the maximization in \eqref{Argmax_map} as
\begin{equation}
\begin{split} \label{max_form1}
  & \max_{y \in [0, 1]}  \left( \tfrac{v(t, y)}{1 - \gamma} \left( \big( \tfrac{1 + \be x}{1 + \be y} \big)^{1 - \gamma}1_{ \left\{ x\leq y \right\} } + \big( \tfrac{1 - \ue x}{1 - \ue y} \big)^{1 - \gamma} 1_{ \left\{ x>y \right\} } \right) \right)  \\
  & = \max \bigg\{ \max_{y \in [x, 1]} \left( \tfrac{v(t, y)}{1 - \gamma} \big( \tfrac{1 + \be x}{1 + \be y} \big)^{1 - \gamma} \right), \max_{y \in [0, x]} \left( \tfrac{v(t, y)}{1 - \gamma} \big( \tfrac{1 - \ue x}{1 - \ue y} \big)^{1 - \gamma} \right) \bigg\}. 
\end{split}
\end{equation}
Using \lemref{Boundaries_of_NT_region}, we evaluate 
\begin{equation}
\begin{split} \label{max_form2}
  & \max_{y \in [x, 1]} \left( \tfrac{v(t, y)}{1 - \gamma} \big( \tfrac{1 + \be x}{1 + \be y} \big)^{1 - \gamma} \right) = \begin{cases} \tfrac{v(t, x)}{1 - \gamma} &\text{if  }  x \geq \uy(t) \\ 
  \tfrac{v(t, \uy(t))}{1 - \gamma} \big( \tfrac{1 + \be x}{1 + \be \uy(t)} \big)^{1 - \gamma} & \text{if  } x < \uy(t) 
  \end{cases},   \\  
  & \max_{y \in [0, x]} \left( \tfrac{v(t, y)}{1 - \gamma} \big( \tfrac{1 - \ue x}{1 - \ue y} \big)^{1 - \gamma} \right) = \begin{cases} \tfrac{v(t, x)}{1 - \gamma} & \text{if  } x \leq \by(t) \\ \tfrac{v(t, \by(t))}{1 - \gamma} \big( \frac{1 - \ue x}{1 - \ue \by(t)} \big)^{1 - \gamma} & \text{if  } x >\by(t)
   \end{cases}.
\end{split}
\end{equation}
Since $\uy(t) \leq \by(t)$, \eqref{max_form1} and \eqref{max_form2} imply
\begin{align}
  &\max_{y \in [0, 1]}  \Big( \tfrac{v(t, y)}{1 - \gamma} \Big( \big( \tfrac{1 + \be x}{1 + \be y} \big)^{1 - \gamma}1_{ \left\{ x\leq y \right\} } + \big( \tfrac{1 - \ue x}{1 - \ue y} \big)^{1 - \gamma} 1_{ \left\{ x>y \right\} } \Big) \Big) \nonumber\\
&  = \begin{cases} \tfrac{v(t, \uy(t))}{1 - \gamma} \big( \tfrac{1 + \be x}{1 + \be \uy(t)} \big)^{1 - \gamma} & \text{if  } x \in [0, \uy(t)) \\ \tfrac{v(t, x)}{1 - \gamma} & \text{if  }  x \in [\uy(t), \by(t)] \\ \tfrac{v(t, \by(t))}{1 - \gamma} \big( \tfrac{1 - \ue x}{1 - \ue \by(t)} \big)^{1 - \gamma} & \text{if  } x \in (\by(t), 1] \end{cases}  \label{L_origin}
\end{align}
and we conclude that the corresponding unique maximizer is as in \eqref{Optimal_strategy}.
\end{proof}

Our next task is to provide recursive stochastic representations for $v$ and its partial derivatives ($v_x, v_{xx}, v_t$, etc.) to facilitate further analysis. By applying the Feynman-Kac formula to the PDE \eqref{Main_PDE}, we derive the following recursive stochastic representation:\footnote{The representation in \eqref{implicit_expression} is recursive because it involves $v$ itself.}
\begin{align}
v(t,x)
&= \EE \Big[e^{ \int_t^T \left( Q(\Chi_s^{(t,x)})-\lambda \right) du} +  \int_{t}^{T} e^{ \int_t^s \left( Q(\Chi_s^{(t,x)})-\lambda  \right) du} \cdot \lambda (1 - \gamma) \nonumber\\
&\qquad  \cdot \sup_{y \in [0, 1]} \Big( \tfrac{v(s, y)}{1 - \gamma} \Big( \big( \tfrac{1 + \be \Chi_s^{(t,x)}}{1 + \be y} \big)^{1 - \gamma} 1_{\{ \Chi_s^{(t,x)}\leq y  \}} + \big( \tfrac{1 - \ue \Chi_s^{(t,x)}}{1 - \ue y} \big)^{1 - \gamma} 1_{\{ \Chi_s^{(t,x)}>y \}} \Big) \Big) ds \Big], \label{implicit_expression}
\end{align}
where $\Chi_s^{(t,x)}$ satisfies the following SDE:
\begin{align}
d\Chi_s^{(t,x)} &= \Chi_s^{(t,x)}(1-\Chi_s^{(t,x)})(\mu- \gamma \sigma^2 \Chi_s^{(t,x)})ds + \sigma \Chi_s^{(t,x)}(1-\Chi_s^{(t,x)}) dB_s, \quad \Chi_t^{(t,x)}=x. \label{Chi_SDE}
\end{align}
One may also derive the expressions for the partial derivatives of $v$ with respect to $x$ by taking the derivatives inside the expectation in \eqref{implicit_expression}. These expressions are in terms of $\Chi_s^{(t,x)}$ and its partial derivatives with respect to $x$, which are not {\it explicit}: the SDE in \eqref{Chi_SDE} does not have an explicit solution.\footnote{For example, $v_{xx}$ is expressed in terms of $\big(\Chi_s^{(t,x)}, \frac{\partial}{\partial x} \Chi_s^{(t,x)},\frac{\partial^2}{\partial x^2} \Chi_s^{(t,x)}\big)$. These processes satisfy a 3-dimensional SDE with Markovian structure, but the solutions are not explicit. In contrast, the processes $(Y_s^{(t,x)}, Z_s^{(t,x)})$ in \eqref{AYZ} and their partial derivatives with respect to $x$ are explicit, as shown in \eqref{YZ_derivative}.

 For the detailed analysis in later sections, it is more convenient to express $v$ and its derivatives in terms of {\it explicit stochastic processes} rather than of using the expression \eqref{implicit_expression} involving $\Chi_s^{(t,x)}$. 
 To this end, we define $ A_{s, t}$, $Y_{s}^{(t, x)}$ and $Z_{s}^{(t, x)}$ for $x\in [0,1]$ and $0\leq t \leq s \leq T$ as follows:
\begin{equation}
\begin{split}\label{AYZ}
  A_{s, t} & := e^{(\mu-\frac{\sigma^2}{2})(s - t) + \sigma (B_{s} - B_{t})} > 0, \\
  Y_{s}^{(t, x)} & := \tfrac{x A_{s, t}}{x A_{s, t} + 1 - x} \in [0, 1], \\
  Z_{s}^{(t, x)} & := \left( x A_{s, t} + 1 - x \right)^{1 - \gamma} > 0.
\end{split}
\end{equation}
These processes are {\it explicit} because their dependence on $x$ is straightforward, allowing for easy computation of their derivatives with respect to $x$. Subsequently, recursive expressions for $v$ and its partial derivatives are derived in terms of these explicit processes.
}
  
Observe that
\begin{align}
Z_{s}^{(t, x)} & = \left( \tfrac{1 - x}{1 - Y_{s}^{(t, x)}} \right)^{1 - \gamma} \quad \textrm{for} \quad x\in [0,1), \label{YZ}
\end{align}
and $A_{s, t}$ and $Y_{s}^{(t, x)}$ solve the following SDEs:
\begin{align}
dA_{s,t}& = \mu A_{s,t} ds + \sigma A_{s,t} dB_s, \quad A_{t,t}=1, \label{A_SDE}\\
dY_s^{(t,x)} &= Y_s^{(t,x)}(1-Y_s^{(t,x)})(\mu-\sigma^2 Y_s^{(t,x)})ds + \sigma Y_s^{(t,x)}(1-Y_s^{(t,x)}) dB_s, \quad Y_t^{(t,x)}=x. \label{Y_SDE}
\end{align} 
The following lemma is used to justify our later applications of the Leibniz integral rule.

\begin{lemma}\label{YZ_bound1}
For nonnegative integers $n$ and $m$ and nonnegative constants $k$ and $l$, 
\begin{align}
&\max_{0\leq t \leq s \leq T} \mathbb{E} \left[ \max_{0\leq x\leq 1} \Big( \left| \tfrac{\partial^{m} Z_{s}^{(t, x)}}{\partial x^{m}} \right |^{k} \cdot \left| \tfrac{\partial^{n} Y_{s}^{(t, x)}}{\partial x^{n}} \right |^{l} \Big) \right]  < \infty.\label{Uniform_boundedness_of_all_derivatives_exponent}
\end{align}
\end{lemma}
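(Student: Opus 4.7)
My plan is to derive explicit formulas for the derivatives and then reduce everything to (positive and negative) moments of the lognormal variable $A_{s,t}$.

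Set $D := D_{s}^{(t,x)} = xA_{s,t} + (1-x) = 1 + x(A_{s,t}-1)$, so that $Y_{s}^{(t,x)} = xA_{s,t}/D$ and $Z_{s}^{(t,x)} = D^{1-\gamma}$. Since $D$ is linear in $x$ with $\partial_x D = A_{s,t}-1$ (constant in $x$), induction on $n$ and $m$ yields the closed forms
$$\frac{\partial^{n} Y_{s}^{(t,x)}}{\partial x^{n}} = (-1)^{n-1} n!\, \frac{A_{s,t}(A_{s,t}-1)^{n-1}}{D^{n+1}} \qquad (n \geq 1),$$
$$\frac{\partial^{m} Z_{s}^{(t,x)}}{\partial x^{m}} = \Bigl(\prod_{j=0}^{m-1}(1-\gamma-j)\Bigr) (A_{s,t}-1)^{m}\, D^{1-\gamma-m} \qquad (m \geq 1),$$
with the cases $n=0$ (where $0 \leq Y \leq 1$) and $m=0$ (where $Z = D^{1-\gamma}$) handled directly.

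Next, since $x \in [0,1]$, $D$ is a convex combination of $1$ and $A_{s,t}$, so $\min(1,A_{s,t}) \leq D \leq \max(1,A_{s,t})$. For any real exponent $p$ this gives the \emph{$x$-uniform} bound $D^{p} \leq C_{p}\bigl(1 + A_{s,t}^{|p|} + A_{s,t}^{-|p|}\bigr)$, and likewise $|A_{s,t}-1|^{q} \leq 2^{q}\bigl(1 + A_{s,t}^{q}\bigr)$ for $q \geq 0$. Substituting into the formulas above and raising to the powers $k, l \geq 0$, the integrand $\bigl|\partial_{x}^{m} Z_{s}^{(t,x)}\bigr|^{k} \bigl|\partial_{x}^{n} Y_{s}^{(t,x)}\bigr|^{l}$ is dominated, uniformly in $x \in [0,1]$, by a finite sum $\sum_{i} c_{i}\, A_{s,t}^{p_{i}}$ whose exponents $p_{i} \in \mathbb{R}$ depend only on $k, l, m, n, \gamma$.

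Finally, because $A_{s,t}$ is lognormal, every real moment $\mathbb{E}\bigl[A_{s,t}^{p}\bigr] = \exp\bigl(p(\mu - \tfrac{\sigma^{2}}{2})(s-t) + \tfrac{1}{2}p^{2}\sigma^{2}(s-t)\bigr)$ is finite and continuous in $s-t$ on the compact set $[0,T]$, so the maximum over $0 \leq t \leq s \leq T$ is finite for each $i$. Summing finitely many such terms yields \eqref{Uniform_boundedness_of_all_derivatives_exponent}. The steps are all elementary; the only mild subtlety is the $x$-uniform control of $D^{p}$, which relies on $\min(1,A_{s,t}) > 0$ to rule out blowup at the endpoints $x=0,1$. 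I do not anticipate a serious obstacle.
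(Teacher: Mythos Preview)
Your proof is correct and follows essentially the same approach as the paper: compute the closed-form derivatives of $Y$ and $Z$ in terms of $A_{s,t}$ and $D = xA_{s,t}+1-x$, bound $D^{p}$ and $|A_{s,t}-1|^{q}$ uniformly in $x$ by powers of $A_{s,t}$, and then use that all real moments of the lognormal $A_{s,t}$ are bounded on $[0,T]$. The paper writes the $x$-uniform bound slightly more compactly as $(xA+1-x)^{c}\le 1+A^{c}$ and $|A-1|^{c}\le 1+A^{c}$, but this is only a cosmetic difference.
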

\begin{proof}
For $n\in \N$, direct computations produce 
\begin{equation}
\begin{split}\label{YZ_derivative}
\tfrac{\partial^{n} Y_{s}^{(t, x)}}{\partial x^{n}} &= \tfrac{n! A_{s,t}(1-A_{s,t})^{n-1}}{(x A_{s,t}+1-x)^{n+1}},\\\tfrac{\partial^{n} Z_{s}^{(t, x)}}{\partial x^{n}} &=(1-\gamma)(-\gamma)\cdots (2-\gamma-n)(A_{s,t}-1)^n (x A_{s,t}+1-x)^{1-\gamma-n}.
\end{split}
\end{equation}
Observe that
\begin{equation}
\begin{split}\label{A_ineq}
(x A+1-x)^c &\leq 1+ A^c \quad \textrm{for}\quad c\in \R, \,\,x\in [0,1],\,\,  A>0,\\
|A-1|^c &\leq 1+ A^c \quad \textrm{for}\quad c\geq 0, \,\, A>0,\\
\mathbb{E} [A_{s, t}^{c}] &\leq \exp \left( \left( \left|c \big(\mu-\tfrac{\sigma^2}{2}\big)\right| + \tfrac{c^{2} \sigma^{2}}{2} \right) T \right) \quad \textrm{for}\quad c\in \R, \,\, 0\leq t\leq s\leq T.
\end{split}
\end{equation}
The expression in \eqref{YZ_derivative} and the inequalities in \eqref{A_ineq} produce \eqref{Uniform_boundedness_of_all_derivatives_exponent}. 
\end{proof}

\begin{proposition}\label{vvx_representation}
Let $A_{s, t}$, $Y_{s}^{(t, x)}$ and $Z_{s}^{(t, x)}$ be defined as in \eqref{AYZ}, and let $L(t,x)$ be defined by 
\begin{align}
   L(t, x)  := \begin{cases} v(t, \uy(t)) \left( \tfrac{1 + \be x}{1 + \be \uy(t)} \right)^{1 - \gamma} & \text{if} \quad x \in [0, \uy(t)], \\ v(t, x) & \text{if} \quad x \in (\uy(t), \by(t)), \\ v(t, \by(t)) \left( \tfrac{1 - \ue x}{1 - \ue \by(t)} \right)^{1 - \gamma} & \text{if} \quad x \in [\by(t), 1]. \end{cases} \label{Supremum_term_L}
\end{align}
\noindent (i) For $(t,x)\in [0,T]\times [0,1]$, $v$ has the following representation:
\begin{align}
  v(t, x) & = e^{- \lambda (T - t)} \EE \left[ Z_{T}^{(t, x)} \right] + \lambda \int_{t}^{T} e^{- \lambda (s - t)} \EE \left[ Z_{s}^{(t, x)} L( s, Y_{s}^{(t, x)}) \right] d s. \label{v_with_optimizer} 
\end{align}

\noindent (ii) For $(t,x)\in [0,T) \times (0,1)$, the function $L$ is continuously differentiable with respect to $x$ and
\begin{align}
L_x(t,x)&=
\begin{cases}
\tfrac{\be (1 - \gamma) v(t, \uy(t))}{1 + \be x} \left( \tfrac{1 + \be x}{1 + \be \uy(t)} \right)^{1 - \gamma} &\textrm{if} \quad x\in (0,\uy(t)], \\
 v_{x} \left( t, x \right)&\textrm{if} \quad   x \in ( \uy(t), \by(t) ),   \\
  - \tfrac{\ue (1 - \gamma) v(t, \by(t))}{1 - \ue x} \left( \tfrac{1 - \ue x}{1 - \ue \by(t)} \right)^{1 - \gamma}&\textrm{if} \quad   x \in [\by(t), 1).
\end{cases}  \label{Lx_expression}
\end{align}
For $(t,x)\in [0,T)\times (0,1)$, $v_x(t,x)$ has the following representation:
\begin{equation}
\begin{split}\label{Dv_with_optimizer}
 v_{x}(t, x) & = e^{- \lambda (T - t)} \EE \Big[ \tfrac{\partial Z_{T}^{(t, x)}}{\partial x} \Big] \\
 &\quad + \lambda \int_{t}^{T} e^{- \lambda (s - t)} \EE \left[  Z_{s}^{(t, x)} L_x ( s, Y_{s}^{(t, x)} ) \tfrac{\partial Y_{s}^{(t, x)}}{\partial x} + \tfrac{\partial Z_{s}^{(t, x)}}{\partial x} L( s, Y_{s}^{(t, x)})  \right] d s. 
 \end{split}
\end{equation}
Furthermore, $v_{x}(t, 0) := \lim_{x \downarrow 0} v_{x}(t, x)$ and $v_{x}(t, 1) := \lim_{x \uparrow 1} v_{x}(t, x)$ are well-defined and finite.
\end{proposition}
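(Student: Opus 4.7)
The plan is to apply a Feynman-Kac-type argument to the linear inhomogeneous PDE obtained by substituting the explicit supremum from \eqref{L_origin} into \eqref{Main_PDE}. Indeed, \eqref{L_origin} and \eqref{Supremum_term_L} show that the supremum in \eqref{Main_PDE} equals $L(t,x)/(1-\gamma)$, so $v$ satisfies the linear PDE
\begin{equation*}
v_t + x(1-x)(\mu - \gamma\sigma^2 x) v_x + \tfrac{\sigma^2 x^2 (1-x)^2}{2} v_{xx} + (Q(x) - \lambda) v + \lambda L(t, x) = 0
\end{equation*}
on $[0, T) \times (0, 1)$, with terminal condition $v(T,\cdot)=1$. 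The strategy is to identify a stochastic process that, after Ito's formula, reduces the PDE's second-order operator to the generator of $Y_s^{(t,x)}$ modulated by the multiplicative functional $Z_s^{(t,x)}$, so that \eqref{v_with_optimizer} arises upon taking expectations.

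For part (i), I first compute by Ito's formula and \eqref{A_SDE} that $dZ_s^{(t,x)}/Z_s^{(t,x)} = Q(Y_s^{(t,x)}) ds + (1-\gamma)\sigma Y_s^{(t,x)} dB_s$, using the identity $Q(y) = (1-\gamma)\mu y - \tfrac{\gamma(1-\gamma)\sigma^2}{2} y^2$ and $Y_s = xA_{s,t}/(xA_{s,t} + 1 - x)$. For $x \in (0,1)$, Ito's formula applied to $F_s := Z_s^{(t,x)} e^{-\lambda s} v(s, Y_s^{(t,x)})$ is legitimate because \eqref{Y_SDE} keeps $Y_s^{(t,x)}$ inside $(0,1)$ for all $s \in [t,T]$, that is, within the domain of $C^{1,2}$ regularity of $v$. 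The cross-variation term contributes $(1-\gamma)\sigma^2 Y_s^2(1-Y_s) Z_s v_x \, ds$, which combines with the natural drift of $v(s,Y_s^{(t,x)})$ to convert the $Y$-generator's drift coefficient $y(1-y)(\mu-\sigma^2 y)$ into $y(1-y)(\mu - \gamma\sigma^2 y)$, exactly matching the PDE operator. Substituting the PDE, the drift of $F_s$ collapses to $-\lambda e^{-\lambda s} Z_s L(s, Y_s^{(t,x)}) ds$. The stochastic integral part is a true martingale by \lemref{v_classical_solution_and_properties}(iii) (boundedness of $x(1-x)v_x$ and the like) combined with the moment bounds of \lemref{YZ_bound1}. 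Taking expectations between $t$ and $T$, using $v(T,\cdot)=1$, $Y_t^{(t,x)}=x$, and $Z_t^{(t,x)}=1$, yields \eqref{v_with_optimizer} for $x \in (0,1)$. For $x \in \{0,1\}$, $Y_s^{(t,x)}$ is constant, and both sides either reduce to the ODE \eqref{PDE_at_boundary} (noting $Q(0)=0$) or are recovered by passing $x \to 0$ or $x \to 1$ via dominated convergence.

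For part (ii), formula \eqref{Lx_expression} is verified by direct differentiation of each piece in \eqref{Supremum_term_L}. Continuity at $x = \uy(t)$ when $\uy(t) \in (0,1)$ is exactly the first-order condition $v_x(t,\uy(t))/(1-\gamma) = \be v(t,\uy(t))/(1+\be\uy(t))$ in \lemref{Boundaries_of_NT_region}(i); the analogous condition at $x = \by(t)$ handles the other boundary, and the cases $\uy(t) \in \{0,1\}$ or $\by(t) \in \{0,1\}$ collapse to a single closed-form expression. The representation \eqref{Dv_with_optimizer} follows by differentiating \eqref{v_with_optimizer} under the expectation and the time integral. The Leibniz rule is justified by dominated convergence, with integrable majorants supplied by \lemref{YZ_bound1} applied to $Z_s^{(t,x)}$, $\partial_x Z_s^{(t,x)}$, and $\partial_x Y_s^{(t,x)}$, combined with the uniform boundedness of $L$ (from \lemref{v_concave}(ii)) and of $L_x$ (from \lemref{Boundaries_of_NT_region}(iii) on $[\uy(t),\by(t)]$ and the explicit formula elsewhere). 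Finiteness of $v_x(t,0)$ and $v_x(t,1)$ then follows from another application of dominated convergence to \eqref{Dv_with_optimizer} as $x \downarrow 0$ or $x \uparrow 1$, using the same majorants.

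The hard part will be the rigorous handling of the domination hypotheses, both for the Ito computation (promoting the local martingale to a true martingale, so that the expectation step is legitimate) and for the Leibniz rule (interchanging $\partial_x$ with $\EE$ and with the $ds$-integral). The fact that $v \in C^{1,2}([0,T] \times (0,1))$ only in the interior requires careful verification that $Y_s^{(t,x)}$ remains in $(0,1)$ when $x \in (0,1)$, and the matching of $L_x$ across the free boundaries $\uy(t), \by(t)$ must be argued case-by-case via the first-order conditions from \lemref{Boundaries_of_NT_region}, including degenerate cases $\uy(t) \in \{0,1\}$ and $\by(t) \in \{0,1\}$.
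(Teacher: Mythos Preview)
Your proposal is correct and follows essentially the same Feynman--Kac strategy as the paper. The only cosmetic difference is that the paper introduces the substitution $\tilde v(t,x):=e^{-\lambda t}v(t,x)/(1-x)^{1-\gamma}$ and applies It\^o to $\tilde v(s,Y_s^{(t,x)})$, whereas you apply It\^o directly to $e^{-\lambda s}Z_s^{(t,x)}v(s,Y_s^{(t,x)})$; by the identity \eqref{YZ}, $Z_s^{(t,x)}=\big(\tfrac{1-x}{1-Y_s^{(t,x)}}\big)^{1-\gamma}$, these are literally the same process, so the drift cancellation, the martingale verification via \lemref{v_classical_solution_and_properties}(iii) and \lemref{YZ_bound1}, and the boundary treatment by dominated convergence all coincide with the paper's argument.
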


\begin{proof}
(i) We observe that \eqref{L_origin} and \eqref{Supremum_term_L} imply
\begin{align}
L(t,x)=(1-\gamma) \sup_{y \in [0, 1]} \left( \tfrac{v(t, y)}{1 - \gamma} \left( \left( \tfrac{1 + \be x}{1 + \be y} \right)^{1 - \gamma} 1_{\left\{ y \in [x, 1] \right\}} + \left( \tfrac{1 - \ue x}{1 - \ue y} \right)^{1 - \gamma} 1_{\left\{ y \in [0, x) \right\}} \right) \right). \label{L_sup_exp}
\end{align}
Let $\tilde v(t,x):=e^{-\lambda t} \tfrac{ v(t,x)}{(1-x)^{1-\gamma}}$ for $(t,x)\in [0,T]\times [0,1)$. Then, \eqref{Main_PDE} and \eqref{L_sup_exp} imply
\begin{align} \label{tv_PDE}
  \begin{cases}
  0 = \tilde{v}_{t}(t, x) + x (1 - x) \left( \mu - \sigma^{2} x \right) \tilde{v}_{x}(t, x) + \frac{\sigma^{2}}{2} x^{2} (1 - x)^{2} \tilde{v}_{x x}(t, x)  + e^{-\lambda t} \tfrac{\lambda  L(t,x)}{(1-x)^{1-\gamma}},  \\
  \tfrac{e^{-\lambda T}}{(1-x)^{1-\gamma}} = \tilde v (T, x).
  \end{cases}
\end{align}
Let $x\in [0,1)$ fixed. We apply Ito's formula to $\tilde v(s,Y_s^{(t,x)})$ and use \eqref{tv_PDE} and \eqref{Y_SDE} to produce
\begin{align}
\tfrac{e^{-\lambda T}}{(1-Y_T^{(t,x)})^{1-\gamma}} + \int_t^T  \tfrac{\lambda e^{-\lambda s} L(s,Y_s^{(t,x)})}{(1-Y_s^{(t,x)})^{1-\gamma}} ds =\tilde v(t,x)+ \int_t^T\sigma (1-Y_s^{(t,x)})Y_s^{(t,x)} \tilde v_x (s, Y_s^{(t,x)}) dB_s.  \label{Kac_martingale}
\end{align}
Observe that the expectation of the stochastic integral term above is zero, because
\begin{align*}
&\EE\left[ \int_t^T \left(   (1-Y_s^{(t,x)})Y_s^{(t,x)} \tilde v_x (s, Y_s^{(t,x)})\right)^2 ds\right]\\
&=  \int_t^T \EE \left[ \left( e^{-\lambda s} \cdot  \tfrac{(1-Y_s^{(t,x)})Y_s^{(t,x)} Z_s^{(t,x)}  v_x (s, Y_s^{(t,x)}) +(1-\gamma)Y_s^{(t,x)} Z_s^{(t,x)} v(s, Y_s^{(t,x)})}{(1-x)^{1-\gamma}} \right)^2 \right] ds<\infty, 
\end{align*}
where the equality is due to the definition of $\tilde v$ and \eqref{YZ}, and the inequality is due to
 \lemref{v_classical_solution_and_properties} (iii) and \lemref{YZ_bound1}.
Therefore, \eqref{Kac_martingale} and \eqref{YZ} produce \eqref{v_with_optimizer} for $(t,x)\in [0,T]\times [0,1)$.
Since $v\in C \left( [0, T] \times [0, 1] \right)$, to complete the proof, it is enough to check that 
\begin{align*}
&\lim_{x\uparrow 1} \left( e^{- \lambda (T - t)} \EE \left[ Z_{T}^{(t, x)} \right] + \lambda \int_{t}^{T} e^{- \lambda (s - t)} \EE \left[ Z_{s}^{(t, x)} L( s, Y_{s}^{(t, x)}) \right] d s \right) \\
&\quad = e^{- \lambda (T - t)} \EE \left[ Z_{T}^{(t, 1)} \right] + \lambda \int_{t}^{T} e^{- \lambda (s - t)} \EE \left[ Z_{s}^{(t, 1)} L( s, Y_{s}^{(t, 1)}) \right] d s.
\end{align*}
Indeed, \lemref{YZ_bound1} and $\lVert  L \rVert_\infty<\infty$ allow us to apply the dominated convergence theorem above. 

\medskip

(ii) We differentiate \eqref{Supremum_term_L} with respect to $x$ and obtain \eqref{Lx_expression} for $x\in (0,1) \setminus \{ \uy(t),\by(t) \}$, and the continuity at $x\in \{ \uy(t),\by(t) \}$ is due to \lemref{Boundaries_of_NT_region}. By \lemref{Boundaries_of_NT_region} (iii) and \eqref{Range_of_v},
\begin{align}
\lVert  L_x \rVert_\infty  \leq C (\be + \ue) \label{Lx_bound}
\end{align}
for a constant $C>0$. We take derivative with respect to $x$ in \eqref{v_with_optimizer}, and put the derivative inside of the expectation (\lemref{YZ_bound1}, $\lVert  L \rVert_\infty<\infty$, and \eqref{Lx_bound} allow us to do this) to obtain \eqref{Dv_with_optimizer}. 

Finally, $\lim_{x \downarrow 0} v_{x}(t, x)$ and $\lim_{x \uparrow 1} v_{x}(t, x)$ are well-defined because $x \mapsto \tfrac{v(t, x)}{1 - \gamma}$ is strictly concave by \lemref{v_concave} (i), and these limits are finite due to \lemref{YZ_bound1}, $\lVert  L \rVert_\infty<\infty$ and \eqref{Lx_bound}.
\end{proof}

The next proposition presents properties about the boundaries of the no-trade region.

\begin{proposition} \label{Property_of_NT} Recall that we denote the Merton fraction as $y_M := \frac{\mu}{\gamma \sigma^{2}}$. 

\noindent (i) Let $t \in [0,T)$. If $0<y_M$, then $\by(t)>0$. If $y_M<1$, then $\uy(t)<1$.

\noindent (ii) If $0<y_M<1$ and at least one of $\be$ and $\ue$ is strictly positive, then $\uy(t) < \by(t)$ for $t\in [0,T)$.

\noindent (iii) If $0<y_M$ and $\underline{t} \in [0, T]$ is the solution of the equation (if a solution doesn't exist, we set $\underline{t}=0$)
 \begin{align}
  \be \, e^{\mu T} = \mu (1 + \be) \int_{\underline{t}}^{T} e^{(\mu + \lambda) s}\Big( e^{- \lambda T} + \lambda \int_{s}^{T} e^{- \lambda u} \tfrac{ v(u, \uy(u))}{( 1 + \be \uy(u) )^{1 - \gamma}} d u \Big)  d s, \label{ut_eq}
\end{align}
then $\uy(t)>0$ for $t\in [0,\underline{t})$ and $\uy(t)=0$ for $t\in [\underline{t},T)$.
 
If $y_M<1$ and $\overline{t} \in [0, T]$ is the solution of the equation (if a solution doesn't exist, we set $\overline{t}=0$)
\begin{align*}
  \ue & = (\gamma \sigma^{2} - \mu) \int_{\overline{t}}^{T} e^{(\gamma \mu - \frac{\gamma(1+\gamma)\sigma^2}{2})(T-s)} \Big( e^{- \beta (T-s)} 
  +  \lambda \int_{s}^{T} e^{- \beta (u-s)} v(u, \by(u)) \left( \tfrac{1 - \ue}{1 - \ue \by(u)} \right)^{1 - \gamma} d u \Big) d s,
\end{align*}
where $\beta:= \lambda-(1-\gamma)\mu + \tfrac{\gamma(1-\gamma) \sigma^2}{2}$, then $\by(t)<1$ for $t\in [0,\overline{t})$ and $\by(t)=1$ for $t\in [\overline{t},T)$.
\end{proposition}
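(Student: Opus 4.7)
Throughout, I rely on Lemma 4.1 (which characterizes $\uy,\by$ as argmaxes with boundary first-order conditions) and on Proposition 4.2 (the explicit stochastic representations of $v$ and $v_x$).

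For Part (i), Lemma 4.1(i)--(ii) reduces the claims $\by(t)>0$ and $\uy(t)<1$ to the sign inequalities $\tfrac{v_x(t,0)}{1-\gamma}>-\ue v(t,0)$ and $\tfrac{v_x(t,1)}{1-\gamma}<\tfrac{\be v(t,1)}{1+\be}$ respectively (obtained by computing the derivative of the one-variable map in Lemma 4.1 at the corresponding corner and dividing by $1-\gamma$, tracking signs in both regimes $\gamma\lessgtr 1$). It then suffices to show $\tfrac{v_x(t,0)}{1-\gamma}\geq 0$ when $y_M>0$ and $\tfrac{v_x(t,1)}{1-\gamma}\leq 0$ when $y_M<1$. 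I establish these by plugging $x=0$ resp. $x=1$ into \eqref{Dv_with_optimizer}: at $x=0$ one has $Y^{(t,0)}_s\equiv 0$, $Z^{(t,0)}_s\equiv 1$, $\partial_x Y^{(t,0)}_s=A_{s,t}$, $\partial_x Z^{(t,0)}_s=(1-\gamma)(A_{s,t}-1)$, so the representation collapses to an expression whose leading term $e^{-\lambda(T-t)}(e^{\mu(T-t)}-1)$ is strictly positive and whose other terms are nonnegative (invoking $L_x(s,0)/(1-\gamma)=\be L(s,0)\geq 0$ when $\uy(s)>0$, and the explicit formula derived in Part (iii) when $\uy(s)=0$); at $x=1$ the analogous computation gives a leading term proportional to $\mathbb{E}[A_{s,t}^{1-\gamma}]-\mathbb{E}[A_{s,t}^{-\gamma}]$, whose exponent difference is $\mu-\gamma\sigma^2<0$. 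Part (ii) is then a short contradiction argument: if $\uy(t)=\by(t)=:y^*\in(0,1)$, the two first-order conditions combine with $v(t,y^*)>0$ to force $\be(1-\ue y^*)+\ue(1+\be y^*)=0$, i.e.\ $\be+\ue=0$, while $y^*\in\{0,1\}$ is ruled out directly by Part (i).

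For Part (iii) I focus on $\uy$; the argument for $\by$ is analogous. Setting $\Delta(t):=\tfrac{v_x(t,0)}{1-\gamma}-\be v(t,0)$, Lemma 4.1(i) gives $\uy(t)>0\iff \Delta(t)>0$. The key is to identify $v(\cdot,0)$ and $v_x(\cdot,0)$ explicitly on any tail $[t_1,T]$ on which $\uy\equiv 0$. On such a tail, $L(s,0)=v(s,0)$, so the PDE \eqref{PDE_at_boundary} at $x=0$ (using $Q(0)=0$) collapses to $v_t(s,0)=\lambda(v(s,0)-L(s,0))=0$, giving $v(s,0)\equiv 1$. Substituting this into \eqref{Dv_with_optimizer} at $x=0$ (with $L_x(s,0)=v_x(s,0)$ on the tail) and writing $F(s):=\tfrac{v_x(s,0)}{1-\gamma}+1$, I obtain the self-referential integral equation
\begin{equation*}
F(t)=e^{(\mu-\lambda)(T-t)}+\lambda\int_t^T e^{(\mu-\lambda)(s-t)}F(s)\,ds;
\end{equation*}
differentiating in $t$ yields the first-order ODE $F'(t)=-\mu F(t)$ with terminal condition $F(T)=1$, whose solution is $F(t)=e^{\mu(T-t)}$, so $\tfrac{v_x(s,0)}{1-\gamma}=e^{\mu(T-s)}-1$ on the tail. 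Consequently $\Delta(s)=e^{\mu(T-s)}-1-\be$ on the tail, strictly decreasing in $s$ and vanishing at $\underline{t}^*:=T-\mu^{-1}\log(1+\be)$. Setting $\underline{t}:=\max(0,\underline{t}^*)$, for $t\geq\underline{t}$ the tail extends down to $t$ and $\Delta(t)\leq 0$, while for $t<\underline{t}$ the opposite sign forces $\uy(t)>0$. To match \eqref{ut_eq}: substituting $v(u,0)=1$ and $\uy(u)=0$ for $u\in[\underline{t},T]$, the inner integral telescopes to $e^{-\lambda s}$ and the whole right-hand side equals $(1+\be)(e^{\mu T}-e^{\mu\underline{t}})$, which equals $\be e^{\mu T}$ exactly when $\underline{t}=\underline{t}^*$. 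The parallel analysis for $\by$ uses the PDE at $x=1$, where $Q(1)=(1-\gamma)(\mu-\gamma\sigma^2/2)$; on the tail $\by\equiv 1$ one gets $v(s,1)=e^{Q(1)(T-s)}$, and an analogous integrating-factor computation for $\tfrac{v_x(s,1)}{1-\gamma}$ produces the integral equation defining $\overline{t}$.

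The main technical obstacle is ensuring that the set $\{s\in[0,T]:\uy(s)=0\}$ is genuinely a right-closed interval $[\underline{t},T]$, rather than a more complicated subset of $[0,T]$. I handle this via the strict-monotonicity argument above: on any maximal tail ending at $T$ on which $\uy\equiv 0$, $\Delta$ coincides with the explicit strictly decreasing function $e^{\mu(T-s)}-1-\be$, so the transition between $\uy>0$ and $\uy\equiv 0$ can occur at only one time. A secondary subtlety is the bootstrapping in Part (i): the integral formula for $v_x(t,0)/(1-\gamma)$ depends on $L_x(s,0)/(1-\gamma)$, which for $s$ with $\uy(s)=0$ equals $v_x(s,0)/(1-\gamma)$ itself; this circularity is resolved by first establishing Part (iii) to obtain the explicit $\tfrac{v_x(s,0)}{1-\gamma}=e^{\mu(T-s)}-1\geq 0$ on the tail, and then reading off the sign in Part (i).
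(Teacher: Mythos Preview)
Your Part (ii) is correct and essentially equivalent to the paper's contrapositive argument.

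However, there are two genuine gaps in your treatment of (i) and (iii).

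\textbf{Circularity between (i) and (iii).} Your plan is to prove (iii) first and use the explicit tail formula $\tfrac{v_x(s,0)}{1-\gamma}=e^{\mu(T-s)}-1$ inside (i). But your derivation of that formula assumes $L_x(s,0)=v_x(s,0)$ on the tail, which from \eqref{Lx_expression} requires $\uy(s)=0<\by(s)$; if $\by(s)=0$ then $L_x(s,0)/(1-\gamma)=-\ue\,v(s,0)$ and the ODE you wrote for $F$ is wrong. So (iii) tacitly uses $\by(s)>0$, which is exactly the assertion of (i) you are deferring. The paper avoids this by proving (i) first and directly: it bounds $L_x(s,0)/(1-\gamma)\ge -\ue\,v(s,0)\,1_{\{\uy(s)=0\}}$ unconditionally (this bound holds in \emph{all} cases, including $\by(s)=0$), plugs it into \eqref{Dv_with_optimizer}, and obtains $\tfrac{v_x(t,0)}{1-\gamma}+\ue\,v(t,0)>0$ without ever needing the explicit tail formula.

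\textbf{The interval structure in (iii).} Your argument correctly identifies the maximal right tail $[\underline t,T]$ on which $\uy\equiv 0$ and shows $\underline t=\underline t^*$, but it does not show $\uy(t)>0$ for all $t<\underline t$. Maximality of the tail only gives, for each $t<\underline t$, the existence of \emph{some} $s\in[t,\underline t)$ with $\uy(s)>0$; it does not rule out isolated zeros or sub-intervals with $\uy=0$ disjoint from the tail. Your explicit formula for $\Delta$ is only valid when $\uy\equiv 0$ on the whole of $[s,T]$, so it cannot be invoked at such points. The paper handles this by deriving a \emph{global} ODE for $f(t)=e^{-\lambda t}\Delta(t)$ (equation \eqref{Equation_overline_D}), valid for all $t$, which includes an indicator $1_{\{f\le 0\}}$. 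The key observation is that whenever $f(t)>0$ the ODE gives $f'(t)=-\mu f(t)-\mu(1+\be)(\cdots)<0$; hence once $f$ is positive at some point, it stays positive going backward in time. Since for each $t<\underline t$ there is $s\in[t,\underline t)$ with $f(s)>0$, this monotonicity forces $f(t)>0$.

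Your explicit tail computation is correct and is exactly what one gets by solving the paper's ODE on $[\underline t,T)$; but to complete (iii) you need the global ODE (or an equivalent global argument) rather than just the tail formula.
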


Figure~\ref{NT} illustrates the no-trade region and $\overline t$ and $\underline t$ in \proref{Property_of_NT}.

\begin{figure}[t]
		\begin{center}
			\includegraphics[width=0.45\textwidth]{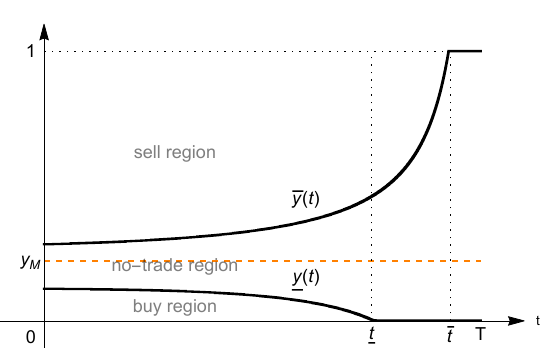}
	\end{center}
	 \caption{The graph shows $\underline{y}(t)$ and $\overline{y}(t)$ as functions of $t$. The dashed orange line is the Merton fraction $y_M=\frac{\mu}{\gamma \sigma^{2}}$. The parameters are $\mu=0.2, \, \sigma=1, \, \gamma=0.9, \, \lambda=3,\, \underline{\epsilon}=\overline{\epsilon}=0.05$ and $T=1$.
		 We generate the graph using \eqref{uy_oy_argmax} and the numerical solution of PDE \eqref{Main_PDE} obtained via the explicit finite difference method. The mesh sizes for the space and time variables are $1/10^3$ and $1/10^5$, respectively.} 
		\label{NT}
\end{figure}

\begin{proof}
(i) Assume that $0<y_M$ ($y_M<1$ case can be treated similarly).  \lemref{Boundaries_of_NT_region} implies that 
 \begin{align}
  \tfrac{v_{x}(t, 0)}{1 - \gamma} + \ue v(t, 0) > 0 \; \iff \; \by(t) > 0. \label{by>0}
  \end{align}
The expressions of $ Y_{s}^{(t,x)}$ and $Z_{s}^{(t, x)}$ in \eqref{AYZ}, $L$ in \eqref{Supremum_term_L} and $L_x$ in \eqref{Lx_expression} imply
\begin{align*}
&\lim_{x\downarrow 0}  Y_{s}^{(t,x)}=0, \quad \lim_{x\downarrow 0}   \tfrac{\partial Y_{s}^{(t, x)}}{\partial x}=A_{s,t},\quad \lim_{x\downarrow 0}   Z_{s}^{(t, x)} =1,\quad  \lim_{x\downarrow 0}   \tfrac{\partial Z_{s}^{(t, x)}}{\partial x} =(1-\gamma) \left(A_{s,t} - 1 \right),\\
&\lim_{x\downarrow 0} L(t,x)=  \tfrac{v(t, \uy(t))}{( 1 + \be \uy(t) )^{1 - \gamma}} 1_{\left\{ \uy(t) > 0 \right\}} + v(t,0) 1_{\left\{ \uy(t) = 0 \right\}},\\
&L_x(t, 0):=\lim_{x\downarrow 0} L_x(t, x) = \tfrac{\be(1-\gamma) v(t, \uy(t))}{( 1 + \be \uy(t) )^{1 - \gamma}} 1_{\left\{ \uy(t) > 0 \right\}} + v_{x} (t,0) 1_{\left\{ \uy(t) = 0 < \by(t) \right\}} - \ue (1-\gamma) v(t, 0) 1_{\left\{ \by(t) = 0 \right\}}, 
\end{align*}
where we use $v_{x}(t, 0) := \lim_{x \downarrow 0} v_{x}(t, x)$ in \proref{vvx_representation} (ii).
$v>0$ and \eqref{by>0} imply
\begin{align}
L(t,0) \geq v(t,0) 1_{\left\{ \uy(t) = 0 \right\}}, \quad  \tfrac{L_x(t, 0)}{1-\gamma} \geq -\ue v(t,0) 1_{\left\{ \uy(t) = 0 \right\}}.\label{Lx_0}
\end{align}
We take limit $x\downarrow 0$ in \eqref{v_with_optimizer} and \eqref{Dv_with_optimizer} and apply the dominated convergence theorem (justified by \lemref{YZ_bound1}) to obtain
\begin{align*}
   \tfrac{v_{x}(t, 0)}{1 - \gamma} + \ue v(t, 0) &= e^{- \lambda (T - t)} \EE [ A_{T,t} - 1+\ue ]   +  \lambda\int_{t}^{T} e^{- \lambda (s - t)} \Big( \tfrac{\EE[ A_{s,t}] L_x(s,0)}{1-\gamma} + \EE[ A_{s,t}-1+\ue ] L(s,0) \Big) ds \\
  & \geq e^{- \lambda (T - t)}  \big( e^{\mu (T - t)} - 1+\ue \big)  + \lambda\int_{t}^{T} e^{- \lambda (s - t)} \big( e^{\mu (s - t)} - 1 \big) (1 - \ue) v(s,0) 1_{\left\{ \uy(s) = 0 \right\}}  d s \\
  &> 0,
\end{align*}
where the inequalities are due to \eqref{Lx_0}, $v>0$ and $\mu>0$ (implied by $y_M>0$). Therefore, by \eqref{by>0} and the above inequality, we conclude that $\by(t)>0$.

\medskip

(ii) Suppose that $0<y_M<1$ and at least one of $\be$ and $\ue$ is strictly positive. By part (i) result, we have $\uy(t)<1$ and $\by(t)>0$. In case $\uy(t)=0$ or $\by(t)=1$, then $\uy(t)<\by(t)$. Therefore, it remains to consider the case that $0<\uy(t) \leq \by(t)<1$. By \lemref{Boundaries_of_NT_region} and $v>0$, we have
\begin{align}
\tfrac{v_{x}(t, \uy(t))}{1 - \gamma} = \tfrac{\be v(t, \uy(t))}{1 + \be \uy(t)} > - \tfrac{ \ue v(t, \by(t))}{1 - \ue \by(t)} = \tfrac{v_{x}(t, \by(t))}{1 - \gamma}.
\end{align}
The above inequality and the strict concavity of $x\mapsto \tfrac{v(t,x)}{1-\gamma}$ in \lemref{v_concave} imply $\uy(t)<\by(t)$.

\medskip

(iii) Assume that $0<y_M$ ($y_M<1$ case can be treated similarly). \lemref{Boundaries_of_NT_region} implies that 
 \begin{align}
  \tfrac{v_{x}(t, 0)}{1 - \gamma} - \be v(t, 0) > 0 \; \iff \; \uy(t) > 0. \label{uy>0}
  \end{align}
By the same way as in the proof of part (i), we take limit $x\downarrow 0$ in \eqref{v_with_optimizer} and \eqref{Dv_with_optimizer} and obtain
\begin{align}
   \tfrac{v_{x}(t, 0)}{1 - \gamma} - \be v(t, 0) &= e^{- \lambda (T - t)} \EE [ A_{T,t} - 1-\be ]   + \lambda \int_{t}^{T} e^{- \lambda (s - t)} \Big( \tfrac{\EE[ A_{s,t}] L_x(s,0)}{1-\gamma} + \EE[ A_{s,t}-1-\be] L(s,0) \Big) ds  \nonumber\\
  & = e^{- \lambda (T - t)}  \big( e^{\mu (T - t)} - 1-\be \big) + \lambda \int_{t}^{T} e^{- \lambda (s - t)} \big( e^{\mu (s - t)} - 1 \big) (1 + \be) \tfrac{ v(s, \uy(s))}{( 1 + \be \uy(s) )^{1 - \gamma}} d s \nonumber \\
  & \qquad + \lambda \int_{t}^{T} e^{- \lambda (s - t)} e^{\mu (s - t)} \Big( \tfrac{v_{x}(s, 0)}{1 - \gamma} - \be v(s, 0) \Big) 1_{\left\{ \uy(s) = 0 \right\}} d s,  \label{Equation_of_overline_D}
  \end{align}
where the second equality is due to $\by(t)>0$ by part (i). Let $f, g:[0,T]\to \R$ be defined as
$$
f(t):= e^{- \lambda t} \Big( \tfrac{v_{x}(t, 0)}{1 - \gamma} - \be v(t, 0) \Big), \quad g(t):=\lambda e^{-\lambda t} \tfrac{ v(t, \uy(t))}{( 1 + \be \uy(t) )^{1 - \gamma}}.
$$
Then, we can rewrite \eqref{Equation_of_overline_D} as
\begin{align*}
f(t)& = e^{- \lambda T}  \big( e^{\mu (T - t)} - 1-\be \big) +  (1 + \be)\int_{t}^{T}  \big( e^{\mu (s - t)} - 1 \big) g(s) d s  + \lambda \int_{t}^{T}  e^{\mu (s - t)} f(s) 1_{\left\{ f(s) \leq 0 \right\}} d s,
\end{align*}
where we use the equivalence of $f(t)\leq 0$ and $\uy(t)=0$ by \eqref{uy>0}. We differentiate above to obtain
\begin{align}\label{Equation_overline_D}
f'(t) = - \mu \Big(f(t) + (1 + \be) \Big( e^{- \lambda T} + \int_{t}^{T} g(s) d s \Big) \Big) - \lambda f(t) 1_{ \{ f(t) \leq 0 \} }.
\end{align}

We define $\underline t$ as
\begin{align}\label{underline_T}
  \underline{t} := \inf \{ t \in [0, T] :  f(s) \leq 0 \textrm{   for all   } s\in [t,T] \},
\end{align}
then the set in \eqref{underline_T} is non-empty because $f(T) = - \be e^{- \lambda T} \leq 0$. Since $\mu>0$ (due to $y_\infty>0$) and $g>0$, the form of ODE \eqref{Equation_overline_D} and definition of $\underline t$ above imply $f(t)>0$ for $t\in [0,\underline t)$. This observation and \eqref{uy>0} imply $\uy(t)>0$ for $t\in [0,\underline{t})$ and $\uy(t)=0$ for $t\in [\underline{t},T)$. To determine $\underline t$, it is enough to observe that the solution of ODE \eqref{Equation_overline_D} for $t\in [\underline t, T)$ is
\begin{align}
e^{(\mu + \lambda) t}  f(t) =   \mu (1 + \be) \int_{t}^{T}e^{(\mu + \lambda) s} \Big( e^{- \lambda T} + \int_{s}^{T} g(u) d u \Big)  d s - \be \, e^{\mu T}.
\end{align}
If there is no solution to \eqref{ut_eq}, then $f(t)> 0$ for $t\in [0,T)$ and $\underline t=0$. If there is a solution to \eqref{ut_eq}, then such a solution should be unique and $f(\underline t)=0$.
\end{proof}

\medskip

\section{Asymptotic results}

In this section, we provide asymptotic results to analyze the utility maximization problem when both transaction costs and search frictions are small. For convenience, we assume throughout this section that $\be = \ue=:\epsilon\in (0,1)$ and $0<y_M<1$. We focus on the asymptotics of the no-trade region and the value function as $\epsilon\downarrow 0$ and $\lambda \to \infty$ {\it simultaneously}, and then compare these results with the already-known asymptotic results in the benchmark cases of transaction costs only ($\lambda=\infty$) and search frictions only ($\epsilon=0$).

Our heuristic inspection using the results in \cite{GC23} implies that the {\it multivariable limits} as $(\epsilon,\lambda) \to (0,\infty)$ do not exist in general. It turns out that a specific scaling relation $\lambda=c \, \epsilon^{-\frac{2}{3}}$ for $c>0$ is relevant to consider, as explained below. For the log utility and fixed $\lambda<\infty$, Section 5 in \cite{GC23} provides asymptotic results as $\epsilon\downarrow 0$. According to Proposition 5.5 and Proposition 5.7 in \cite{GC23}, one can check the following limits:
\begin{equation}
\begin{split}\label{heuristic1}
&\lim_{\lambda \to \infty} \left( \lim_{\epsilon\downarrow 0} \tfrac{\textrm{``no-trade region width"}}{\lambda \epsilon}          \right)= \tfrac{2}{\sigma^2} \neq 0 = \lim_{\epsilon\downarrow 0}  \left( \lim_{\lambda \to \infty}\tfrac{\textrm{``no-trade region width"}}{\lambda \epsilon}\right),\\
&\lim_{\lambda \to \infty} \left( \lim_{\epsilon\downarrow 0} \tfrac{\textrm{``decrease of value"}}{\sqrt{\lambda} \epsilon}  \right)= \tfrac{\sigma y_M (1-y_M)(T-t)}{\sqrt{2}} \neq 0 = \lim_{\epsilon\downarrow 0}  \left( \lim_{\lambda \to \infty} \tfrac{\textrm{``decrease of value"}}{\sqrt{\lambda} \epsilon}\right).
\end{split}
\end{equation}  
Therefore, the {\it multivariable limits} $\lim_{(\epsilon,\lambda)\to (0,\infty)}\tfrac{\textrm{``no-trade region width"}}{\lambda \epsilon}  $ and $\lim_{(\epsilon,\lambda)\to (0,\infty)}\tfrac{\textrm{``decrease of value"}}{\sqrt{\lambda} \epsilon}  $ do not exist in general. 
On the other hand, it is well known in the literature (see \cite{JanShr04, GerMuhSch11, Cho11, Bic12}) that in the case of transaction costs only ($\lambda=\infty$), the asymptotics are as follows: 
\begin{align}
\textrm{``no-trade region width"} = O(\epsilon^{\frac{1}{3}}), \quad \textrm{``decrease of value"} =O( \epsilon^{\frac{2}{3}}). \label{heuristic2}
\end{align}
In the case of search frictions only ($\epsilon=0$, see \cite{matsumotopower, matsumoto2006}), the decrease of value is $O(\tfrac{1}{\lambda})$ and the width of the no-trade region is zero. We combine this observation with \eqref{heuristic1} and \eqref{heuristic2} and attempt to match the orders. We naturally conjecture that a suitable relation between $\epsilon$ and $\lambda$ for the asymptotics would satisfy\footnote{
\cite{EkrMuh2019} investigates a model incorporating both temporary and transient price impacts, presenting asymptotic results for the joint limit as both impacts diminish. To obtain nontrivial limits, \cite{EkrMuh2019} appropriately rescales the price impact parameters. In this sense, our consideration of \eqref{heuristic2} shares a similar motivation with the asymptotic setting in \cite{EkrMuh2019}.
}
\begin{align}
\lambda \epsilon \sim \epsilon^{\frac{1}{3}}\quad  \textrm{and} \quad \sqrt{\lambda} \epsilon  \sim  \epsilon^{\frac{2}{3}} \sim \tfrac{1}{\lambda} \quad \Longrightarrow \quad \lambda \sim \epsilon^{-\frac{2}{3}}.\label{curve_motive}
\end{align} 

Motivated by the above discussion, we make the following assumption, which holds throughout this section.

\begin{assumption}\label{ass}
(i) For $c>0$ and $\epsilon \in (0,1)$, $\be = \ue=\epsilon$ and $\lambda=c \, \epsilon^{-\frac{2}{3}}$. (ii) $y_M \in (0,1)$.
\end{assumption}

\begin{notation}\label{notation}\ \\
(i) Under \assref{ass}, to emphasize their dependence on $\epsilon\in (0,1)$ (with $\lambda$ dependence through the relation $\lambda=c \, \epsilon^{-\frac{2}{3}}$), we denote $v,v_x,v_{xx},v_t, \uy, \by, L, L_x$, etc. by $v^\epsilon,v_x^\epsilon,v_{xx}^\epsilon,v_t^\epsilon, \uy^\epsilon, \by^\epsilon, L^\epsilon, L_x^\epsilon$, etc. 

\noindent(ii) The case of the perfectly liquid market ($\epsilon=0$ and $\lambda=\infty$) corresponds to the classical Merton problem, and we denote the value function and optimal fraction as $v^0$ and $y_M$.

\noindent(iii) In the case of search frictions only (no transaction costs, $\epsilon=0$), we denote the value function and optimal fraction as $v^{SO,\lambda}$ and $\hat y^{SO, \lambda}$ to emphasize their dependence on $\lambda$.

\noindent(iv) In the case transaction costs only (no search frictions, $\lambda=\infty$), we denote the value function and the no-trade boundaries as $v^{TO,\epsilon}$, $\by^{TO, \epsilon}$ and $\uy^{TO, \epsilon}$ to emphasize their dependence on $\epsilon$.
\end{notation}

As benchmarks for our asymptotic results, we present the asymptotic results for the cases of  transaction costs only (see \cite{Bic12}) and search frictions only (see \cite{matsumotopower, matsumoto2006}). 

\begin{itemize}
\item In the case where there are no transaction costs or search frictions ($\epsilon = 0$ and $\lambda = \infty$), the utility maximization problem becomes the classical Merton problem \cite{Mer69, Mer71}. The explicit formula and HJB equation for the corresponding value function $v^{0}$ are:
\begin{align}
&v^{0}(t) = e^{Q(y_M) (T - t)}, \quad \begin{cases}v_{t}^{0}(t) + Q(y_M) v^{0}(t) = 0,\\
v^{0}(T)=1.
\end{cases} \label{merton_pde}
\end{align}
\item In the case where there are transaction costs only ($\epsilon\in (0,1)$ and $\lambda=\infty$), the utility maximization problem becomes the problem investigated in \cite{Bic12}. The asymptotic results are as follows:
\begin{equation}
\begin{split}\label{TO_benchmark}
\by^{TO, \epsilon}(t) &=y_M + \tfrac{1}{2} \Big( \tfrac{12y_M^2(1-y_M)^2}{\gamma}\Big)^{\frac{1}{3}}  \cdot \epsilon^{\frac{1}{3}} + o(\epsilon^{\frac{1}{3}}),\\
\uy^{TO, \epsilon}(t) &=y_M - \tfrac{1}{2} \Big( \tfrac{12y_M^2(1-y_M)^2}{\gamma}\Big)^{\frac{1}{3}}  \cdot \epsilon^{\frac{1}{3}} + o(\epsilon^{\frac{1}{3}}),\\
v^{TO,\epsilon}(t,y_M) &=v^0(t) - \tfrac{(1-\gamma)\gamma \sigma^2}{8} \Big( \tfrac{12y_M^2(1-y_M)^2}{\gamma}\Big)^{\frac{2}{3}} v^0(t)(T-t) \cdot \epsilon^{\frac{2}{3}} + o(\epsilon^{\frac{2}{3}}).
\end{split}
\end{equation} 
\item In the case where there are search frictions only ($\epsilon=0$ and $\lambda<\infty$), the utility maximization problem becomes the problem investigated in \cite{matsumotopower}. The asymptotic results are as follows: 
\begin{equation}
\begin{split}\label{SO_benchmark}
\hat y^{SO, \lambda}(t) &= y_M + \sigma^2 y_M(1-y_M)(2y_M-1) \cdot \tfrac{1}{\lambda} + o(\tfrac{1}{\lambda}),\\
v^{SO,\lambda}(t, y_M) &= v^0(t) - \tfrac{(1-\gamma)\gamma \sigma^4 y_M^2 (1-y_M)^2}{2}  v^0(t) (T-t) \cdot \tfrac{1}{\lambda} + o(\tfrac{1}{\lambda}).
\end{split}
\end{equation}  
Notice that the no-trade region vanishes in this case, $\hat{y}^{SO, \lambda}(t) =\by^{SO,\lambda}(t)=\uy^{SO,\lambda}(t)$.
\end{itemize}
\begin{remark}
As seen in \eqref{TO_benchmark} and \eqref{SO_benchmark}, when only one type of friction is present, the coefficients of the correction terms (with respect to the friction parameter) are explicitly determined.
In contrast, \cite{GC23} examines the case where both types of frictions are present, providing asymptotic results only for a small transaction cost parameter $\epsilon$, while keeping $\lambda<\infty$ fixed. Consequently, the asymptotic coefficients in \cite{GC23} depend on $\lambda$ in an implicit manner, as they are expressed in terms of solutions to partial differential equations (see Theorems 5.4 and 5.6 in \cite{GC23}). 

A key motivations for the asymptotic analysis in this paper is to develop a method for obtaining explicit coefficients by allowing both $\epsilon\to 0$ and $\lambda\to \infty$, in contrast to \cite{GC23}, where $\lambda<\infty$ remains fixed. As discussed around \eqref{heuristic1} and \eqref{curve_motive}, achieving this requires selecting an appropriate curve along which the limits are taken.
\end{remark}

The following theorem is the main result of this paper. Along the parametric curve $\lambda=c \, \epsilon^{-\frac{2}{3}}$ (see the discussion for \eqref{curve_motive}), the boundaries of the no-trade region and the value function have asymptotic expansions in terms of $\epsilon^{\frac{1}{3}}$ and $\epsilon^{\frac{2}{3}}$, respectively.

\begin{theorem} \label{Joint_limit_of_Wnt_and_Vd}
Let \assref{ass} hold and $a_1, a_2: (0, \infty) \rightarrow (0, \infty)$ be defined as
\begin{equation}
\begin{split}\label{a1a2_def}
a_1(c) & := \tfrac{ \sigma y_M (1 - y_M)}{ \sqrt{2\,c}} \Big( \Big( \tfrac{3 \sqrt{2} \,c^{\frac{3}{2}}}{\gamma \sigma^{3} y_M (1 - y_M)} + 1 \Big)^{\frac{1}{3}} - 1 \Big), \\
  a_2(c) & := \tfrac{\gamma (1-\gamma) \sigma^{4} y_M^{2} (1 - y_M)^{2}}{4 \,c} \Big( \Big( \tfrac{3 \sqrt{2} \,c^{\frac{3}{2}}}{\gamma \sigma^{3} y_M (1 - y_M)} + 1 \Big)^{\frac{2}{3}} + 1 \Big).
\end{split}
\end{equation}
Then, for $t \in [0, T)$, 
\begin{equation}
\begin{split}\label{asymptotics_ntr}
    \by^{\epsilon}(t)  &=y_M+  a_1(c)\cdot \epsilon^{\frac{1}{3}} + o(\epsilon^{\frac{1}{3}}), \\
    \uy^{\epsilon}(t) &=y_M -  a_1(c)\cdot\epsilon^{\frac{1}{3}} + o(\epsilon^{\frac{1}{3}}), \\
    v^{\epsilon}(t,y_M) &= v^{0}(t)  - a_2(c)v^{0}(t) (T - t) \cdot\epsilon^{\frac{2}{3}} + o(\epsilon^{\frac{2}{3}}).
\end{split}
\end{equation}
Alternatively, due to the relation $\lambda=c \, \epsilon^{-\frac{2}{3}}$, the above asymptotics can be written in terms of $\lambda$:
\begin{align*}
    \by^{\epsilon}(t) &=y_M+  \sqrt{c} \,a_1(c)\cdot \tfrac{1}{\sqrt{\lambda}}  + o(\tfrac{1}{\sqrt{\lambda}}), \\
    \uy^{\epsilon}(t) &=y_M-  \sqrt{c} \,a_1(c)\cdot \tfrac{1}{\sqrt{\lambda}}  + o(\tfrac{1}{\sqrt{\lambda}}), \\
    v^{\epsilon}(t,y_M) &= v^{0}(t)  - c\, a_2(c)v^{0}(t) (T - t) \cdot \tfrac{1}{\lambda} + o(\tfrac{1}{\lambda}).
\end{align*}
\end{theorem}

The proof of the theorem is postponed to Section~\ref{proof_main}. 
One of the main difficulties in the analysis is the rigorous treatment of subtle limiting behaviors that do not appear in the benchmark cases. For example, $\lim_{\epsilon\downarrow 0} v_{xx}^\epsilon(t,x)/\epsilon^{\frac{2}{3}}$ depends on the choice of $x\in [\uy^\epsilon(t),\by^\epsilon(t)]$ in our model, whereas $\lim_{\epsilon\downarrow 0} v_{xx}^{TO,\epsilon}(t,x)/\epsilon^{\frac{2}{3}}=0$ for $x\in [\uy^\epsilon(t),\by^\epsilon(t)]$. We present \lemref{really_used} and \lemref{v_xx_conv_lem} to address these subtle limiting behaviors. Figure~\ref{NTW_VD} illustrates the asymptotics in \thmref{Joint_limit_of_Wnt_and_Vd}.
\begin{figure}[t]
		\begin{center}$
			\begin{array}{cc}
			\includegraphics[width=0.45\textwidth]{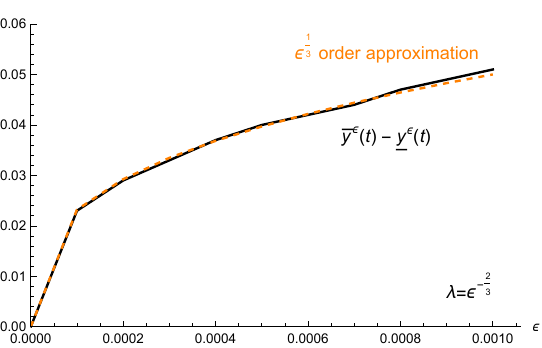} \,\,\, & \,\,\,
 			\includegraphics[width=0.45\textwidth]{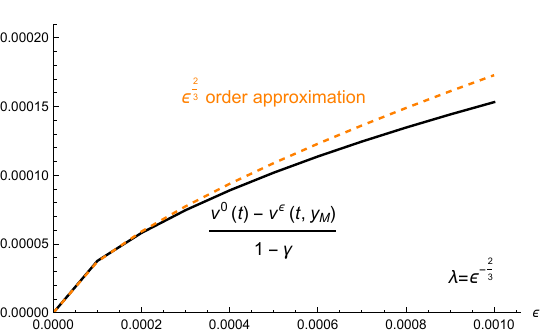} 		
			\end{array}$
	\end{center}
	 \caption{The graphs describe ``width of the no-trade region" and ``value decrease" as functions of $\epsilon$ along the parametric curve $\lambda =c\, \epsilon^{- \frac{2}{3}}$, where the dashed orange lines are their approximations $2a_1(c) \cdot \epsilon^{\frac{1}{3}}$ and $\tfrac{a_2(c)v^0(t)(T-t)}{1-\gamma} \cdot \epsilon^{\frac{2}{3}}$ in \thmref{Joint_limit_of_Wnt_and_Vd}.
The parameters are $c=1,\, \mu=0.2, \, \sigma=1, \, \gamma=0.9$, $t=0.75$ and $T=1$. 
We generate the graphs using \eqref{uy_oy_argmax} and the numerical solution of the PDE \eqref{Main_PDE}, obtained via the explicit finite difference method. The mesh sizes for the space and time variables are $1/10^3$ and $1/10^5$, respectively. The graphs are plotted using 10 points of $\epsilon$, ranging from $0.0001$ to $0.001$ in increments of $0.0001$.} 
		\label{NTW_VD}
\end{figure}

From \eqref{a1a2_def}, direct computations using L'Hopital's rule produce the following limits:
\begin{align}
& \lim_{c \to \infty} a_1(c) =\tfrac{1}{2} \Big( \tfrac{12y_M^2(1-y_M)^2}{\gamma}\Big)^{\frac{1}{3}}, \quad\lim_{c \to \infty} a_2(c)=\tfrac{(1-\gamma)\gamma \sigma^2}{8} \Big( \tfrac{12y_M^2(1-y_M)^2}{\gamma}\Big)^{\frac{2}{3}}, \label{a1_limit}\\
& \lim_{c \to 0} \sqrt{c} \, a_1(c) =0,\quad  \lim_{c \to 0} c\, a_2(c) =\tfrac{(1-\gamma)\gamma \sigma^4 y_M^2 (1-y_M)^2}{2} . 
\end{align}
Using the above limits, we can rephrase \eqref{TO_benchmark} and \eqref{SO_benchmark} to clarify the connection of our asypmtotics in \thmref{Joint_limit_of_Wnt_and_Vd} with the benchmark asymptotic results:
\begin{equation}
\begin{split}\label{asymptotic_connection}
    \by^{TO, \epsilon}(t)  &=y_M+  \left( \lim_{c \to \infty} a_1(c)\right)\cdot  \epsilon^{\frac{1}{3}} + o(\epsilon^{\frac{1}{3}}), \\
    \uy^{TO, \epsilon}(t) &=y_M -  \left( \lim_{c \to \infty} a_1(c)\right)\cdot \epsilon^{\frac{1}{3}} + o(\epsilon^{\frac{1}{3}}), \\
    v^{TO, \epsilon}(t,y_M) &= v^{0}(t)  - \left( \lim_{c \to \infty} a_2(c)\right)  v^{0}(t) (T - t) \cdot \epsilon^{\frac{2}{3}} + o(\epsilon^{\frac{2}{3}}),\\
    \hat y^{SO,\lambda}(t) &= y_M + \left(\lim_{c\to 0} \sqrt{c} \,a_1(c) \right) \cdot \tfrac{1}{\sqrt{\lambda}} + o(\tfrac{1}{\sqrt{\lambda}}),  \quad (\because\,  \lim_{c \to 0} \sqrt{c} \, a_1(c) =0)\\
v^{SO, \lambda}(t,y_M) &= v^{0}(t)  - \left( \lim_{c \to 0} c\, a_2(c)\right)  v^{0}(t) (T - t) \cdot \tfrac{1}{\lambda} + o(\tfrac{1}{\lambda}).
\end{split}
\end{equation}
Indeed, \thmref{Joint_limit_of_Wnt_and_Vd} bridges the benchmark asymptotics in \eqref{TO_benchmark} and \eqref{SO_benchmark} through the parametric relation $\lambda=c \, \epsilon^{-\frac{2}{3}}$, where the case $c=\infty$ corresponds to \eqref{TO_benchmark} and the case $c=0$ corresponds to \eqref{SO_benchmark}. In this sense, our approach of using $\lambda=c \, \epsilon^{-\frac{2}{3}}$ unifies the asymptotics for transaction costs and search frictions.

Section 5 in \cite{GC23} contains asymptotics for $\epsilon$, which differ from \thmref{Joint_limit_of_Wnt_and_Vd}: we let $\epsilon\downarrow 0$ and $\lambda\to \infty$ at the same time through the relation $\lambda=c \, \epsilon^{-\frac{2}{3}}$ in \thmref{Joint_limit_of_Wnt_and_Vd}, whereas $\lambda$ is fixed in Theorem 5.4 and Theorem 5.6 in \cite{GC23}. It is also worth noting that the correction terms in Theorem 5.4 and Theorem 5.6 in \cite{GC23} are not explicit (in terms of solutions of some PDEs), whereas $a_1(c)$ and $a_2(c)$ in \thmref{Joint_limit_of_Wnt_and_Vd} are explicit in terms of the model parameters. Therefore, given model parameters $\mu,\sigma,\gamma,\lambda,\epsilon$, one can compute the auxiliary parameter $c=\lambda \epsilon^{\frac{2}{3}}$ and use the formulas in  \thmref{Joint_limit_of_Wnt_and_Vd} to estimate the optimal trading strategy and value.

\medskip

\section{Impact of Search Frictions on Trading Decisions}

While transaction costs have been extensively studied in optimal trading models, the combined effect of transaction costs and search frictions remains less well understood. In many financial markets, investors cannot trade continuously but must wait for suitable liquidity events, leading to execution delays. These search frictions, modeled as trading opportunities arriving at random times, introduce an additional source of illiquidity that may meaningfully influence trading behavior.

Search frictions are present in a variety of financial markets. In highly liquid environments, such as major stock exchanges, trading opportunities arise frequently, and the impact of search frictions is likely to be minimal. However, in less liquid markets, trading is constrained by counterparty search and market frictions, leading to sporadic transactions. A notable example is the over-the-counter (OTC) market, where trading does not occur on centralized exchanges but rather through decentralized dealer networks. In OTC markets, the time required to execute a trade depends on the ability to find a suitable counterparty, potentially introducing substantial search frictions. For instance, \cite{Ang13} analyzes OTC stock market data and reports that the proportion of days with no trading volume ranges from 0.01 to 0.94 (see their Table 3). 

Similarly, Table 1 in \cite{Andrew14} highlights asset classes characterized by extended intervals between trades, including corporate and municipal bonds, emerging market equities, and private equity, where the typical time between transactions may range from several hours to several weeks. 

To account for these variations in search frictions, in the numerical examples below, we set $\lambda$ (interpreted as the annual expected number of trading opportunities) to range from $4$ to $500$: 
\begin{align}
\lambda=4, \, 20,\, 100,\, 500. \label{lambda_values}
\end{align}

We impose Assumption~\ref{ass} and Notation~\ref{notation} throughout this section. Our numerical experiments aim to evaluate whether search frictions materially impact trading decisions. If transaction costs alone primarily determine investor behavior, then models that exclude search frictions may provide a sufficiently accurate representation of trading strategies. However, if search frictions substantially alter the no-trade region, ignoring them could lead to significant misestimation of optimal trading behavior. To quantify the influence of search frictions relative to transaction costs, we introduce a measure based on the width of the no-trade region (WNTR), which characterizes the investor's optimal trading strategy. Specifically, we define:
\begin{equation}
\begin{split}\label{RR}
R(\lambda) :=&\frac{\textrm{``WNTR in a market with both search frictions and transaction costs"}}{\textrm{``WNTR in a market with only transaction costs and no search frictions"}} \\
&= \frac{\by^\epsilon(0)-\uy^\epsilon(0)}{\by^{TO,\epsilon}(0)-\uy^{TO,\epsilon}(0)}.
\end{split}
\end{equation}
The ratio $R(\lambda)$ serves as a measure of the extent to which search frictions affect the optimal trading strategy. When $R(\lambda)$ is close to $1$, the impact of search frictions is limited, and trading behavior is primarily dictated by transaction costs. Conversely, when $R(\lambda)$ deviates significantly from $1$, search frictions play a more substantial role in shaping the no-trade region, indicating that they should not be ignored in trading models.

\renewcommand{\arraystretch}{1.5}

\begin{table}[t]
    \centering
    \adjustbox{max width=\textwidth}{
{\tiny
    \begin{tabular}{ c  c  | c || c c c c || c c c c || c c c c }
    &   & & \multicolumn{4}{c||}{$\sigma=20\%$} &  \multicolumn{4}{c||}{$\sigma=50\%$} &  \multicolumn{4}{c}{$\sigma=100\%$} \\ 
    &   $\lambda$ & $c=\epsilon^{\frac{2}{3}} \lambda$  & $\by$ & $\uy$ & $\by-\uy$ & $R(\lambda)$  &  $\by$ & $\uy$  & $\by-\uy$  & $R(\lambda)$ & $\by$ & $\uy$  & $\by-\uy$  & $R(\lambda)$ \\  
        \hline \hline
   \multirow{10}{*}{$\epsilon=0.5\%$ \,\,  }   &  \multirow{2}{*}{4} & \multirow{2}{*}{0.12} &  $63.8\%$   &  $56.4\%$ & $7.4\%$ & $  0.71 $ &  $62.3\%$   &  $58.3\%$ & $4.0\%$ & $  0.38  $   &  $61.6\%$  &  $59.9\%$  &  $1.7\%$  & $0.17$  \\
     & &  &  $(63.6\%)$ & $(56.4\%)$ & $(7.2\%)$  & $(0.69)$  & $(61.8\%)$ & $(58.2\%)$ & $(3.6\%)$ & $(0.35)$  &  $(60.6\%)$  &  $(59.4\%)$  &  $(1.2\%)$  &  $(0.12)$ \\
       \cline{2-15}
   & \multirow{2}{*}{20}  & \multirow{2}{*}{0.58}  & $64.6\%$   &  $55.5\%$  &  $9.1\%$  &  $ 0.88 $  &   $63.6\%$  &  $56.6\%$  &  $7.0\%$  &  $ 0.67 $    &  $62.4\%$  &  $58.1\%$  &  $4.3\%$  &  $0.42$  \\
   &  &  &  $(64.5\%)$  &  $(55.5\%)$ &  $(9.0\%)$  &  $(0.86)$      & $(63.4\%)$  &  $(56.6\%)$  & $(6.9\%)$  &   $(0.66)$  &  $(62.0\%)$  &  $(58.0\%)$  &  $(4.1\%)$  &  $(0.40)$  \\   
         \cline{2-15} 
   & \multirow{2}{*}{100}  & \multirow{2}{*}{2.92} & $65.0\%$   &  $55.0\%$  &  $10.0\%$  &  $ 0.96 $  &  $64.5\%$  &  $55.6\%$  &  $8.9\%$  &  $ 0.86 $    &  $63.8\%$  &  $56.4\%$  &  $7.4\%$  &  $0.73$   \\
   &  & &  $(64.9\%)$  &  $(55.1\%)$  &  $(9.8\%)$  &  $(0.94)$      &  $(64.4\%)$  &  $(55.6\%)$  & $(8.8\%)$  &  $(0.85)$  &  $(63.6\%)$  &  $(56.4\%)$  &  $(7.2\%)$  &  $(0.71)$  \\   
         \cline{2-15} 
   & \multirow{2}{*}{500}  & \multirow{2}{*}{14.6} &  $65.1\%$   &  $54.9\%$  &  $10.2\%$  &  $ 0.98 $  &   $64.9\%$  &  $55.1\%$  &  $9.8\%$  &  $ 0.94 $    &  $64.6\%$  &  $55.5\%$  &  $9.1\%$  &  $0.89$   \\
   &  &   &  $(65.1\%)$  &  $(54.9\%)$  &  $(10.2\%)$  &  $(0.98)$      & $(64.9\%)$  &  $(55.1\%)$  & $(9.7\%)$  &   $(0.94)$  &  $(64.5\%)$  &  $(55.5\%)$  &  $(9.0\%)$  &  $(0.88)$  \\   
         \cline{2-15} 
  &  \multirow{2}{*}{$10^4$}  & \multirow{2}{*}{$292$}  &  $65.2\%$   & $54.8\%$  & $10.4\%$ &   $1$ &    $65.2\%$  & $54.8\%$  &  $10.4\%$ & $1$    &  $65.1\%$  &  $54.9\%$  &  $10.2\%$  &  $1$ \\
  &   &   &  $(65.2\%)$ &  $(54.8\%)$  &  $(10.4\%)$ &  $(1)$   &  $(65.2\%)$ & $(54.8\%)$ & $(10.3\%)$ & $(1)$  &  $(65.1\%)$  &  $(54.9\%)$  &  $(10.1\%)$  &  $(1)$ \\     
       \hline 
       \hline 
   \multirow{10}{*}{$\epsilon=1\%$ \,\,  }   &  \multirow{2}{*}{4} & \multirow{2}{*}{0.19}  &  $65.2\%$   &  $54.9\%$ & $10.3\%$ & $0.77$ &  $63.5\%$   &  $57.1\%$ & $6.4\%$ & $ 0.48 $   &  $62.3\%$  &  $59.2\%$  &  $3.1\%$  & $0.24$  \\
     & &  &  $(64.9\%)$ & $(55.1\%)$ & $(9.9\%)$  & $(0.75)$  & $(62.9\%)$ & $(57.1\%)$ & $(5.8\%)$ & $(0.44)$  &  $(61.2\%)$  &  $(58.8\%)$  &  $(2.3\%)$  &  $(0.18)$ \\
       \cline{2-15}
   & \multirow{2}{*}{20}  & \multirow{2}{*}{0.93}  & $66.0\%$   &  $54.0\%$  &  $12.0\%$  &  $0.90$  &   $65.0\%$  &  $55.2\%$  &  $9.8\%$  &  $ 0.74 $    &  $63.7\%$  &  $56.8\%$  &  $6.9\%$  &  $0.53$  \\
   &  &  &  $(65.8\%)$  &  $(54.2\%)$ &  $(11.7\%)$  &  $(0.89)$      & $(64.8\%)$  &  $(55.2\%)$  & $(9.5\%)$  &   $(0.73)$  &  $(63.2\%)$  &  $(56.8\%)$  &  $(6.4\%)$  &  $(0.50)$  \\   
         \cline{2-15} 
   & \multirow{2}{*}{100}  & \multirow{2}{*}{4.64}  & $66.4\%$   &  $53.6\%$  &  $12.8\%$  &  $0.96$  &  $65.9\%$  &  $54.1\%$  &  $11.8\%$  &  $  0.89  $    &  $65.2\%$  &  $55.0\%$  &  $10.2\%$  &  $0.78$   \\
   &  & &  $(66.3\%)$  &  $(53.7\%)$  &  $(12.5\%)$  &  $(0.95)$      &  $(65.8\%)$  &  $(54.2\%)$  & $(11.5\%)$  &  $(0.88)$  &  $(64.9\%)$  &  $(55.1\%)$  &  $(9.9\%)$  &  $(0.77)$  \\   
         \cline{2-15} 
   & \multirow{2}{*}{500}  & \multirow{2}{*}{23.2}  &  $66.6\%$   &  $53.4\%$  &  $13.2\%$  &  $0.99$  &   $66.4\%$  &  $53.7\%$  &  $12.7\%$  &  $ 0.96 $    &  $66.0\%$  &  $54.0\%$  &  $12.0\%$  &  $0.92$   \\
   &  &   &  $(66.5\%)$  &  $(53.5\%)$  &  $(12.9\%)$  &  $(0.98)$      & $(66.2\%)$  &  $(53.8\%)$  & $(12.4\%)$  &   $(0.95)$  &  $(65.8\%)$  &  $(54.2\%)$  &  $(11.7\%)$  &  $(0.91)$  \\   
         \cline{2-15} 
  &  \multirow{2}{*}{$10^4$}  & \multirow{2}{*}{$464$}   &  $66.7\%$   & $53.3\%$  & $13.4\%$ &   $1$ &    $66.6\%$  & $53.4\%$  &  $13.2\%$ & $1$    &  $66.5\%$  &  $53.4\%$  &  $13.1\%$  &  $1$ \\
  &   &   &  $(66.6\%)$ &  $(53.4\%)$  &  $(13.1\%)$ &  $(1)$   &  $(66.5\%)$ & $(53.5\%)$ & $(13.0\%)$ & $(1)$  &  $(66.4\%)$  &  $(53.6\%)$  &  $(12.9\%)$  &  $(1)$ \\     
       \hline 
       \hline        
   \multirow{10}{*}{$\epsilon=3\%$ \,\,  }   &  \multirow{2}{*}{4} &  \multirow{2}{*}{0.39} &  $70.3\%$   &  $49.9\%$ & $20.4\%$ & $0.89$ &  $66.5\%$   &  $54.2\%$ & $12.3\%$ & $0.63 $   &  $64.7\%$  &  $57.0\%$  &  $7.7\%$  & $0.40$  \\
     & &  &  $(67.8\%)$ & $(52.2\%)$ & $(15.7\%)$  & $(0.83)$  & $(65.6\%)$ & $(54.4\%)$ & $(11.1\%)$ & $(0.59)$  &  $(62.9\%)$  &  $(57.1\%)$  &  $(5.8\%)$  &  $(0.31)$ \\
       \cline{2-15}
   & \multirow{2}{*}{20}  &   \multirow{2}{*}{1.93} & $70.9\%$   &  $49.2\%$  &  $21.7\%$  &  $0.95$  &   $68.2\%$  &  $52.0\%$  &  $16.2\%$  &  $0.83$    &  $66.8\%$  &  $53.8\%$  &  $13.0\%$  &  $0.67$  \\
   &  &  &  $(68.8\%)$  &  $(51.2\%)$ &  $(17.5\%)$  &  $(0.92)$      & $(67.7\%)$  &  $(52.3\%)$  & $(15.3\%)$  &   $(0.81)$  &  $(65.9\%)$  &  $(54.1\%)$  &  $(11.9\%)$  &  $(0.63)$  \\   
         \cline{2-15} 
   & \multirow{2}{*}{100}  &  \multirow{2}{*}{9.65} & $71.2\%$   &  $48.8\%$  &  $22.4\%$  &  $0.98$  &  $69.1\%$  &  $51.0\%$  &  $18.1\%$  &  $0.93$    &  $68.4\%$  &  $51.8\%$  &  $16.6\%$  &  $0.86$   \\
   &  & &  $(69.2\%)$  &  $(50.8\%)$  &  $(18.4\%)$  &  $(0.97)$      &  $(68.7\%)$  &  $(51.3\%)$  & $(17.4\%)$  &  $(0.92)$  &  $(67.8\%)$  &  $(52.2\%)$  &  $(15.7\%)$  &  $(0.84)$  \\   
         \cline{2-15} 
   & \multirow{2}{*}{500}  &  \multirow{2}{*}{48.3} &  $71.4\%$   &  $48.6\%$  &  $22.8\%$  &  $1.00$  &   $69.5\%$  &  $50.5\%$  &  $19.0\%$  &  $0.97$    &  $69.2\%$  &  $50.9\%$  &  $18.3\%$  &  $0.94$   \\
   &  &   &  $(69.4\%)$  &  $(50.6\%)$  &  $(18.7\%)$  &  $(0.99)$      & $(69.1\%)$  &  $(50.9\%)$  & $(18.3\%)$  &   $(0.97)$  &  $(68.8\%)$  &  $(51.2\%)$  &  $(17.5\%)$  &  $(0.94)$  \\   
         \cline{2-15} 
  &  \multirow{2}{*}{$10^4$}  & \multirow{2}{*}{$965$}   &  $71.4\%$   & $48.5\%$  & $22.9\%$ &   $1$ &    $69.7\%$  & $50.2\%$  &  $19.5\%$ & $1$    &  $69.7\%$  &  $50.3\%$  &  $19.4\%$  &  $1$ \\
  &   &   &  $(69.5\%)$ &  $(50.5\%)$  &  $(19.0\%)$ &  $(1)$   &  $(69.4\%)$ &  $(50.6\%)$  &  $(18.7\%)$ &  $(1)$  &   $(69.4\%)$ &  $(50.6\%)$  &  $(18.7\%)$ &  $(1)$ \\     
       \hline 
       \hline        
    \end{tabular}}
}      
    \caption{
   The parameters are as in \eqref{lambda_values}, \eqref{sigma_values}, and \eqref{other_parameters}. 
The values for $\by=\by^\epsilon(0)$ and $\uy=\uy^\epsilon(0)$ are obtained by numerically solving the PDE in \eqref{Main_PDE}, with values in parentheses corresponding to the asymptotic approximations from \eqref{asymptotics_ntr}. To compute $R(\lambda)$ in \eqref{RR}, we approximate $\by^{TO,\epsilon}(0)$ and $\uy^{TO,\epsilon}(0)$ using $\by^\epsilon(0)$ and $\uy^\epsilon(0)$ for $\lambda=10^4$. The values in parentheses in the column for $R(\lambda)$ are computed using the corresponding asymptotic approximations from \eqref{asymptotics_ntr}.
     }
    \label{table1}
\end{table}

Table~\ref{table1} evaluates the impact of search frictions across different market environments by computing the boundaries of the no-trade region ($\by^\epsilon(0)$ and $\uy^\epsilon(0)$), WNTR, and $R(\lambda)$ for varying levels of annualized volatility $\sigma$ and transaction costs $\epsilon$.\footnote{In our model, the proportional transaction cost $\epsilon$ applies symmetrically to both buying and selling. Consequently, our setup is comparable to the model in \cite{Guasoni}, where transaction costs are imposed only on sales, but with adjusted cost level of $2\epsilon$. In this sense, $\epsilon=3\%$ in our model corresponds to a one-way transaction cost of $6\%$ in \cite{Guasoni}, which is within the range observed in real estate markets.}
The values of $\sigma$ and $\epsilon$ are specified as follows:
\begin{align}
\sigma=20\%, \, 50\%, \, 100\% \quad \textrm{and} \quad \epsilon=0.5\%,\, 1\%, 3\%.
\label{sigma_values}
\end{align}
The remaining market parameters are set as follows: 
\begin{align}
T=3 \textrm{ years}, \quad \gamma=3, \quad y_M=60\%, \quad \mu=\gamma \sigma^2 y_M.
\label{other_parameters}
\end{align}

The values for $\by=\by^\epsilon(0)$ and $\uy=\uy^\epsilon(0)$ in Table~\ref{table1} are obtained by numerically solving the PDE in \eqref{Main_PDE}, with values in parentheses corresponding to the asymptotic approximations from \eqref{asymptotics_ntr}. To compute $R(\lambda)$ in \eqref{RR}, we approximate $\by^{TO,\epsilon}(0)$ and $\uy^{TO,\epsilon}(0)$ using $\by^\epsilon(0)$ and $\uy^\epsilon(0)$ for a sufficiently large $\lambda$, setting $\lambda=10^4$. This choice is justified by the fact that $\by^{TO,\epsilon}(0)$ and $\uy^{TO,\epsilon}(0)$ in \eqref{TO_benchmark} represent the boundaries of the no-trade region in the absence of search frictions. The values in parentheses in the column for $R(\lambda)$ are computed using the corresponding asymptotic approximations from \eqref{asymptotics_ntr}.

Table~\ref{table1} reveals several key findings from the numerical analysis. First, WNTR increases with $\lambda$. Transaction costs discourage frequent trading, while search frictions limit the timing of trades.
When $\lambda$ is low, trading opportunities are infrequent, requiring the investor to make large portfolio adjustments whenever a trade is possible. This urgency results in a smaller WNTR. As $\lambda$ increases, the investor anticipates more frequent trading opportunities, reducing the need for aggressive rebalancing. Consequently, WNTR expands, reflecting greater patience in trading executing trades. This observation aligns with the finding that $R(\lambda)$ in Table~\ref{table1} increases with $\lambda$ and remains bounded above by $1$. 

Second, WNTR decreases with $\sigma$. Higher volatility increases the risk of large portfolio deviations. When trading opportunities are limited, a higher $\sigma$ amplifies the risk of the portfolio deviating significantly from the Merton fraction $y_M$ due to extended waiting times before the next trading opportunity. In response, the investor reduces WNTR, opting to rebalance more aggressively to maintain a position closer to $y_M$ when an opportunity arises. 

Third, $R(\lambda)$ decreases with $\sigma$. The reasoning behind WNTR decreasing with $\sigma$ applies here as well. Since $R(\lambda)$ measures the extent to which search frictions affect the optimal trading strategy and is bounded above by $1$ in Table~\ref{table1}, a smaller $R(\lambda)$ implies a greater effect of search frictions. Thus, Table~\ref{table1} suggests that the ``narrowing effect" of search frictions on WNTR is more pronounced for larger $\sigma$.

Lastly, $R(\lambda)$ tends to increase with $c$ for fixed $\sigma$. For instance, when $\sigma=100\%$, some of the $(c,R(\lambda))$ pairs in Table~\ref{table1} are $(0.12, 0.17), \, (0.19,0.24),\, (0.39, 0.40), \, (0.58, 0.42), \, (0.93,0.53)$, etc.
We also observe that the values obtained from the asymptotics in \eqref{TO_benchmark} and \eqref{asymptotics_ntr} closely approximate those obtained by numerically solving the PDE in  \eqref{Main_PDE}, particularly for large $\lambda$ and small $\epsilon$.

\begin{figure}[t]
		\begin{center}
			\includegraphics[width=0.45\textwidth]{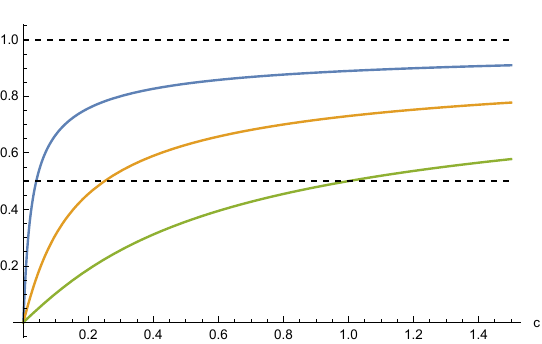}
	\end{center}
	 \caption{The graph shows the expression in \eqref{R_approx} as functions of $c$ for varying $\sigma$ (\textcolor{myblue}{blue}, \textcolor{myorange}{orange}, \textcolor{mygreen}{green} lines for $\sigma=0.2,\, 0.5,\, 1$). The common parameters are $y_M=0.6$ and $\gamma=3$. 	 
			}
		\label{figure_R}
\end{figure}

The asymptotics \eqref{TO_benchmark} and \eqref{asymptotics_ntr} provide a natural approximation for $R(\lambda)$:
  \begin{align}
R(\lambda) \approx  \frac{2\cdot \tfrac{ \sigma y_M (1 - y_M)}{ \sqrt{2\,c}} \Big( \Big( \tfrac{3 \sqrt{2} \,c^{\frac{3}{2}}}{\gamma \sigma^{3} y_M (1 - y_M)} + 1 \Big)^{\frac{1}{3}} - 1 \Big) }{ \Big( \tfrac{12y_M^2(1-y_M)^2}{\gamma}\Big)^{\frac{1}{3}} }.
\label{R_approx}
\end{align}

Figure~\ref{figure_R} illustrates the approximated values of $R(\lambda)$ in \eqref{R_approx} as a function of $c$ for the varying levels of $\sigma$. 
The approximation suggests that $R(\lambda)$ increases with $c$ and converges to 1, consistent with  \eqref{a1_limit}. This result provides a quantitative measure of the impact of search frictions on WNTR.
For instance, the figure indicates that WNTR is reduced to less than half due to search frictions when approximately $c=\epsilon^{\frac{2}{3}} \lambda<0.04$ for $\sigma=20\%$, $c<0.25$ for $\sigma=50\%$, and $c<1$ for $\sigma=100\%$.

\medskip

\section{Proof of \thmref{Joint_limit_of_Wnt_and_Vd}}\label{proof_main}

We prove \thmref{Joint_limit_of_Wnt_and_Vd} in this section, starting with some technical lemmas. We use Notation~\ref{notation}.
 Under \assref{ass}, by \proref{Property_of_NT}, there are $\underline{t}^{\epsilon}, \overline{t}^{\epsilon} \in [0,T)$ such that
\begin{align}
\begin{cases}
0<\underline{y}^{\epsilon}(t)<1 &\textrm{if} \quad t\in [0,\underline{t}^{\epsilon})\\
\underline{y}^{\epsilon}(t)=0 &\textrm{if} \quad t\in [\underline{t}^{\epsilon},T)
\end{cases}, \quad
\begin{cases}
0<\overline{y}^{\epsilon}(t)<1 &\textrm{if} \quad t\in [0,\overline{t}^{\epsilon})\\
\overline{y}^{\epsilon}(t)=1 &\textrm{if} \quad t\in [\overline{t}^{\epsilon},T)
\end{cases}. \label{ts}
\end{align}

The following lemma estimates the location of $\underline{t}^{\epsilon}$ and $\overline{t}^{\epsilon}$.

\begin{lemma}\label{NT_boundaries_hitting_times_bound}
Let \assref{ass} hold. Then, there are $\epsilon_0>0$ and $\overline{C}> \underline{C} > 0$ such that 
\begin{align*}
\underline{t}^{\epsilon}, \, \overline{t}^{\epsilon}\in  [ T - \overline{C} \epsilon, \, T - \underline{C} \epsilon] \quad \textrm{for} \quad \epsilon \in (0,\epsilon_0].
\end{align*}
\end{lemma}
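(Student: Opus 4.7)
The plan is to substitute the simplifications $\uy^\epsilon(u) = 0$ on $[\underline{t}^\epsilon, T)$ and $\by^\epsilon(u) = 1$ on $[\overline{t}^\epsilon, T)$ (from \eqref{ts}) into the defining equations of \proref{Property_of_NT}(iii), and then to sandwich the resulting expressions using the uniform bounds $\underline{v} \le v^\epsilon \le \overline{v}$ from \lemref{v_concave}(ii), which are independent of both $\epsilon$ and $\lambda$. Recall that under \assref{ass}, $\be = \ue = \epsilon$ and $\lambda = c\,\epsilon^{-2/3}$.

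I first treat $\underline{t}^\epsilon$. For $u \in [\underline{t}^\epsilon, T]$ we have $\uy^\epsilon(u) = 0$, so $(1 + \be \uy^\epsilon(u))^{1-\gamma} = 1$ and the equation \eqref{ut_eq} simplifies. Pulling $e^{\mu s}$ out of the outer exponential and substituting $r = u - s$ in the inner integral, it becomes
\begin{equation*}
\epsilon\, e^{\mu T} \;=\; \mu(1+\epsilon)\int_{\underline{t}^\epsilon}^T e^{\mu s}\Big[e^{-\lambda(T-s)} + \int_0^{T-s}\lambda e^{-\lambda r} v^\epsilon(s+r, 0)\,dr\Big]\,ds.
\end{equation*}
The bracket is a convex combination of $1$ and a probability-weighted average of $v^\epsilon(\cdot, 0) \in [\underline{v}, \overline{v}]$, hence lies in the fixed interval $[\min(1,\underline{v}), \max(1,\overline{v})]$. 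Writing $\tau := T - \underline{t}^\epsilon$ and bounding $e^{\mu s}$ between $e^{\mu(T-\tau)}$ and $e^{\mu T}$ on the integration range yields
\begin{equation*}
\mu(1+\epsilon)\min(1,\underline{v})\,e^{\mu(T-\tau)}\,\tau \;\le\; \epsilon\,e^{\mu T} \;\le\; \mu(1+\epsilon)\max(1,\overline{v})\,e^{\mu T}\,\tau.
\end{equation*}
The right inequality gives $\tau \ge \underline{C}\epsilon$, and the left one (using $\tau \le T$ to tame $e^{\mu\tau}$) gives $\tau \le \overline{C}\epsilon$, for positive constants independent of $\epsilon$. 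A brief preliminary step rules out the degenerate case $\underline{t}^\epsilon = 0$ for small $\epsilon$: the right-hand side of \eqref{ut_eq}, viewed as a continuous and strictly decreasing function of $\underline{t}$, is of order $1$ at $\underline{t}=0$ (using the uniform lower bound $\underline v > 0$ and crude estimates on the exponential factors) and vanishes at $\underline{t}=T$, while the target $\epsilon\,e^{\mu T}$ vanishes as $\epsilon \downarrow 0$.

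The argument for $\overline{t}^\epsilon$ is entirely parallel: substituting $\by^\epsilon(u) = 1$ into the defining equation makes $\big((1-\epsilon)/(1-\epsilon\by^\epsilon(u))\big)^{1-\gamma} = 1$, and the constant $\beta = \lambda - (1-\gamma)\mu + \gamma(1-\gamma)\sigma^2/2$ satisfies $\beta = \lambda + O(1)$, so $\lambda/\beta \to 1$ uniformly in the regime $\epsilon \downarrow 0$. The same convex-combination sandwich then applies, with $v^\epsilon(\cdot, 1)$ replacing $v^\epsilon(\cdot, 0)$ and with the coefficient $\gamma\sigma^2 - \mu$ (strictly positive by $y_M < 1$) playing the role of $\mu$; this delivers $T - \overline{t}^\epsilon \asymp \epsilon$. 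Taking $\underline{C}$ and $\overline{C}$ to be the smallest and largest constants obtained across both cases completes the proof. The delicate point throughout is to keep all estimates uniform in $\lambda$; this is precisely ensured by the fact that the bounds $\underline{v}, \overline{v}$ in \lemref{v_concave}(ii) do not depend on the friction parameters.
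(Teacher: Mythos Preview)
Your proof is correct and follows essentially the same approach as the paper: both sandwich the right-hand side of \eqref{ut_eq} using the uniform bounds $\underline{v}\le v^\epsilon\le\overline{v}$ from \lemref{v_concave}(ii) and read off $T-\underline{t}^\epsilon\asymp\epsilon$. Your substitution $\uy^\epsilon(u)=0$ on the integration range (valid since $u\ge s\ge\underline{t}^\epsilon$) and your convex-combination interpretation of the inner bracket are mild presentational simplifications relative to the paper's explicit integration and logarithm manipulation, but the underlying argument is the same.
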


\begin{proof}
We prove the inequalities for $\underline{t}^{\epsilon}$ ($\overline{t}^{\epsilon}$ can be treated by the same way).
Considering the lower bound of $v^\epsilon$ in \eqref{Range_of_v}, for small enough $\epsilon$, there exists $\underline{t}^{\epsilon} \in [0,T)$ that satisfies \eqref{ut_eq}. Hence, 
\begin{align*}
  \epsilon \, e^{\mu T} & = \mu (1 + \epsilon) \int_{\underline{t}^{\epsilon}}^{T}  e^{(\mu + \lambda) s} \Big( e^{- \lambda T} + \lambda \int_{s}^{T} e^{- \lambda u} \tfrac{v^{\epsilon}(u, \uy^{\epsilon}(u))}{( 1 + \epsilon \uy^{\epsilon}(u) )^{1 - \gamma}} d u \Big) d s  \\
  & \leq C \mu (1 + \epsilon) \int_{\underline{t}^{\epsilon}}^{T} e^{(\mu + \lambda) s} \left( e^{- \lambda T} + \lambda \int_{s}^{T} e^{- \lambda u} d u \right) d s = C(1+\epsilon) \big( e^{\mu T} - e^{\mu \underline{t}^\epsilon} \big), 
\end{align*}
where $C>0$ is a constant independent of $\epsilon$. This implies that there exists $\epsilon_0>0$ such that
\begin{align*}
\underline{t}^{\epsilon} & \leq T + \tfrac{1}{\mu}\ln \left( 1 - \tfrac{\epsilon}{C (1 + \epsilon)} \right) \leq T- \overline C \epsilon \quad \textrm{for}\quad \epsilon \in (0,\epsilon_0],
\end{align*}
where $\overline{C}>0$ is a constant independent of $\epsilon$ and the second inequality is due to $\lim_{x \downarrow 0} \tfrac{\ln (1 - x)}{x} = - 1$.

Similarly, for small enough $\epsilon$, \eqref{ut_eq} produces 
\begin{align*}
   \epsilon \, e^{\mu T}   & \geq \mu (1 + \epsilon) \int_{\underline{t}^{\epsilon}}^{T} e^{(\mu + \lambda) s} e^{- \lambda T} d s = \tfrac{\mu(1+\epsilon)}{\mu+\lambda} e^{-\lambda T} \big(e^{(\mu+\lambda)T} - e^{(\mu+\lambda) \underline{t}^\epsilon} \big).
\end{align*}
This implies that there exists $\epsilon_0>0$ and $\underline{C}>0$ such that
\begin{align*}
\underline{t}^{\epsilon,\lambda} & \geq T + \tfrac{1}{\mu+\lambda} \ln \left( 1 - \tfrac{(\mu + \lambda) \epsilon}{\mu (1 + \epsilon)} \right) \geq T-\underline{C} \epsilon \quad \textrm{for}\quad \epsilon\in (0,\epsilon_0],
  \end{align*}
where the second inequality is due to $\lim_{x \downarrow 0} \tfrac{\ln (1 - x)}{x} = - 1$ and the relation $\lambda=c \, \epsilon^{-\frac{2}{3}}$. 
\end{proof}

Before we start to prove \thmref{Joint_limit_of_Wnt_and_Vd}, we list three technical lemmas whose proofs are provided in the appendices.

\begin{lemma}\label{Merton_fraction_inside_NT}
Let \assref{ass} hold.

\noindent (i) There exist a constant $C>0$ independent of $(t,\epsilon)\in [0,T)\times (0,1)$ such that
\begin{align}
&\big\vert \uy^{\epsilon}(t) -y_M \big\vert, \, \big\vert \by^{\epsilon}(t) -y_M \big\vert, \, 
\big\vert \by^{\epsilon}(t) -\uy^{\epsilon}(t) \big\vert \leq \tfrac{C \epsilon^{\frac{1}{3}} }{\min \{ 1, \lambda (T - t) \}}. \label{pre_y1}
\end{align}

\noindent(ii) For fixed $t\in [0,T)$, $\li_{\epsilon \downarrow 0} \tfrac{ \by^{\epsilon}(t) - \uy^{\epsilon}(t)}{\epsilon^{\frac{1}{3}}}>0$.

\noindent(iii) There is $\epsilon_0>0$ such that for $(t,\epsilon)\in [0,T]\times (0,\epsilon_0]$, we have $\uy^{\epsilon}(t)<y_M<\by^{\epsilon}(t)$.
\end{lemma}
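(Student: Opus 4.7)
The plan is to combine the smooth-pasting characterization from \lemref{Boundaries_of_NT_region} with estimates derived from the PDE~\eqref{Main_PDE} inside the no-trade region. The first key observation is that for $x \in [\uy^\epsilon(t), \by^\epsilon(t)]$ we have $L^\epsilon(t,x) = v^\epsilon(t,x)$ by \eqref{Supremum_term_L}, so the jump contribution $\lambda(L-v)$ cancels and the PDE collapses to the Merton-type equation
\begin{equation*}
v_t^\epsilon + x(1-x)(\mu-\gamma\sigma^2 x) v_x^\epsilon + \tfrac{\sigma^2 x^2(1-x)^2}{2} v_{xx}^\epsilon + Q(x) v^\epsilon = 0.
\end{equation*}
At $x = y_M$ the drift coefficient $x(1-x)(\mu-\gamma\sigma^2 x)$ vanishes, so (whenever $y_M$ lies in the no-trade region) this leaves the algebraic identity
\begin{equation*}
v_{xx}^\epsilon(t, y_M) \;=\; -\tfrac{2\bigl(v_t^\epsilon(t, y_M) + Q(y_M)\, v^\epsilon(t, y_M)\bigr)}{\sigma^2 y_M^2 (1-y_M)^2}.
\end{equation*}
Since $v_t^0 + Q(y_M) v^0 = 0$ by \eqref{merton_pde}, the right-hand side measures the deviation $v^\epsilon - v^0$, and the whole proof is driven by sharp asymptotic control of this deviation.

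For part (iii), I would first rephrase the claim via \lemref{Boundaries_of_NT_region} and strict concavity (\lemref{v_concave}) as the pair of strict inequalities
\begin{equation*}
-\tfrac{\epsilon v^\epsilon(t, y_M)}{1-\epsilon y_M} \;<\; \tfrac{v_x^\epsilon(t, y_M)}{1-\gamma} \;<\; \tfrac{\epsilon v^\epsilon(t, y_M)}{1+\epsilon y_M}.
\end{equation*}
Both sides are $\Theta(\epsilon)$, so it suffices to show $|v_x^\epsilon(t, y_M)|/(1-\gamma) = o(\epsilon)$ as $\epsilon \downarrow 0$. This I would extract from the stochastic representation \eqref{Dv_with_optimizer} at $x=y_M$: $Y_s^{(t,y_M)}$ is a martingale in the ``risk-neutral'' sense, the $O(\epsilon)$ bound $\lVert L_x^\epsilon\rVert_\infty \leq C\epsilon$ from \eqref{Lx_bound} gives an $O(\epsilon)$ contribution smeared over $[t,T]$, and the terminal contribution $e^{-\lambda(T-t)}\EE[\partial_x Z_T^{(t,y_M)}]$ decays exponentially in $\lambda$.

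For part (i), assuming (iii), both smooth-pasting conditions in \lemref{Boundaries_of_NT_region} hold at interior boundary points. Taylor-expanding around $y_M$ gives
\begin{equation*}
v_x^\epsilon(t, y_M) + v_{xx}^\epsilon(t, y_M)(\by^\epsilon(t) - y_M) + O\bigl((\by^\epsilon(t) - y_M)^2\bigr) \;=\; -\tfrac{\epsilon(1-\gamma) v^\epsilon(t, \by^\epsilon(t))}{1 - \epsilon \by^\epsilon(t)},
\end{equation*}
and the analogous equation at $\uy^\epsilon(t)$. Summing eliminates $v_x^\epsilon(t,y_M)$ at leading order and isolates $v_{xx}^\epsilon(t, y_M)\bigl(\by^\epsilon(t) - \uy^\epsilon(t)\bigr) = \Theta(\epsilon)$, while subtracting yields individual control of $\by^\epsilon(t) - y_M$ and $y_M - \uy^\epsilon(t)$. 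To convert this into \eqref{pre_y1} I would establish the lower bound $|v_{xx}^\epsilon(t, y_M)| \gtrsim \epsilon^{2/3}\min\{1,\lambda(T-t)\}$, obtained by comparing \eqref{v_with_optimizer} to its Merton limit and using \lemref{NT_boundaries_hitting_times_bound} to extract the time-to-maturity factor: within $O(1/\lambda)$ of $T$ there has not been enough time for a jump opportunity to occur, which gives the $\lambda(T-t)$ degeneracy. For part (ii), the matching upper bound $|v_{xx}^\epsilon(t, y_M)| \lesssim \epsilon^{2/3}$ for fixed $t<T$ comes from plugging a trial expansion $v^\epsilon(t,y_M) = v^0(t) - a_2(c) v^0(t)(T-t)\epsilon^{2/3} + o(\epsilon^{2/3})$ into the representation \eqref{v_with_optimizer} and reading off the leading correction of $v_t^\epsilon(t,y_M) + Q(y_M) v^\epsilon(t,y_M)$ via a Gronwall argument; the same Taylor expansion then forces $\by^\epsilon(t) - \uy^\epsilon(t) \gtrsim \epsilon/\epsilon^{2/3} = \epsilon^{1/3}$.

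The main obstacle is the sharp two-sided control of $v_{xx}^\epsilon(t, y_M)$ jointly in $\epsilon\downarrow 0$ and $\lambda = c\epsilon^{-2/3}\to\infty$: there is no stochastic representation for $v_{xx}^\epsilon$, so the PDE-algebraic identity must be combined with delicate asymptotic analysis of \eqref{v_with_optimizer}, and the $\lambda(T-t)$ degeneracy (already visible in \lemref{NT_boundaries_hitting_times_bound}) must be tracked uniformly in $t$. I expect the bulk of this accounting to be deferred to the Appendix, with the bootstrap from the crude bound $\lVert L_x^\epsilon\rVert_\infty = O(\epsilon)$ to the refined $\epsilon^{2/3}$ curvature estimate forming the heart of the argument.
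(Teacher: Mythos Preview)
Your outline assembles the right ingredients but inverts the logical order and leaves a genuine gap in part~(iii). You claim it suffices to show $|v_x^\epsilon(t,y_M)|/(1-\gamma)=o(\epsilon)$, but your sketch does not deliver this: the $L_x^\epsilon$-term in \eqref{Dv_with_optimizer} contributes $\lambda\int_t^T e^{-\lambda(s-t)}\EE[Z_s\,\partial_x Y_s\,L_x^\epsilon]\,ds$, and the crude bound $\lVert L_x^\epsilon\rVert_\infty\leq C\epsilon$ of \eqref{Lx_bound} gives only $O(\epsilon)$. Any cancellation beyond that would require knowing $\by^\epsilon(s)-y_M\approx y_M-\uy^\epsilon(s)$, which is the content of (i)---and your (i) already assumes (iii), so the argument is circular. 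Separately, the statement of (iii) demands a single $\epsilon_0$ valid for all $t\in[0,T]$, so ``the terminal contribution decays exponentially in $\lambda$'' is not usable near $t=T$, where $e^{-\lambda(T-t)}$ is of order one.

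The paper resolves both issues by running the dependency the other way. For (i) it never assumes $y_M$ lies in the no-trade region; instead it compares to the search-frictions-only benchmark, using $v_x^{SO,\lambda}(t,\hat y^{SO,\lambda}(t))=0$, the curvature lower bound $-v_{xx}^{SO,\lambda}/(1-\gamma)\gtrsim\min\{1,\lambda(T-t)\}/\lambda$ from \eqref{DDv_0_lambda_order}, the comparison $|v_x^\epsilon-v_x^{SO,\lambda}|\leq C\epsilon$ from \eqref{Dv_ep_Dv_0_difference}, and $|\hat y^{SO,\lambda}(t)-y_M|\leq C/\lambda$ from \eqref{Order_of_hat_y0_and_mean_variance_ratio}. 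This yields \eqref{pre_y1} directly. Part~(ii) then follows from the smooth-pasting conditions, the mean value theorem, and the upper bound \eqref{vxx_twosides} on $v_{xx}^\epsilon$---no trial expansion is needed. For (iii) the paper does not attempt $v_x^\epsilon(t,y_M)=o(\epsilon)$ at all: it studies the combined quantity $\tfrac{v_x^\epsilon(t,y_M)}{\epsilon(1-\gamma)}-\tfrac{v^\epsilon(t,y_M)}{1+\epsilon y_M}$ via \eqref{v_with_optimizer} and \eqref{Dv_with_optimizer} together and extracts a strictly negative contribution uniformly in $t$ from two sources. When $\lambda(T-t)$ is small the terminal term $-e^{-\lambda(T-t)}\EE[Z_T^{(t,y_M)}]/(1+\epsilon y_M)$ is of order $-1$; when $\lambda(T-t)$ is large, the integrand $\tfrac{L_x^\epsilon}{\epsilon(1-\gamma)}-\tfrac{L^\epsilon}{1+\epsilon y_M}$ is bounded above by $-C$ on the event $\{Y_s^{(t,y_M)}\geq\by^\epsilon(s)\}$, and a lower bound on $\PP\bigl(Y_s^{(t,y_M)}\geq\by^\epsilon(s)\bigr)$---obtained from part (i) via \eqref{pre_y2}---supplies the negativity. (Incidentally, the paper does furnish a stochastic representation for $v_{xx}^\epsilon$; see \eqref{DDv_with_optimizer}.)
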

\begin{proof}
See Appendix~\ref{lemma 5.6}.
\end{proof}

\begin{lemma}\label{really_used}
Let \assref{ass} hold.

\noindent(i) For $(t,\epsilon, x)\in [0,T)\times (0,1)\times [ \uy^{\epsilon}(t), \by^{\epsilon}(t)]$, there is $C>0$ independent of $(t,\epsilon, x)$ such that
\begin{align}
&\left| v_x^{\epsilon}(t, x) \right|  \leq C \epsilon,\label{vx_epsilon}\\
&\left| v^{\epsilon}(t, x) - v^{0}(t)  \right|  \leq C \epsilon^{\frac{2}{3}}, \label{v_epsilon}\\
&\left\vert v_{t}^{\epsilon}(t, x) - v_{t}^{0}(t) \right\vert \leq C \big( \epsilon^{\frac{2}{3}}+ (x-y_M)^2 \big). \label{vt_ep_vt_0_difference_on_NT} 
\end{align}

\noindent(ii) For fixed $t\in [0,T)$,
\begin{align}
&\lim_{\epsilon \downarrow 0}\bigg(\sup_{x_1,x_2 \in  [ \uy^{\epsilon}(t), \by^{\epsilon}(t)]} \Big| \tfrac{v^{\epsilon}(t, x_1)-v^{\epsilon}(t, x_2)}{ \epsilon^{\frac{2}{3}}} \Big| \bigg) =0, \label{Uniform_limit_of_v_ep_v_0_difference}\\
&\lim_{\epsilon \downarrow 0} \bigg(\sup_{x_1,x_2 \in  [ \uy^{\epsilon}(t), \by^{\epsilon}(t)]}\Big| \tfrac{v_t^{\epsilon}(t, x_1)- v_t^{\epsilon}(t, x_2)}{ \epsilon^{\frac{2}{3}}}\Big| \bigg)=0.\label{Uniform_limit_of_vt_ep_vt_0_difference}
\end{align}
\end{lemma}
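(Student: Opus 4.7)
The plan is to handle parts (i) and (ii) in sequence, relying on \lemref{Boundaries_of_NT_region}(iii), \lemref{Merton_fraction_inside_NT}, and the PDE \eqref{Main_PDE}. For the first estimate in (i), I would apply \lemref{Boundaries_of_NT_region}(iii) with $\be = \ue = \epsilon$ together with the uniform upper bound $\overline v$ from \lemref{v_concave}(ii) to obtain $|v_x^\epsilon(t,x)| \leq C\epsilon$ on $[\uy^\epsilon(t), \by^\epsilon(t)]$. For the second estimate, I would split $v^\epsilon(t,x) - v^0(t) = [v^\epsilon(t,x) - v^\epsilon(t,y_M)] + [v^\epsilon(t,y_M) - v^0(t)]$. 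By \lemref{Merton_fraction_inside_NT}(iii), $y_M \in (\uy^\epsilon(t), \by^\epsilon(t))$ for small $\epsilon$, so the first bracket is bounded by $C\epsilon \cdot |x-y_M| \leq C\epsilon$ via integrating the previous bound. For the second bracket, since $y_M$ is interior to the no-trade region, the argmax in \eqref{Main_PDE} at $x=y_M$ is $y_M$; combined with $\mu - \gamma\sigma^2 y_M = 0$, the PDE collapses at $x=y_M$ to
\[
v_t^\epsilon(t, y_M) + \tfrac{\sigma^2 y_M^2(1-y_M)^2}{2} v_{xx}^\epsilon(t, y_M) + Q(y_M) v^\epsilon(t, y_M) = 0.
\]
Subtracting the Merton ODE $v_t^0 + Q(y_M) v^0 = 0$ and solving the resulting linear ODE with zero terminal datum gives
\[
v^\epsilon(t,y_M) - v^0(t) = \tfrac{\sigma^2 y_M^2(1-y_M)^2}{2} \int_t^T e^{Q(y_M)(s-t)} v_{xx}^\epsilon(s, y_M)\, ds,
\]
so the bound reduces to an integrable-in-$s$ control of $v_{xx}^\epsilon(s,y_M)$. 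I expect this to come from (a) the single-signedness of $v_{xx}^\epsilon$ implied by concavity of $v^\epsilon/(1-\gamma)$ in \lemref{v_concave}(i); (b) the identity $\int_{\uy^\epsilon(s)}^{\by^\epsilon(s)} v_{xx}^\epsilon(s,\xi)\, d\xi = v_x^\epsilon(s,\by^\epsilon(s)) - v_x^\epsilon(s,\uy^\epsilon(s)) = O(\epsilon)$ from \lemref{Boundaries_of_NT_region}; and (c) a local control of $v_{xx}^\epsilon$ around $y_M$ obtained by differentiating the stochastic representation \eqref{Dv_with_optimizer} once more and exploiting the width lower bound in \lemref{Merton_fraction_inside_NT}(ii).

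For the third estimate in (i), on the no-trade region the sup in \eqref{Main_PDE} equals $v^\epsilon(t,x)$, so the PDE rearranges as
\begin{align*}
v_t^\epsilon(t,x) - v_t^0(t) &= -x(1-x)\gamma\sigma^2(y_M - x) v_x^\epsilon(t,x) - \tfrac{\sigma^2 x^2(1-x)^2}{2} v_{xx}^\epsilon(t,x) \\
&\quad - [Q(x) - Q(y_M)] v^\epsilon(t,x) - Q(y_M) \bigl[v^\epsilon(t,x) - v^0(t)\bigr].
\end{align*}
The first summand is $O\bigl(\epsilon \cdot \epsilon^{1/3}/\min\{1,\lambda(T-t)\}\bigr)$ by \lemref{Merton_fraction_inside_NT}(i); the third is $O\bigl(\epsilon^{2/3}/\min\{1,\lambda^2(T-t)^2\}\bigr)$ from the quadratic vanishing of $Q-Q(y_M)$ at $y_M$; the fourth is $O(\epsilon^{2/3})$ by the preceding estimate; and the second is absorbed by the pointwise $v_{xx}^\epsilon$ bound from the preceding paragraph. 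Collecting these contributions produces the stated bound, with the $1/\min\{1,\lambda^2(T-t)^2\}$ factor arising from the third summand.

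For part (ii), the first limit is immediate: for fixed $t<T$ and small $\epsilon$ one has $\min\{1,\lambda(T-t)\}=1$ (since $\lambda = c\epsilon^{-2/3} \to \infty$), so \lemref{Merton_fraction_inside_NT}(i) gives $\by^\epsilon(t)-\uy^\epsilon(t) \leq C\epsilon^{1/3}$, whence $|v^\epsilon(t,x_1)-v^\epsilon(t,x_2)| \leq C\epsilon \cdot (\by^\epsilon(t)-\uy^\epsilon(t)) \leq C\epsilon^{4/3}$, and dividing by $\epsilon^{2/3}$ tends to zero. For the second limit, I would insert the PDE-based expression for $v_t^\epsilon$ at $x_1$ and $x_2$, subtract, and decompose each difference of products as $[a(x_1)-a(x_2)]g^\epsilon(x_1) + a(x_2)[g^\epsilon(x_1)-g^\epsilon(x_2)]$ (with $a$ a bounded smooth coefficient and $g^\epsilon \in \{v^\epsilon, v_x^\epsilon, v_{xx}^\epsilon\}$): coefficient variations are $O(\epsilon^{1/3})$ since $|x_1-x_2| \leq \by^\epsilon-\uy^\epsilon = O(\epsilon^{1/3})$, and state variations are controlled by sign-definite integrals such as $\int_{x_2}^{x_1} v_{xx}^\epsilon\, d\xi$, bounded by the full-range integrals over $[\uy^\epsilon(t),\by^\epsilon(t)]$ which are $O(\epsilon)$, yielding an overall $O(\epsilon^{4/3})$ that again vanishes after dividing by $\epsilon^{2/3}$.

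The main obstacle throughout is the uniform-in-$\epsilon$ control of $v_{xx}^\epsilon$, both the pointwise bound at $y_M$ and the variational behaviour within the no-trade region. The bounds in \lemref{v_classical_solution_and_properties}(iii) are not uniform in $(\epsilon,\lambda)$, so obtaining the required estimates requires differentiating \eqref{Dv_with_optimizer} once more and carefully analysing $L_{xx}^\epsilon$ near the free boundaries $\uy^\epsilon, \by^\epsilon$ -- exactly the kind of subtle limiting behaviour flagged in the discussion after \thmref{Joint_limit_of_Wnt_and_Vd}.
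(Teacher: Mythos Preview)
Your treatment of \eqref{vx_epsilon}, \eqref{vt_ep_vt_0_difference_on_NT}, and \eqref{Uniform_limit_of_v_ep_v_0_difference} follows the same structure as the paper. For \eqref{v_epsilon} your route is different: the paper does not go through the ODE at $x=y_M$ at all, but simply writes
\[
|v^{\epsilon}(t,x)-v^{0}(t)| \le |v^{\epsilon}(t,x)-v^{SO,\lambda}(t,x)| + |v^{SO,\lambda}(t,x)-v^{0}(t)|
\]
and invokes the preparatory estimates $|v^{\epsilon}-v^{SO,\lambda}|\le C\epsilon^{2/3}$, $|v_x^{SO,\lambda}|\le C/\lambda$, and $|v^{SO,\lambda}(t,\hat y^{SO,\lambda}(t))-v^{0}(t)|\le C/\lambda$ from the search-frictions-only comparison (Appendix~B). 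Your plan instead requires a \emph{uniform in $s$} pointwise bound $|v_{xx}^{\epsilon}(s,y_M)|\le C\epsilon^{2/3}$. The paper does have that bound (it is \eqref{vxx_twosides}), but it is obtained precisely through the $v^{SO,\lambda}$ comparison and a density argument that differentiates the stochastic representation \emph{outside} the expectation, not through your items (a)--(c). In particular, (b) only controls the spatial mean of $v_{xx}^{\epsilon}$ over the no-trade region, and the width lower bound in \lemref{Merton_fraction_inside_NT}(ii) is a $\liminf$ for fixed $t$, not uniform in $s\in[t,T)$; so (a)--(c) do not deliver the uniform pointwise bound you need without essentially reproducing the Appendix~B machinery.

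The real gap is \eqref{Uniform_limit_of_vt_ep_vt_0_difference}. Your termwise PDE differencing does \emph{not} yield $o(\epsilon^{2/3})$: the term $Q(x)v^{\epsilon}(t,x)$ contributes, for $x_1,x_2$ in the no-trade region,
\[
\bigl|Q(x_1)-Q(x_2)\bigr|\cdot |v^{\epsilon}(t,x_1)| \;\le\; C\,|x_1-x_2|\cdot|x_1+x_2-2y_M|\cdot 1 \;=\; O(\epsilon^{1/3})\cdot O(\epsilon^{1/3}) \;=\; O(\epsilon^{2/3}),
\]
and this is sharp (take $x_1=\by^{\epsilon}(t)$, $x_2=y_M$). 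Likewise $\tfrac{\sigma^2}{2}x^2(1-x)^2\bigl(v_{xx}^{\epsilon}(t,x_1)-v_{xx}^{\epsilon}(t,x_2)\bigr)$ is only $O(\epsilon^{2/3})$, since $v_{xxx}^{\epsilon}$ is not sign-definite and the available bound $x^2(1-x)^2|v_{xxx}^{\epsilon}|\le C\epsilon^{1/3}$ gives merely $O(\epsilon^{1/3})\cdot O(\epsilon^{1/3})$. These two $O(\epsilon^{2/3})$ contributions cancel at leading order, but your decomposition cannot see this. The paper avoids the issue entirely: it proves, via the stochastic representation for $v_{xt}^{\epsilon}$ (\lemref{Various_orders2}(iv)) and a Gronwall-type bootstrap, the global estimate
\[
|v_{xt}^{\epsilon}(t,x)| \le C\bigl(\epsilon^{2/3}+e^{-\alpha\lambda(T-t)}\bigr),
\]
so that for fixed $t$ one has $|v_t^{\epsilon}(t,x_1)-v_t^{\epsilon}(t,x_2)|\le C\epsilon^{2/3}\cdot(\by^{\epsilon}(t)-\uy^{\epsilon}(t))=O(\epsilon)$. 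Obtaining this $v_{xt}^{\epsilon}$ bound is the substantive step you are missing; it requires first a crude $O(\epsilon^{1/3})$ bound on $v_{xt}^{\epsilon}$ inside the no-trade region (from differentiating the PDE in $x$), and then a self-improving iteration through the integral representation to reach $O(\epsilon^{2/3})$.
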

\begin{proof}
See Appendix~\ref{lemma 5.7}.
\end{proof}

\begin{lemma}\label{v_xx_conv_lem}
Let \assref{ass} hold and $G^{\epsilon} (t, x):= x^{2} (1 - x)^{1 + \gamma} \tfrac{\lambda v_{x x}^{\epsilon}(t, x)}{1 - \gamma}$.
Let $t\in [0,T)$ be fixed, $h(z):=\tfrac{e^z}{1+e^z}$, and $x_\epsilon\in [ \uy^{\epsilon}(t), \by^{\epsilon}(t)]$. Then,
\begin{align*}
G^{\epsilon} (t, x_\epsilon)  + y_M^2(1-y_M)^{1+\gamma} \gamma \sigma^2 v^{0}(t) 
- \int_{\uz^{\epsilon}(t)}^{\bz^{\epsilon}(t)} G^{\epsilon} (t,h(z)) \tfrac{\sqrt{2 \lambda} }{2 \sigma}\,e^{- \frac{\sqrt{2 \lambda}}{\sigma} \left\vert z - z_\epsilon \right\vert}  dz \stackrel{\epsilon \downarrow 0}{\longrightarrow}0,
\end{align*}
where $\uz^{\epsilon}(t):=h^{-1}(\uy^{\epsilon}(t))$, $\bz^{\epsilon}(t):=h^{-1}(\by^{\epsilon}(t))$ and $z_\epsilon:= h^{-1}(x_\epsilon)$.
\end{lemma}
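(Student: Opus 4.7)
The plan is to leverage the fact that $K_\lambda(y) := \tfrac{\sqrt{2\lambda}}{2\sigma}e^{-\sqrt{2\lambda}|y|/\sigma}$ equals $\lambda$ times the Green's function of the operator $\mathcal{A}_\lambda := \lambda - \tfrac{\sigma^2}{2}\partial_{zz}$ on $\R$. Setting
\begin{align*}
u^{\epsilon}(z) := \int_{\uz^{\epsilon}(t)}^{\bz^{\epsilon}(t)} G^{\epsilon}(t, h(z'))\,K_\lambda(z - z')\,dz',
\end{align*}
one has, distributionally on $\R$,
\begin{align*}
\lambda\,u^{\epsilon}(z) - \tfrac{\sigma^2}{2}(u^{\epsilon})''(z) = \lambda\,G^{\epsilon}(t, h(z))\,\mathbf{1}_{[\uz^{\epsilon}(t),\,\bz^{\epsilon}(t)]}(z).
\end{align*}
For $z_\epsilon \in [\uz^{\epsilon}(t), \bz^{\epsilon}(t)]$ this rearranges to the key identity
\begin{align*}
G^{\epsilon}(t, x_\epsilon) - u^{\epsilon}(z_\epsilon) = -\tfrac{\sigma^2}{2\lambda}\,(u^{\epsilon})''(z_\epsilon),
\end{align*}
so the lemma reduces to showing $\tfrac{\sigma^2}{2\lambda}(u^{\epsilon})''(z_\epsilon) \to y_M^2(1-y_M)^{1+\gamma}\gamma\sigma^2 v^{0}(t)$ as $\epsilon\downarrow 0$.

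Next I would derive the leading asymptotic form of $G^{\epsilon}$ on the no-trade region. Since $L^{\epsilon}(t,x) = v^{\epsilon}(t,x)$ there, the HJB equation \eqref{Main_PDE} reduces to the no-trade PDE, giving
\begin{align*}
v_{xx}^{\epsilon}(t, x) = -\tfrac{2}{\sigma^2 x^2(1-x)^2}\bigl(v_t^{\epsilon}(t,x) + x(1-x)(\mu - \gamma\sigma^2 x)v_x^{\epsilon}(t,x) + Q(x)\,v^{\epsilon}(t,x)\bigr).
\end{align*}
Combining the Merton identity $v_t^{0}(t) + Q(y_M)v^{0}(t) = 0$ from \eqref{merton_pde}, the bounds $|v_x^\epsilon| \leq C\epsilon$, $|v_t^\epsilon - v_t^0| \leq C\epsilon^{2/3}$, $|v^\epsilon - v^0| \leq C\epsilon^{2/3}$ from \lemref{really_used}(i), the small-variation estimates \eqref{Uniform_limit_of_v_ep_v_0_difference} and \eqref{Uniform_limit_of_vt_ep_vt_0_difference}, and the expansion $Q(x) - Q(y_M) = -\tfrac{\gamma(1-\gamma)\sigma^2}{2}(x-y_M)^2$, I expect the asymptotic form
\begin{align*}
G^{\epsilon}(t, x) = c\,\gamma\,y_M^{2}(1-y_M)^{1+\gamma}\,v^{0}(t)\,\eta^2 + B^{\epsilon}(t) + o(1),
\end{align*}
uniformly on $x\in [\uy^{\epsilon}(t), \by^{\epsilon}(t)]$, where $\eta := (z-z_M)/\epsilon^{1/3}$ with $z_M:=h^{-1}(y_M)$, and $B^{\epsilon}(t)$ is an $x$-independent function carrying the first-order corrections of $v^{\epsilon}$ and $v_t^{\epsilon}$.

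After the rescaling to $\eta$, the kernel $K_\lambda(z - z_\epsilon)\,dz$ becomes the $\epsilon$-independent Laplace density $\tfrac{\sqrt{2c}}{2\sigma}e^{-\sqrt{2c}|\eta-\eta_\epsilon|/\sigma}\,d\eta$, which has variance $\sigma^2/c$. Substituting the expansion above into the identity $\lambda u^\epsilon - \tfrac{\sigma^2}{2}(u^\epsilon)''= \lambda G^\epsilon\,\mathbf 1_{[\uz^{\epsilon},\bz^{\epsilon}]}$ and solving the resulting linear ODE in $\eta$, the quadratic-in-$\eta$ part of $G^{\epsilon}$ contributes exactly $A\sigma^2/c = y_M^2(1-y_M)^{1+\gamma}\gamma\sigma^2 v^{0}(t)$ (with $A := c\gamma y_M^2(1-y_M)^{1+\gamma}v^{0}(t)$) to $u^\epsilon(z_\epsilon) - G^\epsilon(t,x_\epsilon)$, which is precisely the claimed limit, provided the homogeneous $e^{\pm\sqrt{2c}\eta/\sigma}$ modes in $u^\epsilon$ are shown to contribute negligibly at $z_\epsilon$.

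The central obstacle is precisely the control of these homogeneous modes. By \lemref{Merton_fraction_inside_NT} the no-trade region has width of order $\epsilon^{1/3}$, matching in order the Laplace kernel's length scale $\sigma/\sqrt{2\lambda} = \epsilon^{1/3}\sigma/\sqrt{2c}$; consequently the kernel is \emph{not} concentrated inside the no-trade region and the homogeneous contributions $c_1^{\epsilon}e^{\sqrt{2c}\eta/\sigma} + c_2^{\epsilon}e^{-\sqrt{2c}\eta/\sigma}$ to $u^{\epsilon}$ inside that region do not vanish by generic arguments. Their asymptotic cancellation at $\eta_\epsilon$ must be extracted by combining the gradient boundary conditions for $v_x^{\epsilon}$ at $\uy^{\epsilon}(t),\by^{\epsilon}(t)$ from \lemref{Boundaries_of_NT_region} with the symmetry $\be=\ue=\epsilon$ of \assref{ass}. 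This delicate matching is exactly the kind of subtle limiting behavior that does not appear in the pure transaction-cost or pure search-friction benchmarks, as highlighted in the discussion following \thmref{Joint_limit_of_Wnt_and_Vd}.
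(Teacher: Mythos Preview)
Your approach is fundamentally different from the paper's, and it contains a genuine gap that the boundary conditions from \lemref{Boundaries_of_NT_region} do not fill.

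\textbf{What the paper does.} The paper never expands $G^{\epsilon}$ via the no-trade PDE. Instead it starts from the stochastic representation \eqref{DDv_with_optimizer2} for $v_{xx}^{\epsilon}$ and uses the key fact \eqref{Lxx_expression} that $L_{xx}^{\epsilon}=v_{xx}^{\epsilon}$ on $(\uy^{\epsilon}(s),\by^{\epsilon}(s))$ and $L_{xx}^{\epsilon}=O(\epsilon^{2})$ outside. Substituting and multiplying by $\lambda x^{2}(1-x)^{1+\gamma}/(1-\gamma)$ produces a \emph{self-referential} integral equation: up to $o(1)$,
\[
G^{\epsilon}(t,x_\epsilon)-x_\epsilon^{2}(1-x_\epsilon)^{1+\gamma}\tfrac{\lambda v_{xx}^{SO,\lambda}(t,x_\epsilon)}{1-\gamma}
=\lambda\int_t^T e^{-\lambda(s-t)}\int_{\uy^{\epsilon}(s)}^{\by^{\epsilon}(s)}G^{\epsilon}(s,y)\,\varphi(y;s-t,x_\epsilon)\,dy\,ds .
\]
The constant in the lemma comes from the left: by \eqref{Limit_of_lambda_DDv}, $\lambda v_{xx}^{SO,\lambda}/(1-\gamma)\to-\gamma\sigma^{2}v^{0}(t)$. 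For the right-hand side the paper proves separate time-regularity estimates (the auxiliary Lemma~\ref{vxxt_lem}: bounds on $v_{xxt}^{\epsilon}$ and on $\uy_t^{\epsilon},\by_t^{\epsilon}$) allowing replacement of $G^{\epsilon}(s,\cdot)$ and the $s$-dependent integration limits by their values at time $t$. Finally, the Laplace kernel appears as the resolvent identity
\[
\lambda\int_0^\infty e^{-\lambda u}\,\tfrac{1}{\sigma\sqrt{2\pi u}}\exp\Bigl(-\tfrac{(z-z_\epsilon)^2}{2\sigma^2 u}\Bigr)\,du
=\tfrac{\sqrt{2\lambda}}{2\sigma}\,e^{-\frac{\sqrt{2\lambda}}{\sigma}|z-z_\epsilon|},
\]
which is the paper's \eqref{G7}. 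Thus the Laplace kernel arises from integrating the transition density of $Y^{(t,x)}$ in time against the Poisson weight, not from any Green's-function manipulation of the target expression.

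\textbf{Where your argument breaks.} Your ``key identity'' $G^{\epsilon}-u^{\epsilon}=-\tfrac{\sigma^{2}}{2\lambda}(u^{\epsilon})''$ is a tautological rewriting of the Green's-function equation; computing $(u^{\epsilon})''$ is no easier than computing $G^{\epsilon}-u^{\epsilon}$ directly. Your real plan is to insert the ansatz $G^{\epsilon}\approx A\eta^{2}+B^{\epsilon}(t)$ and evaluate the convolution. But $B^{\epsilon}(t)$ encodes $\lambda(v^{\epsilon}-v^{0})$ and $\lambda(v_t^{\epsilon}-v_t^{0})$, which are $O(1)$ quantities whose limits are precisely what \thmref{Joint_limit_of_Wnt_and_Vd} determines \emph{using} the present lemma; they are not known in advance. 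Since, as you correctly note, the no-trade width and the kernel scale are comparable, the mass $P:=\int_{\underline\eta}^{\overline\eta}\tilde K(\eta_\epsilon-\eta')\,d\eta'<1$, and the $B^{\epsilon}$ contribution $B^{\epsilon}(1-P)$ does not vanish. Moreover the lemma must hold for \emph{every} $\eta_\epsilon\in[\underline\eta,\overline\eta]$, so the homogeneous combination $c_1^{\epsilon}e^{\sqrt{2c}\eta/\sigma}+c_2^{\epsilon}e^{-\sqrt{2c}\eta/\sigma}$ would have to vanish identically, i.e.\ $c_1^{\epsilon},c_2^{\epsilon}\to0$, which requires determining $B^{\epsilon}$ and $\underline\eta,\overline\eta$ simultaneously. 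The smooth-pasting conditions of \lemref{Boundaries_of_NT_region} yield only a single integral constraint $\int_{\uy^{\epsilon}}^{\by^{\epsilon}}v_{xx}^{\epsilon}\,dy=v_x^{\epsilon}(\by^{\epsilon})-v_x^{\epsilon}(\uy^{\epsilon})$; the paper uses exactly this in \eqref{lim7}, \emph{after} the present lemma, as part of Step~2 of the main proof. Invoking it here to close your system would be circular.
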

\begin{proof}
See Appendix~\ref{lemma 5.8}.
\end{proof}

\medskip

We now proceed to prove \thmref{Joint_limit_of_Wnt_and_Vd}. The proof consists of three steps. 
Throughout this proof, $C>0$ is a generic constant independent of $(t,s, x,\epsilon)\in [0,T)\times [t,T)\times(0,1) \times (0,1)$ (also independent of $\lambda$ due to relation $\lambda=c \, \epsilon^{-\frac{2}{3}}$) that may differ line by line. 

Let $t\in [0,T)$ be fixed. Since $0<y_M<1$, \lemref{Merton_fraction_inside_NT} implies that for small enough $\epsilon>0$,
$$0 < \uy^{\epsilon}(t) < y_M < \by^{\epsilon}(t) < 1.$$
In the end, we are interested in the limiting behaviors as $\epsilon\downarrow 0$. Hence, we assume that $\epsilon>0$ is small enough and the above inequalities hold.

For $x\in [ \uy^{\epsilon}(t), \by^{\epsilon}(t)]$,
using $L^{\epsilon}(t,x)=v^{\epsilon}(t,x)$ and \eqref{merton_pde}, we rewrite \eqref{Main_PDE} as
\begin{align}
  0  & =   v_t^{\epsilon}(t, x) - v_t^{0}(t)  + Q(x) \big( v^{\epsilon}(t, x) - v^{0}(t) \big) + v^{0}(t)(Q(x) - Q(y_M)) \nonumber \\
  & \quad + x (1 - x) \left( \mu - \gamma \sigma^{2} x \right) v_{x}^{\epsilon}(t, x) + \tfrac{\sigma^{2}x^{2} (1 - x)^{2}}{2}  v_{x x}^{\epsilon}(t, x) \quad \textrm{for} \quad  x\in [ \uy^{\epsilon}(t), \by^{\epsilon}(t)]. \label{PDE_on_NT_region}
\end{align}

{\bf Step 1.}  We define $I^{\epsilon}(t, x)$ as 
\begin{align*}
  I^{\epsilon}(t, x) & := G^{\epsilon} (t, x) - \gamma (1 - y_M)^{\gamma - 1} v^{0}(t) \lambda (x - y_M)^{2},
\end{align*}
where $G^{\epsilon}$ is as in \lemref{v_xx_conv_lem}.
We multiply $\tfrac{\lambda}{1-\gamma}$ to \eqref{PDE_on_NT_region} and obtain
\begin{align}
  0 & =   \tfrac{  \lambda (v_t^{\epsilon}(t, x) - v_t^{0}(t) )}{1 - \gamma} + Q(x) \tfrac{\lambda ( v^{\epsilon}(t, x) - v^{0}(t))}{1 - \gamma}  + \tfrac{\gamma \sigma^{2}}{2} \big( \big( \tfrac{1 - x}{1 - y_M} \big)^{1 - \gamma} - 1 \big) v^{0}(t) \lambda ( x - y_M)^{2} \nonumber \\
  & \quad + x (1 - x) ( \mu - \gamma \sigma^{2} x ) \tfrac{\lambda v_{x}^{\epsilon}(t, x)}{1 - \gamma} + \tfrac{\sigma^{2}}{2} (1 - x)^{1 - \gamma} I^{\epsilon}(t, x)\quad \textrm{for} \quad  x\in [ \uy^{\epsilon}(t), \by^{\epsilon}(t)]. \label{Uniform_limit_of_U}
\end{align}
The above equality with $x=y_M$, together with \lemref{Merton_fraction_inside_NT} (i) and \lemref{really_used}, produces
\begin{align}
& \tfrac{  \lambda (v_t^{\epsilon}(t, y_M) - v_t^{0}(t) )}{1 - \gamma} + Q(y_M) \tfrac{\lambda ( v^{\epsilon}(t,y_M) - v^{0}(t))}{1 - \gamma}  + \tfrac{\sigma^{2}}{2} (1 - y_M)^{1 - \gamma} I^{\epsilon}(t, y_M)\stackrel{\epsilon\downarrow 0 } {\longrightarrow} 0,\label{lim1}
 \\
&  \sup_{x_1,x_2 \in  [ \uy^{\epsilon}(t), \by^{\epsilon}(t)]} \left|  I^{\epsilon} \left( t, x_1 \right) - I^{\epsilon} \left( t, x_2 \right) \right| \stackrel{\epsilon\downarrow 0 } {\longrightarrow} 0.\label{U_Uniform_limit}
\end{align}
Let $h, x_\epsilon, z_\epsilon, \uz^{\epsilon}(t),\bz^{\epsilon}(t)$ be as in \lemref{v_xx_conv_lem}. 
Note that $\uy^{\epsilon}(t) \leq x_\epsilon \leq \by^{\epsilon}(t)$ and $\uz^{\epsilon}(t) \leq z_\epsilon \leq \bz^{\epsilon}(t)$.
Since $z\in [ \uz^{\epsilon}(t),\bz^{\epsilon}(t)]$ is equivalent to $h(z)\in [ \uy^{\epsilon}(t),\by^{\epsilon}(t)]$, the convergence in \eqref{U_Uniform_limit} produces
\begin{align}
  &  \int_{\uz^{\epsilon}(t)}^{\bz^{\epsilon}(t)} \left\vert I^{\epsilon} ( t, h(z)) - I^{\epsilon} (t,x_\epsilon) \right\vert \tfrac{\sqrt{2 \lambda}}{2 \sigma} e^{- \frac{\sqrt{2 \lambda}}{\sigma} \left\vert z - z_\epsilon \right\vert} d z \nonumber\\
  & \leq \sup_{x_1,x_2 \in  [ \uy^{\epsilon}(t), \by^{\epsilon}(t)]} \big|  I^{\epsilon} \left( t, x_1 \right) - I^{\epsilon} \left( t, x_2 \right) \big| \cdot \int_{\uz^{\epsilon}(t)}^{\bz^{\epsilon}(t)} \tfrac{\sqrt{2 \lambda}}{2 \sigma} e^{- \frac{\sqrt{2 \lambda}}{\sigma} \left\vert z - z_\epsilon \right\vert} d z \stackrel{\epsilon\downarrow 0 } {\longrightarrow} 0, \label{I1}
\end{align}
where we also use the following observation for the convergence part above:
\begin{align}
\int_{\uz^{\epsilon}(t)}^{\bz^{\epsilon}(t)} \tfrac{\sqrt{2 \lambda}}{2 \sigma} e^{- \frac{\sqrt{2 \lambda}}{\sigma} \left\vert z -z_\epsilon \right\vert} d z = 1- \tfrac{1}{2}\left( e^{- \frac{\sqrt{2 \lambda}}{\sigma} \left(z_\epsilon-\uz^{\epsilon}(t)\right)} + e^{- \frac{\sqrt{2 \lambda}}{\sigma} \left(\bz^{\epsilon}(t)-z_\epsilon\right)}  \right) <1. \label{I2}
\end{align}
We combine \lemref{v_xx_conv_lem} and \eqref{I1} to obtain
\begin{align}
&I^{\epsilon} (t, x_\epsilon) + \gamma(1-y_M)^{\gamma-1}v^{0}(t) \lambda (x_\epsilon-y_M)^2 + y_M^2(1-y_M)^{1+\gamma} \gamma \sigma^2 v^{0}(t) \nonumber \\
&- \int_{\uz^{\epsilon}(t)}^{\bz^{\epsilon}(t)}\Big( I^{\epsilon}(t,x_\epsilon) +  \gamma (1 - y_M)^{\gamma - 1} v^{0}(t) \lambda (h(z) - y_M)^{2} \Big) \tfrac{\sqrt{2 \lambda} }{2 \sigma}\,e^{- \frac{\sqrt{2 \lambda}}{\sigma} \left\vert z - z_\epsilon \right\vert}  dz 
\stackrel{\epsilon \downarrow 0}{\longrightarrow}0. \label{I3}
\end{align}
Using the explicit form of the integral in \eqref{I2}, the above convergence can be written as
\begin{align}
&\tfrac{1}{2}\left( e^{- \frac{\sqrt{2 \lambda}}{\sigma} \left(z_\epsilon -\uz^{\epsilon}(t)\right)} + e^{- \frac{\sqrt{2 \lambda}}{\sigma} \left(\bz^{\epsilon}(t)-z_\epsilon \right)}  \right)I^{\epsilon} (t, x_\epsilon)  + \gamma(1-y_M)^{\gamma-1}v^{0}(t) J^{\epsilon}(t,x_\epsilon)
\stackrel{\epsilon \downarrow 0}{\longrightarrow}0, \label{I4}
\end{align}
where $J^{\epsilon}(t,x)$ is defined as
\begin{align}
J^{\epsilon}(t,x):= \lambda (x -y_M)^2 - \int_{\uz^{\epsilon}(t)}^{\bz^{\epsilon}(t)} \lambda (h(z) - y_M)^{2}  \tfrac{\sqrt{2 \lambda} }{2 \sigma}\,e^{- \frac{\sqrt{2 \lambda}}{\sigma} \left\vert z - h^{-1}(x) \right\vert}  dz + \sigma^2 y_M^2(1-y_M)^2. \label{J_def}
\end{align}
In \eqref{I4}, we substitute $x_\epsilon =\by^{\epsilon}(t)$ and $x_\epsilon=\uy^{\epsilon}(t)$, then subtract the resulting expressions to obtain the following equation:   
\begin{align}
&\tfrac{1}{2}\left( e^{- \frac{\sqrt{2 \lambda}}{\sigma} \left(\bz^{\epsilon}(t)-\uz^{\epsilon}(t)\right)} +1\right) \left( I^{\epsilon} (t, \by^{\epsilon}(t))- I^{\epsilon} (t, \uy^{\epsilon}(t))\right)\nonumber\\
& +\gamma(1-y_M)^{\gamma-1}v^{0}(t) \left( J^{\epsilon} (t, \by^{\epsilon}(t))- J^{\epsilon} (t, \uy^{\epsilon}(t))\right) \stackrel{\epsilon \downarrow 0}{\longrightarrow}0. \label{I5}
\end{align}
We combine \eqref{U_Uniform_limit} and \eqref{I5} and conclude that
\begin{align}
J^{\epsilon} (t, \by^{\epsilon}(t))- J^{\epsilon} (t, \uy^{\epsilon}(t)) \stackrel{\epsilon \downarrow 0}{\longrightarrow}0. \label{J0}
\end{align}
Let $z_M:=h^{-1}(y_M)$.
Observe that \eqref{pre_y1}, $\by^{\epsilon}(t)=h(\bz^{\epsilon}(t))$ and $\uy^{\epsilon}(t)=h(\uz^{\epsilon}(t))$ imply
\begin{align}
\left|\bz^{\epsilon}(t) - \uz^{\epsilon}(t) \right| \leq \tfrac{C \epsilon^{\frac{1}{3}} }{\min \{ 1, \lambda (T - t) \}}. \label{pre_z1}
\end{align}
Then, $h(z_M)=y_M$, $h'(z)=h(z)(1-h(z))$, $\lambda=c \, \epsilon^{-\frac{2}{3}}$ and \eqref{pre_z1} imply
\begin{align}
& \sup_{z_1, z_2\in [ \uz^{\epsilon}(t),\bz^{\epsilon}(t)]} \sqrt{\lambda}\,\Big| h(z_1) - h(z_2) -   y_M(1-y_M) \left( z_1 - z_2 \right) \Big| \stackrel{\epsilon \downarrow 0}{\longrightarrow} 0 , \label{J1}\\
& \sup_{z \in [ \uz^{\epsilon}(t),\bz^{\epsilon}(t)]} \lambda \, \Big| ( h(z)- y_M )^2 -y_M^2(1-y_M)^2 (z - z_M )^2 \Big| \stackrel{\epsilon \downarrow 0}{\longrightarrow} 0. \label{J2}
\end{align}
The limit in \eqref{J2} and the bound in \eqref{I2} produce
\begin{align}
& \int_{\uz^{\epsilon}(t)}^{\bz^{\epsilon}(t)} \lambda \Big((h(z) - y_M)^{2}- y_M^2(1-y_M)^2  (z- z_M )^2 \Big) \tfrac{\sqrt{2 \lambda} }{2 \sigma}\,e^{- \frac{\sqrt{2 \lambda}}{\sigma} \left\vert z - z_\epsilon \right\vert}  dz 
 \stackrel{\epsilon \downarrow 0}{\longrightarrow}0. \label{J3}
\end{align}
From \eqref{J0}, we obtain the following:
\begin{align}
0&=\lim_{\epsilon \downarrow 0} \left( J^{\epsilon} (t, \by^{\epsilon}(t))- J^{\epsilon} (t, \uy^{\epsilon}(t)) \right) \nonumber\\
&=\lim_{\epsilon \downarrow 0} \Big( \lambda(\by^{\epsilon}(t)-y_M)^2 -\lambda(\uy^{\epsilon}(t)-y_M)^2 \nonumber \\
&\qquad \qquad -   y_M^2(1-y_M)^2 \int_{\uz^{\epsilon}(t)}^{\bz^{\epsilon}(t)} \lambda (z- z_M )^2  \tfrac{\sqrt{2 \lambda} }{2 \sigma} \left( e^{- \frac{\sqrt{2 \lambda}}{\sigma} \left\vert z - \bz^{\epsilon}(t) \right\vert} - e^{- \frac{\sqrt{2 \lambda}}{\sigma} \left\vert z - \uz^{\epsilon}(t) \right\vert} \right) dz \Big)\nonumber\\
&=\lim_{\epsilon \downarrow 0}  \Big(
\lambda \big(\by^{\epsilon}(t)-\uy^{\epsilon}(t)\big) \big(\by^{\epsilon}(t)+\uy^{\epsilon}(t)- 2y_M \big) - \lambda y_M^2 (1-y_M)^2 \big(\bz^{\epsilon}(t)-\uz^{\epsilon}(t)\big) \big(\bz^{\epsilon}(t)+\uz^{\epsilon}(t)- 2z_M \big) \nonumber\\
&\qquad\qquad+ y_M^2(1-y_M)^2 \sqrt{\lambda}\big( \bz^{\epsilon}(t)+\uz^{\epsilon}(t)- 2z_M \big) \Big( 1 - e^{- \frac{\sqrt{2 \lambda}}{\sigma} \left( \bz^{\epsilon}(t) - \uz^{\epsilon}(t) \right)} \Big)\Big(\tfrac{\sigma}{\sqrt{2}} + \tfrac{\sqrt{\lambda}(\bz^{\epsilon}(t) - \uz^{\epsilon}(t) )}{2}  \Big) \Big) \nonumber\\
&=\lim_{\epsilon \downarrow 0}  
\Big(1 - e^{- \frac{\sqrt{2 \lambda}}{\sigma} \left( \bz^{\epsilon}(t) - \uz^{\epsilon}(t) \right)} \Big) \sqrt{\lambda} \big(\by^{\epsilon}(t)+\uy^{\epsilon}(t)- 2y_M \big)
\Big( \tfrac{\sqrt{\lambda} \left(\by^{\epsilon}(t)-\uy^{\epsilon}(t)\right)}{2} + \tfrac{\sigma y_M (1-y_M)}{\sqrt{2}} \Big), \label{JJ}
\end{align}
where the second equality is due to \eqref{J3}, the third equality is due to integration parts, and the last equality is due to \eqref{J1}. \lemref{Merton_fraction_inside_NT} (ii), \eqref{J1} and $\lambda=c \, \epsilon^{-\frac{2}{3}}$ imply $\li_{\epsilon \downarrow 0}\sqrt{\lambda} \left(\bz^{\epsilon}(t)-\uz^{\epsilon}(t)\right)>0$. 
Therefore, \eqref{JJ} implies that $\sqrt{\lambda} \big(\by^{\epsilon}(t)+\uy^{\epsilon}(t)- 2y_M \big)\to 0$ as $\epsilon \downarrow 0$. We rephrase this observation as
\begin{align}
\lim_{\epsilon \downarrow 0} \left( \frac{\by^{\epsilon}(t)-y_M}{\epsilon^{\frac{1}{3}}} -\frac{y_M-\uy^{\epsilon}(t)}{\epsilon^{\frac{1}{3}}}  \right)=0. \label{J4}
\end{align}
In other words, asymptotically, the no-trade region is symmetric around the Merton line.

We substitute $x=y_M$ in \eqref{J_def} and $x_\epsilon =y_M$ in \eqref{J3} and use integration by parts to obtain 
\begin{align}
0&=\lim_{\epsilon \downarrow 0} \bigg(J^{\epsilon}(t,y_M)+ y_M^2(1-y_M)^2\int_{\uz^{\epsilon}(t)}^{\bz^{\epsilon}(t)} \lambda (z-z_M)^{2}  \tfrac{\sqrt{2 \lambda} }{2 \sigma}\,e^{- \frac{\sqrt{2 \lambda}}{\sigma} \left\vert z - z_M \right\vert}  dz - \sigma^2 y_M^2(1-y_M)^2 \bigg) \nonumber\\
&=\lim_{\epsilon \downarrow 0} \bigg( J^{\epsilon} (t, y_M) -  y_M^2 (1-y_M)^2 \Big(
 e^{- \frac{\sqrt{2 \lambda}}{\sigma} \left( z_M - \uz^{\epsilon}(t) \right)} \Big( \tfrac{\lambda(z_M -   \uz^{\epsilon}(t))^2}{2} +\tfrac{\sigma\sqrt{\lambda}(z_M -   \uz^{\epsilon}(t))}{\sqrt{2}} + \tfrac{\sigma^2}{2}
\Big) \nonumber\\
&\qquad\qquad +e^{- \frac{\sqrt{2 \lambda}}{\sigma} \left( \bz^{\epsilon}(t) -z_M \right)} \Big( \tfrac{\lambda(\bz^{\epsilon}(t) -z_M)^2}{2} +\tfrac{\sigma\sqrt{\lambda}(\bz^{\epsilon}(t) -z_M)}{\sqrt{2}} + \tfrac{\sigma^2}{2}
\Big)\Big)\bigg) \nonumber\\
&=\lim_{\epsilon \downarrow 0} \bigg( J^{\epsilon} (t, y_M) - \Big(e^{- \frac{\sqrt{2 \lambda}}{\sigma} \left( z_M - \uz^{\epsilon}(t) \right)} +  e^{- \frac{\sqrt{2 \lambda}}{\sigma} \left( \bz^{\epsilon}(t) -z_M \right)} \Big) \nonumber\\
&\qquad\qquad \qquad
\cdot \Big(  \tfrac{\lambda( \by^{\epsilon}(t) - \uy^{\epsilon}(t))^2}{8} +\tfrac{\sigma y_M(1-y_M)\sqrt{\lambda}( \by^{\epsilon}(t) - \uy^{\epsilon}(t))}{2\sqrt{2}} + \tfrac{\sigma^2y_M^2 (1-y_M)^2}{2} \Big) \bigg), \label{J5}
\end{align}
where we apply \eqref{J1}, \eqref{J2} and \eqref{J4} to obtain the last equality.
We substitute $x_\epsilon =y_M$ in \eqref{I4} and combine with \eqref{J5} to obtain
\begin{align*}
I^{\epsilon} (t, y_M) +\gamma(1-y_M)^{\gamma-1}v^{0}(t) \Big( \Big( 
\tfrac{\sqrt{\lambda}( \by^{\epsilon}(t) - \uy^{\epsilon}(t))}{2} + \tfrac{\sigma y_M(1-y_M)}{\sqrt{2}}\Big)^2
+\tfrac{ \sigma^2 y_M^2 (1-y_M)^2}{2} \Big)  \stackrel{\epsilon \downarrow 0}{\longrightarrow}0.
\end{align*}
By \eqref{lim1}, the above limit and $\lambda=c \, \epsilon^{-\frac{2}{3}}$, we conclude
\begin{align}
0&=\lim_{\epsilon \downarrow 0} \bigg( \tfrac{  c (v_t^{\epsilon}(t, y_M) - v_t^{0}(t) )}{(1 - \gamma)\epsilon^{\frac{2}{3}}} +  \tfrac{c\, Q(y_M)( v^{\epsilon}(t,y_M) - v^{0}(t))}{(1 - \gamma)\epsilon^{\frac{2}{3}}} \nonumber\\
&\qquad\qquad - \tfrac{\gamma \sigma^2 v^{0}(t)}{2}\Big( \Big( 
\tfrac{\sqrt{c}( \by^{\epsilon}(t) - \uy^{\epsilon}(t))}{2\epsilon^{\frac{1}{3}}} + \tfrac{\sigma y_M(1-y_M)}{\sqrt{2}}\Big)^2
+\tfrac{ \sigma^2 y_M^2 (1-y_M)^2}{2} \Big)  \bigg).
 \label{lim2}
\end{align}

{\bf Step 2.} We inspect the integrals of the terms in \eqref{PDE_on_NT_region} with respect to $x$, from $\uy^{\epsilon}(t)$ to $\by^{\epsilon}(t)$. By the mean value theorem, there exist $x_\epsilon^*, x_\epsilon^{**} \in [\uy^{\epsilon}(t),\by^{\epsilon}(t)]$ such that
\begin{align}
&\int_{\uy^{\epsilon}(t)}^{\by^{\epsilon}(t)} \tfrac{ v_t^{\epsilon}(t, x) - v_t^{0}(t)}{(1-\gamma)\epsilon} dx - \tfrac{v_t^{\epsilon}(t, y_M) - v_t^{0}(t)}{(1-\gamma)\epsilon^{\frac{2}{3}} }\cdot  \tfrac{\by^{\epsilon}(t) - \uy^{\epsilon}(t)}{\epsilon^{\frac{1}{3}} }  =\tfrac{v_t^{\epsilon}(t, x_\epsilon^*) - v_t^{\epsilon}(t, y_M)}{(1-\gamma)\epsilon^{\frac{2}{3}} }  \cdot  \tfrac{\by^{\epsilon}(t) - \uy^{\epsilon}(t)}{\epsilon^{\frac{1}{3}} } \stackrel{\epsilon \downarrow 0}{\longrightarrow}0, \label{lim3}\\
&\int_{\uy^{\epsilon}(t)}^{\by^{\epsilon}(t)}  \tfrac{Q(x) \left( v^{\epsilon}(t, x) - v^{0}(t)\right)}{(1-\gamma)\epsilon}  dx  -  \tfrac{Q(y_M)\left(v^{\epsilon}(t, y_M) - v^{0}(t) \right)}{(1-\gamma)\epsilon^{\frac{2}{3}} } \cdot  \tfrac{\by^{\epsilon}(t) - \uy^{\epsilon}(t)}{\epsilon^{\frac{1}{3}} }  \nonumber\\
&\qquad\qquad\qquad\qquad=\left(\tfrac{Q(x_\epsilon^{**})\left(v^{\epsilon}(t, x_\epsilon^{**}) - v^{0}(t) \right)}{(1-\gamma)\epsilon^{\frac{2}{3}} } - \tfrac{Q(y_M)\left(v^{\epsilon}(t, y_M) - v^{0}(t) \right)}{(1-\gamma)\epsilon^{\frac{2}{3}} } \right)  \tfrac{\by^{\epsilon}(t) - \uy^{\epsilon}(t)}{\epsilon^{\frac{1}{3}} }\stackrel{\epsilon \downarrow 0}{\longrightarrow}0, \label{lim4}
\end{align}
where the convergences are due to \lemref{Merton_fraction_inside_NT} (i) and \lemref{really_used}. By \eqref{J4}, we have
\begin{align}
&\int_{\uy^{\epsilon}(t)}^{\by^{\epsilon}(t)} \tfrac{Q(x) - Q(y_M)}{(1-\gamma)\epsilon} dx + \tfrac{\gamma \sigma^2}{24}\Big(\tfrac{\by^{\epsilon}(t)-\uy^{\epsilon}(t)}{\epsilon^{\frac{1}{3}} }\Big)^3 \nonumber\\
& = - \tfrac{\gamma \sigma^2}{6} \left( \Big(\tfrac{\by^{\epsilon}(t)-y_M}{\epsilon^{\frac{1}{3}} }\Big)^3 -  \Big(\tfrac{\uy^{\epsilon}(t)-y_M}{\epsilon^{\frac{1}{3}} }\Big)^3 - \tfrac{1}{4}\Big(\tfrac{\by^{\epsilon}(t)-\uy^{\epsilon}(t)}{\epsilon^{\frac{1}{3}} }\Big)^3 \right)
\stackrel{\epsilon \downarrow 0}{\longrightarrow} 0. \label{lim5}
\end{align}
By \eqref{vx_epsilon} and \lemref{Merton_fraction_inside_NT} (i), we obtain
\begin{align}
&\Big|\int_{\uy^{\epsilon}(t)}^{\by^{\epsilon}(t)}
\tfrac{x (1 - x) \left( \mu - \gamma \sigma^{2} x \right) v_{x}^{\epsilon}(t, x)}{(1-\gamma)\epsilon} dx \Big| \leq C \left| \by^{\epsilon}(t)-\uy^{\epsilon}(t) \right| \stackrel{\epsilon \downarrow 0}{\longrightarrow} 0. \label{lim6}
\end{align}
By integration by parts and \lemref{Boundaries_of_NT_region},
\begin{align}
\int_{\uy^{\epsilon}(t)}^{\by^{\epsilon}(t)}
\tfrac{\sigma^2 x^2 (1 - x)^2 v_{x x}^{\epsilon, \lambda}(t, x)}{2(1-\gamma)\epsilon}   dx &=
- \tfrac{\sigma^2\by^{\epsilon}(t)^2(1-\by^{\epsilon}(t))^2  v^{\epsilon}(t, \by^{\epsilon}(t))}{2(1-\epsilon \by^{\epsilon}(t))} - \tfrac{\sigma^2\uy^{\epsilon}(t)^2(1-\uy^{\epsilon}(t))^2  v^{\epsilon}(t, \uy^{\epsilon}(t))}{2(1+\epsilon \uy^{\epsilon}(t))}\nonumber\\
&\quad -\int_{\uy^{\epsilon}(t)}^{\by^{\epsilon}(t)}
 \tfrac{\sigma^{2}x (1 - x)(1-2x)  v_{x}^{\epsilon}(t, x)}{(1-\gamma)\epsilon} dx \,\,\stackrel{\epsilon \downarrow 0}{\longrightarrow}\,\, - \sigma^2 y_M^2(1-y_M)^2 v^{0}(t), \label{lim7} 
\end{align}
where the convergence is due to \lemref{Merton_fraction_inside_NT} (i) and \lemref{really_used} (i). 

Now we integrate the right-hand side of \eqref{PDE_on_NT_region} with respect to $x$ from $\uy^{\epsilon}(t)$ to $\by^{\epsilon}(t)$ and multiply it by $\tfrac{c}{(1-\gamma)\epsilon}$, then apply \eqref{lim3}-\eqref{lim7} to obtain the following:
\begin{align}
&0=\lim_{\epsilon \downarrow 0} \bigg( \Big(\tfrac{  c(v_t^{\epsilon}(t, y_M) - v_t^{0}(t) )}{(1 - \gamma)\epsilon^{\frac{2}{3}}} +  \tfrac{c\, Q(y_M)( v^{\epsilon}(t,y_M) - v^{0}(t))}{(1 - \gamma)\epsilon^{\frac{2}{3}}}\Big) \tfrac{ \by^{\epsilon}(t) - \uy^{\epsilon}(t)}{\epsilon^{\frac{1}{3}}}\nonumber \\
&\qquad\qquad\qquad - \tfrac{c\gamma \sigma^2 v^{0}(t)}{24}\Big( \tfrac{ \by^{\epsilon}(t) - \uy^{\epsilon}(t)}{\epsilon^{\frac{1}{3}}}\Big)^3 -c \sigma^2 y_M^2(1-y_M)^2 v^{0}(t)  \bigg).\label{lim8}
\end{align}

{\bf Step 3.} 
We multiply $\tfrac{ \by^{\epsilon}(t) - \uy^{\epsilon}(t)}{\epsilon^{\frac{1}{3}}}$ to \eqref{lim2} and subtract \eqref{lim8} to obtain
\begin{align*}
0=\lim_{\epsilon \downarrow 0} \Big( -\tfrac{c \gamma}{12} \Big(\tfrac{ \by^{\epsilon}(t) - \uy^{\epsilon}(t)}{\epsilon^{\frac{1}{3}}}+ \tfrac{\sqrt{2} \sigma y_M(1-y_M)}{\sqrt{c}} \Big)^3 + c y_M^2(1-y_M)^2 + \tfrac{\gamma \sigma^3 y_M^3(1-y_M)^3}{3\sqrt{2 c }}\Big). 
\end{align*} 
The above equation implies that
\begin{align}
\lim_{\epsilon\downarrow 0} \tfrac{ \by^{\epsilon}(t) - \uy^{\epsilon}(t)}{\epsilon^{\frac{1}{3}}} = \tfrac{\sqrt{2} \sigma y_M(1-y_M)}{\sqrt{c}} \Big( \Big( \tfrac{3\sqrt{2} \,c^{\frac{3}{2}}}{\gamma \sigma^3 y_M(1-y_M)} +1\Big)^{\frac{1}{3}}-1 \Big)=2a_1(c). \label{cubic_solution}
\end{align}
We conclude the desired asymptotic result for $\by^{\epsilon}(t)$ and $\uy^{\epsilon}(t)$ by the above equation and \eqref{J4}.  

It remains to prove the asymptotic result for $v^{\epsilon}(t,y_M)$. Using \eqref{merton_pde}, we rewrite \eqref{lim2} as
\begin{align}
 \tfrac{\partial}{\partial t} \Big(\tfrac{ e^{Q(y_M)t} ( v^{\epsilon}(t, y_M) - v^{0}(t) ) }{(1 - \gamma)\epsilon^{\frac{2}{3}}} \Big)  - \tfrac{\gamma \sigma^2 e^{Q(y_M)T}}{2}\Big( \Big( 
\tfrac{ \by^{\epsilon}(t) - \uy^{\epsilon}(t)}{2\epsilon^{\frac{1}{3}}} + \tfrac{\sigma y_M(1-y_M)}{\sqrt{2\,c }}\Big)^2
+\tfrac{ \sigma^2y_M^2 (1-y_M)^2}{2\, c} \Big)\stackrel{\epsilon \downarrow 0}{\longrightarrow} 0.  \label{lim9}
\end{align}
Then, \eqref{cubic_solution} and \eqref{lim9} imply 
\begin{align*}
\lim_{\epsilon \downarrow 0}  \tfrac{\partial}{\partial t} \Big(\tfrac{ e^{Q(y_M)t}(v^{\epsilon}(t, y_M) - v^{0}(t) ) }{(1 - \gamma)\epsilon^{\frac{2}{3}}} \Big) 
=\tfrac{e^{Q(y_M)T} a_2(c)}{1-\gamma } .
\end{align*}
The bounds in \lemref{really_used} (i) enable us to use the dominated convergence theorem as below:
\begin{align*}
\tfrac{e^{Q(y_M)T} a_2(c)}{1-\gamma } (T-t)
&=\int_t^T \lim_{\epsilon \downarrow 0}  \tfrac{\partial}{\partial s} \Big( \tfrac{  e^{Q(y_M)s} (v^{\epsilon}(s, y_M) - v^{0}(s)) }{(1 - \gamma)\epsilon^{\frac{2}{3}}} \Big) ds =\lim_{\epsilon \downarrow 0}  \int_t^T \tfrac{\partial}{\partial s} \Big( \tfrac{ e^{Q(y_M)s}  (v^{\epsilon}(s, y_M) - v^{0}(s)) }{(1 - \gamma)\epsilon^{\frac{2}{3}}} \Big) ds\\
&=- e^{Q(y_M)t}\cdot\lim_{\epsilon \downarrow 0}  \tfrac{   v^{\epsilon}(t, y_M) - v^{0}(t) }{(1 - \gamma)\epsilon^{\frac{2}{3}}}, 
\end{align*}
where the last equality is due to $v^{\epsilon}(T, y_M) = v^{0}(T) = 1$. We conclude the desired asymptotic result for $v^{\epsilon}(t,y_M)$.

\section{conclusion}

This paper investigates the optimal investment problem in a market with two types of illiquidity: transaction costs and search frictions. Building on the framework established by \cite{GC23}, we extend the analysis to a power-utility maximization problem.
Our main contribution is the development of a novel asymptotic framework applicable when both transaction costs and search frictions are small ($\epsilon \ll 1$ and $\tfrac{1}{\lambda} \ll 1$). We derive explicit asymptotics for the no-trade region and the value function along the parametric curve $\lambda = c \, \epsilon^{-\frac{2}{3}}$ for $c>0$. This approach unifies the existing asymptotic results for models with only transaction costs or only search frictions, providing a coherent methodology for handling both types of illiquidity simultaneously. Additionally, our framework offers explicit expressions for the correction terms, facilitating practical computation of the optimal trading strategy and value. Our asymptotic analysis provides insights into the limiting behaviors not present in models with only one source of illiquidity.
As a future research, we plan to extend our results to a multi-asset model.

\bibliographystyle{siam}

\appendix

\medskip

\section{Proof of \lemref{v_classical_solution_and_properties}}\label{appendix_A}

In the proof, we assume $0<\gamma<1$ (the case of $\gamma>1$ can be treated similarly). 
Let $a:=\max_{x\in [0,1]} Q(x)$ where $Q$ is in \eqref{Q_def}, $h(z) := \frac{e^{z}}{1 + e^{z}}$, and $C_b([0,T]\times \R)$ be the set of all bounded  (with the uniform norm) continuous functions. For $f\in C_b([0,T]\times \R)$, we define $\phi(f)$ as 
\begin{align}
  \phi(f)(t, z) & := \EE \left[ e^{a T } e^{ \int_t^T \left( Q(h(\Upsilon_u^{(t,z)}))-\lambda-a \right) du} + \lambda \int_{t}^{T} e^{ \int_t^s \left( Q(h(\Upsilon_u^{(t,z)}))-\lambda-a \right) du}  K_f ( s, \Upsilon_{s}^{(t, z)} ) d s \right], \label{Stochastic_representation_v_contraction}
\end{align}
where $K_f$ is
\begin{align}
  K_f(t, z) & := \sup_{\zeta \in \R} \Big( f(t, \zeta)  g(z,\zeta) \Big) \quad \textrm{with} \quad g(z,\zeta):=\big( \tfrac{1 + \be h(z)}{1 + \be h(\zeta)} \big)^{1 - \gamma} 1_{\left\{ \zeta \geq z \right\}} + \big( \tfrac{1 - \ue h(z)}{1 - \ue h(\zeta)} \big)^{1 - \gamma} 1_{\left\{ \zeta<z \right\}}  \label{Define_Kf}
\end{align}
and $\Upsilon_{s}^{(t, z)}$ for $(s,z)\in [t,T]\times \R$ is the solution of the following SDE: 
\begin{align}
d\Upsilon_s^{(t,z)}= \left( \mu-\tfrac{\sigma^2}{2} + (1-\gamma)\sigma^2 h(\Upsilon_s^{(t,z)})\right) ds + \sigma dB_s, \quad \Upsilon_t^{(t,z)}=z. \label{Upsilon_SDE}
\end{align}
Since $K_f$ and $Q \circ h$ are bounded and continuous, one can check $\phi(f)\in C_b([0,T]\times \R)$ by the dominated convergence theorem. 
From the definition of $a$, we observe that for $z\in \R$, 
\begin{align}
-\infty< \min_{x\in [0,1]} Q(x)  - \lambda - a  \leq Q(h(z))-\lambda- a \leq - \lambda. \label{exp_bounds}
\end{align}

We check that $\phi$ is a contraction map: for $f_1, f_2 \in C_b([0,T]\times \R)$, 
\begin{align*}
\lVert \phi(f_1)- \phi(f_2) \rVert_\infty &\leq \lambda \int_t^T e^{-\lambda(s-t)} 
\cdot \sup_{\zeta\in \R} \big| f_1(s,\zeta)-f_2(s,\zeta) \big|  ds \leq (1-e^{-\lambda T} ) \lVert f_1-f_2\rVert_\infty,
\end{align*}
where the first inequality is due to \eqref{exp_bounds} and $\lVert g\lVert_\infty \leq 1$.
Therefore, by the Banach fixed point theorem, there exists a unique function $\hat f \in C_b([0,T]\times \R)$ such that
$\phi(\hat f) = \hat f$. 

\medskip

\begin{lemma}\label{K_lemma}
$K_{\hat f} \in C^{\frac{1}{2}, 1} ([0,T]\times \R)$.
\end{lemma}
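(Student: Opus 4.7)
The plan is to establish the two regularities separately---Lipschitz continuity in $z$ and $\frac{1}{2}$-H\"older continuity in $t$, both uniform in the other variable. The function $K_{\hat f}$ inherits these regularities from the bilinear structure $K_f(t,z) = \sup_\zeta (f(t,\zeta) g(z,\zeta))$ in \eqref{Define_Kf}, where $g$ is independent of $t$ and regular in $z$, while the $t$-dependence is absorbed into $\hat f$.

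First, I would establish Lipschitz continuity of $z \mapsto K_{\hat f}(t,z)$, uniform in $t$. Direct computation shows that $g(\cdot,\zeta)$ is continuously differentiable on each of $\{z<\zeta\}$ and $\{z>\zeta\}$ and has bounded one-sided derivatives at $z=\zeta$: since $h'(z) = h(z)(1-h(z)) \in [0,1/4]$ and $1+\be h(z), 1-\ue h(z)$ stay in compact intervals bounded away from zero, one obtains Lipschitz continuity of $g(\cdot,\zeta)$ in $z$ with a constant $L_g$ uniform in $\zeta$. Combining the elementary inequality $|\sup_\zeta a(\zeta) - \sup_\zeta b(\zeta)| \le \sup_\zeta |a(\zeta) - b(\zeta)|$ with the bound $\|\hat f\|_\infty < \infty$ (from the Banach fixed-point step preceding the lemma) yields $|K_{\hat f}(t,z_1) - K_{\hat f}(t,z_2)| \le \|\hat f\|_\infty L_g |z_1 - z_2|$.

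Next, I would show $\hat f$ itself is Lipschitz in $z$ uniformly in $t$, using $\hat f = \phi(\hat f)$ and the representation \eqref{Stochastic_representation_v_contraction}. A standard Gronwall argument applied to \eqref{Upsilon_SDE}, whose drift is Lipschitz via $h$, yields $|\Upsilon^{(t,z_1)}_s - \Upsilon^{(t,z_2)}_s| \le e^{C(s-t)}|z_1 - z_2|$. Combined with the Lipschitz continuity of $Q\circ h$, the uniform bound $M^{t,z}_s := e^{\int_t^s(Q(h(\Upsilon^{(t,z)}_u)) - \lambda - a)du} \in (0,1]$ from \eqref{exp_bounds}, and the Lipschitz regularity of $K_{\hat f}$ in $z$ just obtained, this gives $|\hat f(t,z_1) - \hat f(t,z_2)| \le C' |z_1 - z_2|$ with $C'$ independent of $t$.

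For the $\frac{1}{2}$-H\"older regularity in $t$, which is the most delicate step, I would exploit the Markov property of $\Upsilon$. For $0 \le t_1 < t_2 \le T$, setting $\tilde z := \Upsilon^{(t_1,z)}_{t_2}$ and using the tower property together with $\hat f(t_2,\tilde z) = \phi(\hat f)(t_2,\tilde z)$, one derives
\[
\hat f(t_1,z) = \EE\bigl[M^{t_1,z}_{t_2}\, \hat f(t_2,\tilde z)\bigr] + \lambda \EE\Bigl[\int_{t_1}^{t_2} M^{t_1,z}_s K_{\hat f}(s,\Upsilon^{(t_1,z)}_s)\, ds\Bigr].
\]
Subtracting $\hat f(t_2,z)$ and decomposing into three pieces: the integral over $[t_1,t_2]$ is $O(t_2-t_1)$ by boundedness of $M$ and $\|K_{\hat f}\|_\infty$; the term $\hat f(t_2,z)(\EE[M^{t_1,z}_{t_2}] - 1)$ is $O(t_2-t_1)$ since $Q\circ h - \lambda - a$ is bounded; and the principal term $\EE[M^{t_1,z}_{t_2}(\hat f(t_2,\tilde z) - \hat f(t_2,z))]$ is bounded by $C'\EE|\tilde z - z| \le C''\sqrt{t_2-t_1}$, using the Lipschitz regularity of $\hat f(t_2,\cdot)$ from the previous step together with the standard SDE estimate $\EE|\Upsilon^{(t_1,z)}_{t_2} - z| \le C''\sqrt{t_2-t_1}$ (bounded drift plus Brownian increment). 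Thus $\hat f$ is $\frac{1}{2}$-H\"older in $t$, and the inequality $|K_{\hat f}(t_1,z) - K_{\hat f}(t_2,z)| \le \|g\|_\infty \sup_\zeta |\hat f(t_1,\zeta) - \hat f(t_2,\zeta)|$ transfers this regularity to $K_{\hat f}$. The main obstacle is the apparent circularity: both regularities for $K_{\hat f}$ pass through regularities for $\hat f$, which in turn require regularities for $K_{\hat f}$. The resolution is the ordering---spatial Lipschitz continuity of $K_{\hat f}$ requires only $\|\hat f\|_\infty < \infty$, which unlocks spatial Lipschitz continuity of $\hat f$, which in turn unlocks temporal $\frac{1}{2}$-H\"older continuity of $\hat f$ and finally of $K_{\hat f}$.
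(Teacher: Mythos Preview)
Your proof is correct and shares the paper's overall architecture: obtain spatial Lipschitz continuity of $K_{\hat f}$ first from $\|\hat f\|_\infty$ and the uniform Lipschitz regularity of $g(\cdot,\zeta)$, then bootstrap to $\tfrac{1}{2}$-H\"older regularity of $\hat f$ in $t$, and finally transfer that back to $K_{\hat f}$ via $|K_{\hat f}(t_1,z)-K_{\hat f}(t_2,z)|\le \|g\|_\infty \sup_\zeta|\hat f(t_1,\zeta)-\hat f(t_2,\zeta)|$.

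The technical route to the temporal estimate differs. The paper compares $\phi(\hat f)(t+\delta,z)$ and $\phi(\hat f)(t,z)$ directly by a pathwise flow-in-initial-\emph{time} estimate, $|\Upsilon_s^{(t+\delta,z)}-\Upsilon_s^{(t,z)}|\le C(|B_{t+\delta}-B_t|+\delta)$, obtained by Gronwall from \eqref{Upsilon_SDE}; this feeds into $\EE\big[|K_{\hat f}(s,\Upsilon_s^{(t+\delta,z)})-K_{\hat f}(s,\Upsilon_s^{(t,z)})|\big]\le C\sqrt{\delta}$ using only the spatial Lipschitz bound on $K_{\hat f}$ already in hand. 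In particular the paper never needs to establish spatial Lipschitz continuity of $\hat f$ itself. Your route instead uses the semigroup identity $\hat f(t_1,z)=\EE[M^{t_1,z}_{t_2}\hat f(t_2,\Upsilon^{(t_1,z)}_{t_2})]+\lambda\EE[\int_{t_1}^{t_2}M^{t_1,z}_s K_{\hat f}(s,\Upsilon^{(t_1,z)}_s)\,ds]$ and reduces the temporal increment of $\hat f$ to its spatial Lipschitz modulus, which you therefore have to prove as an extra intermediate step (via a flow-in-initial-\emph{point} estimate). Your approach is slightly longer but more modular and avoids comparing solutions of \eqref{Upsilon_SDE} started at different times; the paper's is shorter and more direct.
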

\begin{proof} Throughout the proof of this claim, $C>0$ is a generic constant independent of $(t,s,z, \delta)\in [0,T]\times [t,T] \times \R \times [0,1]$ and paths. For $\delta\in [0,1]$ and $t\in [0,T]$,
\begin{align}
\left|  K_{\hat f} (t,z+\delta) - K_{\hat f} (t,z)    \right|  &\leq   \lVert \hat f   \rVert_\infty \cdot \sup_{\zeta\in \R} \left| g(z+\delta,\zeta) - g(z,\zeta)\right| \leq C \, \delta, \label{K_space}
\end{align}
where the second inequality is due to $\lVert \tfrac{\partial g}{\partial z}  \rVert_\infty <\infty$. By SDE \eqref{Upsilon_SDE}, for $s\in [t,T]$, we have
\begin{align*}
 \left| \Upsilon_s^{(t+\delta,z)}- \Upsilon_s^{(t,z) }  \right| &\leq \left| \sigma(B_t - B_{t+\delta}) 
 - \int_t^{t+\delta}  \left( \mu-\tfrac{\sigma^2}{2} + (1-\gamma)\sigma^2 h(\Upsilon_u^{(t,z)})\right) du \right| \\
 &\qquad\qquad + \left| \int_{t+\delta}^s (1-\gamma)\sigma^2 \left( h(\Upsilon_u^{(t+\delta,z)})-h(\Upsilon_u^{(t,z)}) \right) du \right| \\
 &\leq  C \left( \left| B_{t+\delta} - B_t \right| + \delta + \int_{t+\delta}^s  \left| \Upsilon_u^{(t+\delta,z)}- \Upsilon_u^{(t,z) }  \right| du  \right). 
\end{align*}
We apply Gronwall's inequality (see \lemref{Gronwall_lemma}) to above and obtain  
\begin{align}
\big| \Upsilon_s^{(t+\delta,z)}- \Upsilon_s^{(t,z) }  \big| \leq C \left( \left| B_{t+\delta} - B_t \right| + \delta \right).  \label{Upsilon_estimate}
\end{align}
Using \eqref{exp_bounds} and \eqref{Upsilon_estimate}, we produce the following estimate: for $s\in [t,T]$ and $\delta\in [0,1]$,
\begin{align}
&\EE \Big[ \Big|  e^{ \int_t^s \left( Q(h(\Upsilon_u^{(t,z)}))-\lambda-a \right) du} - e^{ \int_{t+\delta}^s \left( Q(h(\Upsilon_u^{(t+\delta,z)}))-\lambda-a \right) du}  \Big| \Big] \nonumber\\
&\leq C \, \EE \Big[\Big| \int_t^s \left( Q(h(\Upsilon_u^{(t,z)}))-\lambda-a \right) du -  \int_{t+\delta}^s \left( Q(h(\Upsilon_u^{(t+\delta,z)}))-\lambda-a \right) du  \Big| \Big] \leq C\sqrt{\delta} ,  \label{delta_t}
\end{align}
where we also used $\lVert \tfrac{d}{d z}Q(h(z))  \rVert_\infty <\infty$ for the last inequality. Similarly, \eqref{K_space} and \eqref{Upsilon_estimate} produce
\begin{align}
\EE \Big[\Big| K_{\hat f} ( s, \Upsilon_{s}^{(t+\delta, z)} )- K_{\hat f} ( s, \Upsilon_{s}^{(t, z)} ) \Big| \Big]
&\leq  C \, \EE \Big[ \Big|  \Upsilon_s^{(t+\delta,z)}- \Upsilon_s^{(t,z) }  \Big| \Big] \leq C\sqrt{\delta}. \label{K_f_estimate}
\end{align} 
For $\delta\in [0,1]$ and $(t,z)\in [0,T]\times \R$, \eqref{exp_bounds}, \eqref{delta_t} and \eqref{K_f_estimate} produce
\begin{align}
\left| \phi(\hat f)(t+\delta,z)-\phi(\hat f) (t,z)  \right| \leq C \sqrt{\delta}. \label{est1}
\end{align}
The above inequality implies that
\begin{align*}
\left| K_{\hat f}(t+\delta,z)- K_{\hat f}(t,z)\right| \leq \sup_{\zeta \in \R} \left| \hat f(t+\delta,\zeta)-\hat f (t,\zeta)  \right| = \sup_{\zeta \in \R} \left| \phi(\hat f)(t+\delta,\zeta)-\phi(\hat f) (t,\zeta) \right| \leq C \sqrt{\delta}. 
\end{align*}
We conclude $K_{\hat f} \in C^{\frac{1}{2}, 1} ([0,T]\times \R)$ by the above inequality and \eqref{K_space}. 
\end{proof}

Let $\alpha\in (0,1)$ be fixed. \lemref{K_lemma} and Theorem 9.2.3 in \cite{krylov1996lectures} guarantee that there exists a unique solution $\tilde f\in C^{1 + \frac{\alpha}{2}, 2 + \alpha}([0, T] \times \mathbb{R})$ of the following PDE:
\begin{align} \label{Region_I}
  \begin{cases}
  0 = f_{t}(t, z) +\left( \mu-\tfrac{\sigma^2}{2} + (1-\gamma)\sigma^2 h(z)\right) f_z(t, z) + \tfrac{\sigma^{2}}{2}  f_{z z}(t, z)  \\
\qquad  \qquad + \big(Q(h(z)) - \lambda-a \big) f(t, z) + \lambda  K_{\hat{f}}(t, z) \\
  f(T, z)=e^{aT}
  \end{cases}
\end{align}
By the Feynman-Kac formula (i.e., see Theorem 5.7.6 in \cite{karatzas2014brownian}), we have $\tilde f=\phi(\hat{f})=\hat f$. 

Observe that $h$ can be continuously extended to $z=\pm\infty$ as $h(\infty):=1$ and $h(-\infty):=0$. Then,
\begin{align*}
\lim_{z\to  \infty} e^{-at}  \tilde f(t,z) &= \lim_{z\to  \infty} e^{-at} \phi(\tilde f)(t,z) \\
&= 
e^{ (Q(1)-\lambda)(T-t)}  + \lambda \int_t^T e^{ (Q(1)-\lambda) (s-t)} \sup_{\zeta\in \R} \left(e^{-as} \tilde f(s,\zeta) \left( \tfrac{1-\ue}{1-\ue h(\zeta)} \right)^{1-\gamma} \right) ds,
\end{align*}
where the last equality is due to the dominated convergence theorem. Therefore, we continuously extend $\tilde f$ to $z=+ \infty$ and the above equality produces 
\begin{align}
0=\tfrac{\partial}{\partial t} \left( e^{-at} \tilde f(t, \infty)\right) + (Q(1)-\lambda) \left( e^{-at} \tilde f(t, \infty)\right)  + \lambda \sup_{\zeta\in\R} \left( e^{-at} \tilde f(t, \zeta) \left( \tfrac{1-\ue}{1-\ue h(\zeta)} \right)^{1-\gamma}  \right). \label{ODE1}
\end{align} 
We can treat $\lim_{z\to  -\infty} e^{-at}  \tilde f(t,z)$ by the same way and obtain
\begin{align}
0=\tfrac{\partial}{\partial t} \left( e^{-at} \tilde f(t, -\infty)\right) + (Q(0)-\lambda) \left( e^{-at} \tilde f(t, -\infty)\right)  + \lambda \sup_{\zeta\in\R} \left( e^{-at} \tilde f(t, \zeta) \left( \tfrac{1}{1+\be h(\zeta)} \right)^{1-\gamma}  \right). \label{ODE2}
\end{align} 

We set $v(t,x):=e^{-at} \tilde f(t,h^{-1}(x))$ for $(t,x)\in [0,T]\times [0,1]$. The PDE for $\tilde f$ in \eqref{Region_I} implies that $v$ satisfies \eqref{Main_PDE} for $(t,x)\in [0,T)\times (0,1)$. We check (ii) by \eqref{ODE1} and \eqref{ODE2}. To check (iii), since $\tilde f\in C^{1 + \frac{\alpha}{2}, 2 + \alpha}([0, T] \times \mathbb{R})$, it is enough to observe that for $z=h^{-1}(x)$,
\begin{align*}
&v_t(t,x)=e^{-at}(\tilde f_t(t,z) - a \tilde f(t,z)),\\
&x(1-x)v_x(t,x)= e^{-at} \tilde f_z(t,z),\\
&x^2(1-x)^2 v_{xx}(t,x)=e^{-at}\left(\tilde f_{zz}(t,z)-(1-2x)\tilde f_z(t,z)  \right).
\end{align*}

\medskip

\section{Preliminary analysis for Section 5}

This appendix is devoted to presenting and proving preliminary asymptotic results used in the proof of the lemmas in Section 5. 
As in Section 5, we set $\epsilon = \be = \ue\in (0,1)$ and assume that $y_M \in (0, 1)$.
Recall that $A_{s, t}$, $Y_{s}^{(t, x)}$ and $Z_{s}^{(t, x)}$ are defined in \eqref{AYZ}.

\begin{lemma} \label{Boundedness_of_various_expectation} 
Let $C>0$ be a generic constant independent of $(t, s, x,\lambda)\in [0,T]\times [t,T]\times (0,1)\times [1,\infty)$ that may differ line by line.

\noindent(i) For nonnegative integers $n, m, k, l$, 
\begin{align}
&\left\vert \mathbb{E} \left[ \left( \tfrac{\partial^{m} Z_{s}^{(t, x)}}{\partial x^{m}} \right)^{k} \left( \tfrac{\partial^{n} Y_{s}^{(t, x)}}{\partial x^{n}} \right)^{l}\right] \right\vert  \leq   C (s - t) 
+ \begin{cases} 
0, \quad m\geq 1 \textrm{  or  } n \geq 2\\
x^l, \quad m=0 \textrm{  and  } n=0\\
1, \quad m=0 \textrm{  and  } n=1
\end{cases}
\label{Expectation_Time_order}
\end{align} 
In particular, 
\begin{equation}
\begin{split}\label{Boundedness_of_Expectation_of_Multiplied_term}
& \mathbb{E} \left[ \left| Z_{s}^{(t, x)} \right| \right]  \leq 1+C(s-t), \quad  \mathbb{E} \left[ \left| \tfrac{\partial^2 Z_{s}^{(t, x)}}{\partial x^2} \right| \right]  \leq C(s-t), \quad  \mathbb{E} \left[ \left| \tfrac{\partial Z_{s}^{(t, x)}}{\partial x} \right| \right]  \leq C\sqrt{s-t}.
\end{split}
\end{equation}

\noindent(ii) Let $n\in \N$ and $F: [0,T]\times (0,1) \to \R$. Suppose that $y\mapsto F(t,y)$ is piecewise continuous for each $t\in [0,T]$ and $\sup_{(t,y)\in [0,T]\times (0,1)}\left|(y(1-y))^{n-1}F(t,y)\right|=C_F<\infty$.
Then,
\begin{align}
\left| \tfrac{\partial}{\partial x}\left(\int_t^T \lambda e^{-\lambda(s-t)} \mathbb{E} \Big[ Z_{s}^{(t, x)}  \left(\tfrac{\partial Y_{s}^{(t, x)}}{\partial x}\right)^n F( s, Y_{s}^{(t, x)}) \Big] ds \right) \right|  \leq \tfrac{C \cdot C_F\sqrt{\lambda}}{(x (1 - x))^{n}}.
\end{align}

\noindent(iii) Let $F: (0,1) \to \R$ be a continuous function. Suppose that $F'$ is continuous on $(0,1)$, $F''$ is continuous on $(0,1)$ except finitely many points, the left and right limits of $F''$ exist at the discontinuous points and
\begin{align}
\esssup_{y\in (0,1)} \Big( \big| F(y)  \big| + \big| y(1-y) F'(y)\big| + \big|y^2(1-y)^2 F''(y)\big| \Big) <\infty.   \label{F_bounds}
\end{align}
Then, the derivative below exists and 
\begin{align}
&  \left\vert \tfrac{\partial}{\partial t} \EE \left[ \tfrac{\partial Z_{s}^{(t, x)}}{\partial x} F (Y_{s}^{(t, x)}) \right] \right\vert \leq C \Big( \big| (y_M -x) F(x) \big| + \big| x(1-x) F'(x) \big| \Big)  + C (s-t),\label{Derivative_of_expectation_process_1} \\
&  \left\vert  \EE \left[ \tfrac{\partial Z_{s}^{(t, x)}}{\partial x} F (Y_{s}^{(t, x)}) \right] \right\vert \leq C \Big( \big| (y_M -x) F(x) \big| + \big| x(1-x) F'(x) \big| \Big) (s-t)  + C (s-t)^2. \label{Derivative_of_expectation_process_2} 
\end{align}

\end{lemma}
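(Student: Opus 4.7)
The proof decomposes into three parts unified by a careful Taylor-expansion of $A_{s,t}=\exp((\mu-\tfrac{\sigma^2}{2})(s-t)+\sigma(B_s-B_t))$ around its initial value $A_{t,t}=1$, together with the elementary inequalities from \eqref{A_ineq}.

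For part (i), the plan is to start from the closed-form expressions \eqref{YZ_derivative} and note that $\partial^m Z_s^{(t,x)}/\partial x^m$ with $m\geq 1$ carries a factor $(A_{s,t}-1)^m$, and $\partial^n Y_s^{(t,x)}/\partial x^n$ with $n\geq 2$ carries a factor $(1-A_{s,t})^{n-1}$. A direct lognormal-moment computation gives $\EE[(A_{s,t}-1)^j]=O(s-t)$ for every integer $j\geq 1$: the even case follows from $\EE[(B_s-B_t)^{2i}]=O((s-t)^i)$, and the odd case from the non-vanishing drift $(\mu-\tfrac{\sigma^2}{2})(s-t)$ combined with H\"older. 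Controlling the denominator powers $(xA_{s,t}+1-x)^c$ uniformly in $x\in[0,1]$ via $(xA+1-x)^c\leq 1+A^c$ and moments of $A^c$, I would then assemble the bound $O(s-t)$ whenever $m\geq 1$ or $n\geq 2$. The residual cases $m=n=0$ and $m=0,\,n=1$ are evaluated at $s=t$ using $A_{t,t}=1$, producing the $x^l$ and $1$ contributions respectively with $O(s-t)$ remainder from a first-order expansion of $A_{s,t}$.

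For part (ii), the approach is to pass $\partial/\partial x$ inside both the outer integral and the expectation (justified by \lemref{YZ_bound1} and the hypothesis on $(y(1-y))^{n-1}F$), then apply the product rule to $Z_s^{(t,x)}\cdot(\partial Y_s^{(t,x)}/\partial x)^n\cdot F(s,Y_s^{(t,x)})$. Each of the three resulting terms is bounded via part (i) and \eqref{YZ_derivative}, with the key structural ingredient being the identity $Y_s^{(t,x)}(1-Y_s^{(t,x)})=xA_{s,t}(1-x)/(xA_{s,t}+1-x)^2$: this lets the hypothesis weight $(y(1-y))^{n-1}F(s,y)$ absorb the potentially singular denominators $(xA_{s,t}+1-x)^{-k}$ into a uniform prefactor proportional to $1/(x(1-x))^n$. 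Substituting the resulting pointwise $O((s-t)^{1/2})$ bounds into $\int_t^T\lambda e^{-\lambda(s-t)}(\cdot)\,ds$ and using $\int_t^T\lambda e^{-\lambda(s-t)}(s-t)^{1/2}\,ds=O(\lambda^{-1/2})$ and $\int_t^T\lambda e^{-\lambda(s-t)}\,ds\leq 1$ yields the advertised $\sqrt{\lambda}$ growth.

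For part (iii), the plan is to exploit that $\partial Z_s^{(t,x)}/\partial x=(1-\gamma)(A_{s,t}-1)(xA_{s,t}+1-x)^{-\gamma}$ vanishes at $s=t$. Writing $\EE[(\partial Z/\partial x)F(Y_s^{(t,x)})]=F(x)\,\EE[\partial Z/\partial x]+\EE[(\partial Z/\partial x)(F(Y_s^{(t,x)})-F(x))]$, the first term expands as $(1-\gamma)\gamma\sigma^2(y_M-x)F(x)(s-t)+O((s-t)^2)$ via the identity $\mu-\gamma\sigma^2 x=\gamma\sigma^2(y_M-x)$, and the second term is handled by applying It\^o's formula to $F(Y_s^{(t,x)})$ on each subinterval where $F''$ is continuous (the continuity of $F'$ eliminates any local-time contribution at the finitely many discontinuity points of $F''$); the leading Brownian contribution $\sigma x(1-x)(B_s-B_t)F'(x)$ correlates with $\sigma(B_s-B_t)$ inside $\partial Z/\partial x$ to produce $(1-\gamma)\sigma^2 x(1-x)F'(x)(s-t)+O((s-t)^2)$, with higher-order remainders absorbed by a second application of part (i). Differentiating in $t$ under the expectation (justified by \lemref{YZ_bound1} and \eqref{F_bounds}) then yields \eqref{Derivative_of_expectation_process_1}, and integrating over $[t,s]$ yields \eqref{Derivative_of_expectation_process_2}. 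The main obstacle will be part (ii): controlling the $x$-dependence uniformly over $A_{s,t}\in(0,\infty)$ without losing additional powers of $1/(x(1-x))$ crucially relies on the weighted-$F$ cancellation described above, and getting the $\sqrt{\lambda}$ rate requires matching the half-integer moment of $A_{s,t}-1$ against the exponential kernel with exactly the right power.
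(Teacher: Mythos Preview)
Your strategies for parts (i) and (iii) are close in spirit to the paper's and should work with care (the paper handles (i) by applying It\^o's formula directly to $(\partial^m Z/\partial x^m)^k(\partial^n Y/\partial x^n)^l$ as a function of $A_{s,t}$, and handles (iii) by writing the entire product $\tfrac{\partial Z}{\partial x}F(Y)$ as a single function $f(A_{s,t})$ and applying It\^o once, which is a bit cleaner than your split-and-correlate plan but essentially equivalent).

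There is, however, a genuine gap in your approach to part (ii). You propose to pass $\partial/\partial x$ inside the expectation and apply the product rule to $Z_s^{(t,x)}\cdot(\partial_x Y_s^{(t,x)})^n\cdot F(s,Y_s^{(t,x)})$. The third factor in that product rule would produce a term containing $F_y(s,Y_s^{(t,x)})$, but the hypothesis only assumes $y\mapsto F(t,y)$ is \emph{piecewise continuous}, with no differentiability whatsoever. So the product-rule step is not available, and even if $F$ happened to be differentiable there is no bound on $F_y$ that would let you control that term.

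The paper avoids this by a density-differentiation (likelihood-ratio) argument: it writes
\[
\EE\Big[Z_s^{(t,x)}\big(\tfrac{\partial Y_s^{(t,x)}}{\partial x}\big)^n F(s,Y_s^{(t,x)})\Big]
=\int_0^1 \Big(\tfrac{1-x}{1-y}\Big)^{1-\gamma}\Big(\tfrac{y(1-y)}{x(1-x)}\Big)^n F(s,y)\,\varphi(y;s-t,x)\,dy,
\]
using the explicit lognormal density $\varphi$ of $Y_s^{(t,x)}$ in \eqref{density_form}, and then differentiates the \emph{integrand} in $x$. The $x$-derivative of $\varphi$ contributes a factor $\tfrac{1}{x(1-x)}\big(\tfrac{B_s-B_t}{\sigma(s-t)}+\text{const}\big)$ in the probabilistic representation, which is $O((s-t)^{-1/2})$ in $L^2$; the remaining factors are bounded via the hypothesis on $(y(1-y))^{n-1}F$. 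This gives the integrand bound $\tfrac{C\,C_F}{(x(1-x))^n\sqrt{s-t}}$, and then $\lambda\int_t^T e^{-\lambda(s-t)}(s-t)^{-1/2}\,ds=O(\sqrt{\lambda})$ (note the integrand blows up like $(s-t)^{-1/2}$, not $(s-t)^{+1/2}$ as you wrote). The key point is that differentiating the density instead of the test function $F$ is what makes the argument go through under the minimal regularity assumed on $F$.
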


\begin{proof}
Throughout this proof, $C>0$ is a generic constant independent of $(t,s,x,\lambda)\in [0,T]\times [t,T]\times (0,1) \times [1,\infty)$  that may differ line by line.

(i) To obtain \eqref{Expectation_Time_order}, we apply Ito's lemma to $\left( \tfrac{\partial^{m} Z_{s}^{(t, x)}}{\partial x^{m}} \right)^{k} \left( \tfrac{\partial^{n} Y_{s}^{(t, x)}}{\partial x^{n}} \right)^{l}$ using the expression in \eqref{YZ_derivative} and the SDE for $A_s^{(t,x)}$ in \eqref{A_SDE}, then we apply the inequalities in \eqref{A_ineq}. 

Since $Z_{s}^{(t, x)}> 0$, we obtain the first inequality in \eqref{Boundedness_of_Expectation_of_Multiplied_term} by \eqref{Expectation_Time_order} (with $m=n=0$, $k=1$ and $l=0$). 
The expression in \eqref{YZ_derivative} and \eqref{Expectation_Time_order} (with $m=2$, $k=1$ and $l=0$) imply 
$$ \mathbb{E} \left[ \left| \tfrac{\partial^2 Z_{s}^{(t, x)}}{\partial x^2} \right| \right] = \textrm{sgn}(\gamma-1) \cdot \mathbb{E} \left[  \tfrac{\partial^2 Z_{s}^{(t, x)}}{\partial x^2}  \right] \leq C(s-t).$$
H$\ddot{o}$lder's inequality and \eqref{Expectation_Time_order} (with $m=1$, $k=2$ and $l=0$) produce the last inequality in \eqref{Boundedness_of_Expectation_of_Multiplied_term}.

(ii) The probability density function $\varphi(y;s - t, x)$ of $Y_{s}^{(t, x)}$ is calculated as 
\begin{align}
  \varphi(y;s - t, x) & := \tfrac{\partial}{\partial y} \mathbb{P} \left( Y_{s}^{(t, x)} \leq y \right) = \tfrac{\exp \left( - \frac{1}{2 \sigma^{2} (s - t)} \left( (\frac{\sigma^2}{2}-\mu) (s - t) + \ln \left( \frac{y (1 - x)}{(1 - y) x} \right) \right)^{2} \right)}{\sigma y (1 - y) \sqrt{2 \pi (s - t)}} \label{density_form}\\
  \Longrightarrow \quad \varphi_{x}(y;s - t, x) & = \varphi(y;s - t, x) \cdot \tfrac{1}{\sigma^{2} x (1 - x)} \left(\tfrac{\sigma^2}{2}-\mu + \tfrac{1}{s-t} \ln \left( \tfrac{y (1 - x)}{(1 - y) x} \right) \right). \nonumber
\end{align}

We use the expression above to obtain
\begin{align}
  &\int_{0}^{1} \left| \tfrac{\partial}{\partial x} \left( \left( \tfrac{1 - x}{1 - y} \right)^{1 - \gamma} \left( \tfrac{y (1 - y)}{x (1 - x)} \right)^{n} F(s, y) \varphi(y;s - t, x) \right) \right| d y \nonumber \\
  & = \int_{0}^{1} \left( \tfrac{1 - x}{1 - y} \right)^{1 - \gamma} \left( \tfrac{y (1 - y)}{x (1 - x)} \right)^{n} \left| F(s, y)\right| \nonumber \\
  &\qquad\qquad\qquad \cdot  \left| \tfrac{1}{\sigma^{2} x (1 - x)} \left(\tfrac{\sigma^2}{2}-\mu+ \tfrac{1}{s-t} \ln \left( \tfrac{y (1 - x)}{(1 - y) x} \right) \right) - \tfrac{n (1 - 2 x) + x (1 - \gamma)}{x (1 - x)}\right| \cdot \varphi(y;s - t, x)  d y \nonumber \\
  & =\tfrac{1}{(x(1-x))^n} \,  \mathbb{E} \left[ Z_{s}^{(t, x)}  \tfrac{\partial Y_{s}^{(t, x)}}{\partial x}   (Y_{s}^{(t, x)} (1-Y_{s}^{(t, x)} ))^{n-1} \left| F ( s, Y_{s}^{(t, x)} ) \right| \cdot \left| \tfrac{B_{s} - B_{t}}{\sigma (s - t)} - n - x (1 - \gamma - 2 n) \right| \right] \nonumber\\
&\leq \tfrac{C\cdot C_F}{(x (1 - x))^{n}\sqrt{s-t}}, 
 \label{D_Density_function_integrability}
\end{align}
where the inequality is due to \eqref{Expectation_Time_order} and H$\ddot{o}$lder's inequality. This implies that 
\begin{align}
  &\int_{0}^{1}  \tfrac{\partial}{\partial x} \left( \left( \tfrac{1 - x}{1 - y} \right)^{1 - \gamma} \left( \tfrac{y (1 - y)}{x (1 - x)} \right)^{n} F(s, y) \varphi(y;s - t, x) \right)  d y \nonumber \\
  & =\tfrac{1}{(x(1-x))^n} \mathbb{E} \Big[ Z_{s}^{(t, x)}  \tfrac{\partial Y_{s}^{(t, x)}}{\partial x}  \big(Y_{s}^{(t, x)} (1-Y_{s}^{(t, x)} )\big)^{n-1} F( s, Y_{s}^{(t, x)} ) \left( \tfrac{B_{s} - B_{t}}{\sigma (s - t)} - n - x (1 - \gamma - 2 n) \right) \Big] \nonumber
  \end{align}
is well-defined and continuous in $x$. Therefore, together with the estimate \eqref{D_Density_function_integrability}, we validate the interchange of integration and differentiation below:
\begin{align*}
&\left| \tfrac{\partial}{\partial x}\left(\int_t^T \lambda e^{-\lambda(s-t)} \mathbb{E} \Big[ Z_{s}^{(t, x)}  \left(\tfrac{\partial Y_{s}^{(t, x)}}{\partial x}\right)^n F( s, Y_{s}^{(t, x)}) \Big] ds \right) \right|  \\
&=\left| \int_t^T \lambda e^{-\lambda(s-t)}  \int_{0}^{1} \tfrac{\partial}{\partial x} \left( \left( \tfrac{1 - x}{1 - y} \right)^{1 - \gamma} \left( \tfrac{y (1 - y)}{x (1 - x)} \right)^{n} F(s, y) \varphi(y;s - t, x) \right)  d y \, ds  \right|  
\leq   \tfrac{C\cdot C_F \sqrt{\lambda}}{(x (1 - x))^{n}},
\end{align*}
where the inequality is due to \eqref{D_Density_function_integrability} and \lemref{Order_of_exponential_times}. Especially, when $n=1$, we have
\begin{align}
&\tfrac{\partial}{\partial x}\left(\int_t^T \lambda e^{-\lambda(s-t)} \mathbb{E} \Big[ Z_{s}^{(t, x)}  \tfrac{\partial Y_{s}^{(t, x)}}{\partial x} F ( s, Y_{s}^{(t, x)} ) \Big] ds \right) \nonumber\\
 & =\tfrac{1}{x(1-x)} \int_t^T \lambda e^{-\lambda(s-t)}  \mathbb{E} \left[ Z_{s}^{(t, x)}  \tfrac{\partial Y_{s}^{(t, x)}}{\partial x}   F( s, Y_{s}^{(t, x)}) \left( \tfrac{B_{s} - B_{t}}{\sigma (s - t)} - 1 + x (1 + \gamma ) \right) \right] ds . \label{F_expression}
\end{align}

(iii) By \eqref{YZ_derivative}, we have $\tfrac{\partial Z_{s}^{(t, x)}}{\partial x} F(Y_{s}^{(t, x)})=f(A_{s,t})$ with  $f(a):=(1-\gamma)(a-1)(xa+1-x)^{-\gamma} F(\tfrac{xa}{xa+1-x})$. The conditions for $F$ allow us to apply Ito's lemma (see 6.24 Problem in p.215 of \cite{karatzas2014brownian}) and obtain
\begin{align*}
f(A_{s-t,0})= \int_0^{s-t} g(A_{u,0}) du  +  \int_0^{s-t} f'(A_{u,0}) \sigma A_{u,0} dB_u, 
\end{align*}
where $g(a):=f'(a) \mu a + \tfrac{1}{2} f''(a) \sigma^2 a^2$. The inequalities in \eqref{A_ineq} and \eqref{F_bounds} imply that the stochastic integral term above is a square integrable martingale. Therefore,
\begin{align*}
\EE \left[ \tfrac{\partial Z_{s}^{(t, x)}}{\partial x} F(Y_{s}^{(t, x)})\right] &= \EE[f(A_{s,t})] =\EE[f(A_{s-t,0})] =\int_0^{s-t} \EE [g(A_{u,0})] du. 
\end{align*}
We differentiate above with respect to $t$, apply Ito's lemma and use the inequalities in \eqref{A_ineq}, then
\begin{align}
 \left| \tfrac{\partial}{\partial t} \EE \left[ \tfrac{\partial Z_{s}^{(t, x)}}{\partial x} F (Y_{s}^{(t, x)}) \right] \right|  = \Big| \EE [g(A_{s-t,0})] \Big| \leq  \left | g(1) \right| + C(s-t). \label{hform}
 \end{align}
Since $g(1)= \gamma(1-\gamma)\sigma^2(y_M -x) F(x) + (1-\gamma)\sigma^2 x(1-x) F'(x)$, 
\eqref{hform} implies \eqref{Derivative_of_expectation_process_1} and \eqref{Derivative_of_expectation_process_2}.
\end{proof}

\medskip

Recall Notation~\ref{notation} (ii) and (iii). 
When $\epsilon=0$ and $\lambda=\infty$, the value function is $v^{0}(t)$ in \eqref{merton_pde} and optimal fraction is $y_M=\tfrac{\mu}{\gamma \sigma^2}$. When $\epsilon=0$ and $\lambda<\infty$, we denote the value function as $v^{SO,\lambda}(t,x)$ and optimal fraction as $\hat y^{SO, \lambda}(t)$. Similarly, we denote $L^{SO,\lambda}(t,x):=L(t,x)|_{\epsilon=0}$, where $L$ is defined in \eqref{Supremum_term_L}. Observe that 
\begin{align}
\hat y^{SO, \lambda}(t)= \uy(t)\big|_{\epsilon=0}= \by(t) \big|_{\epsilon=0} \quad \textrm{and} \quad 
L^{SO, \lambda}(t, x) = v^{SO, \lambda}(t, \hat{y}^{SO, \lambda}(t)) \quad \textrm{for all $x\in [0,1]$}. \label{SO_eq}
\end{align}
The following results can be found in \cite{matsumotopower}: there is a constant $C>0$ independent of $(t,\lambda)\in [0,T)\times [1,\infty)$ such that
\begin{align}
    &  \left\vert \hat y^{SO, \lambda}(t) - y_M \right\vert \leq \tfrac{C}{\lambda}, \label{Order_of_hat_y0_and_mean_variance_ratio} \\
  &  \left\vert v^{SO,\lambda}(t,\hat y^{SO, \lambda}(t)) - v^{0}(t) \right\vert \leq \tfrac{C}{\lambda},\label{Order_of_v_0_lambda_and_v_infinity}\\
  &  \lim_{\lambda \rightarrow \infty} \tfrac{\lambda \left( v^{0}(t) - v^{SO, \lambda}(t, y_M) \right)}{1 - \gamma} = \tfrac{\gamma \sigma^{4} y_M^{2} (1 - y_M)^{2}}{2} \cdot e^{Q(y_M) (T - t)} (T - t). \label{Liquid_Illiquid_difference}
\end{align}

For the later analysis, we provide more estimates in this direction.

\begin{lemma}[Asymptotics for $\epsilon=0$ case]
The functions $v^{SO, \lambda}(t, x)$, $v_{x}^{SO, \lambda}(t, x)$, $v_{x x}^{SO, \lambda}(t, x)$ obtained by substitution $\epsilon = 0$ can be continuously extended to $x = 0$ and $x = 1$. 
There exist positive constants $C, \underline{C}, \overline{C}$ independent of $(t, x, \lambda)\in [0,T]\times (0,1)\times [1,\infty)$ such that 
\begin{align}
  &    \left|  \tfrac{\partial^n}{\partial x^n} v^{SO, \lambda}(t, x) \right|  \leq \tfrac{C}{\lambda} \quad \textrm{for} \quad n\in \N, \label{Dv_0_lambda_order} \\
  &  \underline{C}\cdot \min \{ 1, \lambda (T - t) \} \leq -\tfrac{\lambda v_{x x}^{SO, \lambda}(t, x)}{1 - \gamma} \leq \overline{C} \cdot \min \{ 1, \lambda (T - t) \}. \label{DDv_0_lambda_order}
\end{align} 
Furthermore, for $(t,x) \in [0, T)\times (0,1)$, 
\begin{align}
 & \lim_{\lambda \rightarrow \infty} \tfrac{\lambda v_{x x}^{SO, \lambda}(t, x)}{1 - \gamma} = - \gamma \sigma^{2} v^{0}(t).  \label{Limit_of_lambda_DDv}
\end{align}
\end{lemma}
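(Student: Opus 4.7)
The plan is to exploit the drastic simplification that occurs when $\epsilon=0$: the formula \eqref{Supremum_term_L} collapses to $L^{SO,\lambda}(t,x) = v^{SO,\lambda}(t,\hat y^{SO,\lambda}(t)) =: \hat v(t)$, which is independent of $x$, so in particular $L_x^{SO,\lambda}\equiv 0$. The stochastic representations \eqref{v_with_optimizer} and \eqref{Dv_with_optimizer}, combined with differentiation under the expectation/integral (justified by \lemref{YZ_bound1}), reduce to
\begin{align*}
\tfrac{\partial^n v^{SO,\lambda}(t,x)}{\partial x^n} = e^{-\lambda(T-t)}\,\EE\Big[\tfrac{\partial^n Z_T^{(t,x)}}{\partial x^n}\Big] + \lambda\int_t^T e^{-\lambda(s-t)}\hat v(s)\,\EE\Big[\tfrac{\partial^n Z_s^{(t,x)}}{\partial x^n}\Big]\,ds.
\end{align*}
Continuous extension to $x\in\{0,1\}$ then follows from the explicit polynomial formula \eqref{YZ_derivative} together with dominated convergence, the uniform dominator being supplied by \lemref{YZ_bound1}. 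The estimate \eqref{Dv_0_lambda_order} follows from the bound $|\EE[\partial_x^n Z_s^{(t,x)}]|\leq C(s-t)$ (i.e., \eqref{Expectation_Time_order} with $m\geq 1$), combined with the elementary estimates $(T-t)e^{-\lambda(T-t)}\leq 1/(\lambda e)$ and $\lambda\int_t^T e^{-\lambda(s-t)}(s-t)\,ds = \bigl(1-e^{-\lambda(T-t)}(1+\lambda(T-t))\bigr)/\lambda\leq 1/\lambda$, plus uniform boundedness of $\hat v$ from \lemref{v_concave}(ii).

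For the two-sided bound \eqref{DDv_0_lambda_order}, I introduce $F(u,x):=\EE[(A_{t+u,t}-1)^2(xA_{t+u,t}+1-x)^{-1-\gamma}]$ (independent of $t$ by stationarity) so that \eqref{YZ_derivative} and the preceding display give, after substituting $u=\lambda(s-t)$ in the integral,
\begin{align*}
-\tfrac{\lambda v_{xx}^{SO,\lambda}(t,x)}{1-\gamma} = \gamma\,\lambda e^{-\lambda(T-t)}F(T-t,x) + \gamma\int_0^{\lambda(T-t)} e^{-u}\,\hat v(t+u/\lambda)\cdot \lambda F(u/\lambda,x)\,du.
\end{align*}
A Taylor expansion of $(e^{\sigma W_u+(\mu-\sigma^2/2)u}-1)^2(xe^{\cdot}+1-x)^{-1-\gamma}$ around $u=0$ (with the second factor close to $1$) shows that $F$ extends continuously to $[0,T]\times[0,1]$ with $F(u,x)/u\to\sigma^2$ as $u\downarrow 0$ uniformly in $x$. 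Setting $(F/u)(0,x):=\sigma^2$ makes $F(u,x)/u$ a continuous, strictly positive function on the compact set $[0,T]\times[0,1]$, hence it admits uniform two-sided bounds $\underline c\,u\leq F(u,x)\leq\bar c\,u$. Plugging these in and splitting into the cases $\lambda(T-t)\leq 1$ and $\lambda(T-t)>1$ yields the required $\min\{1,\lambda(T-t)\}$ behavior: in the small regime the boundary term supplies the lower bound of order $\lambda(T-t)$, while in the large regime the integral supplies the lower bound of order $1$ (using $\int_0^\infty ue^{-u}\,du=1$). The matching upper bound uses $ue^{-u}\leq\min\{u,1\}$ and $\int_0^{\lambda(T-t)}ue^{-u}\,du\leq\min\{(\lambda(T-t))^2/2,1\}$.

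For the limit \eqref{Limit_of_lambda_DDv} with fixed $t<T$, the boundary term vanishes since $\lambda e^{-\lambda(T-t)}\to 0$. In the integral, $\lambda F(u/\lambda,x)\to \sigma^2 u$ pointwise by the same Taylor expansion, $\hat v(t+u/\lambda)\to v^0(t)$ by \eqref{Order_of_v_0_lambda_and_v_infinity} and continuity of $v^0$, and the dominator $\bar c\|\hat v\|_\infty u e^{-u}$ is integrable on $(0,\infty)$; dominated convergence then yields the limit $\gamma\sigma^2 v^0(t)\int_0^\infty ue^{-u}\,du=\gamma\sigma^2 v^0(t)$, equivalent to \eqref{Limit_of_lambda_DDv}. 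The main technical obstacle throughout is the uniform lower bound $F(u,x)\geq\underline c\,u$ on $[0,T]\times[0,1]$: it is not a priori obvious that $F(u,x)/u$ stays bounded away from zero uniformly as $u\downarrow 0$ and up to the boundary $x\in\{0,1\}$, and handling these joint degeneracies via Taylor expansion plus continuity-and-compactness is the most delicate step.
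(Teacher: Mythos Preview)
Your proposal is correct and follows essentially the same route as the paper: reduce to the representation $\partial_x^n v^{SO,\lambda}(t,x)=e^{-\lambda(T-t)}\EE[\partial_x^n Z_T^{(t,x)}]+\lambda\int_t^T e^{-\lambda(s-t)}\hat v(s)\,\EE[\partial_x^n Z_s^{(t,x)}]\,ds$, use $|\EE[\partial_x^n Z_s^{(t,x)}]|\leq C(s-t)$ for \eqref{Dv_0_lambda_order}, establish uniform two-sided bounds $\underline c\,u\leq F(u,x)\leq\bar c\,u$ for \eqref{DDv_0_lambda_order}, and pass to the limit via dominated convergence after the substitution $u=\lambda(s-t)$ for \eqref{Limit_of_lambda_DDv}. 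The only cosmetic difference is in how the key two-sided bound on $F(u,x)/u$ is obtained: the paper sandwiches $(xA+1-x)^{-1-\gamma}$ between $\min\{A^{-1-\gamma},1\}$ and $\max\{A^{-1-\gamma},1\}$, which are $x$-independent and give the limit $\sigma^2$ directly (hence uniformity in $x$ is immediate), whereas you argue via continuous extension of $F(u,x)/u$ to the compact set $[0,T]\times[0,1]$ and positivity there; both are valid, and the sandwich bound is a clean way to justify precisely the uniformity step you flagged as the most delicate.
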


\begin{proof}
Throughout this proof, $C>0$ is a generic constant independent of $(t,x,\lambda)\in [0,T)\times (0,1) \times [1,\infty)$  that may differ line by line.
Using \eqref{SO_eq}, the representation in \eqref{v_with_optimizer} for $\epsilon=0$ becomes
\begin{align}
  v^{SO,\lambda} (t, x) & = e^{- \lambda (T - t)} \EE \left[ Z_{T}^{(t, x)} \right] + \lambda \int_{t}^{T} e^{- \lambda (s - t)} v^{SO, \lambda}(s, \hat{y}^{SO, \lambda}(s))  \EE \left[ Z_{s}^{(t, x)} \right] d s. \label{v_with_optimizer_0} 
\end{align}
We take derivative with respect to $x$ above. Using \lemref{YZ_bound1} and the dominated convergence theorem, we put the derivative inside of the integrals: for $n\in \N$,
\begin{align}
    \tfrac{\partial^n}{\partial x^n} v^{SO, \lambda}(t, x) & =  e^{- \lambda (T - t)} \EE \Big[ \tfrac{\partial^n Z_{T}^{(t, x)}}{\partial x^n} \Big] + \lambda \int_{t}^{T} e^{- \lambda (s - t)} v^{SO, \lambda}(s, \hat{y}^{SO, \lambda}(s)) \EE \left[ \tfrac{\partial^n Z_{s}^{(t, x)}}{\partial x^n} \right] d s.\label{v_derivative_0_orders}
\end{align}
The above equality, \lemref{YZ_bound1}, \eqref{YZ_derivative} and the dominated convergence theorem enable us to conclude that
$v^{SO, \lambda}(t, x)$, $v_{x}^{SO, \lambda}(t, x)$, $v_{x x}^{SO, \lambda}(t, x)$ can be continuously extended to $x=0$ and $x=1$.

Observe that 
\begin{align*}
&x e^{-x} \leq \min\{ 1, x\} \quad \textrm{and}\quad  0\leq 1- e^{-x} - x e^{-x} \leq \min \{ 1, x\} \quad \textrm{for}\quad x\geq 0, \\ 
&x e^{-x} \geq \tfrac{x}{e} \quad \textrm{for} \quad 0\leq x \leq 1, \quad 1- e^{-x} - x e^{-x}\geq 1-\tfrac{2}{e}>0 \quad \textrm{for} \quad x\geq 1. 
\end{align*}
By the above inequalities, for positive constants $c_1$ and $c_2$, we can find positive constants $\underline{c}$ and $\overline{c}$ such that
\begin{align}
\underline{c} \cdot \min \{ 1, x\}  \leq  c_1 x e^{-x} + c_2 \left(1- e^{-x} - x e^{-x} \right) \leq \overline{c} \cdot \min \{ 1, x\} \quad \textrm{for}\quad x\geq 0. \label{ineq_1}
\end{align}

We apply \eqref{Expectation_Time_order} to \eqref{v_derivative_0_orders} and obtain
\begin{align*}
  \left|  \tfrac{\partial^n}{\partial x^n} v^{SO, \lambda}(t, x) \right|  &\leq  C(T-t) e^{- \lambda (T - t)} + C\lambda \int_t^T e^{-\lambda(s-t)}(s-t)ds
  \leq \tfrac{C}{\lambda},
\end{align*}
where the second inequality is due to \eqref{ineq_1}.
Thus, we conclude \eqref{Dv_0_lambda_order}.

The expressions in \eqref{AYZ} and \eqref{YZ_derivative} imply
\begin{align*}
& \min \{ A_{s, t}^{- 1 - \gamma}, 1 \} \leq \left( x A_{s, t} + 1 - x \right)^{- 1 - \gamma}  \leq \max \{ A_{s, t}^{- 1 - \gamma}, 1 \} \quad \textrm{for}\quad  0\leq x\leq 1,\\
&\lim_{s\downarrow t}  \EE \left[ \min \{ A_{s, t}^{- 1 - \gamma}, 1 \} \left( \tfrac{A_{s, t} - 1}{\sqrt{s - t}} \right)^{2} \right] = \lim_{s\downarrow t} \EE \left[ \max \{ A_{s, t}^{- 1 - \gamma}, 1 \} \left( \tfrac{A_{s, t} - 1}{\sqrt{s - t}} \right)^{2} \right] =\sigma^2>0,\\
&-\tfrac{1}{(1 - \gamma) (s - t)} \EE \left[ \tfrac{\partial^{2} Z_{s}^{(t, x)}}{\partial x^{2}} \right] = \gamma \, \EE \left[ \left( x A_{s, t} + 1 - x \right)^{- 1 - \gamma} \left( \tfrac{A_{s, t} - 1}{\sqrt{s - t}} \right)^{2} \right]. 
\end{align*}
We combine the above inequalities, limits and expression to conclude that
\begin{align}
\lim_{s\downarrow t} \left(-\tfrac{1}{(1 - \gamma) (s - t)} \EE \left[ \tfrac{\partial^{2} Z_{s}^{(t, x)}}{\partial x^{2}} \right] \right) = \gamma \sigma^2>0, \label{Zxx_limit}
\end{align}
and there exist positive constants $\underline{c}$ and $\overline{c}$ independent of $(t,s,x,\lambda)$ such that 
\begin{align}
\underline{c} \leq -\tfrac{1}{(1 - \gamma) (s - t)} \EE \left[ \tfrac{\partial^{2} Z_{s}^{(t, x)}}{\partial x^{2}} \right] \leq \overline{c}. \label{Zxx_bounds}
\end{align}

From \eqref{v_derivative_0_orders} for $n=2$, we obtain the following expression:
\begin{align*}
  - \tfrac{\lambda v_{x x}^{SO, \lambda}(t, x)}{1 - \gamma} & = \lambda (T - t) e^{- \lambda (T - t)} \cdot \left(  -\tfrac{1}{(1 - \gamma) (T - t)} \EE \Big[ \tfrac{\partial^{2} Z_{T}^{(t, x)}}{\partial x^{2}} \Big]  \right) \\
  &\quad + \lambda^2 \int_{t}^{T} e^{- \lambda (s - t)} (s - t) \cdot v^{SO, \lambda}(s, \hat{y}^{SO, \lambda}(s)) \cdot  \left(  -\tfrac{1}{(1 - \gamma) (s - t)} \EE \left[ \tfrac{\partial^{2} Z_{s}^{(t, x)}}{\partial x^{2}} \right] \right) d s 
 \end{align*}
We apply the inequalities in \lemref{v_concave} (ii), \eqref{ineq_1} and \eqref{Zxx_bounds} to the above expression to conclude \eqref{DDv_0_lambda_order}. Finally, in the above expression, we substitute $u=\lambda(s-t)$ and let $\lambda\to \infty$:
\begin{align*}
\lim_{\lambda\to \infty}   - \tfrac{\lambda v_{x x}^{SO, \lambda}(t, x)}{1 - \gamma} &= \lim_{\lambda\to \infty}  \int_0^{\lambda (T-t)} e^{-u} u   \cdot v^{SO, \lambda}(\tfrac{u}{\lambda}+t, \hat{y}^{SO, \lambda}(\tfrac{u}{\lambda}+t)) \cdot  \Big(  -\tfrac{1}{(1 - \gamma) \frac{u}{\lambda}} \EE \Big[ \tfrac{\partial^{2} Z_{u/\lambda+t}^{(t, x)}}{\partial x^{2}} \Big] \Big) du\\
&=\int_0^\infty e^{-u} u \,  du  \cdot v^{0}(t) \cdot \gamma \sigma^2 = \gamma \sigma^2  v^{0}(t),
\end{align*}
where the second equality is due to \eqref{Order_of_v_0_lambda_and_v_infinity}, \eqref{Zxx_limit}, \eqref{Zxx_bounds} and the dominated convergence theorem. 
\end{proof}

\begin{lemma}[Estimates of $v^\epsilon, v_x^\epsilon,v_{xx}^\epsilon$ and $v_{xxx}^\epsilon$] \label{Various_orders}
Let \assref{ass} hold. 
Let $C>0$ be a generic constant independent of $(t,s, x,\epsilon)\in [0,T)\times [t,T)\times(0,1) \times (0,1)$ (also independent of $\lambda$ due to relation $\lambda=c \, \epsilon^{-\frac{2}{3}}$) that may differ line by line. Then, the followings hold:
\begin{align}
&x (1 - x) \left\vert v_{x x}^{\epsilon}(t, x) - v_{x x}^{SO, \lambda}(t, x) \right\vert \leq C \epsilon^{\frac{2}{3}},  \label{DDv_ep_DDv_0_difference}\\
&\left\vert v_{x}^{\epsilon}(t, x) - v_{x}^{SO, \lambda}(t, x) \right\vert \leq C \epsilon, \label{Dv_ep_Dv_0_difference}\\
&\left\vert v^{\epsilon}(t, x) - v^{SO, \lambda}(t, x) \right\vert \leq C  \epsilon^{\frac{2}{3}}, \label{v_ep_v_0_difference_more_strong}\\
&-\tfrac{ v_{x x}^{\epsilon}(t, x) }{1 - \gamma} \geq C(\epsilon^{\frac{2}{3}}-\epsilon)\,  \min \{ \lambda (T - t), 1 \}, \label{Upper_bound_of_DDv}\\
&x^2 (1 - x)^2 \left\vert v_{xxx}^{\epsilon}(t, x) - v_{xxx}^{SO, \lambda}(t, x) \right\vert \leq C \epsilon^{\frac{1}{3}}, \label{DDDv_ep_DDDv_0_difference}
\end{align}
where $v_{xxx}^{\epsilon}(t, x)$ exists and is continuous in $(t,x)\in [0,T) \times (0,1)$.
\end{lemma}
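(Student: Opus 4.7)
The proof proceeds by bootstrapping the five bounds in the order \eqref{v_ep_v_0_difference_more_strong}, \eqref{Dv_ep_Dv_0_difference}, \eqref{DDv_ep_DDv_0_difference}, \eqref{Upper_bound_of_DDv}, \eqref{DDDv_ep_DDDv_0_difference}. The starting point is to apply the Feynman--Kac representations \eqref{v_with_optimizer} and \eqref{Dv_with_optimizer} of \proref{vvx_representation} to both $v^\epsilon$ and $v^{SO,\lambda}$ (the $\epsilon=0$ specialisation, see \eqref{v_with_optimizer_0} and \eqref{v_derivative_0_orders}), subtract, and analyse each resulting integral with \lemref{Boundedness_of_various_expectation}. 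The scaling $\lambda = c\,\epsilon^{-\frac{2}{3}}$ of \assref{ass} makes $\lambda\epsilon^{\frac{2}{3}}=c$ and $\sqrt{\lambda}\,\epsilon=c^{\frac{1}{2}}\epsilon^{\frac{2}{3}}$, so all powers of $\epsilon$ arising from $\lambda$-factors are commensurate with powers coming directly from $\epsilon$.

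\textbf{Step 1 (proof of \eqref{v_ep_v_0_difference_more_strong}).} Writing $\Delta v := v^\epsilon - v^{SO,\lambda}$ and subtracting the two versions of \eqref{v_with_optimizer},
\[
\Delta v(t,x) = \lambda\int_t^T e^{-\lambda(s-t)} \EE\bigl[Z_s^{(t,x)} (L^\epsilon-L^{SO,\lambda})(s,Y_s^{(t,x)})\bigr]\, ds.
\]
I would first establish
$\sup_{y\in[0,1]} |L^\epsilon(s,y) - L^{SO,\lambda}(s,y)| \leq \|\Delta v(s,\cdot)\|_\infty + C\epsilon^{\frac{2}{3}}$
by combining: (i) $L^\epsilon = v^\epsilon$ inside $[\uy^\epsilon(s),\by^\epsilon(s)]$; (ii) outside, $L^\epsilon$ equals $v^\epsilon$ at the nearer boundary times a $(1+O(\epsilon))^{1-\gamma}$ factor (from \eqref{Supremum_term_L}); (iii) $L^{SO,\lambda}(s,\cdot)\equiv v^{SO,\lambda}(s,\hat y^{SO,\lambda})$; (iv) $v^{SO,\lambda}$ is $O(1/\lambda)=O(\epsilon^{\frac{2}{3}})$-Lipschitz in $x$ by \eqref{Dv_0_lambda_order}. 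With $\EE[Z_s^{(t,x)}]\leq 1+C(s-t)$ from \eqref{Boundedness_of_Expectation_of_Multiplied_term}, a Gronwall-type argument applied backward from the terminal equality $\Delta v(T,\cdot)=0$ yields $\|\Delta v\|_\infty \leq C\epsilon^{\frac{2}{3}}$.

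\textbf{Step 2 (proof of \eqref{Dv_ep_Dv_0_difference}).} Subtracting the two versions of \eqref{Dv_with_optimizer} and noting $L_x^{SO,\lambda}\equiv 0$ yields
\[
\Delta v_x(t,x) = \lambda\int_t^T e^{-\lambda(s-t)} \EE\bigl[Z_s L_x^\epsilon(s,Y_s)\,\tfrac{\partial Y_s}{\partial x} + \tfrac{\partial Z_s}{\partial x}(L^\epsilon-L^{SO,\lambda})(s,Y_s)\bigr]\, ds.
\]
The first piece is $O(\epsilon)$ by $\|L_x^\epsilon\|_\infty\leq C\epsilon$ (see \eqref{Lx_bound}) and $\EE[Z_s\partial_x Y_s]\leq 1+C(s-t)$. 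The second is $O(\epsilon^{\frac{2}{3}})\cdot O(\sqrt{\lambda}\cdot\lambda^{-1})=O(\epsilon)$ by Step~1, the bound $\EE[|\partial_x Z_s|]\leq C\sqrt{s-t}$ from \eqref{Boundedness_of_Expectation_of_Multiplied_term}, and $\lambda\int_t^T e^{-\lambda(s-t)}\sqrt{s-t}\,ds = O(\lambda^{-1/2}) = O(\epsilon^{\frac{1}{3}})$.

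\textbf{Step 3 (proofs of \eqref{DDv_ep_DDv_0_difference}, \eqref{Upper_bound_of_DDv}, \eqref{DDDv_ep_DDDv_0_difference}).} Differentiate \eqref{Dv_with_optimizer} once more in $x$. Applying \lemref{Boundedness_of_various_expectation}(ii) with $n=1$, $F=L_x^\epsilon$ (piecewise continuous with $C_F\leq C\epsilon$) controls one piece by $C\epsilon\sqrt{\lambda}/(x(1-x)) = C\epsilon^{\frac{2}{3}}/(x(1-x))$. For the companion piece $\partial_x[\lambda\int e^{-\lambda(s-t)}\EE[\partial_x Z_s(L^\epsilon-L^{SO,\lambda})(s,Y_s)]\,ds]$, I would use the identity $\partial_x Z_s = (1-\gamma)Z_s(Y_s-x)/(x(1-x))$ (direct from \eqref{AYZ}) to convert it into a form to which \lemref{Boundedness_of_various_expectation}(ii) applies with $F=L^\epsilon-L^{SO,\lambda}$ and $C_F=O(\epsilon^{\frac{2}{3}})$ from Step~1, producing the same order. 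Summing and multiplying by $x(1-x)$ gives \eqref{DDv_ep_DDv_0_difference}. Estimate \eqref{Upper_bound_of_DDv} follows by writing $-v_{xx}^\epsilon/(1-\gamma) = -v_{xx}^{SO,\lambda}/(1-\gamma) + O(\epsilon^{\frac{2}{3}}/(x(1-x)))$ and invoking the lower bound in \eqref{DDv_0_lambda_order}, with the extra $\epsilon$ in $\epsilon^{\frac{2}{3}}-\epsilon$ reflecting the correction term after subtraction. Finally, \eqref{DDDv_ep_DDDv_0_difference} is obtained by one further $x$-differentiation: each such differentiation under expectation adds a factor $\sqrt{\lambda}/(x(1-x)) = \epsilon^{-\frac{1}{3}}/(x(1-x))$, converting $\epsilon^{\frac{2}{3}}/(x(1-x))$ into $\epsilon^{\frac{1}{3}}/(x(1-x))^2$, where \lemref{Boundedness_of_various_expectation}(iii) is used to control the $L^{SO,\lambda}$-related terms that are constant in $y$ but not in $t$. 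Existence and continuity of $v_{xxx}^\epsilon$ on $[0,T)\times(0,1)$ follow from the validity of this differentiation step.

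\textbf{Main obstacle.} The hard part is Step~3: rigorously differentiating the representation through $L^\epsilon$, whose first derivative $L_x^\epsilon$ has jumps at the moving free boundaries $\uy^\epsilon(t),\by^\epsilon(t)$, and correctly balancing the singular $1/(x(1-x))^n$ factor produced by density-based differentiation against (a) the smallness coming from $\|L_x^\epsilon\|_\infty=O(\epsilon)$ or from $\|L^\epsilon-L^{SO,\lambda}\|_\infty=O(\epsilon^{\frac{2}{3}})$, (b) the $\sqrt{\lambda}=\epsilon^{-\frac{1}{3}}$ cost of each differentiation under expectation, and (c) the $1/\lambda=\epsilon^{\frac{2}{3}}$ decay provided by $e^{-\lambda(s-t)}$. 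These bookkeeping cancellations rest fundamentally on the scaling $\lambda\epsilon^{\frac{2}{3}}=c$ imposed by \assref{ass}.
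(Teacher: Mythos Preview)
Your bootstrapping order is the wrong way round, and this is not cosmetic: the direct Gronwall argument you sketch in Step~1 only yields $\|\Delta v\|_\infty \le C\epsilon^{1/3}$, not $C\epsilon^{2/3}$. To see why, take your inequality $|\Delta v(t,x)| \le C\epsilon + \lambda\int_t^T e^{(C-\lambda)(s-t)}\|\Delta v(s,\cdot)\|_\infty\,ds$ and solve the corresponding integral \emph{equation} $g(t) = A + \lambda\int_t^T e^{(C-\lambda)(s-t)}g(s)\,ds$ with $A=C\epsilon$: differentiating gives $g'(t)=-\lambda A - C(g(t)-A)$, hence $g(t) \approx A\lambda(e^{C(T-t)}-1)/C \sim \lambda\epsilon = c\,\epsilon^{1/3}$. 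The point is that the kernel $\lambda e^{-\lambda(s-t)}$ has total mass $\approx 1$, so iterating picks up a factor $\lambda T$, not a small constant. Since your Step~2 feeds the output of Step~1 into the $\partial_x Z$ term, you would only get $|\Delta v_x|\le C\epsilon^{1/3}\lambda^{-1/2}=C\epsilon^{2/3}$ there, not $C\epsilon$, and the whole chain collapses.

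The paper's route is: (a) the weak bound $\|\Delta v\|_\infty\le C\epsilon^{1/3}$ by exactly your Gronwall; (b) \eqref{DDv_ep_DDv_0_difference} from (a) alone; (c) \eqref{Dv_ep_Dv_0_difference}, using \lemref{Boundedness_of_various_expectation}(iii) with $F=L^\epsilon-L^{SO,\lambda}$, which needs a bound on $F''=L_{xx}^\epsilon$ and hence on $v_{xx}^\epsilon$ inside the no-trade region --- this is why (b) must precede (c); (d) only then \eqref{v_ep_v_0_difference_more_strong}, obtained first at the single point $x=\hat y^{SO,\lambda}(t)$, where the forcing term improves from $C\epsilon$ to $C\epsilon(\sqrt{s-t}+1/\lambda)$ because $|g^\epsilon(Y_s,\hat y^{SO,\lambda}(s))|\le C\epsilon|Y_s-\hat y^{SO,\lambda}(s)|$ and $\EE|Y_s^{(t,\hat y)}-\hat y|\le C\sqrt{s-t}$, and then extended to all $x$ by integrating the already-proved $v_x$ bound. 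Your sketch of \eqref{Upper_bound_of_DDv} has a second problem: subtracting an $O(\epsilon^{2/3}/(x(1-x)))$ error from the $O(\epsilon^{2/3})$ main term is useless. The paper instead exploits the \emph{sign} $L_{xx}^\epsilon/(1-\gamma)\le 0$ (concavity) so that the dominant piece of $(v_{xx}^\epsilon-v_{xx}^{SO,\lambda})/(1-\gamma)$ is nonpositive, leaving only an $O(\epsilon)\min\{\lambda(T-t),1\}$ remainder; that is where the $(\epsilon^{2/3}-\epsilon)$ in \eqref{Upper_bound_of_DDv} comes from.
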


\begin{proof}
For convenience, let $g^\epsilon:[0,1]^2 \to \R$ be
\begin{align}
g^\epsilon(x,y):=\Big( \big( \tfrac{1 + \epsilon x}{1 + \epsilon y} \big)^{1 - \gamma} - 1 \Big) 1_{ \left\{ y>x \right\} }  + \Big( \big( \tfrac{1 - \epsilon x}{1 - \epsilon y} \big)^{1 - \gamma} - 1 \Big) 1_{ \left\{ y<x \right\} }. \label{g_def}
\end{align}
Then, the mean value theorem (consider $\epsilon$ as a variable) produces
\begin{align}
\big| g^\epsilon(x,y) \big| \leq C |x-y| \,\epsilon \quad \textrm{for}\quad x,y\in [0,1].  \label{g_bound}
\end{align}
From the expression of $L$ in \eqref{L_sup_exp}, we obtain
\begin{align}
   \left\vert L^{\epsilon} \left( t, x \right) - L^{SO, \lambda} \left( t, x \right) \right\vert & \leq \sup_{y \in [0, 1]} \left\vert v^{\epsilon}(t, y) g^\epsilon (x,y)\right\vert + \sup_{y \in [0, 1]} \big\vert v^{\epsilon}(t, y) - v^{SO, \lambda}(t, y) \big\vert \nonumber\\  
  & \leq C  \epsilon + \sup_{y \in [0, 1]} \big\vert v^{\epsilon}(t, y) - v^{SO, \lambda}(t, y) \big\vert,   \label{Difference_between_L_in_epsilon}
\end{align}
where the second inequality is due to \eqref{Range_of_v} and \eqref{g_bound}.
The above inequality and \eqref{v_with_optimizer} produce
\begin{align}
\left| v^{\epsilon}(t, x) - v^{SO, \lambda}(t, x)\right| & =\left| \lambda \int_{t}^{T} e^{- \lambda (s - t)} \EE \left[ Z_{s}^{(t, x)} \left( L^{\epsilon} ( s, Y_{s}^{(t, x)} ) - L^{SO, \lambda} ( s, Y_{s}^{(t, x)} )\right)   \right]ds \right|  \nonumber\\
& \leq  \lambda \int_{t}^{T} e^{- \lambda (s - t)} \EE \left[ Z_{s}^{(t, x)}\right] \Big( C  \epsilon + \sup_{y \in [0, 1]} \left\vert v^{\epsilon}(s, y) - v^{SO, \lambda}(s, y) \right\vert  \Big)   ds \nonumber \\
  & \leq \lambda \int_{t}^{T} e^{(C - \lambda) (s - t)} \Big( C\epsilon +\sup_{y \in [0, 1]} \left\vert v^{\epsilon}(s, y) - v^{SO, \lambda}(s, y) \right\vert \Big) d s,\label{v_ep_v_0_difference_upper_bound}
\end{align}
where the last inequality is due to $\EE [ Z_{s}^{(t, x)} ] \leq 1+ C(s-t) \leq e^{C (s - t)}$ by \eqref{Expectation_Time_order}. For convenience, we define $f(t):=e^{(C-\lambda)t} \sup_{y \in [0, 1]} \left\vert v^{\epsilon}(t, y) - v^{SO, \lambda}(t, y) \right\vert$. Then, inequality \eqref{v_ep_v_0_difference_upper_bound} can be written as
\begin{align*}
f(t) \leq C \epsilon \int_t^T \lambda e^{(C-\lambda)s}ds + \int_t^T \lambda f(s) ds.
\end{align*}
Since $f$ is measurable due to \lemref{meas_lem}, we apply Gronwall's inequality (see \lemref{Gronwall_lemma}) to obtain
\begin{align*}
f(t) &\leq C \epsilon \int_t^T \lambda e^{(C-\lambda)s}ds + \int_t^T \left(C \epsilon \int_s^T \lambda e^{(C-\lambda)u}du \right) \lambda e^{\lambda(s-t)}ds\\
&=\tfrac{C \epsilon \lambda}{\lambda - C} \cdot \left( e^{(C - \lambda) t} - e^{(C - \lambda) T} \right) + \tfrac{C \epsilon \lambda e^{- \lambda t}}{\lambda - C} \cdot \left( \tfrac{\lambda \left( e^{C T} - e^{C t} \right)}{C} - e^{C T} \left( 1 - e^{- \lambda (T - t)} \right) \right).
\end{align*}
We apply $\lambda=c \, \epsilon^{-\frac{2}{3}}$ to the above inequality and conclude that 
\begin{align}
\sup_{y \in [0, 1]} \left\vert v^{\epsilon}(t, y) - v^{SO, \lambda}(t, y) \right\vert = e^{(\lambda-C) t} f(t) \leq C \epsilon^{\frac{1}{3}}. \label{v_estimate_weak}
\end{align}
Notice that \eqref{v_ep_v_0_difference_more_strong} is not obtained yet. 

Since the value function $V$ in \eqref{Definition_of_value_function} should decrease in $\epsilon$, we have $\tfrac{v^{\epsilon}(t, x)}{1 - \gamma} \leq \tfrac{v^{SO, \lambda}(t, x)}{1 - \gamma}$. Therefore, the expression of $L$ in \eqref{L_sup_exp}, together with \eqref{Difference_between_L_in_epsilon} and \eqref{v_estimate_weak}, implies that 
\begin{align}
-C \epsilon^{\frac{1}{3}}\leq \tfrac{L^{\epsilon}(t, x)-L^{SO, \lambda}(t, x)}{1 - \gamma} \leq 0. \label{L_ep_lambda-L_0_lambda}
\end{align}

{\bf (Proof of \eqref{DDv_ep_DDv_0_difference})}

We take derivative with respect to $x$ in \eqref{Dv_with_optimizer}, and put the derivative inside of the expectation (\lemref{YZ_bound1}, $\lVert  L \rVert_\infty<\infty$ and \eqref{Lx_bound} allow us to do this) to obtain the following expression: 
\begin{align}
  v_{x x}^{\epsilon}(t, x) & = e^{- \lambda (T - t)} \EE \Big[ \tfrac{\partial^{2} Z_{T}^{(t, x)}}{\partial x^{2}} \Big]+ \lambda \int_{t}^{T} e^{- \lambda (s - t)} \EE \Big[ \tfrac{\partial^{2} Z_{s}^{(t, x)}}{\partial x^{2}} L^{\epsilon} ( s, Y_{s}^{(t, x)} ) + \tfrac{\partial Z_{s}^{(t, x)}}{\partial x} L^{\epsilon}_{x} ( s, Y_{s}^{(t, x)} ) \tfrac{\partial Y_{s}^{(t, x)}}{\partial x} \Big]  d s \nonumber\\
  &\quad + \tfrac{\partial}{\partial x}\left(\lambda\int_t^T  e^{-\lambda(s-t)} \mathbb{E} \left[ Z_{s}^{(t, x)} L^{\epsilon}_x ( s, Y_{s}^{(t, x)} ) \tfrac{\partial Y_{s}^{(t, x)}}{\partial x} \right] ds \right). \label{DDv_with_optimizer}
\end{align}
This implies that 
\begin{align}
   \left| v_{x x}^{\epsilon}(t, x) - v_{x x}^{SO, \lambda}(t, x) \right| & \leq  \left| \lambda \int_{t}^{T} e^{- \lambda (s - t)} \EE \left[ \tfrac{\partial^{2} Z_{s}^{(t, x)}}{\partial x^{2}} \left( L^{\epsilon} ( s, Y_{s}^{(t, x)} ) - L^{SO,\lambda} ( s, Y_{s}^{(t, x)} )\right)  \right]  d s \right|  \nonumber\\
  & \quad +  \left| \lambda \int_{t}^{T} e^{- \lambda (s - t)} \EE \left[  \tfrac{\partial Z_{s}^{(t, x)}}{\partial x} L^{\epsilon}_{x} ( s, Y_{s}^{(t, x)} ) \tfrac{\partial Y_{s}^{(t, x)}}{\partial x} \right]  d s \right|  \nonumber\\
  &\quad  + \left| \tfrac{\partial}{\partial x}\left(\lambda\int_t^T  e^{-\lambda(s-t)} \mathbb{E} \left[ Z_{s}^{(t, x)} L^{\epsilon}_x ( s, Y_{s}^{(t, x)} ) \tfrac{\partial Y_{s}^{(t, x)}}{\partial x} \right] ds \right) \right|. \nonumber
\end{align}
In the right side of the above inequality, the first term is bounded by $C \lambda \int_{t}^{T} e^{- \lambda (s - t)} \epsilon^{\frac{1}{3}}(s-t)  d s $ due to \eqref{Boundedness_of_Expectation_of_Multiplied_term} and \eqref{L_ep_lambda-L_0_lambda}, the second term is bounded by $C \lambda \int_{t}^{T} e^{- \lambda (s - t)} \epsilon \,d s $ due to  \lemref{YZ_bound1} and \eqref{Lx_bound} and the third term is bounded by $\tfrac{C}{x(1-x)} \epsilon^{\frac{2}{3}}$ due to \lemref{Boundedness_of_various_expectation} (ii) (with $n=1$ and $F=L_x^{\epsilon}$) and \eqref{Lx_bound}. These bounds and $\lambda=c \, \epsilon^{-\frac{2}{3}}$ produce \eqref{DDv_ep_DDv_0_difference}.

\medskip

{\bf (Proof of \eqref{Dv_ep_Dv_0_difference})}

The expression of $L_{x}$ in \eqref{Lx_expression} implies that $L_{x}^\epsilon(t,x)$ is continuously differentiable with respect to $x$, except $x=\uy^{\epsilon}(t)$ and $x=\by^{\epsilon}(t)$. To be specific,
\begin{align}
   L_{xx}^{\epsilon}(t, x) = \begin{cases}
  - \tfrac{\gamma (1 - \gamma) \epsilon^{2} v^{\epsilon}(t, \uy^{\epsilon}(t)) }{(1 + \epsilon x)^{2}} \left( \tfrac{1 + \epsilon x}{1 + \epsilon \uy^{\epsilon}(t)} \right)^{1 - \gamma} &\textrm{if}\quad x \in (0, \uy^{\epsilon}(t)), \\
    v_{x x}^{\epsilon}(t,x) &\textrm{if}\quad x \in ( \uy^{\epsilon}(t), \by^{\epsilon}(t) ), \\
   - \tfrac{\gamma (1 - \gamma) \epsilon^{2} v^{\epsilon}(t, \by^{\epsilon}(t)) }{(1 - \epsilon x)^{2}} \left( \tfrac{1 - \epsilon x}{1 - \epsilon \by^{\epsilon}(t)} \right)^{1 - \gamma} &\textrm{if}\quad x \in (\by^{\epsilon}(t), 1). 
   \end{cases} \label{Lxx_expression}
\end{align}
\lemref{v_concave} (i), \eqref{Dv_0_lambda_order} and \eqref{DDv_ep_DDv_0_difference} imply 
\begin{align}
- \tfrac{C \epsilon^{\frac{2}{3}}}{x(1-x)} \leq  \tfrac{v_{x x}^{\epsilon} \left( t, x \right)}{1-\gamma} \leq 0. \label{vxx_twosides}
\end{align}
We combine \lemref{v_concave} (ii), \eqref{Lxx_expression} and \eqref{vxx_twosides} to obtain
\begin{align}
-\tfrac{C \epsilon^{\frac{2}{3}}}{x(1-x)}\leq \tfrac{ L_{xx}^{\epsilon}(t, x)}{1-\gamma} \leq 0. \label{Lxx_twosides}
\end{align}

To apply \lemref{Boundedness_of_various_expectation} (iii), we set $F(y)=L^{\epsilon} ( s, y) - L^{SO, \lambda} ( s,y)$ and observe that $F'(y)=L^{\epsilon}_x ( s, y)$ and $F''(y)=L^{\epsilon}_{xx} ( s, y)$ except $y\in \{\uy^\epsilon(s), \by^\epsilon(s)\}$. 
We check that \eqref{F_bounds} is satisfied due to \eqref{Range_of_v}, \eqref{Lx_bound} and \eqref{Lxx_twosides}.
Therefore, \lemref{Boundedness_of_various_expectation} (iii) is applicable and \eqref{Derivative_of_expectation_process_2}, together with \eqref{L_ep_lambda-L_0_lambda}, \eqref{Lx_bound} and \eqref{Lxx_twosides}, produces
\begin{align}
\left| \EE \left[ \tfrac{\partial Z_{s}^{(t, x)}}{\partial x} \left( L^{\epsilon} ( s, Y_{s}^{(t, x)}) - L^{SO, \lambda} ( s, Y_{s}^{(t, x)}) \right) \right] \right| \leq C(\epsilon^{\frac{1}{3}} +\epsilon )(s-t)+ C(s-t)^2.  \label{L_bound_1/3}
\end{align}
We use the stochastic representation \eqref{Dv_with_optimizer} to obtain 
\begin{align*}
  \left| v_{x}^{\epsilon}(t, x) - v_{x}^{SO, \lambda}(t, x)  \right| & \leq \left| \lambda \int_{t}^{T} e^{- \lambda (s - t)} \EE \left[ \tfrac{\partial Z_{s}^{(t, x)}}{\partial x} \left( L^{\epsilon} ( s, Y_{s}^{(t, x)} ) - L^{SO, \lambda} 
  ( s, Y_{s}^{(t, x)} 
  ) \right) \right] d s \right| \nonumber \\
  & \quad + \left| \lambda \int_{t}^{T} e^{- \lambda (s - t)} \EE \left[ Z_{s}^{(t, x)} L_{x}^{\epsilon} ( s, Y_{s}^{(t, x)} ) \tfrac{\partial Y_{s}^{(t, x)}}{\partial x} \right] ds \right|. \nonumber
  \end{align*}
The first term in the right-hand side is bounded by $ C\lambda \int_{t}^{T} e^{- \lambda (s - t)}  \big((\epsilon^{\frac{1}{3}} +\epsilon ) (s-t)+(s-t)^2\big) ds $ due to \eqref{L_bound_1/3} and the second term is bounded by $ C  \lambda \int_{t}^{T} e^{- \lambda (s - t)} \epsilon\, ds$ due to \lemref{YZ_bound1} and \eqref{Lx_bound}. These bounds and $\lambda=c \, \epsilon^{-\frac{2}{3}}$ produce \eqref{Dv_ep_Dv_0_difference}.

\medskip

{\bf (Proof of \eqref{v_ep_v_0_difference_more_strong})}

Using \eqref{SO_eq} and  \eqref{L_ep_lambda-L_0_lambda}, we obtain
\begin{align}
  0&\leq  \tfrac{1}{1-\gamma} \left( L^{SO, \lambda} ( s, Y_{s}^{(t, \hat{y}^{SO, \lambda}(t))}) - L^{\epsilon} ( s, Y_{s}^{(t, \hat{y}^{SO, \lambda}(t))} ) \right)\nonumber\\
  & = \tfrac{1}{1-\gamma}\left( v^{SO, \lambda}(s, \hat{y}^{SO, \lambda}(s)) - L^{\epsilon} ( s, Y_{s}^{(t, \hat{y}^{SO, \lambda}(t))}) \right) \nonumber\\
  &\leq \tfrac{1}{1-\gamma}  \left( v^{SO, \lambda}(s, \hat{y}^{SO, \lambda}(s))-v^{\epsilon}(s, \hat{y}^{SO, \lambda}(s))  -   v^{\epsilon}(s, \hat{y}^{SO, \lambda}(s)) g^\epsilon \big(Y_{s}^{(t, \hat{y}^{SO, \lambda}(t))},\hat{y}^{SO, \lambda}(s) \big)  \right)\nonumber  \\
  & \leq C  \epsilon \, \left| Y_{s}^{(t, \hat{y}^{SO, \lambda}(t))}-\hat{y}^{SO, \lambda}(s) \right| + \tfrac{v^{SO, \lambda}(s, \hat{y}^{SO, \lambda}(s)) - v^{\epsilon}(s, \hat{y}^{SO, \lambda}(s))}{1 - \gamma}  \nonumber\\
    & \leq C  \epsilon \, \left(\left| Y_{s}^{(t, \hat{y}^{SO, \lambda}(t))}-\hat{y}^{SO, \lambda}(t) \right| + \tfrac{1}{\lambda} \right)  + \tfrac{v^{SO, \lambda}(s, \hat{y}^{SO, \lambda}(s)) - v^{\epsilon}(s, \hat{y}^{SO, \lambda}(s))}{1 - \gamma} , \label{L_bound_strong}
  \end{align}
where the second inequality is due to \eqref{L_sup_exp} and \eqref{g_def}, the third inequality is due to \eqref{g_bound} and the last inequality is due to \eqref{Order_of_hat_y0_and_mean_variance_ratio}. 

By the same way as we prove \lemref{Boundedness_of_various_expectation} (i), we can check that $\EE[(Y_s^{(t,x)}-x)^2] \leq C(s-t)$. Hence, by \lemref{Boundedness_of_various_expectation} (i) and H$\ddot{o}$lder's inequality, 
we produce
\begin{align}
\EE \left[  Z_{s}^{(t, x)} \cdot \big| Y_{s}^{(t, x)}-x \big|  \right] &\leq \sqrt{(1+C(s-t)) C(s-t)} \leq C e^{C(s-t)} \sqrt{s-t}. \label{Z(Y-x)}  
\end{align}
The stochastic representation in \eqref{v_with_optimizer} produces
\begin{align*}
0&\leq\tfrac{1}{1-\gamma}\Big(v^{SO, \lambda}(t, \hat{y}^{SO, \lambda}(t)) - v^{\epsilon}(t, \hat{y}^{SO, \lambda}(t))\Big)\nonumber \\
& = \lambda \int_{t}^{T} e^{- \lambda (s - t)} \EE \left[ Z_{s}^{(t, \hat{y}^{SO, \lambda}(t))}  \tfrac{1}{1-\gamma} \left( L^{SO, \lambda} ( s, Y_{s}^{(t, \hat{y}^{SO, \lambda}(t))} ) - L^{\epsilon} ( s, Y_{s}^{(t, \hat{y}^{SO, \lambda}(t))} )\right)   \right]ds   \nonumber\\
& \leq \lambda \int_{t}^{T} e^{- \lambda (s - t)} \EE \left[ Z_{s}^{(t, \hat{y}^{SO, \lambda}(t))}   C  \epsilon \, \left(\left| Y_{s}^{(t, \hat{y}^{SO, \lambda}(t))}-\hat{y}^{SO, \lambda}(t) \right| + \tfrac{1}{\lambda} \right)    \right] ds  \nonumber\\
& \quad  +\lambda \int_{t}^{T} e^{- \lambda (s - t)} \EE \left[ Z_{s}^{(t, \hat{y}^{SO, \lambda}(t))}\right]   \tfrac{v^{SO, \lambda}(s, \hat{y}^{SO, \lambda}(s)) - v^{\epsilon}(s, \hat{y}^{SO, \lambda}(s))}{1 - \gamma}     ds \nonumber \\
  &\leq C \lambda \int_{t}^{T} e^{(C - \lambda) (s - t)} \left( \epsilon \left( \sqrt{s - t} + \tfrac{1}{\lambda} \right) +\tfrac{v^{SO, \lambda}(s, \hat{y}^{SO, \lambda}(s)) - v^{\epsilon}(s, \hat{y}^{SO, \lambda}(s))}{1 - \gamma} \right) d s,
\end{align*}
where the first and second inequalities are due to \eqref{L_bound_strong} and the last inequality is due to \eqref{Z(Y-x)} and \eqref{YZ_bound1}.
Since the map $t\mapsto v^{SO, \lambda}(t, \hat{y}^{SO, \lambda}(t)) - v^{\epsilon}(t, \hat{y}^{SO, \lambda}(t))$ is measurable due to \lemref{meas_lem},
by the same way as we treated \eqref{v_ep_v_0_difference_upper_bound}, we apply Gronwall's inequality (see \lemref{Gronwall_lemma}), \lemref{Order_of_exponential_times} and $\lambda=c \, \epsilon^{-\frac{2}{3}}$ to the above inequality to obtain
\begin{align*}
\left\vert v^{SO, \lambda}(t, \hat{y}^{SO, \lambda}(t)) - v^{\epsilon}(t, \hat{y}^{SO, \lambda}(t)) \right\vert \leq C \epsilon^{\frac{2}{3}}.
\end{align*}
The above inequality and \eqref{Dv_ep_Dv_0_difference} produce \eqref{v_ep_v_0_difference_more_strong}: for $(t,x)\in [0,T)\times (0,1)$,
\begin{align*}
\left|  v^{\epsilon}(t, x) - v^{SO, \lambda}(t, x) \right| &= \Big|v^{\epsilon}(t, \zeta) - v^{SO, \lambda}(t, \zeta) + \int_{\zeta}^{x} \left( v_{x}^{\epsilon}(t, z) - v_{x}^{SO, \lambda}(t, z) \right) d z \Big| \bigg|_{\zeta=\hat{y}^{SO, \lambda}(t)} \leq C \epsilon^{\frac{2}{3}}.
\end{align*}

{\bf (Proof of \eqref{Upper_bound_of_DDv})}

By \eqref{Difference_between_L_in_epsilon} and \eqref{v_ep_v_0_difference_more_strong}, we have
\begin{align}
\left| L^{\epsilon} ( t, x) - L^{SO,\lambda} ( t, x )\right| \leq C \epsilon^{\frac{2}{3}}. \label{L_difference_estimate}
\end{align}
Since $\PP\big(Y_{s}^{(t, x)}\notin  \{\uy^{\epsilon}(s), \by^{\epsilon}(s)\} \big) =1$ for $t<s$,  \lemref{YZ_bound1} and \eqref{Lxx_twosides} and the continuity of $x\mapsto L^\epsilon_x(t,x)$ allow us to put the derivative inside of the expectation in \eqref{DDv_with_optimizer}:
\begin{align}
  v_{x x}^{\epsilon}(t, x) & =  \lambda \int_{t}^{T} e^{- \lambda (s - t)} \EE \left[ \tfrac{\partial^{2} Z_{s}^{(t, x)}}{\partial x^{2}} L^{\epsilon} ( s, Y_{s}^{(t, x)} ) + \left(2\tfrac{\partial Z_{s}^{(t, x)}}{\partial x} \tfrac{\partial Y_{s}^{(t, x)}}{\partial x} + Z_{s}^{(t, x)} \tfrac{\partial^2 Y_{s}^{(t, x)}}{\partial x^2}\right)L^{\epsilon}_{x} ( s, Y_{s}^{(t, x)} )  \right]  d s \nonumber\\
&\quad + e^{- \lambda (T - t)} \EE \left[ \tfrac{\partial^{2} Z_{T}^{(t, x)}}{\partial x^{2}} \right] + \lambda\int_t^T  e^{-\lambda(s-t)} \mathbb{E} \left[ Z_{s}^{(t, x)} L^{\epsilon}_{xx} ( s, Y_{s}^{(t, x)} ) \left(\tfrac{\partial Y_{s}^{(t, x)}}{\partial x}\right)^2 \right] ds.
\label{DDv_with_optimizer2}
\end{align}
The above expression produces the following equality:
\begin{align}
  \tfrac{ v_{x x}^{\epsilon}(t, x) - v_{x x}^{SO, \lambda}(t, x)}{1 - \gamma}  & =\lambda\int_t^T  e^{-\lambda(s-t)} \mathbb{E} \left[ Z_{s}^{(t, x)} \tfrac{L^{\epsilon}_{xx} ( s, Y_{s}^{(t, x)} )}{1-\gamma}\left(\tfrac{\partial Y_{s}^{(t, x)}}{\partial x}\right)^2 \right] ds \nonumber \\
  &\quad + \lambda \int_{t}^{T} e^{- \lambda (s - t)} \EE \left[ \tfrac{\partial^{2} Z_{s}^{(t, x)}}{\partial x^{2}} \tfrac{L^{\epsilon} ( s, Y_{s}^{(t, x)} )- L^{SO,\lambda} ( s, Y_{s}^{(t, x)} )}{1-\gamma}\right] ds  \nonumber \\
  &\quad +\lambda \int_{t}^{T} e^{- \lambda (s - t)} \EE \left[  \left(2\tfrac{\partial Z_{s}^{(t, x)}}{\partial x} \tfrac{\partial Y_{s}^{(t, x)}}{\partial x} + Z_{s}^{(t, x)} \tfrac{\partial^2 Y_{s}^{(t, x)}}{\partial x^2}\right)\tfrac{L^{\epsilon}_{x} ( s, Y_{s}^{(t, x)} )}{1-\gamma}  \right]  d s. \label{vxx_difference}
\end{align}
In the right-hand side of the above equality, the first term is bounded above by $0$ due to \eqref{Lxx_twosides}, the second term is bounded above by $C\lambda\int_t^T  e^{-\lambda(s-t)}  \epsilon^{\frac{2}{3}}(s-t)ds$ due to \eqref{Boundedness_of_Expectation_of_Multiplied_term} and \eqref{L_difference_estimate} and the third term is bounded above by $C  \lambda\int_t^T  e^{-\lambda(s-t)}\epsilon\, ds$ due to and \eqref{Lx_bound} and \lemref{YZ_bound1}. These bounds and $\lambda=c \, \epsilon^{-\frac{2}{3}}$, together with $1-e^{-\lambda(T-t)} \leq  \min \{ \lambda (T - t), 1 \}$, produce
\begin{align*}
- \tfrac{ v_{x x}^{\epsilon}(t, x)}{1 - \gamma}
\geq - \tfrac{ v_{x x}^{SO, \lambda}(t, x)}{1 - \gamma}- C \epsilon \min \{ \lambda (T - t), 1 \}.
\end{align*}
We combine the above inequality and \eqref{DDv_0_lambda_order} to conclude \eqref{Upper_bound_of_DDv}. 

\medskip

{\bf (Proof of \eqref{DDDv_ep_DDDv_0_difference})}

By the same way as we obtain \eqref{DDv_with_optimizer}, we take derivative with respect to $x$ in  \eqref{DDv_with_optimizer2}:
\begin{align*}
& v_{xxx}^{\epsilon}(t, x)  =   e^{- \lambda (T - t)} \EE \left[ \tfrac{\partial^{3} Z_{T}^{(t, x)}}{\partial x^{3}} \right] + \tfrac{\partial}{\partial x} \left(\lambda\int_t^T  e^{-\lambda(s-t)} \mathbb{E} \left[ Z_{s}^{(t, x)} L^{\epsilon}_{xx} ( s, Y_{s}^{(t, x)} ) \left(\tfrac{\partial Y_{s}^{(t, x)}}{\partial x}\right)^2 \right] ds\right)\\
& + \lambda \int_{t}^{T} e^{- \lambda (s - t)} \EE \bigg[ \tfrac{\partial^{3} Z_{s}^{(t, x)}}{\partial x^{3}} L^{\epsilon} ( s, Y_{s}^{(t, x)} ) + \left(2\tfrac{\partial Z_{s}^{(t, x)}}{\partial x} \tfrac{\partial Y_{s}^{(t, x)}}{\partial x} + Z_{s}^{(t, x)} \tfrac{\partial^2 Y_{s}^{(t, x)}}{\partial x^2}\right)L^{\epsilon}_{xx} ( s, Y_{s}^{(t, x)} ) \tfrac{\partial Y_{s}^{(t, x)}}{\partial x} \\
&\qquad\qquad\qquad\qquad  + \left(3\tfrac{\partial^2 Z_{s}^{(t, x)}}{\partial x^2} \tfrac{\partial Y_{s}^{(t, x)}}{\partial x} 
+ 3\tfrac{\partial Z_{s}^{(t, x)}}{\partial x} \tfrac{\partial^2 Y_{s}^{(t, x)}}{\partial x^2}
+ Z_{s}^{(t, x)} \tfrac{\partial^3 Y_{s}^{(t, x)}}{\partial x^3} \right) L^{\epsilon}_{x} ( s, Y_{s}^{(t, x)} )   \bigg]  d s.
\end{align*}
This implies that
\begin{align*}
& \left| v_{xxx}^{\epsilon}(t, x)-v_{xxx}^{SO, \lambda}(t, x) \right|  
\leq \left| \tfrac{\partial}{\partial x} \left(\lambda\int_t^T  e^{-\lambda(s-t)} \mathbb{E} \left[ Z_{s}^{(t, x)} L^{\epsilon}_{xx} ( s, Y_{s}^{(t, x)} ) \left(\tfrac{\partial Y_{s}^{(t, x)}}{\partial x}\right)^2 \right] ds\right) \right|\\
&\qquad\qquad\qquad\qquad\qquad +\bigg| \lambda \int_{t}^{T} e^{- \lambda (s - t)} \EE \bigg[ \tfrac{\partial^{3} Z_{s}^{(t, x)}}{\partial x^{3}} \left( L^{\epsilon} ( s, Y_{s}^{(t, x)} )- L^{SO,\lambda} ( s, Y_{s}^{(t, x)} )\right) \\
&\qquad\qquad\qquad\qquad\qquad\qquad + \left(2\tfrac{\partial Z_{s}^{(t, x)}}{\partial x} \tfrac{\partial Y_{s}^{(t, x)}}{\partial x} + Z_{s}^{(t, x)} \tfrac{\partial^2 Y_{s}^{(t, x)}}{\partial x^2}\right)L^{\epsilon}_{xx} ( s, Y_{s}^{(t, x)} ) \tfrac{\partial Y_{s}^{(t, x)}}{\partial x} \\
&\qquad\qquad\qquad\qquad\qquad\qquad  + \left(3\tfrac{\partial^2 Z_{s}^{(t, x)}}{\partial x^2} \tfrac{\partial Y_{s}^{(t, x)}}{\partial x} 
+ 3\tfrac{\partial Z_{s}^{(t, x)}}{\partial x} \tfrac{\partial^2 Y_{s}^{(t, x)}}{\partial x^2}
+ Z_{s}^{(t, x)} \tfrac{\partial^3 Y_{s}^{(t, x)}}{\partial x^3} \right) L^{\epsilon}_{x} ( s, Y_{s}^{(t, x)} )   \bigg]  d s \bigg|.
\end{align*}
In the right-hand side of the above inequality, the first integral is bounded by $\tfrac{C \epsilon^{\frac{2}{3}}\sqrt{\lambda}}{x^2(1-x)^2}$
due to \lemref{Boundedness_of_various_expectation} (ii) (with $n=2$ and $F=L_{xx}^{\epsilon}$) and \eqref{Lxx_twosides} and the second integral is bounded by $C \lambda \int_{t}^{T} e^{- \lambda (s - t)} (\epsilon^{\frac{2}{3}}+ \tfrac{1}{x(1-x)}\epsilon^{\frac{2}{3}} + \epsilon ) ds$ due to \lemref{YZ_bound1}, \eqref{L_difference_estimate}, \eqref{Lxx_twosides} and \eqref{Lx_bound}. These bounds and $\lambda=c \, \epsilon^{-\frac{2}{3}}$ produce \eqref{DDDv_ep_DDDv_0_difference}.
\end{proof}

\begin{lemma} \label{Various_orders2}
Let \assref{ass} hold. 
Let $C>0$ be a generic constant independent of $(t,s, x,\epsilon)\in [0,T)\times [t,T)\times(0,1) \times (0,1)$ (also independent of $\lambda$ due to relation $\lambda=c \, \epsilon^{-\frac{2}{3}}$) that may differ line by line. 
Recall that $\underline{t}^{\epsilon}$ and $\overline{t}^{\epsilon}$ appear in \eqref{ts}. 

\noindent(i) $v_{xt}^{\epsilon}(t, x)=v_{tx}^{\epsilon}(t, x)$ exist and are continuous in $(t,x)\in [0,T) \times (0,1)$. Furthermore,  
\begin{align}
&\left\vert v_{t}^{\epsilon}(t,x) \right\vert\leq C, \quad \left\vert v_{xt}^{\epsilon}(t,x) \right\vert  \leq C. \label{v_t_boundedness}
\end{align}

\noindent(ii) There exists $\epsilon_0>0$ such that for $\epsilon\in (0,\epsilon_0]$ and $t\in [0,\min\{\underline{t}^{\epsilon},\overline{t}^\epsilon\})$,
\begin{equation}
\begin{split}\label{Derivative_of_first_order_condition_in_time}
&\uy^{\epsilon}_t(t):=\tfrac{\partial \uy^{\epsilon}(t)}{\partial t}=\frac{\tfrac{ v_{x t}^{\epsilon}(t, \uy^{\epsilon}(t))}{1 - \gamma}- \frac{\epsilon  v_{t}^{\epsilon}(t, \uy^{\epsilon}(t))}{1 + \epsilon \uy^{\epsilon}(t)}}{- \frac{ v_{x x}^{\epsilon}(t, \uy^{\epsilon}(t))}{1 - \gamma} - \frac{\gamma \epsilon^{2}  v^{\epsilon}(t, \uy^{\epsilon}(t))}{(1 + \epsilon \uy^{\epsilon}(t))^{2}}}\quad 
\by^{\epsilon}_t(t):=\tfrac{\partial \by^{\epsilon}(t)}{\partial t}=\frac{\tfrac{ v_{x t}^{\epsilon}(t, \by^{\epsilon}(t))}{1 - \gamma}+ \frac{\epsilon  v_{t}^{\epsilon}(t, \by^{\epsilon}(t))}{1 - \epsilon \by^{\epsilon}(t)}}{- \frac{ v_{x x}^{\epsilon}(t, \by^{\epsilon}(t))}{1 - \gamma} - \frac{\gamma \epsilon^{2}  v^{\epsilon}(t, \by^{\epsilon}(t))}{(1 - \epsilon \by^{\epsilon}(t))^{2}}}.  \\
\end{split}
\end{equation}
Obviously, $\uy^{\epsilon}_t(t)=0$ for $t\in (\underline{t}^{\epsilon},T)$ and $\by^{\epsilon}_t(t)=0$ for $t\in (\overline{t}^{\epsilon},T)$.

\noindent(iii) Recall that $L^{\epsilon}$ appears in \eqref{Supremum_term_L}. For $\epsilon\in(0,\epsilon_0]$ with $\epsilon_0$ in (ii) and $t\in [0,T) \setminus\{\underline{t}^{\epsilon}, \overline{t}^{\epsilon}  \}$,
\begin{align}
&L_t^{\epsilon}(t,x)=
\begin{cases}
v_t^{\epsilon}(t, \uy^{\epsilon}(t)) \left( \tfrac{1 + \epsilon x}{1 + \epsilon \uy^{\epsilon}(t)} \right)^{1 - \gamma} &\textrm{if}\quad x\in (0,\uy^{\epsilon}(t)) \\
 v_{t}^{\epsilon} \left( t, x \right) &\textrm{if}\quad  x \in ( \uy^{\epsilon}(t), \by^{\epsilon}(t) )   \\
  v_t^{\epsilon}(t, \by^{\epsilon}(t)) \left( \tfrac{1 - \epsilon x}{1 - \epsilon \by^{\epsilon}(t)} \right)^{1 - \gamma} &\textrm{if}\quad  x \in (\by^{\epsilon}(t), 1)
\end{cases}  \label{Lt_expression}\\
&L_{xt}^{\epsilon}(t,x)=
\begin{cases}
\tfrac{\epsilon (1 - \gamma) v_t^{\epsilon}(t, \uy^{\epsilon}(t))}{1 + \epsilon x} \left( \tfrac{1 + \epsilon x}{1 + \epsilon \uy^{\epsilon}(t)} \right)^{1 - \gamma} &\textrm{if}\quad x\in (0,\uy^{\epsilon}(t)) \\
 v_{xt}^{\epsilon} \left( t, x \right) &\textrm{if}\quad  x \in ( \uy^{\epsilon}(t), \by^{\epsilon}(t) )   \\
  - \tfrac{\epsilon (1 - \gamma) v_t^{\epsilon}(t, \by^{\epsilon}(t))}{1 - \epsilon x} \left( \tfrac{1 - \epsilon x}{1 - \epsilon \by^{\epsilon}(t)} \right)^{1 - \gamma} &\textrm{if}\quad  x \in (\by^{\epsilon}(t), 1)
\end{cases}  \label{Lxt_expression}\\
&\left| L_t^{\epsilon}(t,x)\right|\leq C, \quad  \left| L_{xt}^{\epsilon}(t,x)\right| \leq C \quad \textrm{for}\quad x\in (0,1). \label{Lt_bound}
\end{align}

\noindent(iv) For $(t,x,\epsilon)\in [0,T)\times (0,1) \times (0,\epsilon_0]$ with $\epsilon_0$ in (ii), 
\begin{align}
&\left| v_{xt}^{\epsilon}(t,x)
- \lambda \int_t^T e^{-\lambda(s-t)} \EE\Big[ Z_{s}^{(t, x)}  \tfrac{\partial Y_{s}^{(t, x)}}{\partial x} v_{xt}^{\epsilon}(s, Y_s^{(t,x)}) \cdot 1_{ \left\{  \uy^{\epsilon}(s)<Y_{s}^{(t, x)}< \by^{\epsilon}(s) \right\} }  \Big] ds \right| \nonumber\\
& \leq C\big( e^{-\lambda(T-t)} + \epsilon^{\frac{2}{3}}\big).
\end{align}
\end{lemma}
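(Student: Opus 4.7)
The plan is to derive an almost-self-referential integral equation for $v_{xt}^{\epsilon}(t,x)$ by differentiating the representation \eqref{Dv_with_optimizer} of $v_x^{\epsilon}$ with respect to $t$, and then bound every term except the target integral by $C\bigl(e^{-\lambda(T-t)}+\epsilon^{2/3}\bigr)$. The key structural observation is that $(Z_s^{(t,x)}, Y_s^{(t,x)}, \partial_x Z_s^{(t,x)}, \partial_x Y_s^{(t,x)})$ is built from the increment $B_s-B_t$, so its joint law depends only on $s-t$; substituting $r=s-t$ in \eqref{Dv_with_optimizer} confines the $t$-dependence to $L^{\epsilon}(s,\cdot)$ and $L^{\epsilon}_x(s,\cdot)$, so $\partial_t$ pulls through the expectations and produces $L^{\epsilon}_t$ and $L^{\epsilon}_{xt}$. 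A clean Leibniz computation, combined with $v^{\epsilon}(T,\cdot)\equiv 1$ and $v^{\epsilon}_x(T,\cdot)\equiv 0$ (which force $L^{\epsilon}(T,\cdot)\equiv 1$ and $L^{\epsilon}_x(T,\cdot)\equiv 0$), causes the would-be boundary contribution at $s=T$ to cancel the $\lambda e^{-\lambda(T-t)}\EE[\partial_x Z_T^{(t,x)}]$ term coming from differentiating the exponential prefactor, leaving
\begin{align*}
v_{xt}^{\epsilon}(t,x)
&= -e^{-\lambda(T-t)}\tfrac{d}{dr}\EE\bigl[\partial_x Z_{t+r}^{(t,x)}\bigr]\Big|_{r=T-t} \\
&\quad + \lambda\int_t^T e^{-\lambda(s-t)}\EE\bigl[\partial_x Z_s^{(t,x)}\, L^{\epsilon}_t(s,Y_s^{(t,x)}) + Z_s^{(t,x)}\, L^{\epsilon}_{xt}(s,Y_s^{(t,x)})\partial_x Y_s^{(t,x)}\bigr]\,ds.
\end{align*}

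The first term is $O(e^{-\lambda(T-t)})$ because \lemref{Boundedness_of_various_expectation}(iii) applied with the trivial test function $F\equiv 1$ (the hypothesis \eqref{F_bounds} being automatic) together with the time-homogeneity identity $\partial_r\EE[\partial_x Z_{t+r}^{(t,x)}]=-\partial_t\EE[\partial_x Z_s^{(t,x)}]\big|_{s=t+r}$ yields a uniform bound on that derivative. For the main integral, I use \eqref{Lxt_expression} to split $L^{\epsilon}_{xt}(s,Y_s^{(t,x)})$ into the no-trade-region piece $v_{xt}^{\epsilon}(s,Y_s^{(t,x)})\,1_{\{Y_s^{(t,x)}\in(\uy^{\epsilon}(s),\by^{\epsilon}(s))\}}$---which reproduces exactly the integral appearing in the statement---and the transaction-region piece; on the latter the explicit formulas in \eqref{Lxt_expression} combined with $|v_t^{\epsilon}|\leq C$ from \eqref{v_t_boundedness} give $|L^{\epsilon}_{xt}(s,y)|\leq C\epsilon$, so \lemref{YZ_bound1} and $\lambda\int_0^{T-t} e^{-\lambda r}dr\leq 1$ make the off-region contribution $O(\epsilon)$, which is absorbed by $O(\epsilon^{2/3})$.

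The main obstacle is estimating the $L^{\epsilon}_t$-integral by $O(\epsilon^{2/3})$. I would invoke \lemref{Boundedness_of_various_expectation}(iii) with $F(y):=L^{\epsilon}_t(s,y)$: the bounds $|F|$ and $|y(1-y)F'(y)|=|y(1-y)L^{\epsilon}_{xt}(s,y)|$ are uniform in $(s,\epsilon)$ by \eqref{Lt_bound}, and \eqref{Derivative_of_expectation_process_2} then yields $\bigl|\EE[\partial_x Z_s^{(t,x)} L^{\epsilon}_t(s,Y_s^{(t,x)})]\bigr|\leq C(s-t)$; integrating against $\lambda e^{-\lambda(s-t)}$ gives the required $O(1/\lambda)=O(\epsilon^{2/3})$ bound. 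The delicate step is verifying the third hypothesis $\esssup_y|y^2(1-y)^2 L^{\epsilon}_{xxt}(s,y)|\leq C$ uniformly in $(s,\epsilon)$: on the transaction regions this is immediate from \eqref{Lxt_expression} (giving an $O(\epsilon^2)$ pointwise bound using \eqref{v_t_boundedness}), but on the no-trade region it reduces to a uniform estimate $|y^2(1-y)^2 v^{\epsilon}_{xxt}(s,y)|\leq C$, not stated in the excerpt. I would derive this by differentiating the interior PDE $v_t^{\epsilon}=-\bigl(y(1-y)(\mu-\gamma\sigma^2 y)v_x^{\epsilon}+\tfrac{\sigma^2 y^2(1-y)^2}{2}v_{xx}^{\epsilon}+Q(y)v^{\epsilon}\bigr)$ (valid in the no-trade region because $L^{\epsilon}=v^{\epsilon}$ there) twice in $y$, solving algebraically for $v^{\epsilon}_{xxt}$, and invoking \lemref{v_classical_solution_and_properties}(iii), \eqref{vxx_twosides}, and \eqref{DDDv_ep_DDDv_0_difference}, supplemented by an auxiliary weighted estimate on $v^{\epsilon}_{xxxx}$ derived by one more $x$-differentiation of \eqref{v_with_optimizer} along the lines of \lemref{Boundedness_of_various_expectation}(ii). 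Combining the three contributions and using $\epsilon\leq\epsilon^{2/3}$ for small $\epsilon$ yields the claimed inequality.
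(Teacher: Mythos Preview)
Your overall architecture---time-shift $s\mapsto r=s-t$, Leibniz differentiation, cancellation of the boundary term at $r=T-t$ against $\lambda e^{-\lambda(T-t)}\EE[\partial_x Z_T^{(t,x)}]$, and the split of $L_{xt}^\epsilon$ into the no-trade piece $v_{xt}^\epsilon\,1_{\{\cdot\}}$ plus an $O(\epsilon)$ transaction-region remainder---is correct and coincides with the paper's treatment of the $Z\,\partial_x Y\,L_x^\epsilon$ part.

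The gap is in your handling of the $\partial_x Z\cdot L^\epsilon$ part. You propose to apply \lemref{Boundedness_of_various_expectation}(iii) with $F(y)=L_t^\epsilon(s,y)$. That lemma requires $F'$ to be \emph{continuous} on $(0,1)$, but $F'=L_{xt}^\epsilon(s,\cdot)$ is not: from \eqref{Lxt_expression} the one-sided limits at $x=\uy^\epsilon(s)$ are $\tfrac{\epsilon(1-\gamma)v_t^\epsilon(s,\uy^\epsilon(s))}{1+\epsilon\uy^\epsilon(s)}$ and $v_{xt}^\epsilon(s,\uy^\epsilon(s))$, whose difference equals $(1-\gamma)\,\uy_t^\epsilon(s)\bigl(-\tfrac{v_{xx}^\epsilon}{1-\gamma}-\tfrac{\gamma\epsilon^2 v^\epsilon}{(1+\epsilon\uy^\epsilon)^2}\bigr)$ by \eqref{Derivative_of_first_order_condition_in_time} and is generically nonzero. (Contrast this with $L_x^\epsilon$, which \emph{is} continuous across the free boundaries precisely because of the first-order condition.) Your proposed remedy---a uniform bound on $y^2(1-y)^2 v_{xxt}^\epsilon$ via a $v_{xxxx}^\epsilon$ estimate---only addresses the $F''$ hypothesis and does nothing for this $F'$-continuity failure; the It\^o-formula step in the proof of \lemref{Boundedness_of_various_expectation}(iii) would then pick up local-time contributions that you have not controlled.

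The paper sidesteps the problem by treating the two integrands asymmetrically. For $Z\,\partial_x Y\,L_x^\epsilon$ it does exactly what you do (time-shift, then $\partial_t$ hits $L_x^\epsilon$, yielding $L_{xt}^\epsilon$). For $\partial_x Z\,L^\epsilon$ it does \emph{not} time-shift: it leaves the integral as $\lambda\int_t^T e^{-\lambda(s-t)}\EE[\partial_x Z_s^{(t,x)}L^\epsilon(s,Y_s^{(t,x)})]\,ds$ and lets $\partial_t$ act on the exponential prefactor (giving a $\lambda^2$ integral) and on the law of $(\partial_x Z_s^{(t,x)},Y_s^{(t,x)})$ via \eqref{Derivative_of_expectation_process_1}. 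In that second piece $L^\epsilon(s,\cdot)$ is frozen in $t$, so no $L_t^\epsilon$ ever appears; one applies \lemref{Boundedness_of_various_expectation}(iii) with $F(y)=L^\epsilon(s,y)-v^0(s)$, for which $F'=L_x^\epsilon$ \emph{is} continuous and $|F|+|x(1-x)F'|\le C\epsilon^{2/3}$ by \eqref{L-v_estimate2} and \eqref{Lx_bound}. The leftover $v^0$ part is handled by an explicit computation using $\partial_t v^0=-Q(y_M)v^0$. This yields the $O(\epsilon^{2/3})$ bound without any fourth-derivative or $v_{xxt}^\epsilon$ input.
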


\begin{proof}
(i) \lemref{Various_orders}, \eqref{merton_pde}, \eqref{Order_of_v_0_lambda_and_v_infinity} and \eqref{Dv_0_lambda_order} imply that for $(t,x)\in [0,T)\times (0,1)$,
\begin{equation}\begin{split}\label{various_vs}
\begin{dcases}
\left|v^{\epsilon}(t,x)\right| \leq C, \quad \left|v_x^{\epsilon}(t,x)\right| \leq C \epsilon^{\frac{2}{3}}, \\
 \left|x(1-x) v_{xx}^{\epsilon}(t,x)\right| \leq C \epsilon^{\frac{2}{3}}, \quad \left|x^2(1-x)^2v_{xxx}^{\epsilon}(t,x)\right| \leq C \epsilon^{\frac{1}{3}}. 
\end{dcases}
\end{split}
\end{equation}
Since $L^{\epsilon}(t,x)=v^{\epsilon}(t,x)$ for $x\in [\uy^\epsilon(t),\by^\epsilon(t)]$, the mean value theorem and the bounds of $L_x^\epsilon$ and $v_x^\epsilon$ in \eqref{Lx_bound} and \eqref{various_vs} imply
\begin{align}
\left| L^{\epsilon}(t,x)-v^{\epsilon}(t,x)\right|  \leq C \epsilon^{\frac{2}{3}} \quad \textrm{for}\quad (t,x)\in [0,T)\times (0,1). \label{L-v_estimate} 
\end{align}
By \eqref{SO_eq}, \eqref{L_difference_estimate} and \eqref{Order_of_v_0_lambda_and_v_infinity}, we observe that for $(t,x)\in [0,T)\times (0,1)$,
\begin{align}
\left| L^{\epsilon}(t,x)-v^{0}(t)\right| =\left| L^{\epsilon}(t,x)-L^{SO, \lambda}(t,x) + v^{SO, \lambda}(t,\hat y^{SO,\lambda} (t))-v^{0}(t)\right| \leq C \epsilon^{\frac{2}{3}}. \label{L-v_estimate2} 
\end{align}

Let $f^{\epsilon}: [0,T)\times (0,1)\to \R$ be defined as 
\begin{align}
  f^{\epsilon}(t, x) & := x (1 - x) \left( \mu - \gamma \sigma^{2} x \right) v_{x}^{\epsilon}(t, x) + \tfrac{\sigma^{2}x^{2} (1 - x)^{2}}{2}  v_{x x}^{\epsilon}(t, x) + \lambda ( L^{\epsilon}(t,x)-v^{\epsilon}(t,x)). \label{def_f}
\end{align}
Then, \eqref{various_vs}, \eqref{L-v_estimate} and \eqref{Lx_bound} imply that for $(t,x)\in [0,T)\times (0,1)$,
\begin{align}
\left|   f^{\epsilon}(t, x) \right| \leq C, \quad \left|   f_x^{\epsilon}(t, x) \right| \leq C. \label{f_estimate}
\end{align}
From \eqref{Main_PDE} and \eqref{def_f}, we obtain the bound of $v_t^\epsilon$ in \eqref{v_t_boundedness}: for $(t,x)\in [0,T)\times (0,1)$,
\begin{align}
 \left| v_{t}^{\epsilon}(t, x) \right| = \left| -Q(x)v^{\epsilon}(t, x)   - f^{\epsilon}(t, x)\right| \leq C,
\end{align}
where the inequality is due to \eqref{various_vs} and \eqref{f_estimate}.

Using \eqref{def_f}, we rewrite \eqref{Main_PDE} as
\begin{align}
  0 & = \tfrac{\partial}{\partial t} \left( e^{Q(x) t} v^{\epsilon}(t, x) \right) + e^{Q(x) t} f^{\epsilon}(t, x)\quad \textrm{with}\quad  v^{\epsilon}(T, x)=1 \nonumber\\
 &\Longrightarrow\quad  v^{\epsilon}(t, x) = e^{Q(x) (T - t)} + \int_{t}^{T} e^{Q(x) (s - t)} f^{\epsilon}(s, x) d s. \label{v_integral_form}
\end{align}
We differentiate \eqref{v_integral_form} with respect to $t$ and $x$ ($x$ and $t$, respectively) to obtain
\begin{align}
  v_{x t}^{\epsilon}(t, x) =v_{t x}^{\epsilon}(t, x)=- Q(x) v_{x}^{\epsilon}(t, x) - Q'(x) v^{\epsilon}(t, x) - f_{x}^{\epsilon}(t, x), \label{vxt_expression}
\end{align}
where the differentiations are justified by the bounds in \eqref{various_vs} and \eqref{f_estimate}. The above expression shows the continuity of $ v_{x t}^{\epsilon}$ and the boundedness of $v_{xt}^{\epsilon}$ in \eqref{v_t_boundedness} due to \eqref{various_vs} and \eqref{f_estimate}.

\medskip

(ii) We prove the result for $\uy^{\epsilon}_t(t)$ ($\by^{\epsilon}_t(t)$ case can be treated by the same way). Observe that
\begin{align*}
&\li_{\epsilon \downarrow 0}  \inf_{t \in [0, \underline{t}^{\epsilon})} \left( - \tfrac{\lambda v_{x x}^{\epsilon}(t, \uy^{\epsilon}(t))}{1 - \gamma} - \tfrac{\gamma \epsilon^{2} \lambda v^{\epsilon}(t, \uy^{\epsilon}(t))}{(1 + \epsilon \uy^{\epsilon}(t))^{2}}\right)/\epsilon^{\frac{1}{3}} \geq \li_{\epsilon \downarrow 0}  \inf_{t \in [0, T-\underline{C}\epsilon)}  - \tfrac{\lambda v_{x x}^{\epsilon}(t, \uy^{\epsilon}(t))}{(1 - \gamma)\epsilon^{\frac{1}{3}} } \geq C>0,
\end{align*}
where the first inequality is due to \lemref{NT_boundaries_hitting_times_bound} and \eqref{Range_of_v} and the second inequality is due to \eqref{Upper_bound_of_DDv}. The above observation implies that there exists $\epsilon_0>0$ such that
\begin{align}\label{need_positive}
- \tfrac{ v_{x x}^{\epsilon}(t, \uy^{\epsilon}(t))}{1 - \gamma} - \tfrac{\gamma \epsilon^{2}  v^{\epsilon}(t, \uy^{\epsilon}(t))}{(1 + \epsilon \uy^{\epsilon}(t))^{2}}\geq C \epsilon>0 \quad \textrm{for} \quad (t,\epsilon)\in [0,\underline{t}^{\epsilon})\times (0,\epsilon_0].
\end{align}
By \lemref{Boundaries_of_NT_region}, we have $ \tfrac{v_{x}^{\epsilon}(t, \uy^{\epsilon}(t))}{1 - \gamma} = \tfrac{\epsilon v^{\epsilon}(t, \uy^{\epsilon}(t))}{1 + \epsilon \uy^{\epsilon}(t)}$ for $ t\in [0,\underline{t}^{\epsilon})$.
For $(t,\epsilon)\in [0,\underline{t}^{\epsilon})\times (0,\epsilon_0]$, we apply the implicit function theorem to this equality (justified by \eqref{need_positive}) and conclude that $\uy^{\epsilon}_t(t)$ exists and is as in \eqref{Derivative_of_first_order_condition_in_time}. 

\medskip

(iii) We differentiate \eqref{Supremum_term_L} and \eqref{Lx_expression} with respect to $t$ and apply \eqref{Derivative_of_first_order_condition_in_time}, then we obtain \eqref{Lt_expression} and \eqref{Lxt_expression}.\footnote{Alternatively, considering the maximization problems in \eqref{uy_oy_argmax}, one may apply a suitable version of the envelope theorem to obtain the result.} 
These expressions and \eqref{v_t_boundedness} imply \eqref{Lt_bound}.

\medskip

(iv) As before, in this part of the proof, one can justify the interchanges of differentiations and integrations using suitable bounds such as \lemref{Boundedness_of_various_expectation} (i) and \eqref{Lt_bound}.

 Since $\lim_{t\uparrow T} \uy^{\epsilon}(t)=0$ and $\lim_{t\uparrow T} \by^{\epsilon}(t)=1$,
 \eqref{Supremum_term_L}, \eqref{Lx_expression}, $v^\epsilon(T,x)=1$ and $v_x^\epsilon(T,x)=0$ imply
\begin{align}
\lim_{t\uparrow T} L^\epsilon(t,x)=1, \quad \lim_{t\uparrow T} L_x^\epsilon(t,x)=0. \label{L(T)}
\end{align}
Since $(Z_{s}^{(t,x)}, Y_s^{(t,x)})$ and $(Z_{s-t}^{(0,x)}, Y_{s-t}^{(0,x)})$ have the same probability distribution, we have
\begin{align*}
 \lambda \int_{t}^{T} e^{- \lambda (s-t)} \EE \left[ Z_{s}^{(t, x)}  \tfrac{\partial Y_{s}^{(t, x)}}{\partial x} L_{x}^{\epsilon} (s, Y_{s}^{(t, x)}) \right] d s =  \lambda \int_{0}^{T-t} e^{- \lambda u} \EE \left[ Z_{u}^{(0, x)}  \tfrac{\partial Y_{u}^{(0, x)}}{\partial x} L_{x}^{\epsilon} (t+u, Y_{u}^{(0, x)}) \right] d u. 
\end{align*}
We differentiate above with respect to $t$, then \eqref{L(T)} and the continuity of $t\mapsto L_x^\epsilon(t,x)$ imply
\begin{align}
& \tfrac{\partial}{\partial t} \left( \lambda \int_{t}^{T} e^{- \lambda (s-t)} \EE \left[ Z_{s}^{(t, x)}  \tfrac{\partial Y_{s}^{(t, x)}}{\partial x} L_{x}^{\epsilon} (s, Y_{s}^{(t, x)}) \right] d s \right)  \nonumber \\
& = \lambda \int_{0}^{T-t} e^{- \lambda u} \EE \left[ Z_{u}^{(0, x)}  \tfrac{\partial Y_{u}^{(0, x)}}{\partial x}  L_{xt}^{\epsilon} (t+u, Y_{u}^{(0, x)}) \right] d u \nonumber\\
&=\lambda \int_{t}^{T} e^{- \lambda (s-t)} \EE \left[ Z_{s}^{(t, x)}  \tfrac{\partial Y_{s}^{(t, x)}}{\partial x}  L_{xt}^{\epsilon} (s, Y_{s}^{(t, x)}) \right] d s. \label{vxt2}
\end{align}
We differentiate \eqref{Dv_with_optimizer} with respect to $t$ and obtain 
\begin{align}
   v_{x t}^{\epsilon}(t, x)& = \lambda e^{- \lambda (T - t)} \EE \Big[ \tfrac{\partial Z_{T}^{(t, x)}}{\partial x} \Big] + e^{- \lambda (T - t)} \tfrac{\partial}{\partial t} \EE \Big[ \tfrac{\partial Z_{T}^{(t, x)}}{\partial x} \Big] \nonumber\\
  &\quad + \tfrac{\partial}{\partial t} \left( \lambda \int_{t}^{T} e^{- \lambda (s-t)} \EE \left[ Z_{s}^{(t, x)}  \tfrac{\partial Y_{s}^{(t, x)}}{\partial x} L_{x}^{\epsilon} (s, Y_{s}^{(t, x)}) + \tfrac{\partial Z_{s}^{(t, x)}}{\partial x} L^{\epsilon} ( s, Y_{s}^{(t, x)} )  \right] d s \right)\nonumber\\
&=e^{- \lambda (T - t)} \tfrac{\partial}{\partial t} \EE \Big[ \tfrac{\partial Z_{T}^{(t, x)}}{\partial x} \Big] + \lambda \int_{t}^{T} e^{- \lambda (s-t)} \EE \left[ Z_{s}^{(t, x)}  \tfrac{\partial Y_{s}^{(t, x)}}{\partial x} L_{xt}^{\epsilon} (s, Y_{s}^{(t, x)}) \right] ds \nonumber  \\
  &\quad + \tfrac{\partial}{\partial t} \left( \lambda \int_{t}^{T} e^{- \lambda (s-t)} \EE \left[  \tfrac{\partial Z_{s}^{(t, x)}}{\partial x} L^{\epsilon} ( s, Y_{s}^{(t, x)} )  \right] d s \right)+ \lambda e^{- \lambda (T - t)} \EE \Big[ \tfrac{\partial Z_{T}^{(t, x)}}{\partial x} \Big], \label{vxt1} 
\end{align}
where the second equality is due to \eqref{vxt2}. Using $\tfrac{\partial Z_{t}^{(t, x)}}{\partial x}=0$, we obtain
\begin{align}
&\left| \tfrac{\partial}{\partial t} \left( \lambda \int_{t}^{T} e^{- \lambda (s-t)} \EE \left[  \tfrac{\partial Z_{s}^{(t, x)}}{\partial x} \left( L^{\epsilon} ( s, Y_{s}^{(t, x)} ) - v^{0}(s)\right)  \right] d s \right) \right| \nonumber\\
&= \bigg| \lambda^2 \int_{t}^{T} e^{- \lambda (s-t)} \EE \left[  \tfrac{\partial Z_{s}^{(t, x)}}{\partial x} \left( L^{\epsilon} ( s, Y_{s}^{(t, x)} ) - v^{0}(s)\right)  \right] d s \nonumber\\
&\qquad + \lambda \int_{t}^{T} e^{- \lambda (s-t)} \tfrac{\partial}{\partial t} \EE \left[  \tfrac{\partial Z_{s}^{(t, x)}}{\partial x} \left( L^{\epsilon} ( s, Y_{s}^{(t, x)} ) - v^{0}(s)\right)  \right] d s \bigg| \nonumber\\
& \leq C \lambda^2 \int_{t}^{T} e^{- \lambda (s-t)} \big( \epsilon^{\frac{2}{3}}(s-t) + (s-t)^2 \big) d s 
+ C \lambda \int_{t}^{T} e^{- \lambda (s-t)} \big( \epsilon^{\frac{2}{3}}+s-t \big) d s \leq C \epsilon^{\frac{2}{3}}, \label{vxt3}
\end{align}
where the first inequality is due to \lemref{Boundedness_of_various_expectation} (iii) (with $F(y)=L^{\epsilon} ( s, y) - v^0(s)$), \eqref{L-v_estimate2} and \eqref{Lx_bound} and the second inequality is due to \lemref{Order_of_exponential_times} and $\lambda=c \, \epsilon^{-\frac{2}{3}}$. Similarly,
\begin{align}
&\left| \tfrac{\partial}{\partial t} \left( \lambda \int_{t}^{T} e^{- \lambda (s-t)} \EE \left[  \tfrac{\partial Z_{s}^{(t, x)}}{\partial x}  v^{0}(s)  \right] d s \right) + \lambda e^{- \lambda (T - t)} \EE \Big[ \tfrac{\partial Z_{T}^{(t, x)}}{\partial x} \Big]  \right|  \nonumber\\
&=\left| \tfrac{\partial}{\partial t} \left( \lambda \int_{0}^{T-t} e^{- \lambda u} \EE \left[  \tfrac{\partial Z_{u}^{(0, x)}}{\partial x}  v^{0}(u+t)  \right] d u \right) + \lambda e^{- \lambda (T - t)} \EE \Big[ \tfrac{\partial Z_{T-t}^{(0, x)}}{\partial x} \Big]  \right|  \nonumber\\
&= \left| - \lambda \int_{0}^{T-t} e^{- \lambda u}Q(y_M) \, v^{0}(u+t) \, \EE \left[  \tfrac{\partial Z_{u}^{(0, x)}}{\partial x}   \right] d u \right| \nonumber\\
&\leq C\lambda \int_0^{T-t} e^{-\lambda u} u\,  du \leq C \epsilon^{\frac{2}{3}}, \label{vxt4}
\end{align}
where the second equality is due to \eqref{merton_pde}, the first inequality is due to \lemref{Boundedness_of_various_expectation} (i) and the last inequality is due to $\lambda=c \, \epsilon^{-\frac{2}{3}}$. Combining \eqref{vxt1}, \eqref{vxt3}, and \eqref{vxt4}, we obtain
\begin{align}
&\left| v_{x t}^{\epsilon}(t, x) - \lambda \int_{t}^{T} e^{- \lambda (s-t)} \EE \left[ Z_{s}^{(t, x)}  \tfrac{\partial Y_{s}^{(t, x)}}{\partial x} L_{xt}^{\epsilon} (s, Y_{s}^{(t, x)}) \right] ds \right| \nonumber\\
&=\left| 
e^{- \lambda (T - t)} \tfrac{\partial}{\partial t} \EE \Big[ \tfrac{\partial Z_{T}^{(t, x)}}{\partial x} \Big] +  \tfrac{\partial}{\partial t} \left( \lambda \int_{t}^{T} e^{- \lambda (s-t)} \EE \Big[  \tfrac{\partial Z_{s}^{(t, x)}}{\partial x} L^{\epsilon} ( s, Y_{s}^{(t, x)} )  \Big] d s \right)+ \lambda e^{- \lambda (T - t)} \EE \Big[ \tfrac{\partial Z_{T}^{(t, x)}}{\partial x} \Big]
\right| \nonumber\\
&\leq C\big( e^{-\lambda(T-t)} + \epsilon^{\frac{2}{3}}\big),\label{vxt5}
\end{align}
where the boundedness of $\tfrac{\partial}{\partial t} \EE [ \tfrac{\partial Z_{T}^{(t, x)}}{\partial x} ]$ is due to \lemref{Boundedness_of_various_expectation} (iii) (with $F(y)=1$).
The expression of $L_{xt}^\epsilon$ in \eqref{Lxt_expression} and the bounds in \lemref{Boundedness_of_various_expectation} (i)  and \eqref{v_t_boundedness} imply
\begin{align*}
\left| \lambda \int_{t}^{T} e^{- \lambda (s-t)} \EE \Big[ Z_{s}^{(t, x)}  \tfrac{\partial Y_{s}^{(t, x)}}{\partial x}  \Big( L_{xt}^{\epsilon} (s, Y_{s}^{(t, x)}) - v_{xt}^{\epsilon} (s, Y_{s}^{(t, x)}) \cdot 1_{ \left\{ \uy^{\epsilon}(s)<Y_s^{(t, x)}<\by^{\epsilon}(s) \right\} } \Big) \Big] d s \right| \leq C \epsilon.
\end{align*}
Finally, we conclude the desired result by the above inequality and \eqref{vxt5}.
\end{proof}

\medskip

\section{Proof of \lemref{Merton_fraction_inside_NT}}\label{lemma 5.6}
Throughout this proof, $C, C_1, C_2>0$ are generic constants independent of $(t,s, x,\epsilon)\in [0,T)\times [t,T)\times(0,1) \times (0,1)$ (also independent of $\lambda$ due to relation $\lambda=c \, \epsilon^{-\frac{2}{3}}$) that may differ line by line.

(i) By \lemref{Boundaries_of_NT_region} and \eqref{SO_eq}, we have $v_{x}^{SO, \lambda}(t, \hat{y}^{SO, \lambda}(t))=0$. By the mean value theorem,
\begin{align}
 \left\vert v_{x}^{SO, \lambda}(t, \uy^{\epsilon}(t))\right| &=  \left\vert v_{x}^{SO, \lambda}(t, \uy^{\epsilon}(t)) - v_{x}^{SO, \lambda}(t, \hat{y}^{SO, \lambda}(t)) \right\vert 
\geq \inf_{y\in (0,1)} \left| v_{x x}^{SO, \lambda}(t, y) \right| \cdot \left\vert \uy^{\epsilon}(t) - \hat{y}^{SO, \lambda}(t) \right\vert  \nonumber\\
 & \geq C  \tfrac{\min \{ 1, \lambda (T - t) \}}{\lambda} \cdot  \left\vert \uy^{\epsilon}(t) - \hat{y}^{SO, \lambda}(t) \right\vert , \label{for_y_estimate1}
\end{align}
where the last inequality is due to \eqref{DDv_0_lambda_order}. By \lemref{Boundaries_of_NT_region} (iii) and \eqref{Dv_ep_Dv_0_difference}, we obtain
\begin{align}
 \left\vert v_{x}^{SO, \lambda}(t, \uy^{\epsilon}(t))\right| \leq  \left\vert v_{x}^{\epsilon}(t, \uy^{\epsilon}(t)) - v_{x}^{SO, \lambda}(t, \uy^{\epsilon}(t)) \right\vert + \left\vert v_{x}^{\epsilon}(t, \uy^{\epsilon}(t)) \right| \leq C \epsilon. \label{for_y_estimate2}
\end{align}
We combine \eqref{Order_of_hat_y0_and_mean_variance_ratio}, \eqref{for_y_estimate1} and \eqref{for_y_estimate2} to obtain
\begin{align*}
\left\vert \uy^{\epsilon}(t) -y_M \right\vert \leq \tfrac{C \epsilon^{\frac{1}{3}} }{\min \{ 1, \lambda (T - t) \}}. 
\end{align*}
By the same way, we obtain the other inequalities in \eqref{pre_y1}.

\medskip

(ii) Let $t\in [0,T)$ be fixed. Since $\min\{1,\lambda(T-t)\}=1$ for small enough $\epsilon$, the inequalities in \eqref{pre_y1} and $0<y_M<1$ imply that 
\begin{align}
 &0<\li_{\epsilon \downarrow 0}\uy^{\epsilon}(t)\leq   \ls_{\epsilon \downarrow 0}\by^{\epsilon}(t)<1.  \label{y_not_01}
\end{align}
By the mean value theorem, there exists $x^{\epsilon}\in [\uy^{\epsilon}(t), \by^{\epsilon}(t)]$ such that for small enough $\epsilon$,
\begin{align}
v_{x x}^{\epsilon}(t, x^{\epsilon}) (\by^{\epsilon}(t) - \uy^{\epsilon}(t)) &= v_{x}^{\epsilon}(t, \by^{\epsilon}(t)) - v_{x}^{\epsilon}(t, \uy^{\epsilon}(t)) = - \tfrac{\epsilon (1 - \gamma) v^{\epsilon}(t, \by^{\epsilon}(t))}{1 - \epsilon \by^{\epsilon}(t)} - \tfrac{\epsilon (1 - \gamma) v^{\epsilon}(t, \uy^{\epsilon}(t))}{1 + \epsilon \uy^{\epsilon}(t)},  \label{c4}
\end{align}
where the second equality is due to \eqref{y_not_01} and \lemref{Boundaries_of_NT_region}. We observe that \eqref{vxx_twosides} and \eqref{y_not_01} imply $\li_{\epsilon \downarrow 0} \tfrac{-(1-\gamma)}{ v_{x x}^{\epsilon}(t, x^{\epsilon})} \epsilon^{\frac{2}{3}}>0$. Therefore, \eqref{c4} and  \eqref{Range_of_v} produce
\begin{align*}
\li_{\epsilon \downarrow 0} \tfrac{ \by^{\epsilon}(t) - \uy^{\epsilon}(t)}{\epsilon^{\frac{1}{3}}}= \li_{\epsilon \downarrow 0} \tfrac{-(1-\gamma)}{ v_{x x}^{\epsilon}(t, x^{\epsilon})} \epsilon^{\frac{2}{3}} \left( \tfrac{v^{\epsilon}(t, \by^{\epsilon}(t))}{1 - \epsilon \by^{\epsilon}(t)} + \tfrac{v^{\epsilon}(t, \uy^{\epsilon}(t))}{1 + \epsilon \uy^{\epsilon}(t)} \right)>0.
\end{align*}

(iii) Considering \lemref{Boundaries_of_NT_region} (i), it is enough to show that 
\begin{align*}
\ls_{\epsilon \downarrow 0}\sup_{t\in [0,T)}  \left( \tfrac{v_{x}^{\epsilon}(t, y_{M})}{\epsilon (1 - \gamma)} - \tfrac{v^{\epsilon}(t, y_{M})}{1 + \epsilon y_{M}}  \right) <0 \quad \textrm{and} \quad \li_{\epsilon \downarrow 0}\inf_{t\in [0,T)}  \left( \tfrac{v_{x}^{\epsilon}(t, y_{M})}{\epsilon (1 - \gamma)} + \tfrac{v^{\epsilon}(t, y_{M})}{1 - \epsilon y_{M}}  \right) >0.
\end{align*} 
We prove the first inequality above. The other inequality can be proved by the same way.

By \eqref{pre_y1}, we have 
\begin{align}
\left\vert \by^{\epsilon}(s) -y_M \right\vert \leq C \epsilon^{\frac{1}{3}} \quad \textrm{for} \quad 0\leq s \leq T-\tfrac{1}{\lambda}. \label{pre_y2}
\end{align}
The expression of $Y_{s}^{(t, x)}$ in \eqref{AYZ} implies that
\begin{align}
\PP\left( Y_{s}^{(t, y_{M})} \geq \by^{\epsilon}(s) \right)
&=\PP\left(  B_s - B_t \geq (\tfrac{\sigma}{2} - \tfrac{\mu}{\sigma})(s-t) + \tfrac{1}{\sigma} \ln \left(\tfrac{\by^{\epsilon}(s)(1-y_M)}{y_M(1-\by^{\epsilon}(s))} \right) \right)\nonumber\\
&\geq \PP\left( B_1 \geq C\big(1+ \tfrac{1}{\sqrt{s-t}}\epsilon^{\frac{1}{3}} \big) \right) \quad \textrm{for} \quad  t< s \leq T-\tfrac{1}{\lambda}, \label{PY>y1}
\end{align}
where the inequality is due to \eqref{pre_y2} and the mean value theorem. By the same way as we prove \eqref{Expectation_Time_order}, we check that for $t\leq s$,
\begin{align}
\EE\Big[ \Big|  Z_{s}^{(t, x)} \tfrac{\partial Y_{s}^{(t, x)}}{\partial x} - 1 \Big| \Big] \bigg|_{x=y_M} \leq  C\sqrt{s-t}, \label{PY>y2}\\
\EE\Big[ \Big|  Z_{s}^{(t, x)} \Big(\tfrac{\partial Y_{s}^{(t, x)}}{\partial x} - 1 \Big) \Big| \Big] \bigg|_{x=y_M}    \leq C\sqrt{s-t}. \label{PY>y4}
\end{align}
By \eqref{Range_of_v} and \lemref{Boundaries_of_NT_region} (i), for $x>\uy^{\epsilon}(t)$,
\begin{align}
 \tfrac{v_x^{\epsilon}(t,x)}{\epsilon(1-\gamma)} - \tfrac{v^{\epsilon}(t, x)}{1 + \epsilon y_{M}} \leq \tfrac{v^{\epsilon} \left( t, x \right)}{1 + \epsilon x} - \tfrac{v^{\epsilon}(t, x)}{1 + \epsilon y_{M}} \leq C \epsilon. \label{c9}
\end{align}
From the expression of $L$ in \eqref{Supremum_term_L} and $L_x$ in \eqref{Lx_expression}, we obtain
\begin{align}
 \tfrac{L_{x}^{\epsilon} ( t, x )}{\epsilon (1 - \gamma)} - \tfrac{L^{\epsilon} ( t, x )}{1 + \epsilon y_{M}}
&=  \tfrac{\epsilon (y_M-x) v^{\epsilon}(t, \uy^{\epsilon}(t))}{(1 + \epsilon x)(1+\epsilon y_M)}  \left( \tfrac{1 + \epsilon x}{1 + \epsilon \uy^{\epsilon}(t)} \right)^{1 - \gamma} 1_{\left\{ x \leq \uy^{\epsilon}(t) \right\}} + \left(\tfrac{v_x^{\epsilon}(t,x)}{\epsilon(1-\gamma)} - \tfrac{v^{\epsilon}(t, x)}{1 + \epsilon y_{M}} \right) 1_{\left\{ x \in ( \uy^{\epsilon}(t), \by^{\epsilon}(t) ) \right\}} \nonumber\\
  &\quad- \tfrac{(2+\epsilon(y_M -x))v^{\epsilon}(t, \by^{\epsilon}(t))}{(1 - \epsilon x)(1+\epsilon y_M)} \left( \tfrac{1 - \epsilon x}{1 - \epsilon \by^{\epsilon}(t)} \right)^{1 - \gamma} 1_{\left\{ x \geq\by^{\epsilon}(t) \right\}} \nonumber\\
 &\leq C_1 \epsilon -  C_2 1_{\left\{ x \geq\by^{\epsilon}(t) \right\}},\label{PY>y3}
  \end{align}
where the inequality is due to \eqref{Range_of_v} and \eqref{c9}. Since $Z_{s}^{(t, x)} \tfrac{\partial Y_{s}^{(t, x)}}{\partial x}>0$, \eqref{PY>y2} and \eqref{PY>y3} imply
\begin{align}
& \EE \Big[ Z_{s}^{(t, x)} \tfrac{\partial Y_{s}^{(t, x)}}{\partial x}  \Big( \tfrac{L_{x}^{\epsilon} ( s, Y_{s}^{(t, x)} )}{\epsilon (1 - \gamma)} - \tfrac{L^{\epsilon} ( s, Y_{s}^{(t, x)} )}{1 + \epsilon x} \Big) \Big] \bigg|_{x=y_M} \nonumber\\
 &\leq \EE \Big[ \Big(Z_{s}^{(t, x)} \tfrac{\partial Y_{s}^{(t, x)}}{\partial x}-1\Big)  \Big( C_1 \epsilon -  C_2 1_{\{  Y_{s}^{(t, x)} \geq\by^{\epsilon}(s) \}} \Big) \Big] \bigg|_{x=y_M} +\EE \Big[ C_1 \epsilon -  C_2 1_{\{  Y_{s}^{(t, y_{M})} \geq\by^{\epsilon}(s) \}}  \Big] \nonumber\\
& \leq C_1 \big( \sqrt{s-t} + \epsilon\big)  -  C_2 \, \PP\left( Y_{s}^{(t, y_{M})} \geq \by^{\epsilon}(s) \right) \quad \textrm{for} \quad t\leq s \leq T. \label{c11}
\end{align}
By \eqref{c11} and \lemref{Order_of_exponential_times}, we have
\begin{align}
&\lambda \int_{t}^{T} e^{- \lambda (s - t)} \EE \Big[ Z_{s}^{(t, x)} \tfrac{\partial Y_{s}^{(t, x)}}{\partial x}  \Big( \tfrac{L_{x}^{\epsilon} ( s, Y_{s}^{(t, x)} )}{\epsilon (1 - \gamma)} - \tfrac{L^{\epsilon} ( s, Y_{s}^{(t, x)} )}{1 + \epsilon x} \Big) \Big] d s \bigg|_{x=y_M}     \nonumber\\
&\leq C_1 \epsilon^{\frac{1}{3}} - C_2 \lambda \int_{t}^{T} e^{- \lambda (s - t)} \PP\left( Y_{s}^{(t, y_{M})} \geq \by^{\epsilon}(s) \right)  ds \nonumber\\
&\leq C_1 \epsilon^{\frac{1}{3}} - C_2 \int_0^{(\lambda(T-t)-1)^+} e^{-u} \int_{C(1+\frac{1}{\sqrt{u}})}^\infty  \tfrac{e^{- \frac{z^{2}}{2}}}{\sqrt{2 \pi}} d z \, du, \label{vx-v1}
\end{align}
where the last inequality is due to \eqref{PY>y1} and the substitution $u=\lambda(s-t)$. 

\lemref{Boundedness_of_various_expectation} (iii) (with $F(y)=1$ and $F(y)=L^\epsilon(s,y)$, respectively) and \eqref{Lx_bound} produce 
\begin{align}
&\Big\vert \EE \Big[ \tfrac{\partial Z_{s}^{(t, x)}}{\partial x} \Big]  \Big\vert \bigg|_{x=y_M} \leq C (s-t)^{2},\label{PY>y5}\\
&   \Big\vert \EE \Big[ \tfrac{\partial Z_{s}^{(t, x)}}{\partial x}  L^{\epsilon} ( s, Y_{s}^{(t, x)} ) \Big] \Big| \bigg|_{x=y_M}
 \leq    C \left( \epsilon (s-t) + (s-t)^2 \right). \label{PY>y6}
\end{align}
By the same way as we prove \eqref{Expectation_Time_order}, we check that $\EE[Z_T^{(t,x)}]\geq 1-C(T-t)$. Then \eqref{PY>y5} produces 
\begin{align}
e^{- \lambda (T - t)} \EE \Big[
  \tfrac{1}{\epsilon (1 - \gamma)} \tfrac{\partial Z_{T}^{(t, x)}}{\partial x} \ - \tfrac{Z_{T}^{(t, x)}}{1 + \epsilon x} \Big] \bigg|_{x=y_M}&\leq C_1  e^{- \lambda (T - t)} \cdot \tfrac{(T - t)^{2}}{\epsilon} -  e^{- \lambda (T - t)} \left( 1- C_2(T-t)\right) \nonumber\\
  & \leq C\epsilon^{\frac{1}{3}} - e^{- \lambda (T - t)}, \label{vx-v2}
\end{align}
where the last inequality is due to $\sup_{t\geq 0} t^n e^{-t}<\infty$ for $n=1,2$.
By \eqref{PY>y6}, \eqref{PY>y4} and \lemref{Order_of_exponential_times}, 
\begin{equation}
\begin{split} \label{vx-v3}
& \lambda \int_{t}^{T} e^{- \lambda (s - t)} \EE \Big[ \tfrac{\partial Z_{s}^{(t, x)}}{\partial x} \tfrac{L^{\epsilon} ( s, Y_{s}^{(t, x)} )}{\epsilon (1 - \gamma)} \Big] d s \bigg|_{x=y_M} \leq C \lambda \int_{t}^{T} e^{- \lambda (s - t)} \left( (s-t) + \tfrac{1}{\epsilon}(s-t)^2 \right)  \leq C\epsilon^{\frac{1}{3}}, \\
& \lambda \int_{t}^{T} e^{- \lambda (s - t)} \EE \Big[  Z_{s}^{(t, x)} \left(\tfrac{\partial Y_{s}^{(t, x)}}{\partial x} - 1 \right) \tfrac{L^{\epsilon} ( s, Y_{s}^{(t, x)} )}{1 + \epsilon x} \Big] d s \bigg|_{x=y_M}\leq C \lambda \int_{t}^{T} e^{- \lambda (s - t)} \sqrt{s-t} \leq C\epsilon^{\frac{1}{3}}. 
\end{split}
\end{equation}
The representation of $v$ in \eqref{v_with_optimizer} and $v_x$ in \eqref{Dv_with_optimizer} produce 
\begin{align*}
   \tfrac{v_{x}^{\epsilon}(t, x)}{\epsilon (1 - \gamma)} - \tfrac{v^{\epsilon}(t, x)}{1 + \epsilon x} 
   &= e^{- \lambda (T - t)} \EE \Big[
  \tfrac{1}{\epsilon (1 - \gamma)}\tfrac{\partial Z_{T}^{(t, x)}}{\partial x} \ - \tfrac{Z_{T}^{(t, x)}}{1 + \epsilon x} \Big] + \lambda \int_{t}^{T} e^{- \lambda (s - t)} \EE \Big[ \tfrac{\partial Z_{s}^{(t, x)}}{\partial x} \tfrac{L^{\epsilon} ( s, Y_{s}^{(t, x)} )}{\epsilon (1 - \gamma)} \Big] d s \nonumber \\
  & \quad + \lambda \int_{t}^{T} e^{- \lambda (s - t)} \EE \Big[ Z_{s}^{(t, x)} \tfrac{\partial Y_{s}^{(t, x)}}{\partial x}  \Big( \tfrac{L_{x}^{\epsilon} ( s, Y_{s}^{(t, x)} )}{\epsilon (1 - \gamma)} - \tfrac{L^{\epsilon} ( s, Y_{s}^{(t, x)} )}{1 + \epsilon x} \Big) \Big] d s \nonumber \\
  & \quad + \lambda \int_{t}^{T} e^{- \lambda (s - t)} \EE \left[  Z_{s}^{(t, x)} \left(\tfrac{\partial Y_{s}^{(t, x)}}{\partial x} - 1 \right) \tfrac{L^{\epsilon} ( s, Y_{s}^{(t, x)} )}{1 + \epsilon x} \right] d s. 
\end{align*}
We substitute $x=y_M$ above and apply \eqref{vx-v1}, \eqref{vx-v2} and \eqref{vx-v3} to obtain
\begin{align*}
\ls_{\epsilon \downarrow 0}\sup_{t\in [0,T)}  \left( \tfrac{v_{x}^{\epsilon}(t, y_{M})}{\epsilon (1 - \gamma)} - \tfrac{v^{\epsilon}(t, y_{M})}{1 + \epsilon y_{M}}  \right) &\leq C \ls_{\epsilon \downarrow 0}\sup_{t\in [0,T)}  \Big(- e^{- \lambda (T - t)} - \int_0^{(\lambda(T-t)-1)^+} e^{-u} \int_{C_2(1+\frac{1}{\sqrt{u}})}^\infty  \tfrac{e^{- \frac{z^{2}}{2}}}{\sqrt{2 \pi}} d z \, du \Big)     \nonumber\\
& \leq C \sup_{a\in [0,\infty)}  \Big(- e^{- a} - \int_0^{(a-1)^+} e^{-u} \int_{C_2(1+\frac{1}{\sqrt{u}})}^\infty  \tfrac{e^{- \frac{z^{2}}{2}}}{\sqrt{2 \pi}} d z \, du \Big) <0.
\end{align*}

\section{Proof of \lemref{really_used}}\label{lemma 5.7}
Throughout this proof, $C>0$ is a generic constant independent of $(t,s, x,\epsilon)\in [0,T)\times [t,T)\times(0,1) \times (0,1)$ (also independent of $\lambda$ due to relation $\lambda=c \, \epsilon^{-\frac{2}{3}}$) that may differ line by line.

\medskip

(i) We obtain \eqref{vx_epsilon} by \lemref{Boundaries_of_NT_region} (iii) and \eqref{Range_of_v}. We combine \eqref{Order_of_v_0_lambda_and_v_infinity}, \eqref{Dv_0_lambda_order} and \eqref{v_ep_v_0_difference_more_strong} to obtain \eqref{v_epsilon}. 
To check \eqref{vt_ep_vt_0_difference_on_NT}, we rewrite \eqref{Main_PDE} using \eqref{merton_pde} as
\begin{align}
  v_{t}^{\epsilon}(t, x) - v_{t}^{0}(t) = Q(y_{M}) ( v^{0}(t) - v^{\epsilon}(t, x) )+ \tfrac{\gamma (1 - \gamma) \sigma^{2}}{2}  (x - y_{M})^{2} v^{\epsilon}(t, x) - f^{\epsilon}(t, x), \label{vt_form}
\end{align}
where $f^{\epsilon}$ is defined in \eqref{def_f}.
Since $L^{\epsilon}(t,x)=v^{\epsilon}(t,x)$ for $x\in [ \uy^{\epsilon}(t), \by^{\epsilon}(t)]$, \eqref{vx_epsilon} and \eqref{vxx_twosides} imply that $\left|  f^{\epsilon}(t, x) \right| \leq C  \epsilon^{\frac{2}{3}}$ for $x\in [ \uy^{\epsilon}(t), \by^{\epsilon}(t)]$. We apply this inequality, \eqref{v_epsilon} and \eqref{Range_of_v} to \eqref{vt_form} and conclude \eqref{vt_ep_vt_0_difference_on_NT}.

\medskip

(ii) For $x_1,x_2 \in  [ \uy^{\epsilon}(t), \by^{\epsilon}(t)]$, the mean value theorem and  \eqref{vx_epsilon} imply $\left| v^{\epsilon}(t, x_1)-v^{\epsilon}(t, x_2)\right| \leq C \epsilon$.
Then we conclude \eqref{Uniform_limit_of_v_ep_v_0_difference}. 

Since $v_{tx}^\epsilon$ is continuous by \lemref{Various_orders2} (i), the mean value theorem and \eqref{pre_y1} imply 
\begin{align*}
&\left|v_t^{\epsilon}(t, x_1) - v_t^{\epsilon}(t, x_2)\right| 
\leq  \sup_{x\in [ \uy^{\epsilon}(t), \by^{\epsilon}(t)]} \left| v_{tx}^{\epsilon}(t, x)\right| \cdot \tfrac{C \epsilon^{\frac{1}{3}} }{\min \{ 1, \lambda (T - t) \}} \quad \textrm{for} \quad x_1,x_2 \in  [ \uy^{\epsilon}(t), \by^{\epsilon}(t)].
\end{align*}
The above inequality and the following lemma produce \eqref{Uniform_limit_of_vt_ep_vt_0_difference}.
\begin{lemma}
Let \assref{ass} hold. Let $\epsilon_0>0$ be as in \lemref{Various_orders2} (ii).
For $\alpha\in (0,1)$, there exists a positive constant $C$ independent of $(t,x,\epsilon)\in [0,T)\times (0,1) \times (0,\epsilon_0]$ such that
\begin{align}
\left|v_{x t}^{\epsilon}(t, x) \right| \leq C\big( \epsilon^{\frac{2}{3}} + e^{- \alpha \lambda (T - t)} \big). \label{vxt_bound}
\end{align}
\end{lemma}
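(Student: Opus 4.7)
The strategy is to bootstrap the uniform bound $|v_{xt}^{\epsilon}|\le C$ of \lemref{Various_orders2}~(i) via the integral identity in \lemref{Various_orders2}~(iv). Setting $\phi(t):=\sup_{x\in(0,1)}|v_{xt}^{\epsilon}(t,x)|$ and
\[
q(t,s):=\sup_{x\in(0,1)}\EE\bigl[Z_s^{(t,x)}\,\partial_x Y_s^{(t,x)}\,1_{\{Y_s^{(t,x)}\in(\uy^{\epsilon}(s),\by^{\epsilon}(s))\}}\bigr],
\]
the identity reduces at once to the scalar inequality
\[
\phi(t)\le C\bigl(e^{-\lambda(T-t)}+\epsilon^{2/3}\bigr)+\lambda\int_t^T e^{-\lambda(s-t)}\,q(t,s)\,\phi(s)\,ds.
\]

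The first step will be estimating $q(t,s)$. Writing the expectation as an integral against the density $\varphi(\,\cdot\,;s-t,x)$ from \eqref{density_form}, whose $L^\infty$-norm on compact sub-intervals of $(0,1)$ is $O(1/\sqrt{s-t})$, and combining with the width estimate $|\by^{\epsilon}(s)-\uy^{\epsilon}(s)|\le C\epsilon^{1/3}/\min\{1,\lambda(T-s)\}$ of \lemref{Merton_fraction_inside_NT}~(i) and the trivial bound $q(t,s)\le 1+C(s-t)$ of \lemref{Boundedness_of_various_expectation}~(i), one obtains
\[
q(t,s)\le C\min\Bigl\{1,\,\tfrac{\epsilon^{1/3}}{\sqrt{s-t}\,\min\{1,\lambda(T-s)\}}\Bigr\}.
\]
Rescaling $u=\lambda(s-t)$ and using $\epsilon^{1/3}\sqrt{\lambda}=\sqrt{c}$, the kernel $\lambda e^{-\lambda(s-t)}q(t,s)$ is converted into $e^{-u}\min\{C,C\sqrt{c/u}\}$, so both $\lambda\int_t^T e^{-\lambda(s-t)}q(t,s)\,ds$ and $\lambda\int_t^T e^{-(1-\alpha)\lambda(s-t)}q(t,s)\,ds$ are each bounded in $t$ by constants depending only on $c$ and $\alpha$.

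I would then split the proof by the size of $\lambda(T-t)$. For $\lambda(T-t)\le K_\alpha$ with $K_\alpha$ chosen below, the uniform bound $\phi(t)\le C$ combined with $e^{-\alpha\lambda(T-t)}\ge e^{-\alpha K_\alpha}$ yields $\phi(t)\le Ce^{\alpha K_\alpha}\,e^{-\alpha\lambda(T-t)}$ directly. For $\lambda(T-t)>K_\alpha$, I would run a Picard iteration in the weighted norm $\|\phi\|_\alpha:=\sup_t\phi(t)/w(t)$ with weight $w(t):=\epsilon^{2/3}+e^{-\alpha\lambda(T-t)}$: substituting $\phi(s)\le M\,w(s)$ into the integral inequality produces
\[
\phi(t)\le C\,w(t)+M\,\kappa(\alpha,c,K_\alpha)\,w(t),
\]
where $\kappa$ collects both the $\epsilon^{2/3}$-weighted and the $e^{-\alpha\lambda(T-t)}$-weighted contributions. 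Choosing $K_\alpha$ large enough makes $\kappa<1$, and the iteration converges to $\phi\le C_\alpha\,w$ on $\lambda(T-t)>K_\alpha$, which combined with the first regime completes the proof.

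The main obstacle is making the iteration contractive: the natural integral operator is not strictly contractive in sup norm for arbitrary $c>0$, because the short-time piece of $q(t,s)$ can be close to $1$. The resolution is to simultaneously pass to the weighted norm $\|\cdot\|_\alpha$ (exploiting the rate gap between the kernel's decay $\lambda$ and the weight's decay $\alpha\lambda$) and to restrict to $\lambda(T-t)>K_\alpha$, peeling off the boundary layer near the terminal time where only the $e^{-\alpha\lambda(T-t)}$-piece of the target bound can possibly hold.
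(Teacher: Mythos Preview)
Your overall strategy—convert \lemref{Various_orders2}~(iv) into a scalar integral inequality for $\phi(t)=\sup_x|v_{xt}^\epsilon(t,x)|$ and close it by a fixed-point argument in the weighted norm $\|\cdot\|_\alpha$—is exactly the paper's. The gap is your proposed mechanism for getting $\kappa<1$. Restricting to $\lambda(T-t)>K_\alpha$ does \emph{not} shrink the contraction factor: after the change $u=\lambda(s-t)$ the operator norm in the $w$-weighted space is bounded by $\int_0^{\infty} e^{-(1-\alpha)u}\tilde q(u)\,du$ with $\tilde q(u)\lesssim\min\{1,\sqrt{c}/\sqrt{u}\}$, and this integral is independent of $K_\alpha$. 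Peeling off the terminal boundary layer merely shifts its contribution into the forcing term (at cost $Ce^{K_\alpha}w(t)$); the contraction factor on the remaining region is unchanged. Moreover, your kernel $q(t,s)=\EE[Z_s\partial_xY_s\,1_{NT}]$ carries a multiplicative constant from the density estimate that is not obviously $\le 1$, so even at $\alpha=0$ the integral need not lie below $1$.

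The paper supplies the missing ingredient. From the identity \eqref{vxt_expression}, namely $v_{xt}^\epsilon=-Q(x)v_x^\epsilon-Q'(x)v^\epsilon-f_x^\epsilon$, combined with $Q'(y_M)=0$ and $|x-y_M|\le C\epsilon^{1/3}$ on the no-trade region for $t\le T-1/\lambda$ (\lemref{Merton_fraction_inside_NT}~(i)), one first obtains the \emph{pointwise} bound $|v_{xt}^\epsilon(t,x)|\,1_{\{\uy^\epsilon(t)<x<\by^\epsilon(t)\}}\le C\epsilon^{1/3}$. This lets the paper absorb the remainder $(Z\partial_xY-1)\,v_{xt}^\epsilon\,1_{NT}$ as an $O(\epsilon^{2/3})$ error (via $\EE[|Z\partial_xY-1|]\le C\sqrt{s-t}$), so the kernel in the fixed-point step becomes the \emph{pure probability} $\PP\bigl(Y_s^{(t,x)}\in(\uy^\epsilon(s),\by^\epsilon(s))\bigr)$ rather than $q(t,s)$. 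The resulting weighted contraction constant is $\int_0^\infty e^{-(1-\alpha)u}\,\PP(|B_1|<C/\sqrt{u})\,du$, and since the probability is strictly less than $1$ for every $u>0$, this is strictly below $1$. It is precisely the replacement of $q$ by a probability—made possible by the $\epsilon^{1/3}$ pointwise bound—that secures the contraction; your $K_\alpha$ device cannot substitute for it.
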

\begin{proof}
By \eqref{def_f}, \eqref{various_vs} and \eqref{Lx_bound}, we have $\left|  f_x^{\epsilon}(t, x) \right| \leq C  \epsilon^{\frac{1}{3}}$. We apply this inequality, \eqref{pre_y1} and \eqref{various_vs} to the expression in \eqref{vxt_expression} to obtain
\begin{align}
\left|v_{xt}^{\epsilon}(t,x) \cdot 1_{ \{  \uy^{\epsilon}(t)<x< \by^{\epsilon}(t) \} } \right| \leq C  \epsilon^{\frac{1}{3}} \quad \textrm{for} \quad (t,x,\epsilon)\in [0,T-\tfrac{1}{\lambda}]\times (0,1)\times (0,1).  \label{vxt1/3}
\end{align}
By \eqref{PY>y2}, \eqref{vxt1/3} and \lemref{Order_of_exponential_times}, we have
\begin{align}
&\bigg| \lambda \int_t^{(T-\frac{1}{\lambda})\vee t} e^{-\lambda(s-t)} \EE\Big[ \left(Z_{s}^{(t, x)}  \tfrac{\partial Y_{s}^{(t, x)}}{\partial x}-1\right) v_{xt}^{\epsilon}(s, Y_s^{(t,x)}) \cdot 1_{ \{  \uy^{\epsilon}(s)<Y_{s}^{(t, x)}< \by^{\epsilon}(s) \} }  \Big] ds \bigg| \nonumber \\
&\leq C\lambda \int_t^{(T-\frac{1}{\lambda})\vee t} e^{-\lambda(s-t)} \sqrt{s-t}\,\, \epsilon^{\frac{1}{3}}  ds   
\leq C \epsilon^{\frac{2}{3}}. \label{vxt2/3}
\end{align}
Since $\lambda \int_{T-\frac{1}{\lambda}}^T e^{-\lambda(s-t)} ds \leq C e^{- \lambda (T - t)}$, \lemref{Boundedness_of_various_expectation} (i) and \eqref{v_t_boundedness} produce
\begin{align}
 \lambda \int_{T-\frac{1}{\lambda}}^T e^{-\lambda(s-t)} \EE\Big[ \Big| Z_{s}^{(t, x)}  \tfrac{\partial Y_{s}^{(t, x)}}{\partial x} v_{xt}^{\epsilon}(s, Y_s^{(t,x)}) \cdot 1_{ \{  \uy^{\epsilon}(s)<Y_{s}^{(t, x)}< \by^{\epsilon}(s) \} } \Big| \Big] ds \leq C e^{- \lambda (T - t)}. \label{vxt_tail}
\end{align}
We combine \lemref{Various_orders2} (iv), \eqref{vxt2/3} and \eqref{vxt_tail} to conclude that for $(t,x,\epsilon)\in [0,T)\times (0,1) \times (0,\epsilon_0]$,
\begin{align*}
\left\vert v_{x t}^{\epsilon}(t, x) \right\vert &\leq C \big( e^{- \lambda (T - t)} + \epsilon^{\frac{2}{3}} \big)
+  \lambda \int_t^{(T-\frac{1}{\lambda})\vee t} e^{-\lambda(s-t)} \EE\Big[ \left| v_{xt}^{\epsilon}(s, Y_s^{(t,x)})\right| \cdot 1_{ \left\{  \uy^{\epsilon}(s)<Y_{s}^{(t, x)}< \by^{\epsilon}(s) \right\} }  \Big] ds.
\end{align*}
For $\alpha \in (0,1)$, let $k^{\epsilon}(\alpha):=\sup_{(t,x)\in [0,T)\times(0,1)} \tfrac{ | v_{xt}^{\epsilon}(t,x)|}{e^{- \alpha \lambda (T - t)} + \epsilon^{\frac{2}{3}}}$. Then, the above inequality implies
\begin{align}
 \tfrac{ | v_{xt}^{\epsilon}(t,x)|}{e^{- \alpha \lambda (T - t)} + \epsilon^{\frac{2}{3}}} &\leq C +   \lambda \int_t^{(T-\frac{1}{\lambda})\vee t} e^{-\lambda(s-t)}\,  \tfrac{e^{- \alpha \lambda (T - s)} + \epsilon^{\frac{2}{3}}}{e^{- \alpha \lambda (T - t)} + \epsilon^{\frac{2}{3}}} \, k^{\epsilon}(\alpha) \, \EE\Big[ 1_{ \{  \uy^{\epsilon}(s)<Y_{s}^{(t, x)}< \by^{\epsilon}(s) \} }  \Big] ds \nonumber\\
&\leq C + k^{\epsilon}(\alpha)\,  \lambda \int_t^{(T-\frac{1}{\lambda})\vee t} e^{-(1-\alpha)\lambda(s-t)}\EE\Big[ 1_{ \{  \uy^{\epsilon}(s)<Y_{s}^{(t, x)}< \by^{\epsilon}(s) \} }  \Big] ds. \label{vxt_k}
\end{align}
Observe that for $t<s< (T-\frac{1}{\lambda})\vee t$, the definition of $Y_s^{(t,x)}$ in \eqref{AYZ} produces
\begin{align}
&\PP \left(  \uy^{\epsilon}(s)<Y_{s}^{(t, x)}< \by^{\epsilon}(s)  \right )\nonumber\\
&=\PP \left( \ln\left(\tfrac{(1-x)\uy^{\epsilon}(s)}{x(1-\uy^{\epsilon}(s))} \right) - (\mu-\tfrac{\sigma^2}{2})(s-t)<\sigma (B_s - B_t)< \ln\left(\tfrac{(1-x)\by^{\epsilon}(s)}{x(1-\by^{\epsilon}(s))} \right) - (\mu-\tfrac{\sigma^2}{2})(s-t) \right ) \nonumber\\
&\leq \PP \left( -C \epsilon^{\frac{1}{3}} < B_s - B_t <C \epsilon^{\frac{1}{3}} \right)=\PP \Big( -\tfrac{C\epsilon^{\frac{1}{3}}}{\sqrt{s-t}} < B_1 < \tfrac{C\epsilon^{\frac{1}{3}}}{\sqrt{s-t}} \Big) , \label{PB}
\end{align}
where the inequality is due to $\PP\big(a<B_s - B_t <b \big) \leq \PP \big(-\tfrac{b-a}{2}<B_s - B_t <\tfrac{b-a}{2}\big)$ for $a<b$, \eqref{pre_y1} and the mean value theorem. Then, \eqref{PB} with the substitution $u=\lambda(s-t)$ produces
\begin{align*}
&\lambda \int_t^{(T-\frac{1}{\lambda})\vee t} e^{-(1-\alpha)\lambda(s-t)}\, \EE\Big[ 1_{ \{  \uy^{\epsilon}(s)<Y_{s}^{(t, x)}< \by^{\epsilon}(s) \} }  \Big] ds
\leq C_\alpha:= \int_0^{\infty} e^{-(1-\alpha)u} \,\,\PP \left( -\tfrac{C}{\sqrt{u}} < B_1 < \tfrac{C}{\sqrt{u}} \right) du.
\end{align*}
Since the constant $C_\alpha$ above does not depend on $t,x,\epsilon$ and $C_\alpha<1$, \eqref{vxt_k} implies $k^{\epsilon}(\alpha)\leq \tfrac{C}{1-C_\alpha}$.
\end{proof}

\section{Proof of \lemref{v_xx_conv_lem}}\label{lemma 5.8}

Throughout this appendix, $C>0$ is a generic constant independent of $(t,s, x,\epsilon)\in [0,T)\times [t,T)\times(0,1) \times (0,1)$ (also independent of $\lambda$ due to relation $\lambda=c \, \epsilon^{-\frac{2}{3}}$) that may differ line by line.
First, we prove the following lemma.

\begin{lemma}\label{vxxt_lem}
Let \assref{ass} hold. Let $\epsilon_0>0$ be as in \lemref{Various_orders2} (ii) and $\alpha\in (0,1)$. Then, 
there exists $\epsilon_{00}\in (0,\epsilon_0]$ such that
\begin{align}
& x(1-x)  \left| v_{x x t}^{\epsilon}(t, x) \right|\leq C\big( \epsilon^{\frac{1}{3}} + \sqrt{\lambda} \,e^{- \alpha \lambda (T - t)} \big) \quad \textrm{for} \quad  (t,x,\epsilon)\in [0,T)\times (0,1) \times (0,\epsilon_0],  \label{DDv_ep_xt}\\
&| \uy^{\epsilon}_t(t) | , \, \left| \by^{\epsilon}_t(t) \right|  \leq C \quad \textrm{for} \quad  (t,\epsilon)\in [0,T- \epsilon^{\frac{1}{3}}]\times  (0,\epsilon_{00}].\label{Derivative_of_optimal_points_in_time} 
\end{align}
\end{lemma}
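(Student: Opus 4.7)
\textbf{Proof plan for Lemma \ref{vxxt_lem}.} The strategy is to mimic the bootstrap used to establish \eqref{vxt_bound}, applied one $x$-derivative higher, and then deduce \eqref{Derivative_of_optimal_points_in_time} from the explicit formula \eqref{Derivative_of_first_order_condition_in_time}.

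For \eqref{DDv_ep_xt}, I begin by differentiating the representation \eqref{DDv_with_optimizer2} in $t$. Using the time-homogeneity $(Z_s^{(t,x)}, Y_s^{(t,x)}) \stackrel{d}{=} (Z_{s-t}^{(0,x)}, Y_{s-t}^{(0,x)})$ and the substitution $u=s-t$, as was done in \eqref{vxt2}, the $t$-derivative inside each expectation reduces to a derivative of $L^\epsilon(\cdot,\cdot)$, $L_x^\epsilon(\cdot,\cdot)$, $L_{xx}^\epsilon(\cdot,\cdot)$ in the first argument, giving $L_t^\epsilon$, $L_{xt}^\epsilon$, $L_{xxt}^\epsilon$. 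Analogs of \eqref{Lt_expression}--\eqref{Lxt_expression} obtained by differentiating \eqref{Lxx_expression} in $t$ (using \eqref{Derivative_of_first_order_condition_in_time}) show that $L_{xxt}^\epsilon(t,y)=v_{xxt}^\epsilon(t,y)$ for $y\in(\underline{y}^\epsilon(t),\overline{y}^\epsilon(t))$ and is an explicit multiple of $v_t^\epsilon(t,\underline y^\epsilon(t))$ or $v_t^\epsilon(t,\overline y^\epsilon(t))$, hence $O(\epsilon^2)$ by \eqref{v_t_boundedness}, on the trade regions. The differentiation also yields a boundary term $\lambda e^{-\lambda(T-t)}\EE[\partial_x^2 Z_T^{(t,x)}]$ of size $O(\lambda e^{-\lambda(T-t)})$, which is absorbed into $\sqrt\lambda\, e^{-\alpha\lambda(T-t)}$ for any $\alpha\in(0,1)$ and $\epsilon$ small. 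Collecting everything and multiplying by $x(1-x)$ to cancel the $\sqrt\lambda/(x(1-x))$ blow-up produced by the $\partial_x$ acting on $\lambda\int_t^T e^{-\lambda(s-t)}\EE[\cdots]\,ds$ (via \lemref{Boundedness_of_various_expectation} (ii)), I obtain an equation of the schematic form
\begin{equation*}
x(1-x)v_{xxt}^\epsilon(t,x) = x(1-x)R^\epsilon(t,x) + \lambda\int_t^T e^{-\lambda(s-t)}\,x(1-x)\EE\!\left[Z_s^{(t,x)}(\partial_x Y_s^{(t,x)})^2\, v_{xxt}^\epsilon(s, Y_s^{(t,x)})\,\mathbf 1_{\{\underline y^\epsilon(s)<Y_s^{(t,x)}<\overline y^\epsilon(s)\}}\right]ds,
\end{equation*}
where $R^\epsilon$ collects terms estimated by $x(1-x)|R^\epsilon(t,x)|\leq C(\epsilon^{1/3}+\sqrt\lambda\,e^{-\alpha\lambda(T-t)})$ using \lemref{Boundedness_of_various_expectation}, \eqref{Range_of_v}, \eqref{Lx_bound}, \eqref{Lxx_twosides}, \eqref{L-v_estimate2}, \eqref{v_t_boundedness} and \eqref{vxt_bound}. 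Defining the bootstrap quantity
\begin{equation*}
k^\epsilon(\alpha) := \sup_{(t,x)\in[0,T)\times(0,1)} \frac{x(1-x)|v_{xxt}^\epsilon(t,x)|}{\epsilon^{1/3}+\sqrt\lambda\,e^{-\alpha\lambda(T-t)}},
\end{equation*}
splitting the integral at $T-\tfrac{1}{\lambda}$ as in the proof of \eqref{vxt_bound}, and invoking the density-hitting bound \eqref{PB} to control $\EE[\mathbf 1_{\{\underline y^\epsilon(s)<Y_s^{(t,x)}<\overline y^\epsilon(s)\}}]$, the self-referential term contributes $\theta\cdot k^\epsilon(\alpha)$ with $\theta<1$ uniformly in small $\epsilon$. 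This yields $k^\epsilon(\alpha)\leq C/(1-\theta)$, proving \eqref{DDv_ep_xt}.

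For \eqref{Derivative_of_optimal_points_in_time}, I use \eqref{Derivative_of_first_order_condition_in_time} directly. For $t\leq T-\epsilon^{1/3}$ and $\epsilon$ small, the scaling $\lambda = c\,\epsilon^{-2/3}$ gives $\lambda(T-t)\geq c\,\epsilon^{-1/3}$, so $e^{-\alpha\lambda(T-t)}$ decays faster than any polynomial in $\epsilon$ and $\min\{\lambda(T-t),1\}=1$. By \eqref{vxt_bound}, $|v_{xt}^\epsilon(t,\underline y^\epsilon(t))|\leq C\epsilon^{2/3}$; combined with $\epsilon|v_t^\epsilon|\leq C\epsilon$ from \eqref{v_t_boundedness}, the numerator of \eqref{Derivative_of_first_order_condition_in_time} is $O(\epsilon^{2/3})$. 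By \eqref{Upper_bound_of_DDv} with $\min\{\lambda(T-t),1\}=1$, the leading term of the denominator satisfies $-v_{xx}^\epsilon(t,\underline y^\epsilon(t))/(1-\gamma)\geq C\epsilon^{2/3}$, while the $\gamma\epsilon^2 v^\epsilon/(1+\epsilon\underline y^\epsilon)^2$ correction is $O(\epsilon^2)$ by \eqref{Range_of_v} and so negligible. The ratio is therefore uniformly bounded, and the same argument applies to $\overline y^\epsilon_t$.

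The main obstacle I anticipate is the careful verification in Step~1 that the bootstrap kernel constant $\theta$ is strictly less than $1$: the extra $x$-derivative relative to the proof of \eqref{vxt_bound} produces an additional factor of $\sqrt\lambda/(x(1-x))$ via \lemref{Boundedness_of_various_expectation} (ii), and balancing this against $x(1-x)$ on the left side and the small-time cutoff from the density-hitting estimate \eqref{PB} requires a delicate accounting of the $\epsilon^{1/3}$-factor produced by $\overline y^\epsilon(s)-\underline y^\epsilon(s)\leq C\epsilon^{1/3}$ (\lemref{Merton_fraction_inside_NT} (i)).
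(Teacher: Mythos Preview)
Your argument for \eqref{Derivative_of_optimal_points_in_time} matches the paper's exactly. For \eqref{DDv_ep_xt}, however, there is a genuine gap, and the paper takes a different and simpler route.

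The gap in your bootstrap is that $L_{xx}^{\epsilon}(t,\cdot)$ is \emph{not} continuous across $\uy^{\epsilon}(t)$ and $\by^{\epsilon}(t)$: by \eqref{Lxx_expression} the inside value is $v_{xx}^{\epsilon}$, of order $\epsilon^{2/3}$ by \eqref{vxx_twosides}, while the outside value is $O(\epsilon^{2})$. Hence when you differentiate $\EE\big[Z_u^{(0,x)}(\partial_x Y_u^{(0,x)})^{2}\,L_{xx}^{\epsilon}(t+u,Y_u^{(0,x)})\big]$ in $t$, the Leibniz rule produces boundary terms of the form
\[
\uy^{\epsilon}_t(t+u)\cdot\big[v_{xx}^{\epsilon}-L_{xx}^{\epsilon,\mathrm{out}}\big](t+u,\uy^{\epsilon}(t+u))\cdot(\text{smooth factor})\cdot\varphi(\uy^{\epsilon}(t+u);u,x),
\]
and similarly at $\by^{\epsilon}$. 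These terms do not appear in your schematic equation, yet they are nonzero and involve precisely the quantities $\uy^{\epsilon}_t,\by^{\epsilon}_t$ that you want to bound only \emph{after} \eqref{DDv_ep_xt}. (Compare the toy example $f(t,y)=\mathbf 1_{\{y>t\}}$: here $\partial_t f=0$ a.e.\ but $\partial_t\!\int f\,dy=-1$.) The paper's derivation of $L_t^{\epsilon}$ and $L_{xt}^{\epsilon}$ in \eqref{Lt_expression}--\eqref{Lxt_expression} works only because $L^{\epsilon}$ and $L_x^{\epsilon}$ are continuous at the boundaries (Lemma~\ref{Boundaries_of_NT_region}), so the boundary contributions cancel; this cancellation fails for $L_{xx}^{\epsilon}$.

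The paper avoids the issue altogether: instead of differentiating \eqref{DDv_with_optimizer2}, it first rewrites $v_{xx}^{\epsilon}$ via the density identity \eqref{F_expression} with $F=L_x^{\epsilon}$, obtaining the representation \eqref{Another_form_of_DDv} in which only $L^{\epsilon}$ and $L_x^{\epsilon}$ appear and $L_{xx}^{\epsilon}$ is eliminated. Differentiating \emph{that} in $t$ yields only $L_t^{\epsilon}$ and $L_{xt}^{\epsilon}$, for which the bounds \eqref{Lt_bound} and \eqref{Lxt_bound2} (the latter from the already-proved \eqref{vxt_bound}) are available. The estimate \eqref{DDv_ep_xt} then follows by a direct computation with no self-referential term and no bootstrap; the $\sqrt{\lambda}$ factor arises from the $1/\sqrt{s-t}$ singularity in the kernel coming from the $B_{s}-B_{t}$ term in \eqref{F_expression}.
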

\begin{proof}
Using \eqref{F_expression} with $F=L_x^{\epsilon}$, we rewrite \eqref{DDv_with_optimizer} as
\begin{align}
  & v_{x x}^{\epsilon}(t, x) =  e^{- \lambda (T - t)} \EE \Big[ \tfrac{\partial^{2} Z_{T}^{(t, x)}}{\partial x^{2}} \Big] +\lambda \int_{t}^{T} e^{- \lambda (s-t)} \EE \left[ \tfrac{\partial^{2} Z_{s}^{(t, x)}}{\partial x^{2}} L^{\epsilon} ( s, Y_{s}^{(t, x)} ) \right] d s  \nonumber \\
  &\quad + \lambda \int_t^T  e^{-\lambda(s-t)}  \mathbb{E} \left[ \left( \tfrac{1}{x(1-x)}Z_{s}^{(t, x)}  \tfrac{\partial Y_{s}^{(t, x)}}{\partial x}   \left( \tfrac{B_{s} - B_{t}}{\sigma (s - t)} - 1 + x (1 + \gamma ) \right) + \tfrac{\partial Z_{s}^{(t, x)}}{\partial x}  \tfrac{\partial Y_{s}^{(t, x)}}{\partial x} \right)  L_x^{\epsilon}(s, Y_{s}^{(t, x)}) \right] ds \nonumber\\
&= e^{- \lambda (T - t)} \EE \Big[ \tfrac{\partial^{2} Z_{T-t}^{(0, x)}}{\partial x^{2}} \Big] +\lambda \int_{0}^{T-t} e^{- \lambda u} \, \EE \left[ \tfrac{\partial^{2} Z_{u}^{(0, x)}}{\partial x^{2}} L^{\epsilon} ( u+t, Y_{u}^{(0, x)} ) \right] d u  \nonumber \\
  &\quad + \lambda \int_0^{T-t}  e^{-\lambda u} \, \mathbb{E} \Big[ \Big( \tfrac{\frac{B_{u}}{\sigma u} - 1 + x (1 + \gamma )}{x(1-x)}Z_{u}^{(0, x)}  \tfrac{\partial Y_{u}^{(0, x)}}{\partial x} + \tfrac{\partial Z_{u}^{(0, x)}}{\partial x}  \tfrac{\partial Y_{u}^{(0, x)}}{\partial x} \Big)  L_x^{\epsilon}(u+t, Y_{u}^{(0, x)})  \Big] du, 
  \label{Another_form_of_DDv}
\end{align}
because $(Z_{s}^{(t,x)}, Y_s^{(t,x)}, B_s-B_t)$ and $(Z_{s-t}^{(0,x)}, Y_{s-t}^{(0,x)},B_{s-t})$ have the same distribution.
By the same way as in the proof of \lemref{Boundedness_of_various_expectation} (iii), we can check that
\begin{align}
\Big| \tfrac{\partial}{\partial t}  \EE \Big[ \tfrac{\partial^{2} Z_{T}^{(t, x)}}{\partial x^{2}} \Big]\Big| \leq C. \label{Zt_bound}
\end{align}
We observe that \eqref{L(T)} and the inequalities in \eqref{Lt_bound}, \eqref{Boundedness_of_Expectation_of_Multiplied_term} and \eqref{Zt_bound} produce
\begin{align}
&\left| \tfrac{\partial}{\partial t} \left( \lambda \int_{0}^{T-t} e^{- \lambda u} \, \EE \left[ \tfrac{\partial^{2} Z_{u}^{(0, x)}}{\partial x^{2}} L^{\epsilon} ( u+t, Y_{u}^{(0, x)} ) \right] d u+   e^{- \lambda (T - t)} \EE \Big[ \tfrac{\partial^{2} Z_{T-t}^{(0, x)}}{\partial x^{2}} \Big] \right)\right| \nonumber\\
&= \left| \lambda \int_{0}^{T-t} e^{- \lambda u} \, \EE \left[ \tfrac{\partial^{2} Z_{u}^{(0, x)}}{\partial x^{2}} L_t^{\epsilon} ( u+t, Y_{u}^{(0, x)} ) \right] d u + 
e^{- \lambda (T - t)}  \tfrac{\partial}{\partial t}  \EE \Big[ \tfrac{\partial^{2} Z_{T}^{(t, x)}}{\partial x^{2}} \Big] 
\right| \nonumber\\
&\leq C \Big( \lambda \int_{0}^{T-t} e^{- \lambda u} u \, du +e^{- \lambda (T - t)} \Big) \leq C \big( \epsilon^{\frac{2}{3}} + e^{- \lambda (T - t)} \big). \label{vxxt1}
\end{align}
The expression of $L_{xt}^{\epsilon}$ in \eqref{Lxt_expression} and the bound in \eqref{vxt_bound} imply 
\begin{align}
\left| L_{xt}^{\epsilon}(t,x)\right| \leq C\big( \epsilon^{\frac{2}{3}} + e^{- \alpha \lambda (T - t)} \big). \label{Lxt_bound2}
\end{align}
Using \eqref{L(T)} again, we obtain
\begin{align}
&\left| \tfrac{\partial}{\partial t} \left(  \lambda \int_0^{T-t}  e^{-\lambda u} \, \mathbb{E} \Big[ \Big( \tfrac{\frac{B_{u}}{\sigma u} - 1 + x (1 + \gamma )}{x(1-x)}Z_{u}^{(0, x)}  \tfrac{\partial Y_{u}^{(0, x)}}{\partial x}  + \tfrac{\partial Z_{u}^{(0, x)}}{\partial x}  \tfrac{\partial Y_{u}^{(0, x)}}{\partial x} \Big)  L_x^{\epsilon}(u+t, Y_{u}^{(0, x)})  \Big] du \right) \right| \nonumber\\
&=\left|   \lambda \int_0^{T-t}  e^{-\lambda u} \, \mathbb{E} \Big[ \Big( \tfrac{\frac{B_{u}}{\sigma u} - 1 + x (1 + \gamma )}{x(1-x)}Z_{u}^{(0, x)}  \tfrac{\partial Y_{u}^{(0, x)}}{\partial x}  + \tfrac{\partial Z_{u}^{(0, x)}}{\partial x}  \tfrac{\partial Y_{u}^{(0, x)}}{\partial x} \Big)  L_{xt}^{\epsilon}(u+t, Y_{u}^{(0, x)})  \Big] du \right| \nonumber\\
&\leq C \tfrac{1}{x(1-x)} \lambda \int_0^{T-t}  e^{-\lambda u} \big( 1+ \tfrac{1}{\sqrt{u}} \big)    \big( \epsilon^{\frac{2}{3}} + e^{- \alpha \lambda (T - t-u)} \big) du \leq C \tfrac{1}{x(1-x)}   \big( \epsilon^{\frac{1}{3}} + \sqrt{\lambda} \,e^{- \alpha \lambda (T - t)} \big), \label{vxxt2}
\end{align}
where the first inequality is due to \lemref{Boundedness_of_various_expectation} (i) and \eqref{Lxt_bound2} and the second inequality is due to \lemref{Order_of_exponential_times} and $\lambda=c \, \epsilon^{-\frac{2}{3}}$.
We combine \eqref{Another_form_of_DDv}, \eqref{vxxt1} and \eqref{vxxt2} to conclude \eqref{DDv_ep_xt}.

The bounds in \eqref{v_t_boundedness}, \eqref{vxt_bound},  \eqref{Upper_bound_of_DDv} and \eqref{Range_of_v} imply that there exists $\epsilon_{00}\in (0,\epsilon_0]$ such that for $(t,\epsilon)\in [0,T- \epsilon^{\frac{1}{3}}] \times (0,\epsilon_{00}]$,
\begin{align}
&\Big| \tfrac{ v_{x t}^{\epsilon}(t, \uy^{\epsilon}(t))}{1 - \gamma}- \tfrac{\epsilon  v_{t}^{\epsilon}(t, \uy^{\epsilon}(t))}{1 + \epsilon \uy^{\epsilon}(t)}\Big| \leq C  \epsilon^{\frac{2}{3}}, \quad  - \tfrac{ v_{x x}^{\epsilon}(t, \uy^{\epsilon}(t))}{1 - \gamma} - \tfrac{\gamma \epsilon^{2}  v^{\epsilon}(t, \uy^{\epsilon}(t))}{(1 + \epsilon \uy^{\epsilon}(t))^{2}}\geq C\epsilon^{\frac{2}{3}}.
\end{align}
The above inequalities and \eqref{Derivative_of_first_order_condition_in_time} produce the inequality for $\uy^{\epsilon}_t(t)$ in \eqref{Derivative_of_optimal_points_in_time}. The inequality for $\by^{\epsilon}_t(t)$ can be checked by the same way.
\end{proof}

Now we prove \lemref{v_xx_conv_lem}. 
By the mean value theorem and \eqref{Dv_0_lambda_order}, we have
$$\left| \tfrac{\lambda (v_{x x}^{SO, \lambda}(t, x_\epsilon)-v_{x x}^{SO, \lambda}(t, y_M))}{1 - \gamma}  \right| \leq C \left| x_\epsilon - y_M \right| \stackrel{\epsilon \downarrow 0}{\longrightarrow} 0,$$
where the convergence is due to \eqref{pre_y1}. The above inequality and \eqref{Limit_of_lambda_DDv} imply
\begin{align}
\lim_{\epsilon \downarrow 0} \tfrac{\lambda v_{x x}^{SO, \lambda}(t, x_\epsilon)}{1 - \gamma} =  -\gamma \sigma^2 v^{0}(t). \label{vxx_0_lim}
\end{align}
Direct computations using \eqref{AYZ} and the definition of $G^\epsilon$ produce
\begin{align}
&\lambda^2 \int_t^T  e^{-\lambda(s-t)} \mathbb{E} \Big[ Z_{s}^{(t, x)}\big(\tfrac{\partial Y_{s}^{(t, x)}}{\partial x}\big)^2 \tfrac{v^{\epsilon}_{xx} ( s, Y_{s}^{(t, x)} )  }{1-\gamma}  \cdot 1_{ \{ \uy^{\epsilon}(s)<Y_{s}^{(t, x)}<\by^{\epsilon}(s) \} } \Big] ds \nonumber\\
&=\tfrac{ \lambda}{x^{2} (1 - x)^{1 + \gamma}}  \int_t^T  e^{-\lambda(s-t)} \mathbb{E} \Big[ (Y_{s}^{(t, x)})^2 (1-Y_{s}^{(t, x)})^{1+\gamma} \tfrac{\lambda v^{\epsilon}_{xx} ( s, Y_{s}^{(t, x)} )}{1-\gamma}  \cdot 1_{ \{ \uy^{\epsilon}(s)<Y_{s}^{(t, x)}<\by^{\epsilon}(s) \} } \Big] ds \nonumber\\
&=\tfrac{ \lambda}{x^{2} (1 - x)^{1 + \gamma}}  \int_t^T  e^{-\lambda(s-t)} \int_{\uy^{\epsilon} \left( s \right)}^{\by^{\epsilon} \left( s \right)} G^{\epsilon} \left( s, y \right) \varphi(y;s - t, x) d y \, d s. \label{G1}
\end{align}
In \eqref{vxx_difference}, we apply \eqref{Lxx_expression} and follow the same procedure after \eqref{vxx_difference} to obtain
\begin{align*}
 \Big| \tfrac{ \lambda v_{x x}^{\epsilon}(t, x) -\lambda v_{x x}^{SO, \lambda}(t, x)}{1 - \gamma}  -\lambda^2 \int_t^T  e^{-\lambda(s-t)} \mathbb{E} \Big[ Z_{s}^{(t, x)}\big(\tfrac{\partial Y_{s}^{(t, x)}}{\partial x}\big)^2 \tfrac{v^{\epsilon}_{xx} ( s, Y_{s}^{(t, x)} )  }{1-\gamma} \cdot 1_{ \{ \uy^{\epsilon}(s)<Y_{s}^{(t, x)}<\by^{\epsilon}(s) \} }  \Big] ds \Big| \leq C \epsilon^{\frac{1}{3}}.
\end{align*}
We combine \eqref{vxx_0_lim}, \eqref{G1} and the above inequality to conclude that
\begin{align*}
G^{\epsilon} (t, x_\epsilon)  + y_M^2(1-y_M)^{1+\gamma} \gamma \sigma^2 v^{0}(t) - \lambda \int_{t}^{T} e^{- \lambda (s - t)} \int_{\uy^{\epsilon} \left( s \right)}^{\by^{\epsilon} \left( s \right)} G^{\epsilon} \left( s, y \right) \varphi(y;s - t, x_\epsilon) d y \, d s \stackrel{\epsilon \downarrow 0}{\longrightarrow} 0. 
\end{align*}
Therefore, to complete the proof, it is enough to prove the following:
\begin{align}
&\lambda \int_{t}^{T} e^{- \lambda (s - t)} \int_{\uy^{\epsilon} \left( s \right)}^{\by^{\epsilon} \left( s \right)} G^{\epsilon} \left( s, y \right) \varphi(y;s - t, x_\epsilon) d y \, d s \nonumber\\
&\quad - \int_{\uz^{\epsilon}(t)}^{\bz^{\epsilon}(t)} G^{\epsilon} (t,h(z)) \tfrac{\sqrt{2 \lambda} }{2 \sigma}\,e^{- \frac{\sqrt{2 \lambda}}{\sigma} \left\vert z - z_\epsilon \right\vert}  dz \stackrel{\epsilon \downarrow 0}{\longrightarrow}0. \label{G_goal}
\end{align}
By \eqref{pre_y1}, there exists $\epsilon_{00}'\in (0,\epsilon_{00}]$ such that $\tfrac{1}{x_\epsilon(1-x_\epsilon)}\leq C$ for $(t,\epsilon) \in[0, T-\tfrac{1}{\lambda}]\times (0,\epsilon_{00}']$.
Then the form of $\varphi$ in \eqref{density_form} implies
\begin{align}
0\leq \varphi(y;s-t, x_\epsilon) \leq  \tfrac{C}{\sqrt{s-t}}\quad \textrm{for} \quad (s,y,\epsilon)\in (t, T-\tfrac{1}{\lambda}]\times (0,1)\times (0,\epsilon_{00}'].
\label{density_bound1}
\end{align}
The mean value theorem, \eqref{Derivative_of_optimal_points_in_time}, \eqref{various_vs} and \eqref{density_bound1} imply that for $(s,\epsilon)\in (t, T-\epsilon^{\frac{1}{3}}) \times (0,\epsilon_{00}']$,
\begin{align}
&\bigg| \int_{\uy^{\epsilon}(s)}^{\by^{\epsilon}(s)} G^{\epsilon}(s, y) \varphi(y;s-t, x_\epsilon) d y   -  \int_{\uy^{\epsilon}(t)}^{\by^{\epsilon}(t)} G^{\epsilon}(s, y) \varphi(y;s-t, x_\epsilon) d y   \bigg| \leq C \sqrt{s-t}. \label{G2}
\end{align}
The mean value theorem and \eqref{DDv_ep_xt} imply that for $(s,\epsilon)\in (t, T-\epsilon^{\frac{1}{3}}) \times (0,\epsilon_{0}]$,
\begin{align}
&\bigg| \int_{\uy^{\epsilon}(t)}^{\by^{\epsilon}(t)} G^{\epsilon}(s, y) \varphi(y;s-t, x_\epsilon) d y -  \int_{\uy^{\epsilon}(t)}^{\by^{\epsilon}(t)} G^{\epsilon}(t, y) \varphi(y;s-t, x_\epsilon) d y  \bigg| \nonumber\\
&=\bigg|  \int_{\uy^{\epsilon}(t)}^{\by^{\epsilon}(t)} \lambda y (1-y)^\gamma \cdot \tfrac{y(1-y) \left( v_{x x}^{\epsilon}(s, y)- v_{x x}^{\epsilon}(t, y)\right) }{1 - \gamma} \varphi(y;s-t, x_\epsilon) d y  \bigg| \nonumber \\
&\leq C \lambda \big( \epsilon^{\frac{1}{3}} + \sqrt{\lambda} \,e^{- \alpha \lambda (T - s)} \big) (s-t) \cdot \PP\big( \uy^{\epsilon}(t)< Y_s^{(t,x_\epsilon)}< \by^{\epsilon}(t)\big) \nonumber\\
& \leq C\big( \sqrt{\lambda} + \lambda^{\frac{3}{2}} \,e^{- \alpha \lambda (T - s)} \big) (s-t). \label{G3}
\end{align}
We combine \eqref{G2} and \eqref{G3} to obtain
\begin{align}
&\bigg| \lambda \int_{t}^{T}  e^{- \lambda (s - t)} \left( \int_{\uy^{\epsilon}(s)}^{\by^{\epsilon}(s)} G^{\epsilon} \left( s, y \right) \varphi(y;s-t, x_\epsilon) d y - \int_{\uy^{\epsilon} \left( t \right)}^{\by^{\epsilon} \left( t \right)} G^{\epsilon} \left( t, y \right) \varphi(y;s-t, x_\epsilon) d y \right) d s \bigg|\nonumber \\
&\leq C \lambda \int_{t}^{T} e^{- \lambda (s - t)} \left(\sqrt{s-t} + \left(\sqrt{\lambda} + \lambda^{\frac{3}{2}} \,e^{- \alpha \lambda (T - s)}\right) (s-t)  \right)ds \stackrel{\epsilon\downarrow 0}{\longrightarrow} 0, \label{G4}
\end{align}
where the convergence can be checked using \lemref{Order_of_exponential_times} and $\lambda=c \, \epsilon^{-\frac{2}{3}}$.
Observe that 
\begin{align}
&\bigg| \lambda \int_{t}^{\infty}  e^{- \lambda (s - t)} \int_{\uy^{\epsilon}(t)}^{\by^{\epsilon}(t)} G^{\epsilon}(t, y) \bigg(\varphi(y;s-t, x_\epsilon) - \tfrac{1}{\sigma y (1 - y) \sqrt{2 \pi (s - t)}}  \exp \Big( - \tfrac{\big( \ln \big( \tfrac{y (1 - x_\epsilon)}{(1 - y) x_\epsilon} \big) \big)^{2}}{2 \sigma^{2} (s - t)}  \Big)\bigg) d y \, ds \bigg| \nonumber\\
&\leq C  \int_{0}^{\infty}  e^{-u} \int_{-\infty}^{\infty}  \bigg| \tfrac{\exp \big( - \frac{\left(z+ (\frac{\sigma^2}{2}-\mu) \sqrt{\frac{u}{\lambda}}  \right)^{2}}{2 \sigma^{2}} \big) - \exp \big( - \frac{z^2}{2 \sigma^{2}} \big)}{\sigma \sqrt{2 \pi}}  \bigg| d z \, ds  \stackrel{\epsilon\downarrow 0}{\longrightarrow} 0, \label{G5}
\end{align}
where the inequality is due to the boundedness of $G^\epsilon$ and the substitution $z=\tfrac{h^{-1}(y)- h^{-1}(x_\epsilon)}{\sqrt{s-t}}$ and $u=\lambda(s-t)$ and the convergence is due to the dominated convergence theorem. The boundedness of $G^{\epsilon}$ also implies
\begin{align}
\bigg| \lambda \int_T^{\infty}  e^{- \lambda (s - t)} \int_{\uy^{\epsilon}(t)}^{\by^{\epsilon}(t)} G^{\epsilon}(t, y) \varphi(y;s-t, x_\epsilon) dy \, ds \bigg|\leq C e^{-\lambda(T-t)} \stackrel{\epsilon\downarrow 0}{\longrightarrow} 0. \label{G6}
\end{align} 
We substitute $z=h^{-1}(y)$ and $u=\sqrt{\lambda(s-t)}$, then use Fubini's theorem below:
\begin{align}
&\lambda \int_{t}^{\infty}  e^{- \lambda (s - t)} \int_{\uy^{\epsilon}(t)}^{\by^{\epsilon}(t)} G^{\epsilon}(t, y)  \tfrac{1}{\sigma y (1 - y) \sqrt{2 \pi (s - t)}}  \exp \Big( - \tfrac{\big( \ln \big( \tfrac{y (1 - x_\epsilon)}{(1 - y) x_\epsilon} \big) \big)^{2}}{2 \sigma^{2} (s - t)}  \Big) d y \, ds\nonumber\\
&= \int_{\uz^{\epsilon}(t)}^{\bz^{\epsilon}(t)} G^{\epsilon}(t, h(z)) \int_{0}^{\infty}   \tfrac{\sqrt{2\lambda}}{\sigma  \sqrt{\pi}} \exp \left(-u^2 - \tfrac{ \lambda( z- z_\epsilon)^2}{2 \sigma^{2}u^2}\right)  d u \, dz \nonumber\\
&=\int_{\uz^{\epsilon}(t)}^{\bz^{\epsilon}(t)} G^{\epsilon} (t,h(z)) \tfrac{\sqrt{2 \lambda} }{2 \sigma}\,e^{- \frac{\sqrt{2 \lambda}}{\sigma} \left\vert z - z_\epsilon \right\vert}  dz, \label{G7} 
\end{align}
where the last equality is due to the observation that for $k>0$,
\begin{align*}
e^{-u^2-\frac{k}{u^2}} =\tfrac{d}{du}\Big(
\tfrac{e^{-2\sqrt{k}}}{2} \int_{\frac{\sqrt{k}}{u}-u}^\infty e^{-\zeta^2} d\zeta - \tfrac{e^{2\sqrt{k}}}{2} \int_{\frac{\sqrt{k}}{u}+u}^\infty e^{-\zeta^2} d\zeta
\Big)\quad \Longrightarrow \quad \int_0^\infty e^{-u^2-\frac{k}{u^2}} du = \tfrac{\sqrt{\pi}}{2}e^{-2\sqrt{k}}.
\end{align*}
Finally, we combine \eqref{G4}, \eqref{G5}, \eqref{G6} and \eqref{G7} to conclude \eqref{G_goal}.

\medskip

\section{Additional lemmas}

\begin{lemma}[Gronwall's lemma] \label{Gronwall_lemma}
Let $\alpha, \beta, f : [0, T] \to \mathbb{R}$ be measurable and $\beta\geq 0$. Assume that 
\begin{align*}
\int_0^T |f(t)| \beta(t) dt<\infty \quad \textrm{and} \quad  f(t) \leq \alpha(t) + \int_{t}^{T} \beta(s) f(s) d s \quad \textrm{for}\quad t \in [0, T].
\end{align*}
Then, $f$ satisfies the following inequality: for $t\in [0,T]$, 
\begin{align}
  f(t) \leq \alpha(t) + \int_{t}^{T} \alpha(s) \beta(s) e^{\int_{t}^{s} \beta(r) dr} d s. 
\end{align}
\end{lemma}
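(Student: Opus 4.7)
The plan is to iterate the hypothesis and pass to the limit. For $0 \leq t \leq s \leq T$, set $B(t,s) := \int_t^s \beta(r)\,dr$. Substituting the assumed inequality $f(s) \leq \alpha(s) + \int_s^T \beta(u) f(u)\,du$ into itself once and applying Fubini's theorem (valid because $\beta \geq 0$ and $\int_0^T |f|\beta\,ds < \infty$) yields
\begin{align*}
f(t) &\leq \alpha(t) + \int_t^T \alpha(s)\beta(s)\,ds + \int_t^T \beta(u) f(u)\, B(t,u)\,du.
\end{align*}

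By induction on $n \geq 1$, using at each step the elementary identity $\int_t^u \beta(r) \tfrac{B(t,r)^{k-1}}{(k-1)!}\,dr = \tfrac{B(t,u)^k}{k!}$ (which follows from $\tfrac{\partial}{\partial r} B(t,r) = \beta(r)$) to compound the factorial weights, the argument produces
\begin{align*}
f(t) &\leq \alpha(t) + \int_t^T \alpha(s)\beta(s) \sum_{k=0}^{n-1}\tfrac{B(t,s)^k}{k!}\,ds + R_n(t),
\end{align*}
where $R_n(t) := \int_t^T \beta(s) f(s) \tfrac{B(t,s)^n}{n!}\,ds$. The Fubini interchange at each step is legitimate because $\int_0^T |f(s)|\beta(s)\,ds < \infty$ and $B(t,s) \leq B(0,T)$ provide uniform $L^1$-control on the iterated integrand.

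To conclude, I send $n \to \infty$. The remainder satisfies $|R_n(t)| \leq \tfrac{B(0,T)^n}{n!} \int_0^T |f|\beta\,ds \to 0$, while the partial sums $\sum_{k=0}^{n-1} B(t,s)^k/k!$ increase monotonically to $e^{B(t,s)} = e^{\int_t^s \beta(r)\,dr}$. Splitting $\alpha = \alpha^+ - \alpha^-$ and applying monotone convergence on each piece (or treating trivially the case where the right-hand side equals $+\infty$) delivers the stated inequality. The argument has no substantive obstacle; the only point requiring care is the repeated Fubini interchange, which is precisely what the hypothesis $\int_0^T |f|\beta\,ds < \infty$ licenses. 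An alternative route via the integrating factor $e^{-\int_t^T \beta}$ applied to $g(t) := \int_t^T \beta(s)f(s)\,ds$ is equally viable, but the iterative approach is preferable here because it handles merely measurable $f$ without any a.e.\ differentiation.
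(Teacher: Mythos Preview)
The paper states this lemma without proof, treating it as a standard result, so there is nothing to compare against. Your Picard-iteration argument is the classical proof of the integral Gronwall inequality and is correct. One tacit assumption in your remainder bound $|R_n(t)| \leq \tfrac{B(0,T)^n}{n!}\int_0^T |f|\beta\,ds \to 0$ and in the repeated Fubini interchanges is that $B(0,T)=\int_0^T \beta(r)\,dr < \infty$; the lemma as stated does not list this hypothesis explicitly, but it is satisfied in every use the paper makes of the lemma (where $\beta$ is constant or bounded), and without it the right-hand side of the conclusion can fail to be well defined.
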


\begin{lemma}\label{meas_lem}
Let $F:[0,T]\times [0,1]^2\to \R$ be continuous. We define $f:[0,T]\times[0,1]\to [0,1]$ as
\begin{align}
f(t,x):=\max \Big\{ z: z\in \argmax_{y\in [0,1]} F(t,x,y)  \Big\},
\end{align}  
then $f$ is upper semicontinuous (which is obviously Borel-measurable). 
\end{lemma}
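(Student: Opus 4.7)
The plan is to verify the sequential characterization of upper semicontinuity: I will show that whenever $(t_n, x_n) \to (t_0, x_0)$ in $[0,T] \times [0,1]$, one has $\limsup_{n\to\infty} f(t_n, x_n) \leq f(t_0, x_0)$. Note first that $f$ is well-defined: for each fixed $(t,x)$ the map $y \mapsto F(t,x,y)$ is continuous on the compact interval $[0,1]$, so its argmax set is nonempty and closed, and the maximum element in it is attained.

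The main step is as follows. Set $y_n := f(t_n, x_n) \in [0,1]$. By compactness of $[0,1]$, extract a subsequence $y_{n_k}$ converging to $y^\star := \limsup_{n} y_n$. By definition of $y_{n_k}$,
\begin{align*}
F(t_{n_k}, x_{n_k}, y_{n_k}) \geq F(t_{n_k}, x_{n_k}, y) \quad \text{for every } y \in [0,1].
\end{align*}
Letting $k \to \infty$ and using joint continuity of $F$ on $[0,T] \times [0,1]^2$ gives $F(t_0, x_0, y^\star) \geq F(t_0, x_0, y)$ for every $y \in [0,1]$, so $y^\star \in \argmax_{y \in [0,1]} F(t_0, x_0, y)$. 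Since $f(t_0, x_0)$ is by construction the \emph{largest} point of that argmax set, we conclude $y^\star \leq f(t_0, x_0)$, i.e.\ $\limsup_n f(t_n, x_n) \leq f(t_0, x_0)$.

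This is essentially Berge's maximum theorem specialized to the $\max$-selector of the argmax correspondence, so no substantial obstacle is expected. The only subtlety worth flagging is the choice of the \emph{maximum} element in the definition of $f$: choosing the maximum yields upper semicontinuity, whereas choosing the minimum would give lower semicontinuity instead; either way the resulting selector is Borel measurable, since semicontinuous functions have Borel superlevel (respectively sublevel) sets.
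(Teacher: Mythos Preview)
Your proof is correct. The paper does not supply its own argument for this lemma; it simply cites Lemma~D.1 in \cite{GC23}. Your direct sequential verification of upper semicontinuity---extracting a subsequence of maximizers realizing the $\limsup$, passing to the limit using joint continuity of $F$, and then invoking that $f(t_0,x_0)$ is the largest element of the limiting argmax set---is the standard proof of this fact (a special case of Berge's maximum theorem for the max-selector), and it is complete as written.
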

\begin{proof}
This is Lemma D.1 in \cite{GC23}.
\end{proof}

\begin{lemma}\label{Order_of_exponential_times}
There is a constant $C_\alpha$ independent of $t\in [0,T), \,\lambda\in [1,\infty)$ such that
\begin{align}
 \lambda \int_{t}^{T} e^{- \lambda (s - t)} (s - t)^{\alpha} d s 
\leq 
\begin{dcases}
C_\alpha \lambda^{- \alpha} \min \{ 1, \lambda (T - t) \} & \textrm{if} \quad \alpha \geq 0\\
C_\alpha \lambda^{- \alpha}& \textrm{if} \quad \alpha\in (-1, 0)\\
\end{dcases}
\end{align}
\end{lemma}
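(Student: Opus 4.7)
The plan is to reduce everything to a standard incomplete Gamma estimate by the change of variables $u = \lambda(s - t)$, which gives $du = \lambda\, ds$ and transforms the integral into
\[
\lambda \int_t^T e^{-\lambda(s-t)} (s-t)^\alpha \, ds \;=\; \lambda^{-\alpha} \int_0^{K} e^{-u} u^\alpha \, du, \qquad K := \lambda(T-t).
\]
The remaining task is therefore to bound $F_\alpha(K) := \int_0^K e^{-u} u^\alpha \, du$ by a constant depending only on $\alpha$ (times $\min\{1, K\}$ when $\alpha \geq 0$), uniformly over $K \in [0, \infty)$.

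For $\alpha \in (-1, 0)$, the integrand $e^{-u} u^\alpha$ is integrable on all of $(0, \infty)$, so the crude bound $F_\alpha(K) \leq F_\alpha(\infty) = \Gamma(\alpha+1)$ already yields $\lambda \int_t^T e^{-\lambda(s-t)}(s-t)^\alpha ds \leq \Gamma(\alpha+1)\,\lambda^{-\alpha}$, and one may take $C_\alpha := \Gamma(\alpha+1)$.

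For $\alpha \geq 0$, I would split on the size of $K$. When $K \geq 1$, the same bound $F_\alpha(K) \leq \Gamma(\alpha+1)$ applies and $\min\{1, K\} = 1$, so the claim holds. When $K \leq 1$, I would use that $u^\alpha \leq 1$ on $[0, K] \subset [0, 1]$ to write
\[
F_\alpha(K) \;\leq\; \int_0^K u^\alpha \, du \;=\; \tfrac{K^{\alpha+1}}{\alpha+1} \;\leq\; \tfrac{K}{\alpha+1},
\]
where the last step uses $K^\alpha \leq 1$. Combining the two regimes gives $F_\alpha(K) \leq \max\{\Gamma(\alpha+1), (\alpha+1)^{-1}\} \cdot \min\{1, K\}$, which delivers the stated bound.

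This is purely a calculus exercise; there is no real obstacle. The only step requiring a bit of care is choosing the split point $K = 1$ consistently so that the resulting constant depends only on $\alpha$ and not on $\lambda$ or $t$, and verifying that the factor $\min\{1, \lambda(T-t)\}$ emerges with the correct power of $\lambda$ after undoing the substitution.
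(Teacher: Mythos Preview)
Your proof is correct and follows essentially the same route as the paper's: the same substitution $u=\lambda(s-t)$, the same Gamma bound for $\alpha\in(-1,0)$ and for $K\geq 1$, and an equivalent elementary bound for $K\leq 1$ (the paper uses $e^{-u}u^\alpha\leq 1$ directly to get $F_\alpha(K)\leq K$, whereas you go via $e^{-u}\leq 1$ and then $K^\alpha\leq 1$). One tiny slip in wording: in the $K\leq 1$ case you announce ``use that $u^\alpha\leq 1$'' but the displayed inequality actually uses $e^{-u}\leq 1$; the subsequent step uses $K^\alpha\leq 1$, as you note.
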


\begin{proof}
 Simple change of variable implies 
\begin{align*}
  \lambda \int_{t}^{T} e^{- \lambda (s - t)} (s - t)^{\alpha} d s = \lambda^{- \alpha} \int_{0}^{\lambda (T - t)} e^{- u} u^{\alpha} d u \leq  \begin{cases}
                                                        C_\alpha \lambda^{- \alpha} \min \{ 1 , \lambda (T - t) \} & \text{if } \alpha \geq 0, \\
                                                        C_\alpha \lambda^{- \alpha} & \text{if } \alpha \in (- 1, 0),
                                                         \end{cases}
\end{align*}
where we use the fact that $\int_{0}^{\infty} e^{- u} u^{\alpha} d u<\infty$ for $\alpha>-1$ and $e^{-u}u^\alpha\leq 1$ for $\alpha\geq 0$ and $u\leq 1$.
\end{proof}

\end{document}